\documentclass[11pt]{article}
\usepackage[margin=1in]{geometry}
\usepackage{graphicx}
\usepackage{caption}
\usepackage{subcaption}
\usepackage{bm}
\usepackage{mathrsfs}
\usepackage[table,xcdraw,dvipsnames]{xcolor}
\usepackage{amsmath, amsthm, amssymb, amstext}
\usepackage[colorlinks=true,linkcolor=Blue,citecolor=OliveGreen,urlcolor=blue,hypertexnames=false,linktocpage=true]{hyperref}
\usepackage[nameinlink,capitalize]{cleveref}
\usepackage[maxbibnames=99,backend=bibtex,sorting=nyt,style=alphabetic,citestyle=alphabetic, url=false,giveninits=true]{biblatex}
\usepackage{mathtools}
\usepackage{dsfont}
\usepackage[most]{tcolorbox}
\usepackage{changepage}
\usepackage{enumitem}




\newcommand{\veps}{\varepsilon}

\newcommand{\norm}[1]{\left\lVert #1 \right\rVert}
\DeclareMathOperator{\Tr}{Tr}
\newcommand{\ket}[1]{| #1 \rangle}
\newcommand{\bra}[1]{\langle #1 |}
\newcommand{\outerprod}[2]{| #1 \rangle \! \langle #2 |}

\newcommand{\calH}{\mathcal{H}}

\newcommand{\calM}{\mathcal{M}}

\newcommand{\calV}{\mathcal{V}}

\newcommand{\calA}{\mathcal{A}}

\newcommand{\calP}{\mathcal{P}}
\newcommand{\calQ}{\mathcal{Q}}

\newcommand{\ee}{\mathrm{e}}
\newcommand{\ii}{\mathrm{i}}

\newcommand{\rmD}{\mathrm{D}}
\newcommand{\rmU}{\mathrm{U}}
\newcommand{\rmP}{\mathrm{P}}
\newcommand{\calS}{\mathcal{S}}
\newcommand{\rmd}{\mathrm{d}}
\newcommand{\rmL}{\mathrm{L}}
\newcommand{\floor}[1]{\lfloor #1 \rfloor}
\newcommand{\ceil}[1]{\lceil #1 \rceil}
\newcommand{\Pacc}{P_{\textnormal{accept}}}
\newcommand{\Qacc}{Q_{\textnormal{accept}}}
\newcommand{\SG}{\mathfrak{S}}
\newcommand{\PSym}{\Pi_{\textnormal{Sym}}}
\newcommand{\acc}{\textnormal{accept}}
\newcommand{\dTr}{\mathrm{d}_{\textnormal{Tr}}}

\def\ba#1\ea{\begin{align}#1\end{align}}

\DeclarePairedDelimiterX{\infdivx}[2]{(}{)}{%
  #1\;\delimsize\|\;#2%
}

\DeclareMathOperator{\im}{{\textnormal{im}}}
\DeclareMathOperator{\Proj}{\textnormal{proj}}
\DeclareMathOperator{\Span}{\textnormal{span}}
\DeclareMathOperator{\SR}{\textnormal{SR}}
\DeclareMathOperator{\td}{\mathrm{d}_{\textnormal{Tr}}}
\DeclareMathOperator{\diag}{\textnormal{diag}}
\DeclareMathOperator{\LDS}{\textnormal{LDS}}



\newcommand*\encircle[1]{%
  \begin{tikzpicture}[baseline=(C.base)]
    \node[draw,circle,inner sep=1pt](C) {#1};
  \end{tikzpicture}}

\DeclareMathOperator{\expct}{{\mathbb E}}
\DeclareMathOperator{\rank}{rank}

\newtheorem{theorem}{Theorem}[section]
\newtheorem{lemma}[theorem]{Lemma}

\newtheorem{fact}[theorem]{Fact}
\newtheorem{corollary}[theorem]{Corollary}
\newtheorem{proposition}[theorem]{Proposition}

\newtheorem{claim}[theorem]{Claim}
\theoremstyle{definition}
\newtheorem{definition}[theorem]{Definition}

\newtheorem{remark}[theorem]{Remark}

\usepackage{algorithm}
\usepackage{algpseudocode}
\usepackage{microtype}
\usepackage{xargs}
\usepackage{tikz}
\usetikzlibrary{calc,fadings,decorations.pathreplacing,decorations.pathmorphing,shapes,arrows,positioning,cd}

\usepackage{graphicx} 
\usepackage{ytableau}
\ytableausetup{centertableaux}
\renewbibmacro{in:}{%
  \ifentrytype{article}{}{\printtext{\bibstring{in}\intitlepunct}}}
\bibliography{main}
\AtEveryBibitem{\clearfield{month}}
\AtEveryCitekey{\clearfield{month}}

\title{Nearly tight bounds for testing tree tensor network states}
\author{Benjamin Lovitz~\thanks{Department of Computer Science and Software Engineering, Concordia University, Montreal, QC, Canada}\and Angus Lowe~\thanks{Center for Theoretical Physics -- a Leinweber Institute, MIT, Cambridge, MA, USA}}
\date{\today}

\begin{document}

\maketitle

\begin{abstract}
Tree tensor network states (TTNS) generalize the notion of having low Schmidt-rank to multipartite quantum states, through a parameter known as the bond dimension. This leads to succinct representations of quantum many-body systems with a tree-like entanglement structure. In this work, we study the task of testing whether an unknown pure state is a TTNS on $n$ qudits with bond dimension at most $r$, or is far in trace distance from any such state. We first establish that, independent of the dimension of the state, $O(nr^2)$ copies suffice to accomplish this task with one-sided error. We then prove that $\Omega(n r^2/\log n)$ copies are necessary for any test with one-sided error whenever $r\geq 2 + \log n$. In particular, this closes a roughly quadratic gap in the previous bounds for testing matrix product states in this setting. On the other hand, when $r=2$ we show that $\Theta(\sqrt{n})$ copies are both necessary and sufficient for the related task of testing whether a state is a product of $n$ bipartite states having Schmidt-rank at most $r$, for some choice of the qudit dimensions. We also study the performance of tests using measurements performed on a small number of copies at a time. Here, we obtain new bounds for testing rank, Schmidt-rank, and TTNS when the tester is restricted to making measurements on $r+1$ copies of the state.
\end{abstract}

\section{Introduction}
A common task in quantum information science is to test whether an unknown state $\psi$ has a given property. In particular, the problem of testing whether $\psi$ is entangled according to some measure of interest arises in many areas including non-local games \cite{clauser1969proposed,natarajan2017quantumlinearity}, quantum cryptography~\cite{vazirani2019fully, coladangelo2019verifier}, and quantum computational complexity~\cite{harrow2013testing,jeronimo2024dimensionindependentdisentanglersunentanglement}. Moreover, a recent line of work suggests that the hardness of learning and testing states could itself provide a meaningful measure of complexity in quantum systems~\cite{Anshu2023}. Such widespread utility motivates a systematic study of the resources required to perform these tests. In the framework of \textit{quantum property testing}~\cite{Montanaro2016}, one performs a measurement on many identical copies of $\psi$, with the goal to determine whether $\psi$ has the property or is far from any state with the property, typically in trace distance. The minimum number of copies required to distinguish these two cases is called the \textit{copy complexity} of the task.

Similar to property testing of distributions in classical computer science~\cite{CIT-114}, the intention in the quantum case is to design tests which use far fewer copies than would otherwise be required for a complete reconstruction. For testing properties related to the entanglement of pure states, prior work has demonstrated that the copy complexity is often independent of the underlying dimension. This is desirable since for many quantum systems of interest --- e.g., interacting qubits --- the dimension grows exponentially in the size of the system. For example, the product test of Harrow and Montanaro~\cite{harrow2013testing} uses just two copies of an unknown multipartite pure state to determine whether it is a product state or constantly far from the set of all product states. The existence of this test in particular has a number of important implications, including that the complexity class $\mathsf{QMA}(2)$ is equal to $\mathsf{QMA}(k)$ for any $k\geq 2$. Another example comes from the rank test of O'Donnell and Wright~\cite{o2015quantum}, which determines whether a mixed state has rank at most $r$ or is constantly far away from any rank-$r$ state using $O(r^2)$ copies.
This test can be used to determine whether a bipartite pure state is entangled, as measured by its Schmidt-rank, and this strategy was recently shown to be optimal~\cite{chen2024local}. Already for this example, however, there is a gap in our understanding of the copy complexity. Among tests with \textit{one-sided error} --- that is, tests which accept with certainty given a state in the property, AKA perfect completeness --- the copy complexity of rank testing is $\Omega(r^2)$, matching the upper bound. However, for two-sided error the best known lower bound is $\Omega(r)$~\cite{o2015quantum}.

Venturing beyond bipartite entanglement, one may also consider testing multipartite quantum states with a limited, but non-zero amount of entanglement, as suggested in~\cite{Montanaro2016}. Such properties have been considered previously in the literature. For example, Ref.~\cite{Harrow_2017} showed that $O(n)$ copies suffice to test whether a pure $n$-partite state is \textit{not} genuinely multipartite entangled, and recently Ref.~\cite{jones2024testingmultipartiteproductnesseasier} established that this is tight up to log factors. Of most direct relevance to the present work is~\cite{soleimanifar2022testingmps} which studies the task of testing matrix product states (with open boundary conditions). In a matrix product state (MPS) representation, amplitudes are computed by multiplying matrices whose dimension is bounded by a parameter known as the \textit{bond dimension}, denoted by $r$ here and throughout. Such representations are an indispensable tool for studying entanglement in physically-motivated, 1D quantum systems~\cite{cirac2021mps}. Testing whether a state is an MPS of bond dimension $r$ on just two sites recovers the task of Schmidt-rank testing, while an MPS on $n$ sites with bond dimension equal to one is a product state.

For general $n$ and $r$, the MPS test proposed in \cite{soleimanifar2022testingmps} succeeds with one-sided error using at most $O(nr^2)$ copies. However, the authors conjectured that this bound is not tight, as the lower bound shown in their work leaves a gap, stating only that $\Omega(\sqrt{n})$ copies are necessary for $r\geq 2$. Despite follow-up work~\cite{aasonson2024peudoentanglement,chen2024local}, the best-known lower bound as a function of $n$ for one- or two-sided error has remained the same prior to this work. Separately, Ref.~\cite{soleimanifar2022testingmps} proposed studying the copy complexity of testing other examples of tensor network states, which generalize matrix product states to higher-order tensors.

\subsection{Summary of results}\label{sec:results}

In this work we study the task of testing whether an unknown state admits a description as a tree tensor network state (TTNS) with bounded bond dimension, generalizing the MPS testing task considered in Ref.~\cite{soleimanifar2022testingmps}. TTNS are an important extension of MPS which have been employed as representations of ground states on 2D lattices~\cite{tagliacozzo2009simulation}, ansatzes in quantum machine learning~\cite{Huggins2019towards}, as well as in simulations of molecules (e.g., dendrimers) in quantum chemistry~\cite{martin2002dendrimers, Nakatani2013efficient} and of other quantum many-body systems~\cite{verstraete2010simulatingwithttns}, including certain measurement-based quantum computations~\cite{shi2006classicalsimulationttns}. We give new bounds in two settings. In the first, the test may perform arbitrary measurements to accomplish the task, which is the setting considered in prior work. In the second, we consider restricting the test to multiple rounds of measurements performed on only a few copies of the state.

\paragraph{Copy complexity of testing trees}
Given a tree graph $G=(V,E)$ on $n=|V|$ vertices, a bond dimension parameter $r$, and a collection of Hilbert spaces $(\calH_v:v\in V)$, a state $\ket{\psi}\in \bigotimes_{v\in V}\calH_v$ is a \textit{tree tensor network state} (TTNS) if the Schmidt-rank of $\ket{\psi}$ with respect to any bipartite cut determined\footnote{Here, we mean the Schmidt decomposition with respect to the two subsystems defined in the following manner. Since the graph $G$ is a tree, removing any edge $e\in E$ from the graph results in two disjoint connected components. Collecting all the vertices in one of these components produces one subsystem, and the complement of this set of vertices corresponds to the other.} by an edge $e\in E$ is bounded from above by $r$. The resulting set of states can be equivalently defined in terms of contracting tensors placed on the vertices of a tree graph~\cite[Prop.~2.2]{Barthel2022}, with the contractions taking place along \textit{bond indices}, while \textit{physical indices} remain uncontracted. We refer to $\dim(\calH_v)$ for $v \in V$ as the \textit{physical dimensions} of the TTNS. This set is denoted here and throughout by
\begin{align}\label{eq:ttns_def}
    \mathsf{TTNS}(G,r):= \{\outerprod{\psi}{\psi}: \ket{\psi}\in \bigotimes\nolimits_{v\in V}\calH_v,\ \SR_e(\ket{\psi})\leq r\ \forall e\in E\}
\end{align}
where, for each $e\in E$, the notation $\SR_e(\ket{\psi})$ denotes the Schmidt-rank of $\ket{\psi}$ with respect to the bipartite cut determined by $e$.

Our first result is a tight (up to log factors) characterization of the copy complexity of testing trees with one-sided error, which includes MPS testing with one-sided error as a special case, when the bond dimension grows logarithmically in $n$. In the following theorem, we let $\veps$ denote the distance parameter, measured by the standard trace distance, within the property testing framework. (See \Cref{sec:property_testing_defs} for a formal definition of the task we consider.)
\begin{theorem}[Testing trees with one-sided error]\label{thm:main_thm}
    Fix a sequence $(G_n)_n$ of tree graphs on $n\geq 2$ vertices. There exists an algorithm for testing whether $\psi \in \mathsf{TTNS}(G_n,r)$ with one-sided error using $O(nr^2/\veps^2)$ copies of $\psi$. Moreover, in the regime where $\veps\in (0,1/\sqrt{6}]$ and $r\geq 2+\log n$, any algorithm with one-sided error requires $\Omega(nr^2/(\veps^2 \log n))$ copies.
\end{theorem}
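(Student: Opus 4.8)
I would prove the two bounds separately; the lower bound is the substantive part.

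\paragraph{Upper bound.} Since $\mathsf{TTNS}(G,r)=\bigcap_{e\in E}\{\outerprod{\psi}{\psi}:\SR_e(\ket\psi)\le r\}$, the idea is to generalize the matrix product state test of Soleimanifar--Wright: run the (perfectly complete) Schmidt-rank-$\le r$ test across each of the $n-1$ edge cuts, using $O(r^2/\veps^2)$ copies per cut, and accept iff all sub-tests accept. Perfect completeness is immediate. Soundness needs a robust-intersection lemma adapting theirs: if $\ket\psi$ is within trace distance $O(\veps/\sqrt n)$ of Schmidt rank $\le r$ across \emph{every} cut then it is $\veps$-close to a genuine TTNS. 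The tree enters because, once $G$ is rooted, the cuts $(A_e,B_e)_{e\in E}$ form a laminar family; truncating the Schmidt ranks in a nested leaves-to-root order then produces a TTNS whose total error is controlled in Pythagorean fashion by the per-cut truncation errors, and the $n-1$ weak Schur samplings, being supported on laminar subsystems, can be organized on a single shared batch of $O(nr^2/\veps^2)$ copies. Beyond this adaptation of the MPS analysis I expect no obstacle here.

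\paragraph{Lower bound, reduction.} For a one-sided (perfectly complete) $T$-copy test with ``accept'' operator $E$, the conditions $0\preceq E\preceq I$ and $\Tr(E\,\outerprod{\psi^{\otimes T}}{\psi^{\otimes T}})=1$ for all $\ket\psi\in\mathsf{TTNS}(G,r)$ force $E\ket{\psi^{\otimes T}}=\ket{\psi^{\otimes T}}$, hence $E\succeq\Pi_V$ where $V:=\overline{\Span}\{\ket\psi^{\otimes T}:\ket\psi\in\mathsf{TTNS}(G,r)\}$. So it suffices to exhibit, for $T=\Theta(nr^2/(\veps^2\log n))$, a single pure state $\ket{\psi^{\mathrm{no}}}$ that is $\veps$-far from $\mathsf{TTNS}(G,r)$ with $\bra{(\psi^{\mathrm{no}})^{\otimes T}}\Pi_V\ket{(\psi^{\mathrm{no}})^{\otimes T}}>1/3$; such a state is then accepted with probability exceeding $1/3$, contradicting soundness.

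\paragraph{Lower bound, the hard instance.} I would give each endpoint of each edge $e$ a register of (finite but large) dimension $k+1$, so $\calH_v=\bigotimes_{e\ni v}\calH_{v,e}$, and take $\ket{\psi^{\mathrm{no}}}=\bigotimes_{e\in E}\ket{\psi^{\mathrm{no}}_e}$ where $\ket{\psi^{\mathrm{no}}_e}$ is the bipartite state across $e$ with Schmidt spectrum $(1-\delta^2,\delta^2/k,\dots,\delta^2/k)$ and $\delta^2:=1-(1-\veps^2)^{1/(n-1)}=\Theta(\veps^2/n)$. Truncating to rank $r$ across any cut $e$ loses mass $\approx\delta^2$, and --- using a lemma (to be checked, as for MPS, by an iterated-Schmidt-rank argument) that the closest TTNS to a product-over-edges state is the product of the per-edge rank-$r$ truncations --- $\ket{\psi^{\mathrm{no}}}$ is $\approx\veps$-far from $\mathsf{TTNS}(G,r)$, with $\veps\le1/\sqrt6$ keeping the constants clean. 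The key structural point is that $\bigotimes_e\ket{\chi_e}\in\mathsf{TTNS}(G,r)$ whenever each $\ket{\chi_e}$ has Schmidt rank $\le r$ across $e$ (any other cut $e'$ is crossed only by the factor $\ket{\chi_{e'}}$), so with $\ket{(\psi^{\mathrm{no}})^{\otimes T}}=\bigotimes_e\ket{(\psi^{\mathrm{no}}_e)^{\otimes T}}$ we get $\bra{(\psi^{\mathrm{no}})^{\otimes T}}\Pi_V\ket{(\psi^{\mathrm{no}})^{\otimes T}}\ge\prod_{e}\bra{(\psi^{\mathrm{no}}_e)^{\otimes T}}\Pi_{V_e}\ket{(\psi^{\mathrm{no}}_e)^{\otimes T}}$ with $V_e:=\overline{\Span}\{\ket{\chi_e}^{\otimes T}:\SR_e(\ket{\chi_e})\le r\}$. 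By Schur--Weyl duality $V_e=\bigoplus_{\lambda\vdash T,\,\ell(\lambda)\le r}W_\lambda\otimes W_\lambda$ inside $\mathrm{Sym}^T(\mathbb{C}^{k+1}\otimes\mathbb{C}^{k+1})$, and the weight of $\ket{(\psi^{\mathrm{no}}_e)^{\otimes T}}$ on the $\lambda$-block equals the Schur--Weyl probability of its Schmidt spectrum, so each factor is $\Pr_{\lambda\sim\mathrm{SW}}[\ell(\lambda)\le r]$. Via RSK, $\ell(\lambda)$ is the longest strictly decreasing subsequence of an i.i.d.\ length-$T$ word from that spectrum; since the frequent symbol is the smallest and appears at most once in such a subsequence, $\ell(\lambda)$ equals $1$ plus the longest decreasing subsequence of the sub-word of rare symbols --- a sub-word of $\approx T\delta^2$ almost-surely-distinct letters, hence a uniformly random permutation. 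A union bound over $r$-subsets of positions then gives $\Pr[\ell(\lambda)\ge r+1]\le\binom Tr\delta^{2r}/r!\le T^r\delta^{2r}/(r!)^2$, so $\bra{(\psi^{\mathrm{no}})^{\otimes T}}\Pi_V\ket{(\psi^{\mathrm{no}})^{\otimes T}}\ge1-(n-1)\,T^r\delta^{2r}/(r!)^2$, which exceeds $1/3$ as soon as $T\lesssim(r!)^{2/r}(n-1)^{-1/r}\delta^{-2}$. Hence any one-sided test requires $T\gtrsim(r!)^{2/r}(n-1)^{-1/r}\delta^{-2}$; since $(r!)^{2/r}=\Theta(r^2)$, $\delta^{-2}=\Theta(n/\veps^2)$, and $(n-1)^{1/r}=O(\log n)$ once $r\ge2+\log n$, this is $\Omega(nr^2/(\veps^2\log n))$, as claimed.

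\paragraph{Main obstacle.} The crux is the per-edge Schur--Weyl/RSK analysis --- identifying $V_e$ with $\bigoplus_{\ell(\lambda)\le r}W_\lambda\otimes W_\lambda$, pinning the overlap to $\Pr[\ell(\lambda)\le r]$, and establishing a tail bound on $\ell(\lambda)$ strong enough to survive the product over the $n-1$ edges once the ``bad'' Schmidt coordinate has been shrunk to $\delta^2=\Theta(\veps^2/n)$ per edge (so $T\delta^2$ must stay safely below $\Theta(r^2)$; this interplay is what produces the $\log n$ factor and the hypothesis $r\ge2+\log n$). Secondary technical points are the farness lemma for product-over-edges states and, on the upper-bound side, the Pythagorean truncation lemma on trees.
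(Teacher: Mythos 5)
Your proposal is correct in outline and follows essentially the same route as the paper: the upper bound is the paper's own construction (per-edge Schmidt-rank tests run on one shared batch of $O(nr^2/\veps^2)$ copies, with soundness via a faithful TTNS-approximation/"robust intersection" lemma proved by nested leaves-to-root truncations, cf.\ \Cref{thm:ttns_approximation} and \Cref{cor:ttns_far}), and the lower bound has the same skeleton: the PC-optimal test is the projector onto $\Span\{\ket\psi^{\otimes T}\}$, it dominates the tensor product of per-edge Schmidt-rank projectors (\Cref{lem:pc_optimal_test_lb_operator}), the hard instance is a product over edges of bipartite states with one dominant Schmidt coefficient, its $\veps$-farness rests on the multiplicativity lemma you flag (this is \Cref{lem:overlap_multiplicative_lemma}, proved by induction on the tree exactly as you sketch), and the per-edge acceptance probability is $\Pr[\LDS(\bm x)\le r]$ via Schur--Weyl/RSK. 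The one genuine divergence is the per-edge tail estimate: the paper conditions on the number of rare letters with a Chernoff bound and then invokes the uniform-word LDS concentration bound (\Cref{lem:lds_conc}), and it is this bookkeeping that injects the $\log n$ into \Cref{thm:main_lb_thm}; you instead use a first-moment union bound over $r$-subsets of positions, $\Pr[\LDS\ge r+1]\le\binom{T}{r}\delta^{2r}/r!\le (e^2T\delta^2/r^2)^r$, which is valid for any rare-alphabet size $\ge r$ (your ``almost surely distinct letters'' detour is unnecessary) and is simpler. In fact you undersell it: since $(n-1)^{1/r}=O(1)$ when $r\ge 2+\log n$, your bound gives $\Omega(nr^2/\veps^2)$ with no $\log n$ loss, strictly improving the paper's stated lower bound in this regime.

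Minor repairs you should make: (i) with $\delta^2=1-(1-\veps^2)^{1/(n-1)}$ the state is slightly \emph{less} than $\veps$-far because the rank-$r$ truncation retains mass $1-\delta^2+(r-1)\delta^2/k$ per edge; inflate $\delta^2$ by a constant (the paper uses $4\veps^2/m$ with $d-1\ge 2(r-1)$ and $\veps\le 1/\sqrt6$); (ii) to contradict soundness with one-sided error you need acceptance $>2/3$, not $>1/3$ --- your bound delivers this anyway; (iii) in the upper bound, ``$O(r^2/\veps^2)$ copies per cut'' is inconsistent with the per-cut distance $\veps/\sqrt{2(n-1)}$; what is actually needed (and what you later state) is a single batch of $O(nr^2/\veps^2)$ copies on which all $n-1$ tests are run, and the justification that this is legitimate is the commutation of the per-edge projectors, which the paper proves explicitly using the two equivalent forms in \Cref{thm:PC_optimal_sr_test} rather than by an appeal to laminarity alone.
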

The upper bound is proven in \Cref{sec:ttns} and the lower bound is proven in \Cref{sec:lower_bounds}. Ignoring the dependence on $\veps$, the best previous lower bound for MPS testing with one-sided error, due to~\cite{chen2024local}, was $\Omega(\sqrt{nr} + r^2)$. For two-sided error, lower bounds were considered in Refs.~\cite{soleimanifar2022testingmps,aasonson2024peudoentanglement,chen2024local}, with the current best lower bound of $\Omega((\sqrt{nr} + r)/\veps^2)$ proven in \cite{chen2024local}. For high enough bond dimension, our result puts the status of the copy complexity of MPS testing (or TTNS testing) on the same footing as that for rank testing: in either case, there is a (roughly) quadratic gap between the best-known lower bounds for testing with one- versus two-sided error. Up to log factors and for high enough bond dimension, this theorem also falsifies the conjecture from Ref.~\cite{soleimanifar2022testingmps} that the analysis from their work is loose. For constant bond dimension, or two-sided error, however, the copy complexity of MPS or TTNS testing remains open.
\begin{figure}
    \begin{subfigure}[t]{0.5\textwidth}
        \centering
        \begin{tikzpicture}
    
    \draw[line width=1.5pt] (-1,-1) -- (0,0);
    \draw[line width=1.5pt] (-1, 1) -- (0,0);
    \draw[line width=1.5pt] (2, 1)  -- (1,0);
    \draw[line width=1.5pt] (2,-1)  -- (1,0);
    \draw[line width=1.5pt] (0, 0)  -- (1,0);

    \draw[line width=0.5pt] (0,0) -- +(0,0.6);
    \draw[line width=0.5pt] (1,0) -- +(0,0.6);
    \draw[line width=0.5pt] (-1,-1) -- +(0,0.6);
    \draw[line width=0.5pt] (-1, 1) -- +(0,0.6);
    \draw[line width=0.5pt] (2,1) -- +(0,0.6);
    \draw[line width=0.5pt] (2,-1) -- +(0,0.6);

    \draw[fill=black] (0,0) circle (4pt);
    \draw[fill=black] (1,0) circle (4pt);
    \draw[fill=white] (-1,-1) circle (4pt);
    \draw[fill=white] (-1,1) circle (4pt);
    \draw[fill=white] (2,1) circle (4pt);
    \draw[fill=white] (2,-1) circle (4pt);
\end{tikzpicture}
        \caption{\label{fig:tree_like_a}}
    \end{subfigure}
    \begin{subfigure}[t]{0.5\textwidth}
        \centering
        \begin{tikzpicture}
    
    \draw[line width=1.5pt] (-1,-1) -- (0,-0.25);
    \draw[line width=1.5pt] (-1, 1) -- (-0.216,0.125);
    \draw[line width=1.5pt] (2.5, 1)  -- (1.716,0.125);
    \draw[line width=1.5pt] (2.5,-1)  -- (1.5,-0.25);
    \draw[line width=1.5pt] (0.216, 0.125)  -- (1.284,0.125);

    \draw[line width=0.5pt] (0,-0.25) -- +(0,0.6);
    \draw[line width=0.5pt] (-0.216,0.125)  -- +(0,0.6);
    \draw[line width=0.5pt] (0.216,0.125)  -- +(0,0.6);
    \draw[line width=0.5pt] (-1,-1) -- +(0,0.6);
    \draw[line width=0.5pt] (-1, 1) -- +(0,0.6);
    \draw[line width=0.5pt] (1.5,-0.25) -- +(0,0.6);
    \draw[line width=0.5pt] (1.284,0.125) -- +(0,0.6);
    \draw[line width=0.5pt] (1.716,0.125) -- +(0,0.6);
    \draw[line width=0.5pt] (2.5,1) -- +(0,0.6);
    \draw[line width=0.5pt] (2.5,-1) -- +(0,0.6);

    \draw[dashed] (0,0) circle (13pt);
    \draw[dashed] (1.5,0) circle (13pt);

    \draw[fill=white] (-1,1) circle (4pt);

    \draw[fill=white] (-1,-1) circle (4pt);

    \draw[fill=white] (0,-0.25) circle (4pt);
    \draw[fill=white] (-0.216,0.125) circle (4pt);
    \draw[fill=white] (0.216,0.125) circle (4pt);

    \draw[fill=white] (1.5,-0.25) circle (4pt);
    \draw[fill=white] (1.284,0.125) circle (4pt);
    \draw[fill=white] (1.716,0.125) circle (4pt);

    \draw[fill=white] (2.5,1) circle (4pt);

    \draw[fill=white] (2.5,-1) circle (4pt);

\end{tikzpicture}
        \caption{\label{fig:tree_like_b} }
    \end{subfigure}
    \caption{\label{fig:tree_like} (a) Depiction of a TTNS on a graph $G$ with $n=6$ vertices and bond dimension $r$. Physical indices are represented by thin lines and bond indices by thick lines. Here, each empty vertex has physical dimension $d$ and each filled vertex has physical dimension $d^3$. (b) A product of $5$ bipartite states where all the physical dimensions are equal to $d$. The dotted lines are a visual guide to identify Hilbert spaces. If each of the Schmidt-ranks is at most $r$ then this state is an element of $\mathsf{Prod}_2(10,r)\subseteq\mathsf{TTNS}(G,r)$ and the TTNS test accepts. On the other hand, our hard instance for the lower bound takes each of the $5$ bipartite states to be slightly far from Schmidt-rank $r$ (as in~\cref{eq:hard_phi}), and a test with one-sided error struggles to reject.}
\end{figure}

The restriction $r\geq 2+\log n$ for the lower bound in \Cref{thm:main_thm} is a mild one in the sense that the resulting class of states is still efficient to describe, and in practice bond dimensions scaling up to polynomially in $n$ are considered (e.g., for representing the ground states of 1D gapped Hamiltonians~\cite{Arad2017}). Nevertheless, we ask: Is it necessary? Our next result suggests that, if possible, removing this restriction while maintaining a bound linear in $n$ will require analyzing a different class of states than the one considered in this and prior work. Specifically, our lower bound as well as the $\Omega(\sqrt{n})$ lower bounds from Refs.~\cite{soleimanifar2022testingmps} and \cite{chen2024local} rely on analyzing a sub-class of states with the TTNS property: the set of $n/2$-wise tensor products of bipartite states of Schmidt-rank at most $r$. Letting $d$ denote the local dimensions of the sites, we write this set as
\begin{align*}
    \mathsf{Prod}_2(n,r):= \{\outerprod{\psi}{\psi}: \ket{\psi}=\bigotimes\nolimits_{i\in [n/2]}\ket{\psi_i},\ \ket{\psi_i}\in (\mathbb{C}^d)^{\otimes 2}\ \text{and}\ \SR(\ket{\psi_i})\leq r\ \forall i\in[n/2]\},
\end{align*}
and it is contained in $\mathsf{TTNS}(G,r)$ under a suitable choice for the local Hilbert spaces in \cref{eq:ttns_def}, as depicted in \Cref{fig:tree_like}. In \Cref{sec:improved_ub} we prove that, at least for Schmidt-rank $r=2$ and local dimensions equal to $3$, an $\Omega(\sqrt{n})$ lower bound is optimal when considering this class of states.
\begin{corollary}
   There is a universal constant $\delta_0 > 0$ such that, for any fixed distance parameter at most $\delta_0$, the copy complexity of testing $\mathsf{Prod}_2(n,r=2)\subseteq \rmP\left((\mathbb{C}^3)^{\otimes n}\right)$ with one- or two-sided error is $\Theta(\sqrt{n})$.
\end{corollary}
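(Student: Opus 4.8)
The plan is to get the $\Omega(\sqrt n)$ lower bound from prior work and to give a new $O(\sqrt n)$-copy one-sided tester for the matching upper bound; a one-sided test is in particular two-sided, so this settles both cases, and for $n$ below an absolute constant the statement is trivial, so assume $n$ large. For the lower bound, note that up to a choice of local Hilbert spaces $\mathsf{Prod}_2(n,2)\subseteq\rmP((\mathbb{C}^3)^{\otimes n})$ is exactly the family used in the $\Omega(\sqrt n)$ lower bounds of \cite{soleimanifar2022testingmps} and \cite{chen2024local} — products of $n/2$ bipartite states, with the hard distribution putting Schmidt-rank-$2$ states in the YES case and states slightly far from Schmidt-rank $2$ in the NO case — so those arguments apply essentially verbatim and already hold for two-sided error; physical dimension $3$ suffices because it is the least dimension in which a bipartite state can be bounded away from Schmidt-rank $2$. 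The upper-bound tester runs two subroutines on disjoint batches of copies and accepts iff both accept: \textbf{(i)} the Harrow--Montanaro product test across the partition of $[n]$ into the $n/2$ consecutive pairs, amplified by repetition so that any state $\delta_1$-far from the set of such product states is rejected with probability $\ge 9/10$, which for $\delta_1=\Theta(\veps\,n^{-1/4})$ uses $O(\sqrt n/\veps^2)$ copies; and \textbf{(ii)} on $t=C\sqrt n/\veps^2$ further copies ($C$ a large universal constant), for each block $i\in[n/2]$, weak Schur sampling on the $t$ copies of the first qutrit of block $i$, i.e.\ on $\rho_{A_i}^{\otimes t}$ where $\rho_{A_i}$ is the reduced state of $\psi$ on that qutrit, rejecting if some block returns a three-row Young diagram. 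Completeness is perfect: a state in $\mathsf{Prod}_2(n,2)$ is a product state across the pairs (passing (i) with certainty) whose every $\rho_{A_i}$ has rank $\le 2$ (so (ii) returns at most two rows with certainty).

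The quantitative heart is a lemma on weak Schur sampling in dimension three: for a density matrix $\rho$ on $\mathbb{C}^3$ with eigenvalues $\lambda_1\ge\lambda_2\ge\lambda_3$ and $m=\lfloor t/3\rfloor$, weak Schur sampling on $\rho^{\otimes t}$ returns a three-row diagram with probability at least $(1-(1-\lambda_3)^m)^2(1-(2/3)^m)$. I would prove this through the RSK correspondence: the sampled shape is distributed as the shape of the insertion tableau of $t$ i.i.d.\ draws from $(\lambda_1,\lambda_2,\lambda_3)$, which by Schensted's theorem has three rows iff the word contains a strictly decreasing length-three subsequence, i.e.\ the pattern $(3,2,1)$; splitting the $t$ positions into three consecutive blocks of size $\ge m$ and using that a $3$ in the first, a $2$ in the second, and a $1$ in the third (independent events) force such a pattern gives the bound (after replacing $\lambda_2$ by $\lambda_3$ and $1-\lambda_1$ by $2/3$). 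The consequences we use are that for small $\lambda_3$ this is $\Omega(t^2\lambda_3^2)$ while for $\lambda_3$ bounded below it is $1-o_n(1)$; the quadratic-in-$t$ growth, from treating all $t$ copies coherently rather than in groups of three, is precisely what turns the naive $O(n)$ count into $O(\sqrt n)$.

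For soundness, suppose $\psi$ is $\veps$-far from $\mathsf{Prod}_2(n,2)$. If $\psi$ is $\delta_1$-far from every pairwise product state, subroutine (i) already rejects with probability $\ge 9/10$. Otherwise $\psi$ is $\delta_1$-close to a pairwise product state $\Phi=\bigotimes_i\phi_i$, which is then $\ge\veps/2$-far from $\mathsf{Prod}_2(n,2)$; using $\td(\Phi,\mathsf{Prod}_2(n,2))=\sqrt{1-\prod_i(1-\lambda_3^{(i)})}$ (valid since Schmidt truncation is the optimal Schmidt-rank-$2$ approximation of each factor, $\lambda_3^{(i)}$ being the least squared Schmidt coefficient of $\phi_i$) this forces $\sum_i\lambda_3^{(i)}\gtrsim\veps^2$. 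By the lemma, subroutine (ii) accepts $\Phi^{\otimes t}$ with probability $\prod_i(1-P_i)$ where $P_i\ge(1-(1-\lambda_3^{(i)})^m)^2(1-(2/3)^m)$, and a short case split — either some $\lambda_3^{(i)}$ exceeds a threshold of order $1/t$, making $P_i\ge 9/10$, or all are below it, whence $\sum_iP_i\gtrsim t^2\sum_i(\lambda_3^{(i)})^2\gtrsim t^2\veps^4/n\gtrsim\log 10$ by Cauchy--Schwarz (at most $n/2$ nonzero terms, $C$ large) — gives $\prod_i(1-P_i)\le 1/10$. Since $\psi^{\otimes t}$ and $\Phi^{\otimes t}$ are pure states at trace distance $\le\sqrt t\,\delta_1$, subroutine (ii) accepts $\psi^{\otimes t}$ with probability $\le 1/10+\sqrt t\,\delta_1\le 1/5$ once $\delta_1=\Theta(\veps\,n^{-1/4})$ is chosen small enough. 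In either case the tester rejects with probability $\ge 2/3$ using $O(\sqrt n/\veps^2)=O(\sqrt n)$ copies for fixed $\veps\le\delta_0$.

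The step I expect to require the most care is controlling the error from $\psi$ being only approximately a pairwise product state in the soundness argument: it does not accumulate because for pure states the $t$-copy trace distance grows only like $\sqrt t\,\delta_1$ rather than $t\,\delta_1$, and because the subroutine-(ii) acceptance probability on the nearby product state $\Phi$ factorizes exactly — so a single global trace-distance comparison suffices and no union bound over the $\Theta(n^2)$ pairs of blocks is needed. Getting the weak-Schur-sampling lemma with the right power of $t$, uniformly over the spectrum of $\rho$, is the other point demanding attention.
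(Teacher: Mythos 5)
Your proposal is correct, and while it shares the paper's overall two-phase architecture (product test across the pairs, then per-block Schmidt-rank/WSS tests on a $\Theta(\sqrt{n})$-copy batch, with the $\Omega(\sqrt n)$ lower bound imported from the prior MPS arguments exactly as the paper does in \Cref{sec:sqrt_lb}), your key quantitative lemma is proved by a genuinely different and simpler route. The paper's \Cref{lem:3_letter_lds_lem} expands $\Pr[\LDS(\bm w)\le 2]$ as a sum of Schur polynomials, counts semistandard tableaux term by term, and needs the majorization coupling of \Cref{lem:lds_helper} to reduce to the spectrum $(1-2t,t,t)$; the resulting bound $\Pr[\LDS\ge 3]\ge N^2t^2/4 - c_2 t$ carries a negative correction, which is what forces the normalization step \Cref{lem:worst_case_eps} (pushing the worst case to exactly $\veps$-far), the two-batch structure of the second phase, and the restriction $\veps\le\sqrt{10/c_2}$. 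Your block/pattern-hitting argument --- demand a $3$ in the first third of the word, a $2$ in the second, a $1$ in the last, and use independence plus $\lambda_2\ge\lambda_3$ and $\lambda_1\ge 1/3$ --- gives $\Pr[\LDS\ge 3]\ge (1-(1-\lambda_3)^m)^2(1-(2/3)^m)$ with no negative term and no majorization lemma, so the rejection probabilities satisfy $P_i\le 1$ automatically, a single batch suffices, \Cref{lem:worst_case_eps} is not needed, and your copy count comes out as $O(\sqrt n/\veps^2)$ rather than the paper's $O(\sqrt n/\veps^4)$ (irrelevant for the corollary, where $\veps$ is fixed, but a cleaner dependence). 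What the paper's computation buys in exchange is the sharper leading constant $N^2t^2/4$, which your three-way split degrades by roughly a factor of $9$. The only points to make explicit in a write-up are the order of quantifiers in choosing constants ($\delta_1$ must be set after $C$ so that $\sqrt{t}\,\delta_1\le 1/10$) and the observation, already implicit in \Cref{thm:PC_optimal_sr_test}, that local WSS on one qutrit of a block coincides with the PC-optimal Schmidt-rank test on symmetric inputs, which justifies both perfect completeness and the factorization $\prod_i(1-P_i)$ on the nearby product state.
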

Here, $\rmP((\mathbb{C}^3)^{\otimes n})$ denotes the set of pure states on $n$ qutrits. This is an immediate consequence of \Cref{thm:improved_ub_qutrits} in \Cref{sec:improved_ub}. In the statement above we have suppressed a dependence on the distance parameter $\veps$ for the test. (In fact, as a function of $\veps$, our analysis leaves a gap in the copy complexity of this task.) We also note that the above bound stands in contrast to that obtained using a naive ``test-by-learning" approach. Here, even provided the guarantee that the input state is a product state with respect to the $n/2$ subsystems, learning a description of it to within small trace distance error $\veps$ would seem to require linear in $n$ copies. (See the beginning of \Cref{sec:improved_ub} for an explanation of this point.)

\paragraph{Tests using few-copy measurements}
For practical applications, a drawback of the test used to prove the upper bound in \Cref{thm:main_thm} is that it requires highly entangled measurements (based on the Schur transform) on many copies of the input state. In \Cref{sec:few_copy} we consider testing properties using measurements performed on only a few copies of the input state at a time, once again in the setting of one-sided error. We first show in \Cref{sec:adaptivity} that when performing multiple rounds of measurements, adaptivity offers no advantage in many cases.
\begin{theorem}\label{thm:adaptive}
	Let $\calP$ be a set of pure states that forms an irreducible variety, and let $P_{\textnormal{accept}}$ be a PC-optimal $\ell$-copy test for $\calP$. Then among all adaptive $k$-round, $\ell$-copy tests for $\calP$, the test $(P_{\textnormal{accept}})^{\otimes k}$ is PC-optimal.
\end{theorem}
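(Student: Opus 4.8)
The plan is to recast \Cref{thm:adaptive} as a statement about orthogonal projectors and then prove it by induction on the number of rounds $k$, peeling off the first round and invoking irreducibility of $\calP$ exactly once per round. Let $V_t\subseteq\calH^{\otimes t}$ be the closure of the span of $\{\ket{\psi}^{\otimes t}:\psi\in\calP\}$ and $\Pi_t$ the orthogonal projector onto $V_t$. The starting point is that if $0\le P\le I$ is the accept operator of any $\ell$-copy test with perfect completeness, then $\langle\psi|^{\otimes\ell}P\ket{\psi}^{\otimes\ell}=1$ together with $P\le I$ forces $P\ket{\psi}^{\otimes\ell}=\ket{\psi}^{\otimes\ell}$ for every $\psi\in\calP$, hence $P\succeq\Pi_\ell$; and since $\Pi_\ell$ is itself a perfect-completeness $\ell$-copy test that pointwise minimizes the acceptance probability, a PC-optimal $\ell$-copy test accepts every state $\phi$ far from $\calP$ with probability $\langle\phi|^{\otimes\ell}\Pi_\ell\ket{\phi}^{\otimes\ell}$. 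Consequently $(\Pacc)^{\otimes k}$ accepts every state of $\calP$ with probability $1$ and accepts $\phi^{\otimes k\ell}$ with probability $\big(\langle\phi|^{\otimes\ell}\Pi_\ell\ket{\phi}^{\otimes\ell}\big)^k$ for such $\phi$, so \Cref{thm:adaptive} reduces to proving the matching lower bound: any adaptive $k$-round, $\ell$-copy test with perfect completeness accepts $\phi^{\otimes k\ell}$ with probability at least $\big(\langle\phi|^{\otimes\ell}\Pi_\ell\ket{\phi}^{\otimes\ell}\big)^k$.

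I would prove this lower bound by induction on $k$, the case $k=1$ being the paragraph above. For the inductive step, such a test runs in round $1$ a POVM $\{M_a\}_a$ on the first batch $\phi^{\otimes\ell}$, and, conditioned on outcome $a$, the remaining rounds constitute an adaptive $(k-1)$-round, $\ell$-copy test whose acceptance probability on input state $\phi$ I denote $q_a(\phi)$; thus the overall acceptance probability on $\phi$ equals $\sum_a\langle\phi|^{\otimes\ell}M_a\ket{\phi}^{\otimes\ell}\,q_a(\phi)$. Evaluating at $\psi\in\calP$ and using $\sum_aM_a=I$ together with $q_a(\psi)\le 1$, perfect completeness of the whole test forces, for every outcome $a$,
\[
\langle\psi|^{\otimes\ell}M_a\ket{\psi}^{\otimes\ell}=0\quad\text{or}\quad q_a(\psi)=1 .
\]

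Now fix an outcome $a$ with $M_a\Pi_\ell\neq 0$; then $\psi\mapsto\langle\psi|^{\otimes\ell}M_a\ket{\psi}^{\otimes\ell}$ does not vanish identically on $\calP$. As this is the restriction to $\calP$ of a polynomial in the coordinates of $\psi$, and $\calP$ is an irreducible variety, its non-vanishing locus is a nonempty Zariski-open --- hence Zariski-dense, hence dense --- subset of $\calP$; on it the dichotomy above gives $q_a\equiv 1$, and because $q_a$ is itself polynomial in $\psi$ (a finite sum of products of functions $\psi\mapsto\langle\psi|^{\otimes\ell}(\cdot)\ket{\psi}^{\otimes\ell}$) and therefore continuous, $q_a\equiv 1$ on all of $\calP$; that is, the conditional $(k-1)$-round test is perfectly complete, so by the inductive hypothesis $q_a(\phi)\ge\big(\langle\phi|^{\otimes\ell}\Pi_\ell\ket{\phi}^{\otimes\ell}\big)^{k-1}$ for all $\phi$. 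Writing $B=\{a:M_a\Pi_\ell\neq 0\}$ and discarding the nonnegative terms with $a\notin B$,
\[
\Pr[\text{accept}\mid\phi^{\otimes k\ell}]\;\ge\;\Big(\langle\phi|^{\otimes\ell}\Pi_\ell\ket{\phi}^{\otimes\ell}\Big)^{k-1}\sum_{a\in B}\langle\phi|^{\otimes\ell}M_a\ket{\phi}^{\otimes\ell}.
\]
For $a\notin B$ we have $M_a\Pi_\ell=\Pi_\ell M_a=0$, so $\sum_{a\notin B}M_a$ is supported on the range of $I-\Pi_\ell$ and has norm at most $1$, whence $\sum_{a\notin B}M_a\preceq I-\Pi_\ell$ and $\sum_{a\in B}M_a\succeq\Pi_\ell$; thus $\sum_{a\in B}\langle\phi|^{\otimes\ell}M_a\ket{\phi}^{\otimes\ell}\ge\langle\phi|^{\otimes\ell}\Pi_\ell\ket{\phi}^{\otimes\ell}$, and the displayed bound becomes $\big(\langle\phi|^{\otimes\ell}\Pi_\ell\ket{\phi}^{\otimes\ell}\big)^k$, closing the induction.

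The crux is the use of irreducibility in the third paragraph: it is exactly what upgrades ``the conditional test is perfectly complete on the states $\psi\in\calP$ for which outcome $a$ has positive probability'' to ``the conditional test is perfectly complete on all of $\calP$,'' which is what makes the inductive hypothesis applicable. Without it the statement genuinely fails --- if $\calP$ split into two components and some outcome $a$ had positive probability only on one of them, the conditional test would be free to reject states on the other, and the product test would not be optimal. The only other point requiring care is making the elementary algebraic geometry precise for the paper's notion of a variety of pure states: that acceptance probabilities of fixed POVM sequences restrict to genuine polynomial functions on $\calP$, and that a nonempty Zariski-open subset of an irreducible variety is dense in the analytic topology (so a continuous function equal to $1$ there is identically $1$). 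Both are standard, but the real-versus-complex bookkeeping --- vectors versus rank-one projectors --- should be spelled out.
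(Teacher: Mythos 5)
Your proposal is correct and takes essentially the same route as the paper's proof: both rest on the characterization of perfectly complete tests as acting as the identity on $\Span\{\ket{\psi}^{\otimes\ell}:\psi\in\calP\}$, both use irreducibility (complement of a proper subvariety is dense) to upgrade completeness of the conditional measurements from the outcomes' positive-probability locus to all of $\calP$, and both induct on the number of rounds. The differences are organizational rather than substantive: you peel off the first round and carry the explicit pointwise bound $\bigl(\bra{\phi}^{\otimes\ell}\Pi_\ell\ket{\phi}^{\otimes\ell}\bigr)^k$, handling zero-probability outcomes via $\sum_{a\in B}M_a\succeq\Pi_\ell$, whereas the paper peels off the last round and replaces its final measurement by $\Pacc$; the two bookkeeping points you flag (the kernel/complex-subvariety formulation of the vanishing locus, and that PC-optimality of $\Pacc$ is a statement about soundness parameters rather than pointwise acceptance) are genuine but easily repaired and are handled implicitly in the paper as well.
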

Here, \textit{PC-optimal} refers to a test which is optimal among all tests with one-sided error (i.e. perfect completeness). See \Cref{sec:property_testing_defs} for a formal definition of optimality, and \Cref{sec:optimal_measurements} for a description of a canonical PC-optimal test for properties of pure states. A \textit{variety} is the common zero locus of a set of homogeneous polynomials, and an \textit{irreducible variety} is a variety that is not the union of two proper subvarieties. Many properties of interest form irreducible varieties, including TTNS. Notable properties that do not form irreducible varieties include the set of pure biseparable states~\cite{Harrow_2017} and any finite set of two or more states, such as the set of stabilizer states~\cite{gross2021schur}.
Determining tasks for which adaptivity may help in learning or testing quantum states has been a subject of recent activity (see, e.g., \cite{chen2022exponential, chen2023doesadaptivityhelpquantum, anshu2022distributed,Flammia2024quantumchisquared}), and this result adds another example to the list of tasks for which it does not.

In \Cref{sec:few_copy_compelxity} we use this result to bound the copy complexity of testing $\mathsf{TTNS}(G,r)$ using $(r+1)$-copy measurements (the fewest number of copies possible for a non-trivial rank test in the one-sided error setting).
\begin{theorem}[Testing trees with few-copy measurements]\label{thm:trees_few_copies}
Let $G=(V,E)$ be a tree on $n$ vertices, $r \geq 2$ be a positive integer, and $\veps \in (0, \frac{1}{r+2})$. There exists an algorithm which tests whether $\psi \in \mathsf{TTNS}(G,r)$ with one-sided error using ${O((n-1)^r (r+1)!/\veps^{2r})}$ copies and measurements performed on just $r+1$ copies at a time. Furthermore, any algorithm with one-sided error which measures $r+1$ copies at a time, possibly adaptively, requires ${\Omega((n-1)^{r-1} (r+1)! / \veps^{2r})}$ copies.
\end{theorem}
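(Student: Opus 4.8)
The plan is to prove the upper bound with a simple repeated few‑copy test, and the lower bound by combining \Cref{thm:adaptive} with a product hard instance.

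\emph{Upper bound.} The test runs, for $T$ rounds and in each round for each edge $e\in E$, an $(r+1)$‑copy rank test across the cut determined by $e$: take $r+1$ fresh copies of $\psi$, measure whether the $r+1$ registers on the $A_e$‑side of the cut lie in the antisymmetric subspace (apply $\{\Pi_{\wedge^{r+1}}^{A_e},\,I-\Pi_{\wedge^{r+1}}^{A_e}\}$ on $\psi^{\otimes(r+1)}$, tracing out the $B_e$‑side), and reject if any such measurement lands in the antisymmetric subspace. Each measurement touches $r+1$ copies, for a total of $T(n-1)(r+1)$ copies. Perfect completeness is immediate: if $\psi\in\mathsf{TTNS}(G,r)$ then every $\rho_e$ has rank $\le r$, so $\rho_e^{\otimes(r+1)}$ has no antisymmetric component and no measurement rejects. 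For soundness let $\tau_e:=\sum_{i>r}\lambda_i(\rho_e)$ be the Schmidt tail mass at $e$; the rejection probability of a single trial at $e$ is $\Tr[\Pi_{\wedge^{r+1}}^{A_e}\rho_e^{\otimes(r+1)}] = e_{r+1}(\lambda(\rho_e))$, the $(r+1)$‑st elementary symmetric polynomial of the Schmidt spectrum, which is the closed form for the $(r+1)$‑copy rank test.

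Two estimates finish the argument. First, the local‑to‑global truncation bound that drives \Cref{thm:main_thm} (proven in \Cref{sec:ttns}) shows that a $\veps$‑far state has $\sum_{e\in E}\tau_e = \Omega(\veps^2)$. Second, a symmetric‑function inequality — readable off the closed form, with worst case a spectrum of the shape $(1-\tau,\delta/m,\dots,\delta/m)$ for large $m$ — gives $e_{r+1}(\lambda)\gtrsim\tau^r/r!$ whenever $\lambda$ has tail mass $\tau$ (and $e_{r+1}(\lambda)=\Omega(1)$ when $\tau$ is bounded away from $0$). Using convexity of $x\mapsto x^r$, $\sum_{e}e_{r+1}(\lambda(\rho_e)) \gtrsim \tfrac1{r!}\sum_e\tau_e^r \gtrsim \tfrac1{r!}(n-1)\!\left(\tfrac{\veps^2}{n-1}\right)^{r} = \Omega\!\left(\tfrac{\veps^{2r}}{(n-1)^{r-1}\,r!}\right)$, so one round rejects a $\veps$‑far state with probability $\Omega(\min\{1,\veps^{2r}/((n-1)^{r-1}r!)\})$; taking $T=O((n-1)^{r-1}r!/\veps^{2r})$ rounds amplifies this to a constant, for $T(n-1)(r+1)=O((n-1)^{r}(r+1)!/\veps^{2r})$ copies.

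\emph{Lower bound.} Since $\mathsf{TTNS}(G,r)$ is an irreducible variety, \Cref{thm:adaptive} implies the best adaptive $k$‑round $(r+1)$‑copy tester accepts any $\veps$‑far state $\Psi$ with probability at least $\Qacc(\Psi)^k$, where $\Qacc(\cdot)$ is the acceptance probability of the canonical PC‑optimal $(r+1)$‑copy test for $\mathsf{TTNS}(G,r)$; it therefore suffices to exhibit a $\veps$‑far $\Psi$ with $1-\Qacc(\Psi)=O(\veps^{2r}/((n-1)^{r-1}r!))$, which forces $\Qacc(\Psi)^k\le 1/3$ only once $k=\Omega((n-1)^{r-1}r!/\veps^{2r})$, i.e.\ $\Omega((n-1)^{r-1}(r+1)!/\veps^{2r})$ copies. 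Take $\Psi:=\bigotimes_{e\in E}\phi_e$, the product over the $n-1$ edges of copies of one high‑Schmidt‑rank bipartite state $\phi$, embedded into $\bigotimes_v\calH_v$ by splitting vertex $v$'s space as $\bigotimes_{e\ni v}\calH_{v,e}$ and placing $\phi_e$ across $\calH_{u,e}\otimes\calH_{v,e}$ for $e=(u,v)$; this is a product‑type state in which every cut is ``active'' ($\SR_e(\Psi)=\SR(\phi)$ for all $e$). Choose $\phi$ to (nearly) minimize $e_{r+1}(\lambda(\rho_\phi))$ subject to $\dTr(\phi,\{\SR\le r\})\approx\sqrt{\delta_0}$, so that $e_{r+1}(\lambda(\rho_\phi))=\Theta(\delta_0^r/r!)$ (attained near the spectrum $(1-\delta_0,\delta_0/m,\dots,\delta_0/m)$), and set $\delta_0=\Theta(\veps^2/(n-1))$. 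The superadditivity of distance against $\mathsf{TTNS}(G,r)$ for such product states — the phenomenon behind the $\Omega(\sqrt n)$ lower bounds of \cite{soleimanifar2022testingmps,chen2024local} — gives $\dTr(\Psi,\mathsf{TTNS}(G,r))=\Omega(\sqrt{(n-1)\delta_0})=\Omega(\veps)$, valid in the stated range of $\veps$.

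\emph{Finishing the lower bound, and the main obstacle.} The canonical PC‑optimal test accepts $\psi$ with probability $\|\Pi_{V_{r+1}}\psi^{\otimes(r+1)}\|^2$ for $V_{r+1}=\Span\{\chi^{\otimes(r+1)}:\chi\in\mathsf{TTNS}(G,r)\}$. Writing $W_e=\Span\{\chi^{\otimes(r+1)}:\SR(\chi)\le r\}$ for the analogous subspace of the single‑cut Schmidt‑rank test across $e$, the fact that a product $\bigotimes_e\chi_e$ of local states with $\SR(\chi_e)\le r$ is a TTNS of bond dimension $r$ gives $\bigotimes_eW_e\subseteq V_{r+1}$, hence $\Pi_{V_{r+1}}\succeq\bigotimes_e\Pi_{W_e}$ and
\[
\Qacc(\Psi)=\Tr\!\big[\Pi_{V_{r+1}}\Psi^{\otimes(r+1)}\big]\ge\prod_{e\in E}\Tr\!\big[\Pi_{W_e}\phi_e^{\otimes(r+1)}\big]=\big(1-e_{r+1}(\lambda(\rho_\phi))\big)^{n-1},
\]
using the closed form $\Tr[\Pi_{W_e}\phi^{\otimes(r+1)}]=1-e_{r+1}(\lambda(\rho_\phi))$ for the PC‑optimal single‑cut $(r+1)$‑copy rank test. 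Thus $1-\Qacc(\Psi)\le(n-1)\,e_{r+1}(\lambda(\rho_\phi))\lesssim(n-1)\delta_0^r/r!=\veps^{2r}/((n-1)^{r-1}r!)$, as required. The main obstacle is the closed‑form analysis of the $(r+1)$‑copy rank test together with the two matching estimates it must yield — the tail‑mass lower bound $e_{r+1}(\lambda)\gtrsim\tau^r/r!$ for the upper bound, and a bipartite $\phi$ that realizes $e_{r+1}(\lambda(\rho_\phi))$ as small as $\Theta(\delta_0^r/r!)$ while $\sqrt{\delta_0}$‑far from Schmidt‑rank $r$ for the lower bound; a secondary subtlety is proving, rather than assuming, that a global TTNS cannot amortize a truncation error across the independent product blocks of $\Psi$.
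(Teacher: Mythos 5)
Your lower bound is essentially the paper's own argument: irreducibility of $\mathsf{TTNS}(G,r)$ plus \Cref{thm:adaptive} reduces to powers of the PC-optimal $(r+1)$-copy TTNS projector; your containment $\Pi_{V_{r+1}}\succeq\bigotimes_e\Pi_{W_e}$ is exactly \Cref{lem:pc_optimal_test_lb_operator}; the product hard instance with near-uniform tail spectrum of mass $\Theta(\veps^2/(n-1))$ is the paper's $\phi_G$; the ``superadditivity'' you flag as a subtlety is precisely \Cref{lem:overlap_multiplicative_lemma}, already proven in Section 4, so you may simply cite it; and your value $e_{r+1}\approx\delta_0^r(r+1-r\delta_0)/(r+1)!$ is the $d\to\infty$ limit of $g_{d-r}$ in \Cref{thm:error}, which is the closed form you are entitled to invoke. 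Your upper bound uses the same two ingredients as the paper (the truncation bound behind \Cref{cor:ttns_far} and the soundness of the $(r+1)$-copy rank test, i.e.\ the elementary symmetric polynomial $e_{r+1}$ of the Schmidt spectrum) but organizes them differently: the paper reduces to the single worst edge, which is $\veps/\sqrt{2(n-1)}$-far from $\mathsf{SR}(r)$, and repeats that edge's test via \Cref{thm:l_copy}, whereas you test every edge on fresh copies each round and lower-bound the round's rejection probability via $\sum_e\tau_e\gtrsim\veps^2$ and convexity of $x\mapsto x^r$; the two bookkeepings give the same copy count because the worst case of your convexity step is exactly the evenly-spread-tails instance. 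One caution: your blanket claim $e_{r+1}(\lambda)\gtrsim\tau^r/r!$ with a universal constant is only valid in the small-tail regime $\tau<\tfrac{1}{r+2}$ (this is \cref{eq:small_z}); for $\tau$ near $1/2$ the minimizer is $\approx\tau((1-\tau)/r)^r$, smaller by a factor exponential in $r$. Your parenthetical already contains the fix --- an edge with $\tau_e$ bounded below rejects with probability $\Omega_r(1)$, far under budget --- but this case split should be made explicit; note that both your argument and the paper's silently absorb $c^r$-type constants into the $O/\Omega$.
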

When $n=2$, the equivalence between Schmidt-rank and rank testing (see \Cref{sec:sr_rank_testing_equivalence}) implies that the copy complexity of rank testing using adaptive $(r+1)$-copy measurements is precisely $\Theta((r+1)!/\veps^r)$. Despite being substantially higher than the copy complexity in the general setting, we note that the copy complexity using $(r+1)$-copy measurements is still independent of the local dimensions. Hence, such tests may be of practical interest in cases where one wishes to verify that an unknown state can be described by an MPS with a small (constant) bond dimension. For example, if the property of interest is that of being an MPS on $n$ sites with bond dimension at most two, (for a fixed distance parameter) the theorem implies the existence of a test with one-sided error using $O(n^2)$ total copies and measurements.

We prove \Cref{thm:trees_few_copies} using~\Cref{thm:adaptive} and a closed-form expression for the error probability of the PC-optimal $(r+1)$-copy rank test as a function of $\veps$, given in \Cref{sec:closed_form}. Another application of this closed-form expression is to the open problem of finding an improved upper bound on the error probability of the product test, and determining whether this probability tends to $1/2$ as $\veps\to 1$~\cite{harrow2013testing, Montanaro2016, soleimanifar2022testingmps}. In the special case where $r=1$ the closed-form expression applies to the product test on bipartite states, i.e., the SWAP test on one subsystem.

\subsection{Overview of techniques}\label{sec:techniques}
Similar to other works in quantum property testing~\cite{childs2007weakfourierschur,o2015quantum,soleimanifar2022testingmps}, we use techniques from the representation theory of the symmetric and unitary groups to argue that, due to symmetries in the property being considered, it suffices to restrict one's attention to a single, optimal measurement. Let us first describe the basic framework in slightly more detail. Suppose we are interested in a property of bipartite pure states which depends only on the Schmidt coefficients. Then, provided as input the state $\ket{\psi}^{\otimes N}$, the test should not be affected by local unitary transformations of the form $U_A^{\otimes N}\otimes U_B^{\otimes N}$, nor by permutations of the $N$ identical copies of $\ket{\psi}$. Furthermore, since $\ket{\psi}^{\otimes N}$ is in the symmetric subspace, we can without loss of generality focus on the action of a test within this subspace. A standard averaging argument (see, e.g., \cite[Lemma 20]{Montanaro2016}) then leads to the conclusion that, without loss of generality, the optimal measurement for testing this property is a projective measurement of a certain form. Crucially, we may interpret this measurement as leading to perfectly correlated outcomes between measurements performed locally on the two subsystems $A$ and $B$, and so the property may be optimally tested by discarding the $N$ copies of, say the $B$ subsystem. But then this recovers the setting of mixed state testing under unitary symmetry (spectrum testing), for which the optimal measurements --- known as Weak Schur Sampling --- are well-understood. This argument was employed as far back as Ref.~\cite{hayashi2002universal} to study compression of quantum states, and was more recently used for property testing lower bounds in~\cite{chen2024local}. It is also similar to the reasoning behind Theorem~35 in~\cite{soleimanifar2022testingmps}.

The correspondence between spectrum testing and entanglement testing provides an avenue for deriving lower bounds for the properties we consider. Optimal tests for pure state properties have an intuitive description compared to those for mixed states, taking the form of a projection onto a subspace specified by the property (see \Cref{lem:pc_opt_pure_proj_lemma}). Our approach for the lower bound in \Cref{thm:main_thm} leverages this description at first, and uses the correspondence to switch to the mixed state description when it is more convenient for calculations. This allows us to proceed with a direct analysis of the acceptance probability of an optimal test for trees, rather than through an information-theoretic argument based on state discrimination, as in prior work~\cite{soleimanifar2022testingmps,chen2024local}. The task is thus reduced to proving that the rank test of O'Donnell and Wright~\cite{o2015quantum} rejects with too small a probability on a certain hard instance of a far-away state; small enough that amplifying $n$ times does not help, at least for $r$ growing logarithmically in $n$. This requires a fine-grained analysis of the acceptance probabilities of the rank test based on a combinatorial interpretation of Weak Schur Sampling, building on ideas from~\cite{o2015quantum}.

For our proof of~\Cref{thm:adaptive} we make use of an elementary algebraic-geometric characterization of PC-optimal tests for pure states. As we observe in~\Cref{sec:optimal_measurements}, the (essentially unique) PC-optimal $N$-copy test for a property of pure states $\calP$ is to project onto the space $\calP^{N}:=\text{span}(\nu_{N}(\calP))$, where $\nu_{N}(\calP):=\{\psi^{\otimes N} : \psi \in \calP\}$ is the $N$-th \textit{Veronese embedding} of $\calP$. Any variety $\calP$ is isomorphic to its $N$-th Veronese embedding. This allows us to carry over qualities of $\calP$ to qualities of the PC-optimal measurement. We remark that the space $\calP^N$ was recently used to develop semidefinite programming hierarchies for generalized notions of separability~\cite{DJLV24}.

Our other results utilize additional bounds on the the rank test acceptance probabilities under simplifying assumptions about the nature of the unknown state. For example, \Cref{thm:improved_ub_qutrits} relies on a computation enabled by taking the dimension of the state to be equal to $3$, while the closed-form solution for the rank test in \Cref{sec:few_copy} uses the fact that, when measuring $r+1$ copies at a time, the ``reject" measurement operator is equal to the projector onto the antisymmetric subspace.

\subsection{Open questions}
Our work leaves many questions open. Firstly, what is the copy complexity of testing MPS or TTNS with one-sided error and \textit{constant} bond dimension? Could $O(\sqrt{n})$ copies suffice? If instead the $O(n)$ upper bound is tight, \Cref{thm:improved_ub_qutrits} suggests that analyzing a different hard instance --- one which is \textit{not} a product of bipartite states --- may be required to prove it. It would also be interesting to see if the lower bound of $\Omega(\sqrt{n})$ for MPS and TTNS testing can be improved in the setting of two-sided error. Since our analysis relies on identifying a unique optimal measurement for one-sided error tests, we believe new techniques will be required to pursue this direction.

Can our techniques be extended to tensor networks on an arbitrary graph $G$? Intriguingly, our lower bound argument in \Cref{sec:lower_bounds} only makes use of the assumption that the graph $G$ is a tree in one step; that is \Cref{lem:overlap_multiplicative_lemma}, which shows the best approximation to a product of bipartite states by a tree is simply to take the product of the best individual approximations. Does a version of this lemma continue to hold for graphs with cycles? If so, this could imply that testing other tensor network states, such as PEPS or MPS with closed boundary conditions, has a copy complexity that scales at least linearly in the number of edges. Since little is known about the copy complexity of testing or learning more complicated tensor network states, this would be an interesting direction to pursue.


Finally, in the few-copy setting, it may be fruitful to consider the performance of tests which measure greater than $r+1$ copies at a time, interpolating between the two measurement scenarios that we have considered in this work.

\section{Preliminaries}
\subsection{Notation}\label{sec:notation}
\paragraph{Sets, ordering}
For any positive integer $a$ we let $[a]$ denote the set $\{1,\dots, a\}$. Let $\calH$, $\calH^\prime$ be finite-dimensional Hilbert spaces. We denote by $\rmL(\calH,\calH^\prime)$ the set of linear maps from $\calH$ to $\calH^\prime$, $\rmL(\calH)$ the set of linear maps from $\calH$ to $\calH$, $\rmD(\calH)\subset \rmL(\calH)$ the set of quantum states (positive semidefinite operators of unit trace) on $\calH$, and $\rmP(\calH)\subset \rmD(\calH)$ the subset of pure states (rank-1 quantum states) on $\calH$. We denote by $\rmU(\calH)$ the set of unitary operators on $\calH$. Also, we let $\vee^N(\calH)$ be the symmetric subspace of $\calH^{\otimes N}$. For any two operators $X,Y\in\rmL(\calH)$, the notation $X\preceq Y$ means $Y-X$ is positive semidefinite. Throughout, we let $\SG_N$ denote the symmetric group of order $N$.

\paragraph{Bra-ket notation, adjoints}
For a finite-dimensional Hilbert space $\calH$, we denote vectors (not necessarily normalized) in $\calH$ using ket notation $\ket{\psi}$. For a vector $\ket{\psi} \in \calH$, let $\bra{\psi} \in \calH^*$ be defined by $\bra{\psi}= (\ket{\psi},-)$. Here, $(\cdot, \cdot):\calH \times \calH \rightarrow \mathbb{C}$ denotes the inner product on $\calH$, which we take to be antilinear in the first term. In coordinates, $\bra{\psi}$ is the conjugate-transpose of $\ket{\psi}$.
For a unit vector $\ket{\psi}$, it holds that  $\outerprod{\psi}{\psi}\in\rmP(\calH)$ is a pure state. We therefore sometimes also refer to unit vectors as pure quantum states. We use the shorthand $\psi \equiv \outerprod{\psi}{\psi}$ when it has been established that $\ket{\psi}$ is a unit vector in $\calH$. We refer to the quantity $|\langle \phi | \psi\rangle|$ as the \textit{overlap} between the vectors $\ket{\phi}$ and $\ket{\psi}$. For a bipartite quantum state $\ket{\psi}$, we use the notation $\SR(\ket{\psi})$ to refer to the Schmidt-rank (number of non-zero terms in the Schmidt decomposition) of $\ket{\psi}$.

For an operator $A \in \rmL(\calH,\calH')$, let $A^{\dag} \in \rmL(\calH',\calH)$ be the adjoint map defined by $( \ket{\psi}, A \ket{\phi}) = (A^{\dag} \ket{\psi}, \ket{\phi})$ for all $\ket{\psi} \in \calH$, $\ket{\phi}\in \calH'$.

\paragraph{Random variables, distributions}
We denote random variables using bold font, e.g., $\bm{x}$, $\bm{y}$, etc. If $\bm{x}$ is a real-valued random variable we write $\bm{x}\in\mathbb{R}$ and similarly for other sets. If $\bm{x}$ is a discrete random variable over a finite alphabet $[d]$, we identify its distribution with a probability vector $p\in [0,1]^d$ such that $p_i\geq 0$, $\sum_i p_i = 1$. We then write $\bm{x}\sim p$, and $(\bm{x}_1,\bm{x}_2,\dots,\bm{x}_N)\sim p^{\otimes N}$ to mean $\bm{x}_1,\bm{x}_2,\dots,\bm{x}_N$ are $N$ i.i.d.\ copies of $\bm{x}$. We let $\textnormal{Unif}(\calS)$ denote the uniform distribution over the set $\calS$.

\subsection{Property testing of quantum states}\label{sec:property_testing_defs}
In the setting of property testing of quantum states, one considers a set of quantum states $\calS$, with $\calS=\rmD(\calH)$ for mixed state testing and $\calS=\rmP(\calH)$ for pure state testing, as well as a distance measure $\rmd:\calS\times\calS\to \mathbb{R}$. For the most part we consider pure state testing. In this paper, the distance measure $\rmd$ is the standard trace distance between states, $\rmd=\rmd_{\textnormal{Tr}}: \rho\times \sigma\mapsto \frac{1}{2}\norm{\rho-\sigma}_1$. A \textit{property} $\calP$ is a subset of $\calS$, and a state $\rho\in \calS$ is said to be \textit{$\veps$-far} from $\calP$ if $\min_{\sigma\in \calP}\rmd(\rho,\sigma)\geq \veps$. An \textit{$N$-copy test} for $\calP$ is a measurement operator $\Pacc\in\rmL(\calH^{\otimes N})$,  $0\preceq \Pacc\preceq \mathds{1}$ such that, given a distance $\veps\in (0,1]$,
\begin{align*}
      \inf_{\rho\in \calP}\Tr(P_{\textnormal{accept}}\rho^{\otimes N})= a\quad \textnormal{and}\quad
     \sup_{\rho \ \veps\textnormal{-far from}\ \calP} \Tr(P_{\textnormal{accept}}\rho^{\otimes N})= b
\end{align*}
for some $1\geq a \geq b \geq 0$. In such a case we say the test has completeness-soundness (CS) parameters $(a,b)$. Note that only $b$ depends on the distance $\veps$. We say the test is \textit{successful} for the distance $\veps$ if $a-b \geq 1/3$, and has \textit{one-sided error} or \textit{perfect completeness} if $a=1$. In the setting of perfect completeness, there is a natural notion of optimality.
\begin{itemize}
    \item A test is \textit{($N$-copy) PC-optimal} for a property $\calP$ if it has perfect completeness and for every $\veps\in (0,1]$ its soundness parameter is at most that of any other test with perfect completeness.
    \item A test $\Pacc$ is \textit{($N$-copy) strongly PC-optimal} for a property $\calP$ if it is has perfect completeness and for any other $N$-copy test $Q$ with perfect completeness it holds that $\Tr(P_{\textnormal{accept}}\rho^{\otimes N})\leq  \Tr(Q \rho^{\otimes N})$ for any $\rho\in \calS$ which is not in $\calP$.
\end{itemize}
Every strongly PC-optimal test is also PC-optimal. Finally, given a POVM $\calM=\{M_1,\dots,M_k\}\subset \rmL(\calH^{\otimes N})$, we say that $\calM$ is an \textit{optimal measurement} for $\calP$ if for any $\veps\in (0,1]$ the following holds: for any $N$-copy test with CS parameters $(a,b)$ there is an $N$-copy test $\Pacc$ of $\calP$ with CS parameters $(a^\prime,b^\prime)$ such that  $a^\prime\geq a$, $b^\prime\leq b$, and $P_{\textnormal{accept}}=\sum_{j=1}^k\alpha_j M_j$ for some $\alpha_j\in [0,1]$. An operational interpretation of this definition is that $\calM$ is optimal if the performance of any test can be replicated by performing this measurement and classically post-processing the outcomes.

\subsection{Optimal measurements of pure states from symmetries}\label{sec:optimal_measurements}
In this section we describe optimal measurements for testing properties of quantum states under various assumptions. In \Cref{sec:pc_optimal_pure_state} we characterize the PC-optimal tests for pure state properties. In \Cref{sec:schur_weyl} we determine a restricted form for the optimal measurements when the property of interest is invariant under local unitary operations. Finally, in \Cref{sec:sr_rank_testing_equivalence} we describe how Schmidt-rank testing and rank testing are equivalent tasks, which is a key step in the proof of the lower bound for testing trees in \Cref{sec:lower_bounds}.

\subsubsection{Pure state tests with perfect completeness}\label{sec:pc_optimal_pure_state}
In this section we characterize the PC-optimal tests for pure state properties. Let $\calH$ be a finite-dimensional Hilbert space and $N$ be a positive integer. Consider an $N$-copy test $\Pacc\in\rmL(\calH^{\otimes N})$ for some pure state property $\calP\subseteq \rmP(\calH)$. Upon receiving $N$ copies of an unknown pure state $\psi\in\rmP(\calH)$, the test accepts with probability equal to $\Tr(\Pacc \psi^{\otimes N})=\Tr(\Pi_{\textnormal{Sym}}\Pacc \Pi_{\textnormal{Sym}} \psi^{\otimes N})$, where $\Pi_{\textnormal{Sym}}:\calH^{\otimes N}\to \vee^N(\calH)$ denotes the projector onto the symmetric subspace. This motivates defining the following equivalence relation: we say that two tests $P, Q\in\rmL(\calH^{\otimes N})$ are \textit{equivalent within the symmetric subspace} if $\Pi_{\textnormal{Sym}}P \Pi_{\textnormal{Sym}} = \Pi_{\textnormal{Sym}}Q\Pi_{\textnormal{Sym}}$.
\begin{lemma}\label{lem:pc_opt_pure_proj_lemma}
    For any pure state property $\calP\subseteq \rmP(\calH)$, the $N$-copy test
    \begin{align*}
        \Pacc &:= \Proj (\Span \{\ket{\phi}^{\otimes N}: \outerprod{\phi}{\phi}\in\calP\})
    \end{align*}
    is strongly PC-optimal. Moreover, up to the equivalence defined above, it is the unique such test.
\end{lemma}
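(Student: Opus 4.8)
The claim has two parts: (i) the projector onto $\calP^N := \Span\{\ket{\phi}^{\otimes N} : \phi \in \calP\}$ has perfect completeness and beats every other PC test on every $\veps$; (ii) it is the unique PC-optimal test up to equivalence within the symmetric subspace.

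For (i): Perfect completeness is immediate since $\psi^{\otimes N}$ is supported on $\calP^N$ whenever $\psi \in \calP$, so $\Tr(\Pacc \psi^{\otimes N}) = 1$. For optimality, let $Q$ be any PC test for $\calP$. I want to show $\Tr(\Pacc \rho^{\otimes N}) \le \Tr(Q\rho^{\otimes N})$ for every $\rho \notin \calP$ (this in fact proves strong PC-optimality, which implies PC-optimality). The key observation is that perfect completeness of $Q$ forces $Q \succeq \Pacc$ on the symmetric subspace: since $\Tr(Q \phi^{\otimes N}) = 1$ and $0 \preceq Q \preceq \mathds{1}$, each unit vector $\ket{\phi}^{\otimes N}$ (for $\phi \in \calP$) lies in the $1$-eigenspace of $\Pi_{\mathrm{Sym}} Q \Pi_{\mathrm{Sym}}$; hence the whole span $\calP^N$ does, i.e. $\Pi_{\mathrm{Sym}} Q \Pi_{\mathrm{Sym}} \succeq \Proj(\calP^N) = \Pacc$ (note $\Pacc = \Pi_{\mathrm{Sym}}\Pacc\Pi_{\mathrm{Sym}}$ already since $\calP^N \subseteq \vee^N(\calH)$). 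Then for any state $\rho$, $\Tr(Q\rho^{\otimes N}) = \Tr(\Pi_{\mathrm{Sym}} Q \Pi_{\mathrm{Sym}}\, \rho^{\otimes N}) \ge \Tr(\Pacc \rho^{\otimes N})$, using that $\rho^{\otimes N}$ is supported on the symmetric subspace. Taking the supremum over $\rho$ that are $\veps$-far from $\calP$ shows the soundness parameter of $Q$ is at least that of $\Pacc$, for every $\veps$.

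For (ii): Suppose $Q$ is also PC-optimal. By the argument above, $\Pi_{\mathrm{Sym}} Q \Pi_{\mathrm{Sym}} \succeq \Pacc$, so $\Pi_{\mathrm{Sym}} Q \Pi_{\mathrm{Sym}} - \Pacc \succeq 0$. If this difference is nonzero, it has a unit eigenvector $\ket{v} \in \vee^N(\calH)$ with strictly positive eigenvalue on which $\Pacc$ vanishes; I then want to produce a state $\rho \notin \calP$ with $\rho^{\otimes N}$ having nonzero overlap with $\ket{v}$ but zero overlap with $\calP^N$ — more carefully, I need to exhibit some $\veps$ and some $\veps$-far $\rho$ witnessing that $Q$ has strictly larger acceptance probability than $\Pacc$, contradicting $Q$'s PC-optimality (which by part (i) is equivalent to $Q$'s soundness equalling that of $\Pacc$ for all $\veps$). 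The cleanest route: for a generic product state $\rho = \sigma^{\otimes N}$ with $\sigma$ close to but not in $\calP$, $\Tr(\Pacc \sigma^{\otimes N}) \to 1$ as $\sigma \to \calP$ while the extra positive part $\Pi_{\mathrm{Sym}}Q\Pi_{\mathrm{Sym}} - \Pacc$ contributes a non-vanishing amount at the relevant scale — here one uses that $\vee^N(\calH) = \Span\{\ket{\sigma}^{\otimes N} : \sigma \in \rmP(\calH)\}$, so $\ket{v}$ is a limit of (combinations of) such product powers, forcing some nearby $\sigma \notin \calP$ to feel the difference. The upshot is $\Pi_{\mathrm{Sym}}Q\Pi_{\mathrm{Sym}} = \Pacc$, i.e. $Q$ is equivalent to $\Pacc$ within the symmetric subspace.

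I expect the main obstacle to be making the uniqueness argument fully rigorous: one must argue that a nonzero PSD operator on $\vee^N(\calH)$ orthogonal to $\calP^N$ (in the sense of the positive-part decomposition) necessarily strictly increases acceptance on \emph{some} $\veps$-far state for \emph{some} $\veps$, rather than just on states inside $\calP$ or on non-physical vectors. The spanning property $\vee^N(\calH) = \Span\{\phi^{\otimes N} : \phi \in \rmP(\calH)\}$ and a continuity/genericity argument (a small perturbation off $\calP$ stays $\veps$-far for suitable small $\veps$, yet its $N$-th power has controllable overlap with both $\calP^N$ and $\ket{v}$) should close this gap; the completeness direction and the soundness inequality in part (i) are routine.
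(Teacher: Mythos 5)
Your part (i) is correct and is essentially the paper's own argument: perfect completeness together with $Q\preceq\mathds{1}$ forces each $\ket{\phi}^{\otimes N}$, $\phi\in\calP$, into the eigenvalue-$1$ eigenspace, hence $\PSym Q\PSym\succeq\Pacc$ and the pointwise inequality follows. (The paper phrases this as $Q=\Pacc+M_{\calV^{\perp}}$; same content. One small imprecision: "for any state $\rho$" should be "for any pure state," since $\rho^{\otimes N}$ lies in the symmetric subspace only for pure $\rho$ --- harmless here because the lemma concerns pure-state testing, $\calS=\rmP(\calH)$.)

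Part (ii) has a genuine gap, and it comes from targeting the wrong notion of optimality. The lemma claims uniqueness among \emph{strongly} PC-optimal tests, and for that there is a direct argument you are not using: if $Q$ is strongly PC-optimal, then applying its definition with $\Pacc$ as the comparison test gives $\Tr(Q\psi^{\otimes N})\le\Tr(\Pacc\psi^{\otimes N})$ for every pure $\psi\notin\calP$; combined with your part (i) this forces $\Tr(M\psi^{\otimes N})=0$ for \emph{all} pure $\psi$, where $M:=\PSym Q\PSym-\Pacc\succeq 0$ (on $\calP$ both probabilities equal $1$). Then $M^{1/2}\ket{\psi}^{\otimes N}=0$ for every $\psi$, and since the vectors $\ket{\psi}^{\otimes N}$ span $\vee^N(\calH)$, $M=0$, i.e.\ $Q$ is equivalent to $\Pacc$; no $\veps$, no suprema, no genericity needed. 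Your route instead assumes only that $Q$ is PC-optimal, i.e.\ that its soundness parameter equals that of $\Pacc$ for every $\veps$, and tries to contradict this by exhibiting a single $\veps$-far state $\sigma$ with $\Tr(Q\sigma^{\otimes N})>\Tr(\Pacc\sigma^{\otimes N})$. That is not a contradiction: soundness parameters are suprema over all $\veps$-far states, and the supremum for $\Pacc$ may already exceed $\Tr(Q\sigma^{\otimes N})$, so the two suprema can coincide even though the tests differ at $\sigma$. Closing your version would require showing that the supremum itself strictly increases for some $\veps$, which the continuity/genericity sketch does not address (a further wrinkle: $\calP$ need not be closed, so a state $\sigma\notin\calP$ can be at distance $0$ from $\calP$ and hence not $\veps$-far for any $\veps>0$, ruling out some witnesses of $M\neq0$). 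Either restrict the uniqueness claim to strongly PC-optimal tests and use the short argument above, as the paper does, or accept that the stronger uniqueness-among-PC-optimal statement is not established by your sketch.
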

\begin{proof}
    Let $0\preceq Q_{\textnormal{accept}}\preceq\mathds{1}$ be another test for $\calP$ with one-sided error. This means that, for every $\phi\in \calP$, we have $\Tr(Q_{\acc}\phi^{\otimes N})=1$ and, since the maximum eigenvalue of $Q_{\acc}$ is $1$, it holds that $\ket{\phi}^{\otimes N}$ is an eigenvector of $Q_{\acc}$ with eigenvalue $1$. By linearity, any $\ket{\chi}\in\calV:=\Span \{\ket{\phi}^{\otimes N}: \outerprod{\phi}{\phi}\in\calP\}$ is also an eigenvector of $Q_{\acc}$ with eigenvalue $1$. Therefore, we can write $Q_{\acc}=P_{\acc} + M_{\calV^{\perp}}$ for some $M_{\calV^{\perp}}\in \rmL(\calV^{\perp})$ with eigenvalues between $0$ and $1$. This implies that $\Pacc$ is strongly PC-optimal. Furthermore,
    \begin{align*}
        \PSym\Qacc\PSym &= \Pacc + \PSym M_{\calV^\perp}\PSym.
    \end{align*}
    Suppose that $\Qacc$ is strongly PC-optimal, but $\PSym M_{\calV^\perp}\PSym\neq 0$. Then we can set $\ket{\xi}$ to be any state such that $\ket{\xi}^{\otimes N}$ is not in the kernel of $\PSym M_{\calV^\perp}\PSym$. That it is possible to pick such a state follows from the fact that the symmetric subspace is spanned by the set of all $N$-wise tensor product states. Then
    \begin{align*}
        \Tr(\Qacc\xi^{\otimes N}) &= \Tr(\Pacc\xi^{\otimes N}) + \Tr(M_{\calV^\perp}\xi^{\otimes N}) > \Tr(\Pacc\xi^{\otimes N}).
    \end{align*}
    This is a contradiction with the definition of strong PC-optimality and completes the proof.
\end{proof}
Tests of the same form as the operator $\Pacc$ in the above lemma have been used in several prior works~\cite{wang2011property,harrow2013testing,Montanaro2016}, and in particular were used to prove optimality of the product test in~\cite{harrow2013testing} for the case of one-sided error.

\subsubsection{Schur-Weyl duality applied to bipartite systems}\label{sec:schur_weyl}
This section reviews some of the facts from representation theory used to derive optimal measurements in property testing. See,
e.g.,~\cite{goodman2009symmetry,landsberg2012tensors,fulton2013representation}
for introductions to these topics. Let $N$ be a positive integer and $\calH$ be a finite-dimensional complex Hilbert space. Also, given a permutation $\pi\in\SG_N$ let $W_{\pi}\in\rmU(\calH^{\otimes N})$ denote the permutation operator corresponding to $\pi$ with the action
\begin{align*}
    W_{\pi} \ket{x_1}\otimes\ket{x_2}\otimes\dots\otimes\ket{x_N} = \ket{x_{\pi^{-1}(1)}}\otimes \ket{x_{\pi^{-1}(2)}}\otimes\dots\otimes\ket{x_{\pi^{-1}(N)}}
\end{align*}
for any $\ket{x_1},\ket{x_2},\dots,\ket{x_N}\in \calH$. The tensor product Hilbert space $\calH^{\otimes N}$ admits a unitary representation of the group $\SG_N\times \rmU(\calH)$ via the action $\ket{v}\mapsto U^{\otimes N}W_{\pi}\ket{v}$ for any $\pi\in \SG_N$ and $U\in\rmU(\calH)$. Since the representation is unitary it has a canonical decomposition into a direct sum of orthogonal isotypic components,
$
    \calH^{\otimes N} = \bigoplus_{\lambda} V_{\lambda}
$. Schur-Weyl duality states that the direct sum in this equality is over partitions $\lambda\vdash N$ whose length $\ell(\lambda)$ is at most $\dim(\calH):=d$, and that the corresponding isotypic component is actually irreducible and isomorphic to $\calP_\lambda\otimes \calQ_\lambda(\calH)$, where $\calP_\lambda$ and $\calQ_\lambda$ are irreducible representations (irreps) of $\SG_N$ and $\rmU(\calH)$, respectively. We can summarize most of this information with the statement
\begin{align}\label{eq:sw_duality}
    \calH^{\otimes N}\cong \bigoplus_{\substack{\lambda\vdash N\\ \ell(\lambda)\leq d}} \calP_\lambda \otimes \calQ_\lambda(\calH).
\end{align}
We often identify $V_{\lambda}$ with $\calP_\lambda \otimes \calQ_\lambda(\calH)$ without comment, and we let $P_\lambda$ denote the projection onto the $\lambda^{\text{th}}$ subspace in the direct sum. Performing the projective measurement $\{P_\lambda\}_{\lambda}$ on a quantum state $\rho^{\otimes N}$ with $\rho\in \rmD(\calH)$ is called \textit{Weak Schur Sampling} (WSS) and is known to be optimal for testing properties invariant under unitary operations, i.e., properties of the spectrum of $\rho$~\cite{Montanaro2016}. Notably, this argument depends on the decomposition in \cref{eq:sw_duality} being multiplicity-free. We record some essential facts about WSS in \Cref{sec:wss}.

Now consider the case where $\calH=\calH_A\otimes \calH_B$ for a pair of finite-dimensional Hilbert spaces $\calH_A$ and $\calH_B$, and we are interested in a property $\calP\subseteq \rmP(\calH)$ which is invariant under local unitary operations on the $A$ and $B$ subsystems. In other words, $\calP$ is such that, if $\psi\in\calP$ then
\begin{align}\label{eq:statement_of_symmetry}
    (U_A \otimes U_B) \psi (U_A^\dag\otimes U_B^\dag)\in \calP \quad \textnormal{for all}\ \ U_A\in\rmU(\calH_A),\ U_B\in\rmU(\calH_B).
\end{align}
To take advantage of this symmetry, one may consider applying the decomposition arising from Schur-Weyl duality to each of the subsystems. Indeed, the space $\calH_A^{\otimes N}\otimes \calH_B^{\otimes N}$ admits a representation of the product group $\rmU(\calH_A)\times \rmU(\calH_B)$ via the action $\ket{v}\mapsto (U_A^{\otimes N}\otimes U_B^{\otimes N})\ket{v}$ for each $U_A\in\rmU(\calH_A)$ and $U_B\in\rmU(\calH_B)$. Setting $d_A:=\dim(\calH_A)$ and $d_B:=\dim(\calH_B)$ and using the decomposition arising from Schur-Weyl duality twice, we see
\begin{align}
    \calH_A^{\otimes N}\otimes \calH_B^{\otimes N} &\cong \bigoplus_{\substack{\lambda_1\vdash N\\ \ell(\lambda_1)\leq d_A}} \bigoplus_{\substack{\lambda_2\vdash N\\ \ell(\lambda_2)\leq d_B}} \calP_{\lambda_1} \otimes \calP_{\lambda_2}\otimes \calQ_{\lambda_1}(\calH_A)\otimes \calQ_{\lambda_2}(\calH_B)\nonumber\\
    &\cong \bigoplus_{\mu\vdash N}\bigoplus_{\substack{\lambda_1\vdash N\\ \ell(\lambda_1)\leq d_A}} \bigoplus_{\substack{\lambda_2\vdash N\\ \ell(\lambda_2)\leq d_B}} \mathbb{C}^{k_{\mu}(\lambda_1,\lambda_2)}\otimes \calP_{\mu}\otimes \calQ_{\lambda_1}(\calH_A)\otimes \calQ_{\lambda_2}(\calH_B)\label{eq:direct_sum_with_mu}
\end{align}
where in the second line we decomposed the product of symmetric group irreps, which results in multiplicities $k_{\mu}(\lambda_1,\lambda_2)$ known as the Kronecker coefficients. Evidently, this decomposition of the vector space into irreps of $\rmU(\calH_A)\times \rmU(\calH_B)$ is not multiplicity-free in general. However, a useful observation, perhaps first employed within the quantum information literature in \cite{hayashi2002universal}, simplifies this decomposition greatly. Namely, for measurements performed on identical copies of pure states, the representation of interest is just the subrepresentation within the symmetric subspace, i.e., the terms in the direct sum in \cref{eq:direct_sum_with_mu} where the partition $\mu=(N)$. Restricting the sum to these terms and applying Schur's Lemma (see the proof of \Cref{lem:double_sw_projectors} in \Cref{sec:optimal_measurement_proofs} for further details) enables one to conclude that
\begin{align}\label{eq:schur_weyl_bipartite_symmetric}
    \vee^N(\calH_A\otimes \calH_B) \cong \bigoplus_\lambda \calQ_{\lambda}(\calH_A)\otimes \calQ_{\lambda}(\calH_B).
\end{align}
where the isomorphism is as a representation of the group $\rmU(\calH_A)\times \rmU(\calH_B)$ and the direct sum ranges over $\lambda\vdash N$ such that $\ell(\lambda)\leq \min\{d_A,d_B\}$. Alternatively, \cref{eq:schur_weyl_bipartite_symmetric} follows immediately from the fact that, within the symmetric subspace, the representations of $\rmU(\calH_A)$ and $\rmU(\calH_B)$ form a dual reductive pair (see, e.g., \cite[Prop.~9.2.1]{goodman2009symmetry} and \cite[Sec.~5.4]{harrow2005applications}). An averaging argument similar to that in~\cite[Lemma 20]{Montanaro2016} then leads to the following theorem concerning the optimal measurements for testing properties related to bipartite entanglement of pure states; that is, having the symmetry in \cref{eq:statement_of_symmetry}.
\begin{theorem}[Implied by Theorem 3.2 in \cite{chen2024local}]\label{thm:purification}
    Let $\calH_A$, $\calH_B$ be finite-dimensional Hilbert spaces and $\calP \subset \rmP(\calH_A\otimes \calH_B)$ be a property of pure states which is invariant under local unitary transformations, as in \cref{eq:statement_of_symmetry}. Weak Schur Sampling performed locally, either on $\calH_A^{\otimes N}$ or $\calH_B^{\otimes N}$, followed by classical post-processing, is an optimal measurement for $\calP$.
\end{theorem}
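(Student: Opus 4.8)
\emph{Proof plan.} The plan is to combine three ingredients: a reduction to the symmetric subspace together with a Haar twirl over local unitaries, in the spirit of \cite[Lemma~20]{Montanaro2016}; Schur's lemma applied to the multiplicity-free decomposition \eqref{eq:schur_weyl_bipartite_symmetric}; and the observation that, restricted to the symmetric subspace, a Schur-block projector acting only on the $A$-side coincides with the block projector on the joint system, and hence also with the one on the $B$-side.

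Fix $\veps\in(0,1]$ and an $N$-copy test $P$ for $\calP$ with CS parameters $(a,b)$. Since every input is of the form $\psi^{\otimes N}\in\vee^N(\calH_A\otimes\calH_B)$, replacing $P$ by $\Pi_{\textnormal{Sym}}P\,\Pi_{\textnormal{Sym}}$ leaves all acceptance probabilities unchanged, so we may assume $P$ is supported on the symmetric subspace. Let $\tilde P := \int (U_A^{\otimes N}\otimes U_B^{\otimes N})\,P\,(U_A^{\otimes N}\otimes U_B^{\otimes N})^{\dag}\,\dd U_A\,\dd U_B$ be the twirl over $\rmU(\calH_A)\times\rmU(\calH_B)$ with Haar measure; it still satisfies $0\preceq\tilde P\preceq\mathds{1}$ and is supported on $\vee^N(\calH_A\otimes\calH_B)$. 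Because $\calP$ is invariant under local unitaries \eqref{eq:statement_of_symmetry} and trace distance is unitarily invariant, for every $\psi\in\calP$ we have $\Tr(\tilde P\,\psi^{\otimes N}) = \int \Tr\big(P\,[(U_A\otimes U_B)\psi(U_A^{\dag}\otimes U_B^{\dag})]^{\otimes N}\big)\,\dd U \geq a$, and similarly $\Tr(\tilde P\,\psi^{\otimes N})\leq b$ whenever $\psi$ is $\veps$-far from $\calP$; hence $\tilde P$ has CS parameters $(a',b')$ with $a'\geq a$ and $b'\leq b$.

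By construction $\tilde P$ commutes with $U_A^{\otimes N}\otimes U_B^{\otimes N}$ for all $U_A,U_B$. Each summand $\calQ_\lambda(\calH_A)\otimes\calQ_\lambda(\calH_B)$ in \eqref{eq:schur_weyl_bipartite_symmetric} is an external tensor product of irreps of $\rmU(\calH_A)$ and $\rmU(\calH_B)$, hence an irrep of the product group, and distinct $\lambda$ give inequivalent irreps; the decomposition is therefore multiplicity-free, so Schur's lemma forces $\tilde P = \sum_\lambda \alpha_\lambda P_\lambda^{AB}$, where $P_\lambda^{AB}$ is the orthogonal projector onto the $\lambda$-block in \eqref{eq:schur_weyl_bipartite_symmetric} and $\alpha_\lambda\in[0,1]$ (the range following from $0\preceq\tilde P\preceq\mathds{1}$ and $\sum_\lambda P_\lambda^{AB}=\Pi_{\textnormal{Sym}}$).

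It remains to identify $P_\lambda^{AB}$ with a local Weak Schur Sampling outcome, and this is the step I expect to require the most care. Applying Schur-Weyl on each factor as in \eqref{eq:direct_sum_with_mu}, the symmetric subspace of $\calH_A\otimes\calH_B$ is the $\mu=(N)$ component of the diagonal $\SG_N$-action, on which the pair of partitions is forced to be diagonal, $\lambda_1=\lambda_2$; this is precisely the dual-reductive-pair statement underlying \eqref{eq:schur_weyl_bipartite_symmetric}, and one should cite \cite[Prop.~9.2.1]{goodman2009symmetry} or \cite[Sec.~5.4]{harrow2005applications} for the careful version. Consequently $P_{\lambda_1}^A\otimes P_{\lambda_2}^B$ annihilates $\vee^N(\calH_A\otimes\calH_B)$ unless $\lambda_1=\lambda_2$, so $(P_\lambda^A\otimes\id_B)\Pi_{\textnormal{Sym}} = (P_\lambda^A\otimes P_\lambda^B)\Pi_{\textnormal{Sym}} = P_\lambda^{AB}$, and symmetrically with $A$ and $B$ exchanged. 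Hence, for any $\psi\in\rmP(\calH_A\otimes\calH_B)$, writing $\rho_A := \Tr_B(\psi)$ so that $\Tr_B(\psi^{\otimes N})=\rho_A^{\otimes N}$, we get $\Tr(P_\lambda^{AB}\,\psi^{\otimes N}) = \Tr\big((P_\lambda^A\otimes\id_B)\,\psi^{\otimes N}\big) = \Tr(P_\lambda^A\,\rho_A^{\otimes N})$, exactly the probability that WSS on $\calH_A^{\otimes N}$ outputs $\lambda$. Defining $Q := \sum_\lambda \alpha_\lambda\,(P_\lambda^A\otimes\id_B)$ --- a classical post-processing of the POVM $\{P_\lambda^A\otimes\id_B\}_\lambda$, i.e.\ of local WSS on $A$ --- we obtain $\Tr(Q\,\psi^{\otimes N}) = \Tr(\tilde P\,\psi^{\otimes N})$ for every pure $\psi$, so $Q$ has CS parameters $(a',b')$ with $a'\geq a$, $b'\leq b$. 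This shows local WSS on $\calH_A^{\otimes N}$ (and, by the symmetry just noted, on $\calH_B^{\otimes N}$), followed by classical post-processing, is an optimal measurement for $\calP$. The twirling step and the Schur's-lemma step are routine by comparison; the genuine work is in the diagonalization $\lambda_1=\lambda_2$ and the resulting identity $(P_\lambda^A\otimes\id_B)\Pi_{\textnormal{Sym}}=P_\lambda^{AB}$.
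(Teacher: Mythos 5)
Your proposal is correct and follows essentially the same route as the paper: symmetrize and twirl over $\rmU(\calH_A)\times\rmU(\calH_B)$, invoke Schur's lemma on the multiplicity-free decomposition \eqref{eq:schur_weyl_bipartite_symmetric}, and then identify each block projector with a local Weak Schur Sampling outcome via the identity $(P_\lambda^{(A)}\otimes\mathds{1}_B^{\otimes N})\Pi_{\textnormal{Sym}}=(P_\lambda^{(A)}\otimes P_\lambda^{(B)})\Pi_{\textnormal{Sym}}$, which is exactly the content of \Cref{prop:optimal_meas} and \Cref{lem:double_sw_projectors} in the paper. The only difference is that you outsource the $\lambda_1=\lambda_2$ diagonalization to the dual-reductive-pair references, whereas the paper proves it directly via the Kronecker coefficient computation $k_{(N)}(\lambda_1,\lambda_2)=\delta_{\lambda_1,\lambda_2}$; this is a presentational rather than a substantive difference.
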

The claim is slightly weaker than Theorem~3.2 in \cite{chen2024local}, in the sense of having a stricter requirement on the symmetry\footnote{\cite{chen2024local} only requires invariance under local unitary transformations applied to one of the two subsystems. However, this point is not relevant for the case of testing properties related to entanglement.}. The statement may also have been implicit in earlier works, such as the proof of Lemma~36 in \cite{soleimanifar2022testingmps}. We provide a self-contained proof of this theorem in \Cref{sec:optimal_measurement_proofs}.


\subsubsection{Equivalence between Schmidt-rank and rank testing}\label{sec:sr_rank_testing_equivalence}
Let $\calH_A$, $\calH_B$ be 
finite-dimensional Hilbert spaces and assume without loss of generality 
that $d_A\leq d_B$. For any positive integer $r\leq d_A$ and pure state $\ket{\psi}\in \calH_A\otimes \calH_B$ with a Schmidt decomposition
\begin{align}\label{eq:psi_schmidt_decomp}
    \ket{\psi} = \sum_{j=1}^{d_A}\lambda_j \ket{u_j}_A\otimes \ket{v_j}_B
\end{align}
such that
$\lambda_1\geq\lambda_2\geq\dots\geq \lambda_{d_A}$, let
$\Delta_r(\ket{\psi}):= \lambda_{r+1}^2+ \lambda_{r+2}^2+\dots+\lambda_{d_A}^2$
and $\mathsf{SR}(r)\subseteq \rmP(\calH_A\otimes \calH_B)$ be the 
property of pure states corresponding to having Schmidt-rank at most 
$r$. Consider testing this property with the distance measure set to the standard trace distance. Then $\psi$ is $\veps$-far from $\mathsf{SR}(r)$ if and only if
$\Delta_r(\ket{\psi})\geq \veps^2$. To see this, note that
\begin{align*}
    \min_{\varphi\in\mathsf{SR}(r)}\rmd_{\textnormal{Tr}}(\psi,\varphi)^2 &= 1 - \max_{\varphi\in\mathsf{SR}(r)} |\langle \psi|\varphi\rangle|^2\\
    &= \sum_{j=1}^{d_A}\lambda_j^2 - \sum_{j=1}^r \lambda_j^2\\
    &= \sum_{j\geq r+1} \lambda_j^2
\end{align*}
where the first line makes use of the fact that $\rmd_{\textnormal{Tr}}(\phi,\phi^\prime)=\sqrt{1-|\langle\phi|\phi^\prime\rangle|^2}$ for any two pure states $\phi=\outerprod{\phi}{\phi}$ and $\phi^\prime=\outerprod{\phi^\prime}{\phi^\prime}$ and in the second line we used the following version of the Eckart-Young-Mirsky Theorem on low-rank approximation~\cite{Eckart1936,mirsky1960symmetric}. This version is also stated explicitly as Lemma 18 in~\cite{soleimanifar2022testingmps}, and we provide an elementary proof in \Cref{sec:eckart_young_mirsky} for completeness.
\begin{proposition}\label{prop:eckart_young_mirsky}
    Let $\ket{\psi}\in\calH_A\otimes \calH_B$ be as defined in \cref{eq:psi_schmidt_decomp}. It holds that
    \begin{align*}
        \max\{|\langle\psi|\varphi\rangle|^2:\ket{\varphi}\in\calH_A\otimes\calH_B,\ \varphi\in\mathsf{SR}(r)\} &= \sum_{j=1}^r \lambda_j^2
    \end{align*}
    and the maximum is attained by $(\sum_{k=1}^r\lambda_k^2)^{-1/2}\sum_{j=1}^r\lambda_j \ket{u_j}\otimes \ket{v_j}$.
\end{proposition}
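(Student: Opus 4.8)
The plan is to prove the upper bound $\max\{|\langle\psi|\varphi\rangle|^2:\varphi\in\mathsf{SR}(r)\}\leq\sum_{j=1}^r\lambda_j^2$ and the matching lower bound separately. The lower bound I would get simply by evaluating the overlap at the claimed maximizer: setting $\ket{\varphi}:=\big(\sum_{k=1}^r\lambda_k^2\big)^{-1/2}\sum_{j=1}^r\lambda_j\ket{u_j}\otimes\ket{v_j}$, orthonormality of the families $\{\ket{u_j}\}$ and $\{\ket{v_j}\}$ shows $\ket{\varphi}$ is a unit vector of Schmidt-rank at most $r$, so $\varphi\in\mathsf{SR}(r)$, and the same orthonormality gives $\langle\psi|\varphi\rangle=\big(\sum_{k=1}^r\lambda_k^2\big)^{-1/2}\sum_{j=1}^r\lambda_j^2=\big(\sum_{j=1}^r\lambda_j^2\big)^{1/2}$. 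Hence $|\langle\psi|\varphi\rangle|^2=\sum_{j=1}^r\lambda_j^2$, which both lower-bounds the maximum and exhibits the stated optimizer.

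For the upper bound I would take an arbitrary unit vector $\ket{\varphi}$ with $\varphi\in\mathsf{SR}(r)$ and let $P_A\in\rmL(\calH_A)$ be the orthogonal projector onto the support of the reduced operator $\Tr_B\outerprod{\varphi}{\varphi}$; since that operator has rank at most $r$, $\rank(P_A)\leq r$, and since the support of a marginal is spanned by the corresponding Schmidt vectors we have $(P_A\otimes\mathds{1})\ket{\varphi}=\ket{\varphi}$. Then self-adjointness of $P_A$ together with Cauchy--Schwarz give $|\langle\psi|\varphi\rangle|^2=|\langle(P_A\otimes\mathds{1})\psi|\varphi\rangle|^2\leq\norm{(P_A\otimes\mathds{1})\ket{\psi}}^2=\Tr\big((P_A\otimes\mathds{1})\outerprod{\psi}{\psi}\big)=\Tr\!\big(P_A\,\Tr_B\outerprod{\psi}{\psi}\big)$. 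Using the Schmidt decomposition \cref{eq:psi_schmidt_decomp}, $\Tr_B\outerprod{\psi}{\psi}=\sum_{j=1}^{d_A}\lambda_j^2\outerprod{u_j}{u_j}$, so writing $t_j:=\langle u_j|P_A|u_j\rangle$ I get $|\langle\psi|\varphi\rangle|^2\leq\sum_{j=1}^{d_A}\lambda_j^2 t_j$, where each $t_j\in[0,1]$ and, because $\{\ket{u_j}\}_{j=1}^{d_A}$ is an orthonormal basis of $\calH_A$, $\sum_{j=1}^{d_A}t_j=\Tr(P_A)=\rank(P_A)\leq r$.

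The remaining task is to bound $\sum_{j=1}^{d_A}\lambda_j^2 t_j$ subject to $t_j\in[0,1]$ and $\sum_j t_j\leq r$, which is the only genuine computation but reduces to a single line: since $\lambda_1^2\geq\cdots\geq\lambda_{d_A}^2$, I would write $\sum_{j=1}^{d_A}\lambda_j^2 t_j-\sum_{j=1}^r\lambda_j^2=\sum_{j\leq r}\lambda_j^2(t_j-1)+\sum_{j>r}\lambda_j^2 t_j$ and use $\lambda_j^2(t_j-1)\leq\lambda_r^2(t_j-1)$ for $j\leq r$ (as $t_j-1\leq0$) and $\lambda_j^2 t_j\leq\lambda_r^2 t_j$ for $j>r$ (as $t_j\geq0$), obtaining an upper bound of $\lambda_r^2\big(\sum_{j=1}^{d_A}t_j-r\big)\leq0$. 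Combining this with the lower bound completes the proof, and also confirms that the maximum is attained (not merely approached) at the explicit $\ket{\varphi}$ above. I expect the only thing requiring any care is the sign bookkeeping in this last step — the factor $t_j-1$ is nonpositive while $\lambda_j^2-\lambda_r^2$ changes sign at $j=r$ — together with the easy observation that the Schmidt vectors $\{\ket{u_j}\}$ form a full orthonormal basis of $\calH_A$, which is what makes $\sum_j t_j=\Tr(P_A)$.
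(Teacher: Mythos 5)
Your proof is correct, and it takes a genuinely different route from the paper. The paper's argument (in its appendix) identifies $\ket{\psi}$ and $\ket{\varphi}$ with matrices via vectorization, so that the overlap becomes $|\Tr(A^\dag B)|$ with $\rank(B)\leq r$, and then deduces the bound by contradiction from the Eckart--Young--Mirsky theorem for the Frobenius norm (cited as an external result): an overlap exceeding $\sqrt{\sum_{j\leq r}\lambda_j^2}$ would produce a rank-$r$ matrix strictly closer to $A$ than its truncated SVD. Your argument instead stays entirely in the bipartite-state picture: you insert the rank-$\leq r$ projector $P_A$ onto the support of $\Tr_B\outerprod{\varphi}{\varphi}$, use $(P_A\otimes\mathds{1})\ket{\varphi}=\ket{\varphi}$ and Cauchy--Schwarz to reduce the problem to bounding $\Tr\bigl(P_A\,\Tr_B\outerprod{\psi}{\psi}\bigr)=\sum_j\lambda_j^2 t_j$ with $t_j\in[0,1]$, $\sum_j t_j\leq r$, and then close with the one-line rearrangement estimate pivoting at $\lambda_r^2$ (a Ky Fan / Schur--Horn-type step, and your sign bookkeeping there is right; note also that even if $\{\ket{u_j}\}$ were only an orthonormal set rather than a basis, $\sum_j t_j\leq\Tr(P_A)\leq r$ would still hold, so nothing delicate hinges on that observation). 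What each buys: the paper's route is short modulo the cited low-rank approximation theorem and makes the link to standard matrix approximation explicit, whereas yours is fully self-contained and elementary, needing no external theorem and no vectorization dictionary; both exhibit the same explicit maximizer and show the maximum is attained.
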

One may alternatively be interested in the property of mixed states
corresponding to having rank at most $r$, which we denote by $\mathsf{Rank}(r)\subseteq\rmD(\calH_A)$. Let $P^{(j)}_\lambda\in\rmL(\calH_j^{\otimes N})$ denote the orthogonal projection onto the $\lambda^\text{th}$ irrep in $\calH_j^{\otimes N}$ as in \cref{eq:sw_duality}, for each $j\in \{A,B\}$. Here and throughout we set
\begin{align}\label{eq:rank_test_op}
    \Pi^{(A)}_{\leq r} := \sum_{\lambda\vdash N\colon \ell(\lambda)\leq r} P_\lambda^{(A)},\quad   \Pi^{(B)}_{\leq r} := \sum_{\lambda\vdash N\colon \ell(\lambda)\leq r} P^{(B)}_\lambda
\end{align}
which are orthogonal projection operators. When the vector space being acted on is clear from context we drop the superscripts and write $\Pi_{\leq r}$ to denote the same operation. Ref.~\cite{o2015quantum} proved that $\Pi_{\leq r}$ is a strongly PC-optimal test for the property $\mathsf{Rank}(r)$. Moreover, by \cite[{Prop.~2.2}]{o2015quantum} it holds that $\rho$ is $\veps$-far from $\mathsf{Rank}(r)$ with respect to trace distance if and only if $\alpha_{r+1}+\alpha_{r+2}+\dots+\alpha_{d_A}\geq \veps$, where $\alpha_1\geq \alpha_2\geq\dots\geq \alpha_{d_A}$ is the spectrum of $\rho$. Combining these observations with \Cref{thm:purification} leads to the following two key facts, the second of which is also proven in~\cite{chen2024local}.
\begin{theorem}[PC-optimal Schmidt-rank test]\label{thm:PC_optimal_sr_test}
    Let $\calH_A,\calH_B$ be finite-dimensional Hilbert spaces such that $\dim(\calH_j)=d_j$ for each $j\in \{A,B\}$, $N$ and $r\leq d_A\leq d_B$ be positive integers, and $\PSym\in\rmL(\calH_A^{\otimes N}\otimes \calH_B^{\otimes N})$ denote the projector onto the symmetric subspace $\vee^N(\calH_A\otimes \calH_B)$.
    \begin{enumerate}
        \item It holds that
            \begin{align*}
                (\mathds{1}_{A}^{\otimes N}\otimes \Pi_{\leq r}^{(B)})\PSym &= (\Pi_{\leq r}^{(A)}\otimes \mathds{1}_{B}^{\otimes N})\PSym = \Proj \Span \{\ket{\phi}^{\otimes N}: \SR(\ket{\phi})\leq r\}
            \end{align*}
        is a strongly PC-optimal test for $\mathsf{SR}(r)\subseteq\rmP(\calH_A\otimes\calH_B)$.
        \item The copy complexity of testing $\mathsf{SR}(r)$ with one-sided error is $\Theta(r^2/\veps^2)$.
    \end{enumerate}
\end{theorem}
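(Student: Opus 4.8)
The plan is to prove the first item by exhibiting three coincident descriptions of a single subspace of $\vee^N(\calH_A\otimes\calH_B)$, and then to read off the second item by matching Schmidt-rank testing with rank testing through that subspace.

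First I would verify that $(\mathds{1}_A^{\otimes N}\otimes\Pi^{(B)}_{\leq r})\PSym$ is an orthogonal projection onto a subspace of the symmetric subspace. After reordering tensor factors so that $(\calH_A\otimes\calH_B)^{\otimes N}\cong\calH_A^{\otimes N}\otimes\calH_B^{\otimes N}$, the operator $\mathds{1}_A^{\otimes N}\otimes\Pi^{(B)}_{\leq r}$ acts only on $\calH_B^{\otimes N}$, commutes with every $W_\pi$, hence with $\PSym=\frac{1}{N!}\sum_{\pi}W_\pi$; so the product is a projection equal to $\PSym(\mathds{1}_A^{\otimes N}\otimes\Pi^{(B)}_{\leq r})\PSym$, whose image lies in $\vee^N(\calH_A\otimes\calH_B)$, and likewise for $(\Pi^{(A)}_{\leq r}\otimes\mathds{1}_B^{\otimes N})\PSym$. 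For the ``easy'' inclusion: if $\SR(\ket\phi)\leq r$ then $\rho_B:=\Tr_A\outerprod\phi\phi$ has rank at most $r$, so $\rho_B^{\otimes N}$ is supported on $\im\Pi^{(B)}_{\leq r}$ and therefore $(\mathds{1}_A^{\otimes N}\otimes\Pi^{(B)}_{\leq r})\ket\phi^{\otimes N}=\ket\phi^{\otimes N}$; since also $\PSym\ket\phi^{\otimes N}=\ket\phi^{\otimes N}$, the vector $\ket\phi^{\otimes N}$ lies in the image of both products, giving $\calV:=\Span\{\ket\phi^{\otimes N}:\SR(\ket\phi)\leq r\}\subseteq\im[(\mathds{1}_A^{\otimes N}\otimes\Pi^{(B)}_{\leq r})\PSym]$ and the same with $A,B$ swapped.

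The crux is the reverse inclusion. By Schur--Weyl duality restricted to the symmetric subspace, \cref{eq:schur_weyl_bipartite_symmetric}, the decomposition $\vee^N(\calH_A\otimes\calH_B)\cong\bigoplus_\lambda\calQ_\lambda(\calH_A)\otimes\calQ_\lambda(\calH_B)$ is multiplicity-free as a representation of $\rmU(\calH_A)\times\rmU(\calH_B)$, and under this identification the image of $(\mathds{1}_A^{\otimes N}\otimes\Pi^{(B)}_{\leq r})\PSym$ is exactly $\bigoplus_{\ell(\lambda)\leq r}\calQ_\lambda(\calH_A)\otimes\calQ_\lambda(\calH_B)$ (and identically for the $A$-side product, since within $\vee^N$ only $\ell(\lambda)\leq d_A$ appear). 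Because local unitaries preserve Schmidt rank, $\calV$ is a $\rmU(\calH_A)\times\rmU(\calH_B)$-invariant subspace of $\vee^N(\calH_A\otimes\calH_B)$, so multiplicity-freeness forces $\calV=\bigoplus_{\lambda\in S}\calQ_\lambda(\calH_A)\otimes\calQ_\lambda(\calH_B)$ for some set $S$ of partitions, with $S\subseteq\{\lambda:\ell(\lambda)\leq r\}$ by the easy inclusion. To obtain equality I would fix a pure state $\ket\phi$ of Schmidt rank exactly $r$ and invoke the combinatorial description of Weak Schur Sampling (the facts recorded in \Cref{sec:wss}): measuring $\rho_B^{\otimes N}$ returns every $\lambda$ with $\ell(\lambda)\leq\rank(\rho_B)=r$ with nonzero probability, which is the same as saying that the component of $\ket\phi^{\otimes N}$ in the $\lambda$-block of $\vee^N$ is nonzero; since $\ket\phi^{\otimes N}\in\calV$ and $\calV$ is a sub-direct-sum, every such $\lambda$ lies in $S$. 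Hence $S=\{\lambda:\ell(\lambda)\leq r\}$, so both products equal $\Proj(\calV)$, which is strongly PC-optimal for $\mathsf{SR}(r)$ by \Cref{lem:pc_opt_pure_proj_lemma}. I expect this last point --- confirming that a full-Schmidt-rank-$r$ state populates \emph{all} blocks of length at most $r$ --- to be the main obstacle, though it reduces to positivity of Schur polynomials on strictly positive arguments.

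For the second item, the one useful observation is that $\Tr\left((\mathds{1}_A^{\otimes N}\otimes\Pi^{(B)}_{\leq r})\,\psi^{\otimes N}\right)=\Tr(\Pi^{(B)}_{\leq r}\rho_B^{\otimes N})$ for $\rho_B=\Tr_A\psi$: the acceptance probability of the PC-optimal Schmidt-rank test on $\psi^{\otimes N}$ equals that of the O'Donnell--Wright rank test of \cref{eq:rank_test_op} on $\rho_B^{\otimes N}$. Moreover, by the Eckart--Young--Mirsky computation preceding \Cref{prop:eckart_young_mirsky} together with the distance characterization of \cite[Prop.~2.2]{o2015quantum}, $\psi$ is $\veps$-far from $\mathsf{SR}(r)$ if and only if $\rho_B$ is $\veps^2$-far from $\mathsf{Rank}(r)$, and every density operator on $\calH_B$ of rank at most $d_A$ arises as some such $\rho_B$. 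Since the one-sided-error copy complexity of a property is governed by its PC-optimal test (completeness is already $1$ and this test has the least soundness among one-sided tests), the one-sided-error copy complexity of testing $\mathsf{SR}(r)$ at distance $\veps$ equals that of testing $\mathsf{Rank}(r)$ at distance $\veps^2$, which is $\Theta(r^2/\veps^2)$ by the matching upper and lower bounds of~\cite{o2015quantum} for rank testing at distance $\veps^2$ (for the lower bound one takes $\calH_A$ large enough to purify their hard instances). One could instead deduce the lower bound from \Cref{thm:purification}, since local Weak Schur Sampling is an optimal measurement for $\mathsf{SR}(r)$, but this is not needed.
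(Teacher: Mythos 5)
Your proposal is correct, and for the first item it takes a genuinely different route from the paper. The paper proves the operator identity indirectly: it invokes \Cref{thm:purification} to reduce an arbitrary perfect-completeness test to one of the form $E_A\otimes\mathds{1}_B^{\otimes N}$, inherits strong PC-optimality of $\Pi_{\leq r}\otimes\mathds{1}_B^{\otimes N}$ from the O'Donnell--Wright rank test, and then deduces $\PSym(\Pi_{\leq r}\otimes\mathds{1}_B^{\otimes N})\PSym=\Proj\Span\{\ket{\phi}^{\otimes N}:\SR(\ket\phi)\leq r\}$ from the uniqueness clause of \Cref{lem:pc_opt_pure_proj_lemma}, finishing with a commutation argument via \Cref{lem:double_sw_projectors}. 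You instead compute both sides directly: the image of $(\mathds{1}\otimes\Pi^{(B)}_{\leq r})\PSym$ is $\bigoplus_{\ell(\lambda)\leq r}\calQ_\lambda(\calH_A)\otimes\calQ_\lambda(\calH_B)$ (this is exactly \Cref{lem:double_sw_projectors} summed over the free index, which you should cite rather than just \cref{eq:schur_weyl_bipartite_symmetric}), while $\rmU(\calH_A)\times\rmU(\calH_B)$-invariance of the span plus multiplicity-freeness forces it to be a sub-sum of blocks, and positivity of Schur polynomials with $r$ strictly positive arguments (via \Cref{lem:wss_probs}) shows a Schmidt-rank-$r$ state has nonzero component in every block with $\ell(\lambda)\leq r$; strong PC-optimality then comes from \Cref{lem:pc_opt_pure_proj_lemma}. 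Your route is more self-contained --- it needs neither \Cref{thm:purification} nor the strong PC-optimality of the rank test for item 1, and your commutation argument ($\mathds{1}\otimes\Pi^{(B)}_{\leq r}$ commutes with every $W_\pi$) is simpler --- whereas the paper's route is shorter given those ingredients and transfers optimality without identifying which $\lambda$-blocks occur. For item 2 your reduction (upper bound via reduced states, lower bound via purifying the rank-testing hard instances) is essentially the paper's; both share the same implicit requirement that $d_A$ be large enough for the hard instance to embed, which you appropriately flag.
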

We defer the proof of this theorem to \Cref{sec:optimal_measurement_proofs}.

\subsection{Weak Schur Sampling toolkit}\label{sec:wss}
As explained in \Cref{sec:schur_weyl}, the optimal measurement for testing properties related to bipartite entanglement is given by a collection of local projection operators, and in particular, projections onto irreducible $(\mathfrak{S}_N\times \rmU(\mathbb{C}^d))$-representations. Performing this measurement induces a distribution over partitions $\lambda\vdash N$ which is often referred to as \textit{Weak Schur Sampling} (WSS). We provide a brief overview of the relevant facts below for convenience, following closely the presentation in Ref.~\cite{o2015quantum}.

Recall that given a partition $\lambda\vdash N$, the \textit{Young diagram} associated with $\lambda$ is the collection of equally-sized boxes which has $\lambda_i$ boxes in the $i^\text{th}$ row, as depicted in \Cref{fig:young_diagrams_a}. A \textit{standard Young tableau} of shape $\lambda$ over alphabet $[d]$ is a filling of these boxes such that the integers appearing in the boxes are strictly increasing from left-to-right and from top-to-bottom. If the integers are only weakly increasing from left-to-right, but still strictly increasing from top-to-bottom (as in \Cref{fig:young_diagrams_b}) the Young tableau is said to be \textit{semistandard}.
\begin{definition}[Schur polynomials]\label{def:schur_polynomial}
    Let $N$ and $d$ be positive integers and fix a collection of independent variables $\alpha_1,\dots,\alpha_d$ as well as a partition $\lambda\vdash N$. The \emph{Schur polynomial} $s_{\lambda}(\alpha_1,\dots,\alpha_d)$ is equal to
    $
        \sum_T \alpha^T
    $
    where the sum is over all semistandard Young tableaus of shape $\lambda$ over $[d]$ and, for each $T$ we have
    $\alpha^T := \prod_{j=1}^d\alpha_j^{k_j}$ with $k_j$ equal to the number of occurrences of letter $j$ in $T$.
\end{definition}
\begin{figure}
    \centering
    \begin{subfigure}[t]{0.45\textwidth}
        \centering
        \begin{ytableau}
            {}  &   \\
            {}  &   \\
            {}    
        \end{ytableau}
     \caption{\label{fig:young_diagrams_a}A Young diagram of shape ${\lambda=(2,2,1)\vdash 5}$.}
    \end{subfigure}
    \hfill
    \begin{subfigure}[t]{0.45\textwidth}
        \centering
        \begin{ytableau}
            1  &  1 \\
            2  & 3  \\
            4    
        \end{ytableau}
        \caption{\label{fig:young_diagrams_b}A semistandard Young tableau of shape $\lambda$ over $[4]$.}   
    \end{subfigure}
    \caption{\label{fig:young_diagrams}}
\end{figure}
With the above in hand, we may now state an important lemma which characterizes the WSS distribution in terms of computable quantities. We state this result without proof. A proof can be found, e.g., in Section~2.6 of~\cite{o2015quantum}.
\begin{lemma}\label{lem:wss_probs}
    Let $\rho\in\rmD(\mathbb{C}^d)$ be a quantum state with spectrum $\alpha:=(\alpha_1,\alpha_2,\dots,\alpha_d)$ and $P_\lambda\in\rmL((\mathbb{C}^d)^{\otimes N})$ be the projection operator onto the $\lambda^{\textnormal{th}}$ irrep in \cref{eq:sw_duality}. It holds that
    \begin{align*}
        \Tr(P_{\lambda}\rho^{\otimes N})=\dim(\calP_{\lambda})\cdot s_{\lambda}(\alpha_1,\alpha_2,\dots,\alpha_d).
    \end{align*}
\end{lemma}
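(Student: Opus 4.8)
The plan is to reduce to the case of a diagonal state and then identify the trace with a weighted count of semistandard Young tableaux. First, observe that $P_\lambda$ is the projector onto an isotypic component of the $\SG_N\times\rmU(\mathbb{C}^d)$-representation on $(\mathbb{C}^d)^{\otimes N}$, hence it commutes with $U^{\otimes N}$ for every $U\in\rmU(\mathbb{C}^d)$. Writing $\rho = U\diag(\alpha_1,\dots,\alpha_d)U^\dag$ and using cyclicity of the trace gives $\Tr(P_\lambda\rho^{\otimes N}) = \Tr(P_\lambda D^{\otimes N})$ with $D:=\diag(\alpha_1,\dots,\alpha_d)$, so it suffices to treat the diagonal case.

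Next, I would evaluate $\Tr(P_\lambda D^{\otimes N})$ in the computational basis $\{\ket{x_1}\otimes\cdots\otimes\ket{x_N} : x\in[d]^N\}$, in which $D^{\otimes N}$ is diagonal with eigenvalue $\alpha_{x_1}\cdots\alpha_{x_N}$. Grouping the basis vectors by their \emph{content} $k=(k_1,\dots,k_d)$, where $k_j$ counts the occurrences of the letter $j$ in $x$ (so $\sum_j k_j = N$), this becomes $\sum_{k}\big(\sum_{x:\,\mathrm{content}(x)=k}\bra{x}P_\lambda\ket{x}\big)\,\alpha_1^{k_1}\cdots\alpha_d^{k_d}$. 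Each computational basis vector $\ket{x}$ of content $k$ is a weight vector of weight $k$ for the diagonal torus of $\rmU(\mathbb{C}^d)$; since this torus acts trivially on the $\SG_N$-factor $\calP_\lambda$ and with weight-$k$ multiplicity equal to the Kostka number $K_{\lambda k}$ on $\calQ_\lambda(\mathbb{C}^d)$, the operator $P_\lambda$ restricted to the weight-$k$ subspace of $(\mathbb{C}^d)^{\otimes N}$ has rank $\dim(\calP_\lambda)\cdot K_{\lambda k}$. Hence $\sum_{x:\,\mathrm{content}(x)=k}\bra{x}P_\lambda\ket{x} = \dim(\calP_\lambda)\,K_{\lambda k}$, and summing over $k$ yields $\dim(\calP_\lambda)\sum_k K_{\lambda k}\,\alpha^k = \dim(\calP_\lambda)\,s_\lambda(\alpha_1,\dots,\alpha_d)$ by the combinatorial description of Schur polynomials in \Cref{def:schur_polynomial}.

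The step that carries all the weight is the identification of the weight-$k$ multiplicity of the $\rmU(\mathbb{C}^d)$-irrep $\calQ_\lambda(\mathbb{C}^d)$ with the Kostka number $K_{\lambda k}$ — equivalently, that the character $\mathrm{ch}\,\calQ_\lambda$ equals $s_\lambda$. This is a classical fact in the representation theory of $\rmU(d)$ (provable, e.g., via Gelfand–Tsetlin patterns, or via the Weyl character formula together with the bialternant expression for $s_\lambda$), which I would simply cite; everything else is bookkeeping with Schur–Weyl duality. An equivalent route that avoids the basis computation is to note that on the $\mu$-isotypic block $D^{\otimes N}$ acts as $\id_{\calP_\mu}\otimes\calQ_\mu(D)$, so $P_\lambda D^{\otimes N}$ acts as $\delta_{\lambda\mu}\,\id_{\calP_\lambda}\otimes\calQ_\lambda(D)$ and therefore $\Tr(P_\lambda D^{\otimes N}) = \dim(\calP_\lambda)\,\Tr\calQ_\lambda(D) = \dim(\calP_\lambda)\,\mathrm{ch}\,\calQ_\lambda(\alpha) = \dim(\calP_\lambda)\,s_\lambda(\alpha)$, which again rests on the same character identity.
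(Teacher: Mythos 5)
Your proof is correct; the paper itself states \Cref{lem:wss_probs} without proof, deferring to Section~2.6 of \cite{o2015quantum}, and the argument there is essentially your second route: on the $\lambda$-isotypic block the diagonal operator $D^{\otimes N}$ acts as $\id_{\calP_\lambda}\otimes \calQ_\lambda(D)$, so $\Tr(P_\lambda D^{\otimes N})=\dim(\calP_\lambda)\Tr(\calQ_\lambda(D))$, and one invokes the classical identity that the character of $\calQ_\lambda(\mathbb{C}^d)$ is the Schur polynomial $s_\lambda$. Your primary computation (reduction to the diagonal case, then counting the weight-$k$ subspace of $V_\lambda$ via Kostka numbers and matching \Cref{def:schur_polynomial}) is a valid, slightly more explicit unrolling of that same character identity, so both routes are sound.
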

By the Hook-Length formula~\cite{Frame1954TheHG},
\begin{align}\label{eq:dim_of_sn_irrep}
    \dim(\calP_{\lambda}) &= \frac{N!}{\ell_1!\dots\ell_t!}\prod_{i<j}(\ell_i-\ell_j),\qquad \ell_i:= \lambda_i+t-i\quad \forall i\in [t].
\end{align}
Together with \cref{lem:wss_probs} this yields an explicit, albeit difficult-to-work-with, expression for the probabilities of the various outcomes $\lambda\vdash N$ when performing WSS on $N$ copies of a quantum state $\rho$. However, using the Robinson-Schensted-Knuth (RSK) correspondence~\cite{knuth1970permutations} between Young tableaus and generalized permutations, it is possible to derive a more useful expression for the relevant probabilities under the WSS distribution. Given any sequence of $N$ letters $x\in [d]^N$ over the alphabet $[d]$ we let $\LDS(x)$ denote the longest (strictly) decreasing subsequence (LDS) of $x$. Then the result below follows immediately upon combining \Cref{lem:wss_probs} with Prop.~2.16 in \cite{o2015quantum}. (See also~\cite[Eq.~(2.2)]{its2001random}.)
\begin{lemma}\label{lem:lds_characterization}
    Let $\rho\in\rmD(\mathbb{C}^d)$ be a quantum state with spectrum $\alpha:=(\alpha_1,\alpha_2,\dots,\alpha_d)$ sorted in nonincreasing order and $\Pi_{\leq r}\in\rmL((\mathbb{C}^d)^{\otimes N})$ be defined as in \cref{eq:rank_test_op}. It holds that
    \begin{align}
        \Tr(\Pi_{\leq r}\rho^{\otimes N})&=\Pr_{\bm{x}\sim \alpha^{\otimes m}}\left[\LDS(\bm{x})\leq r \right].\label{eq:rsk_lds_pr}
    \end{align}
\end{lemma}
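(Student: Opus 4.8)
The plan is to expand the left-hand side using the Weak Schur Sampling probabilities from \Cref{lem:wss_probs}, and then recognize the resulting sum of weighted Schur polynomials as the claimed probability by invoking the Robinson--Schensted--Knuth (RSK) correspondence. First, by the definition of $\Pi_{\leq r}$ in \cref{eq:rank_test_op} and linearity of the trace, together with \Cref{lem:wss_probs},
\begin{align*}
    \Tr(\Pi_{\leq r}\rho^{\otimes N}) = \sum_{\lambda\vdash N\colon \ell(\lambda)\leq r}\Tr(P_\lambda\rho^{\otimes N}) = \sum_{\lambda\vdash N\colon \ell(\lambda)\leq r}\dim(\calP_\lambda)\,s_{\lambda}(\alpha_1,\dots,\alpha_d).
\end{align*}
Since $\dim(\calP_\lambda)$ equals the number of standard Young tableaux of shape $\lambda$ (this is \cref{eq:dim_of_sn_irrep}), and by \Cref{def:schur_polynomial} we have $s_{\lambda}(\alpha_1,\dots,\alpha_d)=\sum_{P}\alpha^{P}$ with the sum over semistandard Young tableaux $P$ of shape $\lambda$ over $[d]$, the right-hand side rewrites as a double sum
\begin{align*}
    \sum_{\lambda\vdash N\colon \ell(\lambda)\leq r}\ \sum_{(P,Q)}\alpha^{P},
\end{align*}
where the inner sum ranges over all pairs $(P,Q)$ with $P$ a semistandard Young tableau of shape $\lambda$ over $[d]$ and $Q$ a standard Young tableau of shape $\lambda$.

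Next I would invoke the RSK correspondence (Prop.~2.16 in~\cite{o2015quantum}), which is a bijection between sequences $x\in[d]^N$ and such pairs $(P,Q)$, with $P$ the insertion tableau and $Q$ the recording tableau. Two features of this bijection do the work. First, the insertion tableau $P$ has the same content as $x$: letter $j$ occurs in $P$ exactly $k_j$ times, where $k_j$ is the number of occurrences of $j$ in $x$; hence $\alpha^{P}=\prod_{j=1}^d\alpha_j^{k_j}=\Pr_{\bm{x}\sim\alpha^{\otimes N}}[\bm{x}=x]$. Second, the common shape $\lambda$ of $(P,Q)$ satisfies $\ell(\lambda)=\LDS(x)$, the length of the longest strictly decreasing subsequence of $x$. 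Consequently, summing $\Pr[\bm{x}=x]$ over all $x\in[d]^N$ with $\LDS(x)\leq r$ reproduces exactly the double sum displayed above, and therefore equals $\Tr(\Pi_{\leq r}\rho^{\otimes N})$, which is \cref{eq:rsk_lds_pr}.

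The only point requiring care is the identification $\ell(\lambda)=\LDS(x)$ for \emph{words} rather than permutations, and the matching of conventions: the paper's $\LDS$ counts strictly decreasing subsequences, which under RSK corresponds to the number of \emph{rows} of $\lambda$, whereas weakly increasing subsequences correspond to the number of columns. As all of this is packaged into Prop.~2.16 of~\cite{o2015quantum}, the argument is essentially bookkeeping once \Cref{lem:wss_probs} is in hand; this is why the statement follows immediately.
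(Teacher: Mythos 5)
Your proposal is correct and follows essentially the same route as the paper, which derives the identity by combining \Cref{lem:wss_probs} with the RSK correspondence (Prop.~2.16 of~\cite{o2015quantum}); you simply spell out the bookkeeping the paper leaves implicit, namely that $\dim(\calP_\lambda)$ counts standard Young tableaux, that the insertion tableau preserves content so $\alpha^P$ equals the probability of the word, and that the number of rows of the RSK shape equals the longest strictly decreasing subsequence. No gaps to flag.
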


We make extensive use of this characterization in our proofs, both for upper and lower bounds. When the quantum state is completely mixed, the acceptance probability is straightforward to characterize combinatorially as in the following lemma due to~\cite{o2015quantum}.
\begin{lemma}[Prop.~2.31 in \cite{o2015quantum}]\label{lem:lds_conc}
    Let $N$, $d$, and $r\leq d$ be positive integers. It holds that
    \begin{align*}
        \Pr_{\bm{x}\sim \textnormal{Unif}([d]^N)}\left[\LDS(\bm{x})\geq r \right]&\leq\left(\frac{(1+r/d)\ee^2 N}{r^2}\right)^{r}.
    \end{align*}
\end{lemma}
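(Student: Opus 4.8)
The plan is a direct first-moment (union bound) argument counting strictly decreasing subsequences. First observe that $\LDS(\bm{x})\geq r$ holds if and only if there is some index set $1\leq i_1<i_2<\dots<i_r\leq N$ with $\bm{x}_{i_1}>\bm{x}_{i_2}>\dots>\bm{x}_{i_r}$, since deleting terms from a strictly decreasing subsequence leaves one that is still strictly decreasing. Applying the union bound over the $\binom{N}{r}$ possible index sets gives
\begin{align*}
    \Pr_{\bm{x}\sim [d]^N}\big[\LDS(\bm{x})\geq r\big]\;\leq\;\sum_{1\leq i_1<\dots<i_r\leq N}\Pr_{\bm{x}\sim[d]^N}\big[\bm{x}_{i_1}>\bm{x}_{i_2}>\dots>\bm{x}_{i_r}\big].
\end{align*}

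Next I would evaluate each summand exactly. For a fixed index set the coordinates $\bm{x}_{i_1},\dots,\bm{x}_{i_r}$ are i.i.d.\ uniform on $[d]$, so the tuple $(\bm{x}_{i_1},\dots,\bm{x}_{i_r})$ is uniform over $[d]^r$; among its $d^r$ possible values, those that are strictly decreasing are in bijection with the $r$-element subsets of $[d]$ (each subset listed in its unique decreasing order), and there are $\binom{d}{r}$ of these. Hence every summand equals $\binom{d}{r}d^{-r}$, and since there are $\binom{N}{r}$ summands,
\begin{align*}
    \Pr_{\bm{x}\sim [d]^N}\big[\LDS(\bm{x})\geq r\big]\;\leq\;\binom{N}{r}\binom{d}{r}d^{-r}.
\end{align*}

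It remains to bound the right-hand side by elementary estimates. Using $\binom{N}{r}\leq N^r/r!$, the identity $\binom{d}{r}d^{-r}=\tfrac{1}{r!}\prod_{j=0}^{r-1}(1-j/d)\leq 1/r!$, and $r!\geq(r/\ee)^r$, we get
\begin{align*}
    \binom{N}{r}\binom{d}{r}d^{-r}\;\leq\;\frac{N^r}{(r!)^2}\;\leq\;\left(\frac{\ee^2 N}{r^2}\right)^{r}\;\leq\;\left(\frac{(1+r/d)\,\ee^2 N}{r^2}\right)^{r},
\end{align*}
where the last inequality just uses $1+r/d\geq1$. I do not anticipate a genuine obstacle: the only points requiring care are that $\LDS$ counts \emph{strictly} decreasing subsequences (which fixes the relevant count at $\binom{d}{r}$, not $\binom{d+r-1}{r}$) and the standard inequalities $\binom{m}{k}\leq(\ee m/k)^k$ and $k!\geq(k/\ee)^k$, while the cases $r>d$ or $r>N$ are vacuous. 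The computation also shows the stated bound is not tight, since the factor $1+r/d$ is superfluous along this route, which does no harm.
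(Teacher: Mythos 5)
Your proof is correct. Note, however, that the paper contains no proof of this lemma to compare against: it is quoted verbatim from \cite{o2015quantum} (Prop.~2.31) and used as a black box, so your argument is a self-contained derivation of the cited fact rather than a variant of an in-paper argument. Your route --- a first-moment/union bound over the $\binom{N}{r}$ index sets, with the exact evaluation $\Pr[\bm{x}_{i_1}>\dots>\bm{x}_{i_r}]=\binom{d}{r}d^{-r}$ for strictly decreasing tuples --- is sound, and the closing estimates $\binom{N}{r}\le N^r/r!$, $\binom{d}{r}d^{-r}\le 1/r!$, $r!\ge (r/\ee)^r$ are all valid, giving $\Pr[\LDS(\bm{x})\ge r]\le N^r/(r!)^2\le (\ee^2N/r^2)^r$, which is indeed slightly stronger than the stated bound. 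Your diagnosis of the superfluous $(1+r/d)$ factor is also essentially right: it is the artifact one gets by instead bounding the per-index-set probability via the count $\binom{d+r-1}{r}$ of \emph{weakly} decreasing tuples (as in the original reference's more generous estimate), since $\binom{d+r-1}{r}d^{-r}\le (1+r/d)^r/r!$; exploiting strictness, as you do, removes it. Since the bound you prove implies the one stated in the lemma, your proposal fully suffices for the role this lemma plays in the paper (the proof of \Cref{thm:main_lb_thm}).
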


\section{Test for TTNS}\label{sec:ttns}
Let $G=(V,E)$ be a tree graph on $|V|=n$ vertices and $|E|=n-1$ edges, $r$ be a positive integer, $(\calH_v:v\in V)$ be a collection of finite-dimensional Hilbert spaces, and $\mathsf{TTNS}(G,r)$ be defined as in \cref{eq:ttns_def}. Suppose one wishes to determine whether an unknown state is in this property or is $\veps$-far from it in trace distance, as described in \Cref{sec:property_testing_defs}. In this section we prove the following theorem, which essentially follows from combining the analysis in Section 4 of Ref.~\cite{soleimanifar2022testingmps} with Theorem~11.58 in Ref.~\cite{Hackbusch2019}.
\begin{theorem}\label{thm:ttns_testing}
    There is a test with one-sided error for the property $\mathsf{TTNS}(G,r)$ which is successful given $O(nr^2/\veps^2)$ copies of the unknown state.
\end{theorem}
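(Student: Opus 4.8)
The plan is to run the strongly PC-optimal Schmidt-rank test of \Cref{thm:PC_optimal_sr_test} across every edge of the tree, on disjoint batches of copies, and accept if and only if all $n-1$ of these tests accept. Concretely, fix $m = \Theta(r^2/\veps^2)$ (with the constant chosen below), split $N=(n-1)m$ copies of $\psi$ into $n-1$ batches of size $m$, and for the batch associated to edge $e$ apply the projective measurement $\{\Pi^{(e)}_{\leq r}, \mathds{1}-\Pi^{(e)}_{\leq r}\}$, where $\Pi^{(e)}_{\leq r}$ is the rank-$r$ Weak Schur Sampling projector of \cref{eq:rank_test_op} applied to the reduced system on one side of the cut that $e$ induces in $G$. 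Perfect completeness is immediate: if $\psi\in\mathsf{TTNS}(G,r)$ then $\SR_e(\ket{\psi})\leq r$ for every $e\in E$, so by \Cref{thm:PC_optimal_sr_test} each batch accepts with certainty and hence so does the overall test.

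For soundness, suppose $\psi$ is $\veps$-far from $\mathsf{TTNS}(G,r)$. Write $\Delta_e := \Delta_r(\ket{\psi})$ for the cut determined by $e$, i.e.\ the sum of squared Schmidt coefficients of $\ket{\psi}$ beyond the $r$-th across that cut, equivalently the $\ell_1$-tail $\sum_{j>r}\alpha^{(e)}_j$ of the spectrum $\alpha^{(e)}$ of the reduced state of $\psi$ on one side of $e$. The first key input is the tree-format truncation estimate (Theorem~11.58 of \cite{Hackbusch2019}, the tree analogue of the Eckart--Young--Mirsky bound used in Section~4 of \cite{soleimanifar2022testingmps}): the squared trace distance from $\psi$ to the nearest element of $\mathsf{TTNS}(G,r)$ is at most $\sum_{e\in E}\Delta_e$. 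Since $\psi$ is $\veps$-far, this forces $\sum_{e\in E}\Delta_e \geq \veps^2$.

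The second key input is a quantitative lower bound on the rejection probability of a single rank test, imported from the analysis in Section~4 of \cite{soleimanifar2022testingmps} (building on \cite{o2015quantum}): there is a universal constant $c_1>0$ such that the $m$-copy rank-$r$ test applied across cut $e$ rejects $\psi$ with probability $p_e \geq c_1\min\{1,\, m\Delta_e/r^2\}$. Granting this, choose $m = Cr^2/\veps^2$ for a large enough constant $C$. Then $\sum_e p_e \geq c_1\sum_e \min\{1,\, m\Delta_e/r^2\} \geq c_1\min\{1,\, (m/r^2)\sum_e\Delta_e\} \geq c_1\min\{1,C\}$, a positive constant independent of $n$, $r$, $\veps$. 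Because the $n-1$ batches are independent, the probability the overall test wrongly accepts is $\prod_{e\in E}(1-p_e) \leq \exp(-\sum_e p_e) \leq \exp(-c_1\min\{1,C\})$, which is bounded away from $1$; repeating the whole procedure a constant number of times and accepting only if all runs accept drives the soundness parameter below $2/3$ while keeping completeness equal to $1$ and the copy count at $O(nr^2/\veps^2)$, yielding a successful one-sided-error test.

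The main obstacle is the single-cut estimate $p_e \gtrsim \min\{1,\, m\Delta_e/r^2\}$. The regime ``$p_e \geq$ constant when $m\Delta_e/r^2 = \Omega(1)$'' is exactly the soundness of the optimal rank test at distance parameter $\sqrt{\Delta_e}$ and follows from \Cref{thm:PC_optimal_sr_test} together with \cite{o2015quantum}; but the \emph{linear} regime for small $m\Delta_e/r^2$ --- which is what lets many cuts, each contributing only a tiny $\Delta_e$, still add up to a constant --- requires a finer combinatorial analysis of $\Pr_{\bm{x}\sim(\alpha^{(e)})^{\otimes m}}[\LDS(\bm{x})>r]$ via \Cref{lem:lds_characterization}, and this is the content taken from Section~4 of \cite{soleimanifar2022testingmps}. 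A secondary point to check is that Theorem~11.58 of \cite{Hackbusch2019}, stated for hierarchical tensor formats, indeed yields the sum-of-squares bound $\sum_{e\in E}\Delta_e$ for an arbitrary tree $G$; this is precisely where the passage from MPS (a path) to a general tree enters.
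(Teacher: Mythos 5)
Your completeness argument and your use of the Hackbusch-type truncation bound (\Cref{thm:ttns_approximation}, hence $\sum_{e\in E}\Delta_e\gtrsim\veps^2$) match the paper, but the soundness step has a genuine gap: the imported lemma $p_e\geq c_1\min\{1,\,m\Delta_e/r^2\}$ is false for $r\geq 2$, and nothing of this form is proved in Section~4 of \cite{soleimanifar2022testingmps}. In the regime $m\Delta_e/r^2\ll 1$ the rejection probability of the $m$-copy rank-$r$ test is \emph{not} linear in $m\Delta_e$; it scales like a degree-$r$ polynomial in $m\Delta_e$ (to reject, the sampled word must contain a decreasing subsequence of length $r+1$, which essentially requires $r$ letters from the tail of the spectrum; compare \Cref{lem:lds_conc}, and \Cref{lem:3_letter_lds_lem} where the rejection probability for $r=2$ is quadratic, $\sim N^2t^2/4$). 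Concretely, your test fails on the paper's own hard instance $\phi_G$ from \Cref{sec:tree_like}: there every edge satisfies $\Delta_e=\Theta(\veps^2/n)$ while $\sum_e\Delta_e=\Theta(\veps^2)$, so with batches of size $m=\Theta(r^2/\veps^2)$ each edge-test rejects with probability only $O\bigl((r^2/n)^r\bigr)$, and the union over the $n-1$ edges rejects with probability $O\bigl(n\cdot(r^2/n)^r\bigr)=o(1)$ for $r\geq 2$ and large $n$. This exponential-in-$r$ suppression of per-edge rejection at small per-edge distance is precisely the phenomenon driving the paper's $\Omega(nr^2/(\veps^2\log n))$ lower bound (\Cref{thm:main_lb_thm}), so the linear-regime amplification across edges cannot be repaired at this copy budget.

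The paper's proof avoids batching entirely: it applies all $n-1$ edge projectors $\Pi_{\leq r}^{(e)}$ to the \emph{same} $N$ copies (they commute, by the two equivalent forms in \Cref{thm:PC_optimal_sr_test}), and its soundness analysis uses only the single worst cut. By \Cref{cor:ttns_far}, an $\veps$-far state is $\veps/\sqrt{2(n-1)}$-far from $\mathsf{SR}(r)$ across some edge, and the PC-optimal Schmidt-rank test at that distance, given the full $N=O(nr^2/\veps^2)$ copies, already rejects with constant probability. If you want to keep a sequential/batched flavor, the fix is to give the (unknown) worst edge access to all $N$ copies --- e.g., run each edge's test on the whole sample as in \cref{eq:ttns_test_def} --- rather than to rely on summing small per-edge rejection probabilities.
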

To prove this result, we make use of the following theorem which bounds the error in approximating an arbitrary state in $\bigotimes_{v\in V} \calH_v$ by one contained in $\mathsf{TTNS}(G,r)$.
\begin{lemma}\label{thm:ttns_approximation}
   Let $\ket{\psi}\in \bigotimes_{v\in V} \calH_v$ be an arbitrary pure
   state on $n=|V|$ sites. For each $e\in E$, let $L^{(e)}, R^{(e)}\subseteq [n]$ be
   the bipartition of the vertices induced by removing the edge
   $e$. Suppose further that $\ket{\psi}$ has the Schmidt decomposition
   \begin{align*}
       \ket{\psi} = \sum_j {\lambda_j^{(e)}}\
       \ket{a^{(e)}_j}_{L^{(e)}}\otimes \ket{b^{(e)}_j}_{R^{(e)}}
   \end{align*}
   with Schmidt coefficients $\lambda_1^{(e)}\geq
   \lambda_2^{(e)}\geq \dots$ for every $e\in E$, and that
   \begin{align}\label{eq:truncated_schmidt_ineq}
       0\leq \sum_{e\in E}\sum_{j\geq r+1}(\lambda_j^{(e)})^2\leq 1.
   \end{align}
   Then there exists a state
   $\ket{\phi}\in\mathsf{TTNS}(G,r)$ for which
   \begin{align*}
       |\langle \phi | \psi \rangle|\geq 1-\sum_{e\in E}\sum_{j\geq r+1}(\lambda_j^{(e)})^2.
   \end{align*}
\end{lemma}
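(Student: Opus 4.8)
The plan is to approximate $\ket{\psi}$ by a TTNS via a sequence of edge-by-edge Schmidt truncations, processing the edges of the tree in an order compatible with a root-to-leaves (or leaves-to-root) traversal, and to track how the overlap degrades at each step. The key structural fact I would invoke is that truncating the Schmidt rank across one cut $e$ to $r$ is an orthogonal projection of $\ket{\psi}$ onto a subspace, costing an overlap loss of exactly $\sum_{j \ge r+1}(\lambda_j^{(e)})^2$ (by \Cref{prop:eckart_young_mirsky}), and that these truncations can be applied successively. The subtle point is that after truncating across one edge, the Schmidt spectra across the other edges change, so one cannot naively say the total loss is $\sum_e \sum_{j\ge r+1}(\lambda_j^{(e)})^2$ without an argument.

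\textbf{Key steps.} First, I would set up the \emph{hierarchical} (nested) structure of the cuts induced by the tree: root the tree at an arbitrary leaf, and for each edge $e$ let $R^{(e)}$ be the "subtree-below" side. Because $G$ is a tree, these sets form a laminar family (any two are either nested or disjoint), which is precisely what makes the tree-truncation behave well — this is the hierarchical SVD / HOSVD picture behind Theorem~11.58 of \cite{Hackbusch2019}. Second, I would perform the truncations in order of \emph{decreasing} subtree size (equivalently, process the root's incident cut first, then recurse into subtrees). Let $\ket{\psi_0} = \ket{\psi}$ and let $\ket{\psi_k}$ be the (renormalized or unnormalized) vector after applying the rank-$r$ truncation across the $k$-th edge to $\ket{\psi_{k-1}}$. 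Third, the crucial estimate: I would argue that the truncation error incurred at step $k$, namely $\| \ket{\psi_{k-1}} - \ket{\psi_k} \|^2$, is bounded by the tail weight $\sum_{j \ge r+1}(\lambda_j^{(e_k)})^2$ computed from the \emph{original} state $\ket{\psi}$, not from $\ket{\psi_{k-1}}$. This is the standard HOSVD-type bound: earlier truncations across nested or disjoint cuts do not increase the Schmidt tail across a later cut. Summing via the triangle inequality (in the Hilbert space norm) gives $\| \ket{\psi} - \ket{\phi'} \| \le \sum_e \sqrt{\sum_{j\ge r+1}(\lambda_j^{(e)})^2}$ where $\ket{\phi'}$ is the final truncated vector; but to get the cleaner bound stated in the lemma one wants instead a bound on $1 - |\langle \phi|\psi\rangle|$ where $\ket{\phi}$ is the normalization of $\ket{\phi'}$. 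Since each step is an orthogonal projection, $\langle \psi_{k-1} | \psi_k\rangle = \| \ket{\psi_k}\|^2$ and the overlaps compose multiplicatively: $|\langle \psi | \psi_{n-1}\rangle| = \|\ket{\psi_{n-1}}\|^2 = \prod_k (1 - \delta_k)$ where $\delta_k \le \sum_{j \ge r+1}(\lambda_j^{(e_k)})^2$. Then $1 - \prod_k(1-\delta_k) \le \sum_k \delta_k \le \sum_e \sum_{j \ge r+1}(\lambda_j^{(e)})^2$, and after normalizing, $|\langle \phi | \psi\rangle| = \|\ket{\psi_{n-1}}\| \ge \|\ket{\psi_{n-1}}\|^2 \ge 1 - \sum_e \sum_{j\ge r+1}(\lambda_j^{(e)})^2$, using hypothesis \cref{eq:truncated_schmidt_ineq} to ensure the right-hand side is in $[0,1]$ and that $\|\ket{\psi_{n-1}}\|^2 \le 1$ so $\|\ket{\psi_{n-1}}\| \ge \|\ket{\psi_{n-1}}\|^2$. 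Finally, I would check that the resulting $\ket{\phi}$ genuinely lies in $\mathsf{TTNS}(G,r)$: each truncation across a cut $e$ enforces $\SR_e \le r$, and because the cuts are processed in the laminar (nested) order, a later truncation does not re-raise the Schmidt rank across an already-truncated cut — again this is exactly the content of the hierarchical-tensor argument, and it is where "tree" is essential.

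\textbf{Main obstacle.} The one genuinely non-routine step is justifying that truncation across one edge does not increase the Schmidt tail $\sum_{j\ge r+1}$ across the other edges, i.e. that the errors simply add. For the \emph{disjoint} case (two cuts whose "below" sides are disjoint subtrees), this follows because truncating inside one factor of a bipartition leaves the other cut's Schmidt spectrum majorized by the original. For the \emph{nested} case it requires the HOSVD-style observation that the map projecting onto the top-$r$ left-singular-subspace across a coarser cut commutes appropriately with the finer-cut structure — this is precisely Theorem~11.58 of \cite{Hackbusch2019}, which I would cite rather than reprove, and the analogous argument appears in Section~4 of \cite{soleimanifar2022testingmps} for the MPS (path graph) special case. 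So the proof is essentially: reduce to the hierarchical SVD truncation bound, verify the laminar ordering makes it apply to an arbitrary tree, and convert the additive $L^2$ error bound into the stated overlap bound using that each truncation is an orthogonal projection.
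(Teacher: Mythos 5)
Your overall plan (hierarchical truncation along the tree, then convert an $\ell_2$ error bound into an overlap bound) is the same one the paper follows, but two steps as you have written them do not go through. The concrete gap is the chaining step: for a product of non-commuting orthogonal projections the identity $|\langle \psi|\psi_{n-1}\rangle|=\norm{\ket{\psi_{n-1}}}_2^2$ is false. Stepwise one does have $\langle \psi_{k-1}|\psi_k\rangle=\norm{\ket{\psi_k}}_2^2$, but $\langle\psi|Q_{n-1}\cdots Q_1|\psi\rangle\neq \langle\psi|Q_1\cdots Q_{n-1}\cdots Q_1|\psi\rangle$ in general (already for two projections in $\mathbb{R}^2$ the real part of the cross term $\langle(\mathds{1}-Q_1)\psi,\,Q_2Q_1\psi\rangle$ can be negative), so ``overlaps compose multiplicatively'' is not available and the per-step relative losses $\delta_k$ do not multiply to control the global overlap. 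Your fallback, the triangle inequality, only yields $\norm{\ket{\psi}-\ket{\phi'}}_2\leq\sum_e\bigl(\sum_{j\geq r+1}(\lambda_j^{(e)})^2\bigr)^{1/2}$, which is strictly weaker than what the lemma asserts. Relatedly, the step you flag as the ``main obstacle'' (that earlier truncations do not increase the Schmidt tails across later cuts, needed because you re-truncate the \emph{current} state at each step) is discharged only by citing Theorem~11.58 of Hackbusch --- but that theorem essentially \emph{is} the statement being proven, so as a proof this is circular.

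The paper's proof avoids both issues with a different accounting. It fixes, once and for all, the projections $P^{(v)}_u$ onto the span of the top-$r$ Schmidt vectors of the \emph{original} state $\ket{\psi}$ across each edge, defines $\ket{\phi}$ as an ordered product of these projections applied to $\ket{\psi}$, and invokes \Cref{lem:projector_distance}: for arbitrary orthogonal projections, $\norm{(\mathds{1}-\prod_i P_i)\ket{\psi}}_2^2\leq\sum_i\norm{(\mathds{1}-P_i)\ket{\psi}}_2^2$, where crucially every term on the right is evaluated on the original $\ket{\psi}$ and hence equals exactly the original tail $\sum_{j\geq r+1}(\lambda_j^{(e)})^2$. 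No monotonicity-of-tails claim is needed at all; the tree structure and the height-ordered product enter only in verifying that the resulting vector has Schmidt rank at most $r$ across every edge (later projections act as local operators with respect to each already-truncated cut). Finally, the additive $\ell_2$ bound $\norm{\ket{\phi'}-\ket{\psi}}_2^2\leq\delta$ is converted to the overlap bound by the elementary inequality $|\langle\phi'|\psi\rangle|/\norm{\ket{\phi'}}_2\geq(1+\norm{\ket{\phi'}}_2^2-\delta)/(2\norm{\ket{\phi'}}_2)\geq 1-\delta$, which replaces your multiplicative argument. If you want to keep your recompute-the-SVD-at-each-step variant, you would need to actually prove the tail-monotonicity claim (e.g., via singular-value domination under one-sided contractions, which requires the laminar ordering) and then still use a correct mechanism, such as the projection lemma above, to add the squared errors.
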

This lemma is implied by Theorem~11.58 in Ref.~\cite{Hackbusch2019}, which is in turn similar to the faithful MPS approximation result of Ref.~\cite{verstraeteMPSFaithfully}. We provide a proof in the language of this paper in \Cref{sec:ttns_approx_proof} for completeness. The TTNS tester then proceeds in a similar manner to that for MPS testing~\cite{soleimanifar2022testingmps}, via a reduction to Schmidt-rank testing, which is enabled by the following corollary. Here, ``$\veps$-far'' is with respect to trace distance between pure states, and ``with respect to edge $e$'' is shorthand for ``with respect to the bipartition determined by removing the edge $e$ from $E$''.
\begin{corollary}\label{cor:ttns_far}
   Let $G=(V,E)$ be a tree and $\psi$ be a pure state on $n=|V|$ sites that is $\veps$-far from $\mathsf{TTNS}(G,r)$ for some $\veps\in (0,1]$. There exists an edge $e\in E$ such that $\psi$ is $\veps/\sqrt{2(n-1)}$-far from the pure state property $\mathsf{SR}(r)$ comprising all states of Schmidt-rank at most $r$ with respect to this edge.
\end{corollary}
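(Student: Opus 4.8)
The plan is to argue by contraposition: assuming $\psi$ is \emph{not} $\veps/\sqrt{2(n-1)}$-far from $\mathsf{SR}(r)$ with respect to \emph{any} edge $e\in E$, I will exhibit a state $\phi\in\mathsf{TTNS}(G,r)$ with $\rmd_{\textnormal{Tr}}(\psi,\phi)<\veps$, which contradicts the hypothesis that $\psi$ is $\veps$-far from $\mathsf{TTNS}(G,r)$. Throughout, for an edge $e$ write $\lambda_1^{(e)}\geq\lambda_2^{(e)}\geq\cdots$ for the Schmidt coefficients of $\ket\psi$ across the bipartition $(L^{(e)},R^{(e)})$ obtained by deleting $e$ from the tree.

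The first step is to translate the hypothesis into a bound on truncated Schmidt tails. By the Eckart--Young--Mirsky computation carried out in \Cref{sec:sr_rank_testing_equivalence} (which rests on \Cref{prop:eckart_young_mirsky}), for each edge $e$ one has $\min_{\varphi\in\mathsf{SR}(r)}\rmd_{\textnormal{Tr}}(\psi,\varphi)^2=\sum_{j\geq r+1}(\lambda_j^{(e)})^2$, the minimum being over pure states of Schmidt-rank at most $r$ across the cut defined by $e$. Hence the statement ``$\psi$ is not $\veps/\sqrt{2(n-1)}$-far from $\mathsf{SR}(r)$ with respect to $e$'' is precisely $\sum_{j\geq r+1}(\lambda_j^{(e)})^2<\veps^2/(2(n-1))$.

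Summing this over the $n-1$ edges of $G$ gives $\sum_{e\in E}\sum_{j\geq r+1}(\lambda_j^{(e)})^2<\veps^2/2\leq 1/2$, using $\veps\leq 1$; in particular this quantity lies in $[0,1]$, so the hypothesis \eqref{eq:truncated_schmidt_ineq} of \Cref{thm:ttns_approximation} is met, and that lemma produces $\ket\phi\in\mathsf{TTNS}(G,r)$ with $|\langle\phi|\psi\rangle|\geq 1-\sum_{e\in E}\sum_{j\geq r+1}(\lambda_j^{(e)})^2>1-\veps^2/2$. Since $1-\veps^2/2>0$, squaring gives $|\langle\phi|\psi\rangle|^2>(1-\veps^2/2)^2$, whence $\rmd_{\textnormal{Tr}}(\psi,\phi)^2=1-|\langle\phi|\psi\rangle|^2<1-(1-\veps^2/2)^2=\veps^2(1-\veps^2/4)<\veps^2$, using the pure-state identity $\rmd_{\textnormal{Tr}}(\phi,\phi')=\sqrt{1-|\langle\phi|\phi'\rangle|^2}$ recalled in \Cref{sec:sr_rank_testing_equivalence}. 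Thus $\rmd_{\textnormal{Tr}}(\psi,\phi)<\veps$, the desired contradiction.

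There is no genuine obstacle here — the corollary is essentially a pigeonhole over the $n-1$ edges fed into the TTNS approximation bound of \Cref{thm:ttns_approximation} — but two minor points deserve care: one must verify the total truncated tail stays in $[0,1]$ so that \Cref{thm:ttns_approximation} applies (this is exactly what the extra factor of $2$ under the square root buys, since it forces the sum below $1/2$), and one must convert the overlap lower bound into a trace-distance upper bound, which is the elementary manipulation above.
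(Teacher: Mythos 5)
Your proof is correct and is essentially the paper's argument run in contrapositive form: both rest on \Cref{thm:ttns_approximation} together with the Eckart--Young--Mirsky translation from \Cref{prop:eckart_young_mirsky} and a summation/pigeonhole over the $n-1$ edges. The only cosmetic difference is that your contrapositive keeps the total truncated tail below $1/2$ automatically, so you avoid the paper's separate handling of the case where that sum exceeds $1$ before invoking the approximation lemma.
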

\begin{proof}
   For each $e\in E$ let $\lambda_1^{(e)},\lambda_2^{(e)},\dots$ be the Schmidt coefficients of $\ket{\psi}$, in non-increasing order, with respect to $e$ and set $\delta:=\sum_{e\in E}\sum_{j\geq r+1}(\lambda_j^{(e)})^2$. We may assume without loss of generality that $\delta\in [0,1]$, since if $\delta > 1$ we trivially have
   \begin{align}\label{eq:eq_32}
       \veps^2/2\leq \delta \leq (n-1)\max_{e\in E} \sum_{j\geq r+1}(\lambda_j^{(e)})^2
   \end{align}
   which proves the claim since, by \Cref{prop:eckart_young_mirsky}, $\sum_{j\geq r+1}(\lambda_j^{(e)})^2\geq \veps^2/2(n-1)$ if and only if the state $\psi$ is $\veps/\sqrt{2(n-1)}$-far from $\mathsf{SR}(r)$, defined with respect to $e$.  If $\psi$ is $\veps$-far from $\mathsf{TTNS}(G,r)$ then the greatest possible overlap between $\psi$ and a state $\phi\in\mathsf{TTNS}(G,r)$ is at most $\sqrt{1-\veps^2}$. By \Cref{thm:ttns_approximation} we then have
   \begin{align*}
       1-\veps^2/2\geq \sqrt{1-\veps^2}\geq |\langle \phi |\psi\rangle|\geq 1-\delta
   \end{align*}
   for some $\phi \in\mathsf{TTNS}(G,r)$. This also leads to \cref{eq:eq_32}.
\end{proof}

We now have the tools to prove~\Cref{thm:ttns_testing}.

\begin{proof}[Proof of \Cref{thm:ttns_testing}]
    Let $N$ be a positive integer to be specified shortly. Define the TTNS test $\Pi_{\mathsf{TTNS}}\in \rmL(\calH^{\otimes N})$ as
\begin{align}\label{eq:ttns_test_def}
    \Pi_{\mathsf{TTNS}}:=\Pi_{\leq r}^{(e_1)}\Pi_{\leq r}^{(e_2)}\dots\Pi^{(e_{n-1})}_{\leq r}
\end{align}
where $e_1,e_2,\dots,e_{n-1}\in E$ are the edges in the tree and $\Pi^{(e)}_{\leq r} := \Proj (\Span \{\ket{\phi}^{\otimes N}:\SR_e(\ket{\phi}\leq r)\})$. We claim the projections commute and therefore $\Pi_{\mathsf{TTNS}}$ is also a projection operator. To see this, note that the projections in the product each have an image which is contained in the symmetric subspace. Then consider a pair of distinct edges $e$ and $e^\prime$ such that, upon removing these edges we obtain bipartitions $L^{(e)},R^{(e)}\subseteq V$ and $L^{(e^\prime)},R^{(e^\prime)}\subseteq V$, respectively. Since $G$ is a tree, it is always possible to pick one subset from each of these bipartitions such that the resulting two subsets of vertices are disjoint\footnote{In more detail, for the edge $e$ we can pick the subset of vertices in the subgraph which does not contain $e^\prime$, and vice versa for $e^\prime$.}. Let us therefore assume without loss of generality that $L^{(e^\prime)}$ and $R^{(e)}$ are disjoint sets. By a judicious choice of the equivalent expressions for $\Pi_{\leq r}^{(e)}$ and $\Pi_{\leq r}^{(e^\prime)}$ established in the first item of \Cref{thm:PC_optimal_sr_test} we have
\begin{align*}
    \Pi^{(e)}_{\leq r} \Pi^{(e^\prime)}_{\leq r} &= (\Pi_{\leq r}\otimes \mathds{1}^{\otimes N}_{R^{(e)}})\PSym\Pi^{(e^\prime)}_{\leq r}\\
    &= (\Pi_{\leq r}\otimes \mathds{1}^{\otimes N}_{R^{(e)}})\Pi^{(e^\prime)}_{\leq r}\\
    &= (\Pi_{\leq r}\otimes \mathds{1}^{\otimes N}_{R^{(e)}})(\mathds{1}^{\otimes N}_{L^{(e^\prime)}}\otimes \Pi_{\leq r})\PSym\\
    &= (\mathds{1}^{\otimes N}_{L^{(e^\prime)}}\otimes \Pi_{\leq r})(\Pi_{\leq r}\otimes \mathds{1}^{\otimes N}_{R^{(e)}})\PSym\\
    &= (\mathds{1}^{\otimes N}_{L^{(e^\prime)}}\otimes \Pi_{\leq r})\PSym(\Pi_{\leq r}\otimes \mathds{1}^{\otimes N}_{R^{(e)}})\PSym\\
    &= \Pi^{(e^\prime)}_{\leq r} \Pi^{(e)}_{\leq r},
\end{align*}
as claimed. Now by \Cref{cor:ttns_far}, if the state $\psi$ being measured is $\veps$-far from $\mathsf{TTNS}(G,r)$ then there exists an edge with respect to which $\psi$ is $\veps/\sqrt{2(n-1)}$-far from the property $\mathsf{SR}(r)$. Since the projection operators in the right-hand side of \cref{eq:ttns_test_def} commute, we may without loss of generality assume that $e_{n-1}$ is this edge. The test has the operational interpretation of accepting if and only if each measurement in the sequence of measurements $\{\Pi^{(e_{n-1})}_{\leq r},\mathds{1}-\Pi^{(e_{n-1})}_{\leq r}\},\dots,\{\Pi^{(e_1)}_{\leq r},\mathds{1}-\Pi^{(e_1)}_{\leq r}\}$ accepts. This occurs with probability at most $\Tr(\Pi_{\leq r}^{(e_{n-1})}\psi^{\otimes N})$ which, by the second item in \Cref{thm:PC_optimal_sr_test}, is at most, say, $1/3$ for some $N=O(nr^2/\veps^2)$. On the other hand, we clearly have that $\Pi_{\mathsf{TTNS}}$ always accepts whenever $\psi\in\mathsf{TTNS}(G,r)$ for any choice of $N$.
\end{proof}

\section{Lower bound for testing TTNS}\label{sec:lower_bounds}
In this section we prove that for sufficiently high bond dimension, the copy complexity of testing $\mathsf{TTNS}(G,r)$ with one-sided error is at least $\Omega(n r^2 /\log n)$.

\subsection{Hard instance for testing TTNS}\label{sec:tree_like}
We first describe a family of states that are $\veps$-far from $\mathsf{TTNS}(G,r)$, which we will use as a hard instance for testing this property. This is the construction depicted schematically in \Cref{fig:tree_like_b}.
Let $d>0$ be a positive integer and $G=(V,E)$ be a tree graph with $n=|V|$ vertices and $(\calH_v)$ be a list of local Hilbert spaces such that for each $v\in V$ we have $\calH_v=(\mathbb{C}^d)^{\otimes D_v}$, where $D_v$ is the degree of $v$. By fixing a choice of the maximum bond dimension $r$ we get a property $\mathsf{TTNS}(G,r)\subset \rmP(\calH)$, with $\calH:=\bigotimes_{v\in V}\calH_v$. One example of a state in $\rmP(\calH)$ but \textit{not} in this property is the state which we denote $\ket{\phi_G}$, constructed in the following way. Let $\ket{\phi}\in\calH_A\otimes \calH_B$ be a bipartite state with Schmidt-rank at least $r+1$, and where $\calH_A=\calH_B=\mathbb{C}^d$. We may then define $\ket{\phi_G}\in\cal H$ as a tensor product of $n-1$ copies of $\ket{\phi}$ where, for each $e=(u,v)\in E$, we identify $\calH_A$ with one of the $d$-dimensional subsystems in $\calH_u$ and $\calH_B$ with one of the $d$-dimensional subsystems in $\calH_v$. Then $\phi_G=\outerprod{\phi_G}{\phi_G}\in \rmP(\calH)$ but $\phi_G\notin \mathsf{TTNS}(G,r)$.

How far away from the property is this state? The following lemma makes precise the intuition that the optimal $\mathsf{TTNS}(G,r)$ approximation to the state $\phi_G$ is given by simply using the optimal low Schmidt-rank approximation for each state in the product comprising $\ket{\phi_G}$. Hence, for a family of trees $G$ with $n$ vertices and a fixed choice of $\ket{\phi}$ the overlap of the optimal approximation to $\phi_G$ decays exponentially in $n$. This is similar to Prop.~3.3 in~\cite{soleimanifar2022testingmps}, which applies to matrix product states.
\begin{lemma}\label{lem:overlap_multiplicative_lemma}
    Let $\ket{\phi}$ and $\ket{\phi_G}$ be as defined above. If
    \begin{align}\label{eq:sr_opt_statement}
        \max_{\ket{\psi}\in \mathsf{SR}(r)} |\langle \psi | \phi\rangle|^2 = F
    \end{align}
    then
    \begin{align}\label{eq:ttns_maximization_with_far_state}
        \max_{\ket{\psi}\in \mathsf{TTNS}(G,r)} |\langle \psi | \phi_G\rangle|^2 = F^{n-1}.
    \end{align}
\end{lemma}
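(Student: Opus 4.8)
The plan is to prove the two inequalities separately. The "$\geq$" direction is the easy one: given an optimal Schmidt-rank-$r$ approximation $\ket{\chi}$ to $\ket{\phi}$ with $|\langle\chi|\phi\rangle|^2 = F$, take the $(n-1)$-fold tensor product of copies of $\ket{\chi}$, placed on the edges exactly as $\ket{\phi_G}$ is built from copies of $\ket{\phi}$. Each copy of $\ket{\chi}$ has Schmidt-rank at most $r$ across the cut separating the two $d$-dimensional subsystems it occupies; since these cuts, restricted to any single edge $e\in E$, are "nested" inside the global bipartition $L^{(e)},R^{(e)}$ (every other copy of $\ket{\chi}$ lies entirely on one side of the cut at $e$), the resulting product state has $\mathsf{SR}_e\leq r$ for every $e$. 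Hence it lies in $\mathsf{TTNS}(G,r)$, and by multiplicativity of overlap over tensor factors its overlap-squared with $\ket{\phi_G}$ is $F^{n-1}$.

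For the "$\leq$" direction — the main content of the lemma — the idea is to induct on the number of edges $n-1$, peeling off a leaf. Pick a leaf vertex $w$ of $G$ with unique incident edge $e=(u,w)$, and let $G'$ be the tree obtained by deleting $w$. Write $\ket{\phi_G} = \ket{\phi}_{e}\otimes\ket{\phi_{G'}}$, where $\ket{\phi}_e$ occupies the $d$-dimensional subsystem in $\calH_w$ and one $d$-dimensional subsystem of $\calH_u$. Given any $\ket{\psi}\in\mathsf{TTNS}(G,r)$, consider the Schmidt decomposition of $\ket{\psi}$ across the cut at $e$, which separates $\calH_w$ from everything else; this has at most $r$ terms, $\ket{\psi}=\sum_{k=1}^r \mu_k \ket{a_k}_{w}\otimes\ket{b_k}_{\text{rest}}$. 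Then $\langle\psi|\phi_G\rangle = \sum_k \mu_k \langle a_k|\otimes\langle b_k|\,(\ket{\phi}_e\otimes\ket{\phi_{G'}})$, and by Cauchy–Schwarz (or by viewing this as an overlap of a bipartite state whose reduced state on the $w$-plus-adjacent-subsystem side has rank $\leq r$), one bounds $|\langle\psi|\phi_G\rangle|^2$ by $F$ times the maximal squared overlap of the reduced object with states of Schmidt-rank $\leq r$ on the $G'$ side — which should recover $F^{n-2}$ by the inductive hypothesis, after checking that the relevant residual state on the $G'$ subsystems is itself a valid competitor for the $\mathsf{TTNS}(G',r)$ maximization, or can be expressed as a convex combination thereof.

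The step I expect to be the main obstacle is making this peeling rigorous: when we Schmidt-decompose $\ket{\psi}$ at the leaf edge $e$, the vectors $\ket{b_k}_{\text{rest}}$ need not individually be TTNS on $G'$, nor normalized in a way that cleanly feeds the induction. The clean fix is probably to combine the two global cut conditions at $e$ — one from $\ket{\phi_G}$'s tensor structure, one from $\ket{\psi}$ being TTNS — using \Cref{prop:eckart_young_mirsky}: since $\ket{\phi}_e$ has a fixed Schmidt spectrum and $\ket{\psi}$ has Schmidt-rank $\leq r$ across $e$, the overlap factorizes as (best rank-$r$ overlap with $\ket{\phi}$) $\times$ (something about the complementary sides). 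Concretely, one writes $\mathrm{Tr}_w(\phi_G) $ and $\mathrm{Tr}_w(\psi)$ and analyzes $\langle\psi|\phi_G\rangle$ via the operator Schmidt / vectorization picture; the rank-$r$ constraint on the $\psi$ side across $e$ is what produces the clean factor of $F$, and the leftover factor is an overlap between states on the $G'$ subsystems to which the inductive hypothesis applies. Alternatively, one can avoid induction entirely and argue directly: parametrize the optimal TTNS approximation by its tensors, use the gauge freedom to make the bond Hilbert spaces exactly $r$-dimensional, and observe that the global overlap is a product over edges of local contractions each bounded by the corresponding single-edge best approximation $\sqrt{F}$; I would likely present whichever of these two routes the subsequent proof in the paper follows, but the induction-by-leaf-removal seems the most transparent.
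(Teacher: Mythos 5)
Your ``$\geq$'' direction is correct and matches the paper. For the ``$\leq$'' direction you have the right high-level plan (induct by peeling off a leaf), but the step you yourself flag as the main obstacle is a genuine gap, and neither of your proposed repairs closes it. The problem is that you Schmidt-decompose $\ket{\psi}$ across the tree-edge cut at $e=(u,w)$, which separates only $\calH_w$ from everything else. That cut does not align with the tensor factorization $\ket{\phi_G}=\ket{\phi}_e\otimes\ket{\phi_{G'}}$, because $\ket{\phi}_e$ has one leg on $\calH_w$ and one leg on a copy of $\mathbb{C}^d$ inside $\calH_u$, which lies on the other side of your cut. Hence $\langle\psi|\phi_G\rangle=\sum_k\mu_k\bra{b_k}\bigl((\bra{a_k}\otimes\mathds{1})\ket{\phi}_e\otimes\ket{\phi_{G'}}\bigr)$ does not split term-by-term into an ``$F$-factor'' times a ``$G'$-factor'', and, as you note, the $\ket{b_k}$ are not objects to which the inductive hypothesis applies. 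Your first suggested fix only asserts the needed factorization (``the leftover factor is an overlap \dots to which the inductive hypothesis applies'') without proving it, and your second alternative is not a valid argument: the global overlap with a TTNS is not a product over edges of local contractions, so gauge-fixing the bond spaces to dimension $r$ does not by itself yield the bound.

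What the paper does instead is decompose $\ket{\psi}$ across the cut $\calH_A\,|\,\calH_B$ with $\calH_A=\calH_w\otimes(\text{the copy of }\mathbb{C}^d\text{ in }\calH_u\text{ carrying }\ket{\phi}_e)$, i.e.\ the cut matching the factorization of $\ket{\phi_G}$; this is not an edge cut, so the number of Schmidt terms is not bounded by $r$, and no such bound is needed. Writing $\ket{\psi}=\sum_i\lambda_i\ket{a_i}_A\otimes\ket{b_i}_B$, two structural facts are proved: (i) each $\ket{b_i}\in\mathsf{TTNS}(G',r)$, because if some $\ket{b_i}$ violated an edge constraint then the reduced state $\Tr_{V\setminus S}(\psi)=\sum_i\lambda_i^2\Tr_{V'\setminus S}(\outerprod{b_i}{b_i})$ would have rank at least $r+1$, contradicting $\psi\in\mathsf{TTNS}(G,r)$; and (ii) the normalized vector $\ket{\chi}$ proportional to $(\mathds{1}_A\otimes\bra{\phi_{G'}})\ket{\psi}$ has Schmidt rank at most $r$ across the edge cut at $e$, since applying a local operator cannot increase Schmidt rank. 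With these, $|\langle\phi_G|\psi\rangle|^2=\bigl(\sum_i\lambda_i^2|\langle\phi_{G'}|b_i\rangle|^2\bigr)\,|\langle\phi|\chi\rangle|^2\leq F^{n-2}\cdot F$ by the inductive hypothesis and \Cref{prop:eckart_young_mirsky}. These two facts are exactly the content missing from your sketch, so as written the proposal does not constitute a proof of the upper bound.
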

\begin{proof}
   The proof is by induction on $n$. The base case $n=2$ is trivial. Assume the claim holds for all connected trees on $n-1$ vertices and suppose $G$ has $n$ vertices. Consider a vertex $u$ which is the parent of a leaf vertex $v$. By definition, $\calH_u=(\mathbb{C}^d)^{\otimes D_u}$ where $D_u$ is the degree of $u$. Define the Hilbert space $\calH_A$ as the tensor product of $\calH_v=\mathbb{C}^d$ and the last copy of $\mathbb{C}^d$ in the product comprising $\calH_u$, and set $\calH_B$ to be the tensor product of the remaining copies of $\mathbb{C}^d$ in the product comprising $\calH$. Then $\calH=\calH_A\otimes \calH_B$ and any state $\ket{\psi}\in\mathsf{TTNS}(G,r)$ has a Schmidt decomposition with respect to this tensor product structure which may be written as
    \begin{align*}
        \ket{\psi} &= \sum_i\lambda_i \ket{a_i}_A\otimes \ket{b_i}_B.
    \end{align*}
    Furthermore, for each index $i$ it holds that $\ket{a_i}\in\mathsf{SR}(r)\subseteq\mathbb{C}^d\otimes \mathbb{C}^d$ while $\ket{b_i}\in \mathsf{TTNS}(G^\prime,r)\subset \rmP(\bigotimes_{v^\prime\in V^\prime}\calH_{v^\prime})$ where $G^\prime=(V^\prime, E^\prime)$ is the tree on $n-1$ vertices formed by removing the leaf $v$ from $G$ and the local Hilbert space $\calH_u$ has been redefined as $(\mathbb{C}^d)^{\otimes D_u-1}$. To see this, assume for contradiction that there is an edge $e\in E^\prime$ for which $\SR_e(\ket{b_i})\geq r+1$ for some $i$ and consider the partition of $V$ into disjoint subsets $S$ and $V\backslash S$ induced by removing $e$ from the original graph $G$. Also, without loss of generality assume $V\backslash S$ is the set which contains both $u$ and $v$. Then the reduced state of $\psi$ on the sites corresponding to the vertices in $S$ is
    \begin{align*}
        \Tr_{V \backslash S}(\psi) = \sum_i \lambda_i^2 \Tr_{V^\prime\backslash S}(\outerprod{b_i}{b_i})
    \end{align*}
    which has rank at least $r+1$, contradicting the membership of $\psi$ in $\mathsf{TTNS}(G,r)$. A similar reasoning shows $\ket{a_i}\in\mathsf{SR}(r)$ for each $i$. Next, we compute
    \begin{align}
        |\langle \phi_{G} | \psi \rangle|^2 &= \left\lvert\bra{\phi}\left(\sum_i \lambda_i \langle \phi_{G^\prime} | b_i\rangle \ket{a_i}\right)\right\rvert^2\nonumber\\
        &=: \left(\sum_i \lambda_i^2|\langle \phi_{G^\prime} | b_i\rangle|^2\right) | \langle\phi | \chi\rangle|^2\label{eq:eq_18}
    \end{align}
    where $\ket{\chi}$ is the unit vector proportional to $\sum_{i}\lambda_i\langle \phi_{G^\prime} | b_i\rangle\ket{a_i}=(\bra{\phi_{G^\prime}}\otimes\mathds{1})\ket{\psi}$. By the induction hypothesis, $|\langle \phi_{G^\prime} | b_i\rangle|^2$ is at most $F^{n-2}$ for each $i$. Furthermore, $\SR(\ket{\chi})\leq r$, which is straightforward to see since $\ket{\psi}$ has Schmidt-rank at most $r$ with respect to the edge between $u$ and $v$, and applying a local operator cannot increase the Schmidt-rank of a vector. Hence, $|\langle \phi | \chi\rangle|^2\leq F$ and the right-hand side of \cref{eq:eq_18} is at most $F^{n-1}$, using the fact that $\sum_{i}\lambda_i^2=1$.
    
    That $F^{n-1}$ is also a lower bound on the left-hand side of \cref{eq:ttns_maximization_with_far_state} follows immediately by taking the tensor product of $n-1$ copies of the optimal solution to the optimization in the left-hand side of \cref{eq:sr_opt_statement}.
\end{proof}
Using this result, we describe a family of states which are $\veps$-far from $\mathsf{TTNS}(G,r)$ for some $\veps\in (0,\frac{1}{\sqrt{6}}]$, but require many copies to reject with constant probability. Let $r \geq 2$, $m=n-1$ be the number of edges in the tree, and $d\in\mathbb{Z}_+$ be such that $d\geq 2r-1$. Also, let $\ket{\phi}\in\mathbb{C}^d\otimes \mathbb{C}^d$ be a state of Schmidt-rank $d$ defined as
\begin{align}\label{eq:hard_phi}
    \ket{\phi} = \sqrt{1-\frac{4\veps^2}{m}}\ket{1}\otimes \ket{1} + \sqrt{\frac{4\veps^2}{m(d-1)}}\sum_{j=2}^d \ket{j}\otimes\ket{j}.
\end{align}
Then, by \Cref{prop:eckart_young_mirsky} and our assumption $\veps \in (0,\frac{1}{\sqrt{6}})$,
\begin{align*}
    \max_{\ket{\psi}\in\mathsf{SR}(r)} |\langle \psi | \phi\rangle|^2 &= 1-\frac{4\veps^2}{m} + \left(\frac{r-1}{d-1}\right)\frac{4\veps^2}{m}\leq 1-\frac{2\veps^2}{m}.
\end{align*}
Thus, by \Cref{lem:overlap_multiplicative_lemma}, it holds that
\begin{align}
    \min_{\psi\in\mathsf{TTNS}(G,r)}\td(\phi_G, \psi) &= \sqrt{1-\max_{\psi\in \mathsf{TTNS}(G,r)}|\langle \psi | \phi_G\rangle|^2}\nonumber\\
    &\geq \sqrt{1-\left(1-\frac{2\veps^2}{m}\right)^m}\nonumber\\
    &\geq \veps\label{eq:ttns_eps_far_ineq}
\end{align}
where in the final line we used the inequality $(1-x)^q\leq 1-qx/2$ for all $x\leq 1/q$, which follows from the fact that the first order Taylor polynomial for $(1-x)^q$ has Lagrange remainder bounded by $q(q-1)x^2/2$. In other words, with the above choice of Schmidt coefficients for $\ket{\phi}$, the resulting pure state $\phi_G$ is $\veps$-far from the property $\mathsf{TTNS}(G,r)$.

\subsection{Proof of the lower bound}
In order to show that the state $\phi_G$ requires many copies to reject for any test with one-sided error, we relate the performance of a specific test which is easier to analyze to that of the strongly PC-optimal test. For every Hilbert space $\calH_v$, $v\in V$ we assign each of the $D_v$ copies of $\mathbb{C}^d$ in the tensor product comprising $\calH_v$ to a unique edge incident on $v$. In this way, we can write $\calH=\bigotimes_{e\in E}\calH_e$ with $\calH_e = \mathbb{C}^d\otimes \mathbb{C}^d$ for every $e\in E$. Then, for each $e\in E$ let $\Pi^{(e)}_{\leq r}\in\rmL(\calH_e^{\otimes N})$ be the $N$-copy strongly PC-optimal test for the property $\mathsf{SR}(r)\subset \rmP(\calH_e)$ from \Cref{thm:PC_optimal_sr_test}. Also, let $\PSym$ be the projector onto the symmetric subspace of $\calH^{\otimes N}$.
\begin{lemma}\label{lem:pc_optimal_test_lb_operator}
    Let $\Pacc\in\rmL(\calH^{\otimes N})$ be an $N$-copy strongly PC-optimal test for the property $\mathsf{TTNS}(G,r)\subset \rmP(\calH)$. It holds that
    $
        \bigotimes_{e\in E}\Pi_{\leq r}^{(e)} \preceq \PSym\Pacc\PSym.
    $
\end{lemma}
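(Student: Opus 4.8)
The plan is to reduce the operator inequality to a statement about subspaces, using the explicit form of strongly PC-optimal tests. By \Cref{lem:pc_opt_pure_proj_lemma}, any strongly PC-optimal test for $\mathsf{TTNS}(G,r)$ agrees, within the symmetric subspace, with the projection onto $\calV := \Span\{\ket{\psi}^{\otimes N} : \psi\in\mathsf{TTNS}(G,r)\}$; since every spanning vector lies in $\vee^N(\calH)$, we have $\calV\subseteq\vee^N(\calH)$ and hence $\PSym\Pacc\PSym = \Proj(\calV)$. On the other side, by \Cref{thm:PC_optimal_sr_test} each $\Pi_{\leq r}^{(e)}$ equals $\Proj(\calW_e)$, where $\calW_e := \Span\{\ket{\phi}^{\otimes N} : \ket{\phi}\in\calH_e,\ \SR(\ket{\phi})\leq r\}\subseteq\calH_e^{\otimes N}$. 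Under the identification $\calH^{\otimes N}\cong\bigotimes_{e\in E}\calH_e^{\otimes N}$ (regrouping the $N$ copies of each $\calH_e$), the operator $\bigotimes_{e\in E}\Pi_{\leq r}^{(e)}$ is then the orthogonal projection onto $\bigotimes_{e\in E}\calW_e$. Since both $\Proj(\calV)$ and $\Proj\!\big(\bigotimes_e\calW_e\big)$ are orthogonal projections, it suffices to prove the subspace containment $\bigotimes_{e\in E}\calW_e\subseteq\calV$, which immediately yields $\bigotimes_e\Pi_{\leq r}^{(e)}\preceq\Proj(\calV)=\PSym\Pacc\PSym$.

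To prove the containment, note that a tensor product of spans is the span of tensor products, so $\bigotimes_{e\in E}\calW_e$ is spanned by vectors of the form $\bigotimes_{e\in E}\ket{\phi_e}^{\otimes N}$ with $\SR(\ket{\phi_e})\leq r$ for every $e$. Regrouping the tensor factors back to $\calH^{\otimes N}$, such a vector equals $\ket{\Phi}^{\otimes N}$ where $\ket{\Phi}:=\bigotimes_{e\in E}\ket{\phi_e}\in\calH$. The crux is to check that $\ket{\Phi}\in\mathsf{TTNS}(G,r)$, i.e.\ $\SR_{e'}(\ket{\Phi})\leq r$ for every $e'\in E$. Fix $e'$ and let $L^{(e')},R^{(e')}$ be the bipartition of $V$ it induces. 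Because $G$ is a tree, deleting $e'$ leaves every other edge $e$ intact, so the two $d$-dimensional subsystems attached to $e$ lie in Hilbert spaces on the \emph{same} side of the cut; hence $\ket{\phi_e}$ contributes a factor of Schmidt-rank $1$ across the $e'$-cut, while only $\ket{\phi_{e'}}$ straddles it. Therefore $\SR_{e'}(\ket{\Phi})=\SR(\ket{\phi_{e'}})\leq r$, so $\ket{\Phi}\in\mathsf{TTNS}(G,r)$ and $\ket{\Phi}^{\otimes N}\in\calV$. As these vectors span $\bigotimes_e\calW_e$, the containment $\bigotimes_e\calW_e\subseteq\calV$ follows.

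I do not expect a serious obstacle here; the work is entirely bookkeeping of two kinds. First, one must be careful that the reordering $\calH^{\otimes N}\cong\bigotimes_e\calH_e^{\otimes N}$ carries $\bigotimes_e\ket{\phi_e}^{\otimes N}$ to $\big(\bigotimes_e\ket{\phi_e}\big)^{\otimes N}$, which is manifestly symmetric, so that $\im\big(\bigotimes_e\Pi_{\leq r}^{(e)}\big)$ genuinely sits inside $\vee^N(\calH)$ and the comparison with $\PSym\Pacc\PSym$ is meaningful. Second, one must invoke the uniqueness in \Cref{lem:pc_opt_pure_proj_lemma} only up to equivalence within the symmetric subspace — precisely what licenses replacing the arbitrary strongly PC-optimal $\Pacc$ by $\Proj(\calV)$ after conjugating by $\PSym$. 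The tree hypothesis enters exactly once, in the Schmidt-rank computation above, mirroring its role in \Cref{lem:overlap_multiplicative_lemma}.
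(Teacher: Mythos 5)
Your proof is correct and follows essentially the same route as the paper's: reduce both sides to orthogonal projections via \Cref{lem:pc_opt_pure_proj_lemma} and \Cref{thm:PC_optimal_sr_test}, observe that $\bigotimes_{e\in E}\im(\Pi_{\leq r}^{(e)})$ is spanned by vectors $\big(\bigotimes_e\ket{\phi_e}\big)^{\otimes N}$ with each $\SR(\ket{\phi_e})\leq r$, and note these states lie in $\mathsf{TTNS}(G,r)$, giving the subspace containment and hence the operator inequality. The only cosmetic difference is that you verify the inclusion of edge-wise products in $\mathsf{TTNS}(G,r)$ explicitly via the tree-cut argument, whereas the paper treats it as immediate from the definition of $\mathsf{Prod}_2(G,r)$.
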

\begin{proof}
    By \Cref{lem:pc_opt_pure_proj_lemma}, we can take $\Pacc=\Proj\Span\{\ket{\phi}^{\otimes N}:\outerprod{\phi}{\phi}\in\mathsf{TTNS}(G,r)\}$ without loss of generality. Now consider the property $\mathsf{Prod}_2(G,r)$ which is defined to be the set of those states in $\mathsf{TTNS}(G,r)$ which are tensor products of bipartite states ``along the edges'' of $G$, i.e.,
    \begin{align*}
         \mathsf{Prod}_2(G,r)= \left\{\outerprod{\psi}{\psi}:\ket{\psi}=\bigotimes_{e\in E}\ket{\phi_e},\ \ket{\phi_e}\in\calH_e\ \textnormal{and}\ \SR(\ket{\phi_e})\leq r\ \forall  e\in E\right\}.
    \end{align*}
    Evidently,
    \begin{align}\label{eq:image_contains_products}
        \im(\Pacc)&\supseteq \Span\left\{\ket{\psi}^{\otimes N}:\outerprod{\psi}{\psi}\in\mathsf{Prod}_2(G,r)\right\}
    \end{align}
    so it suffices to demonstrate that the right-hand side is precisely the image of the projector $\bigotimes_{e\in E}\Pi_{{\leq r}}^{(e)}$. By definition, in the set on the right-hand side over which the span is being taken the elements $\ket{\psi}^{\otimes N}$ can each be expressed as $\bigotimes_{e\in E}\ket{\phi_e}^{\otimes N}$ for some $\ket{\phi_e}\in\calH_e$ satisfying $\SR(\ket{\phi_e})\leq r$, up to an irrelevant convention for the order in which the tensor product is written. Hence, the right-hand side of \cref{eq:image_contains_products} is equal to
    \begin{align*}
        &\Span \left\{\bigotimes_{e\in E}\ket{\phi_e}^{\otimes N}: \ket{\phi_e}\in\calH_e\ \textnormal{and}\ \SR(\ket{\phi_e})\leq r\ \forall e\in E\right\} \\
        &\qquad\qquad\qquad\qquad\qquad = \bigotimes_{e\in E}\Span\left\{\ket{\phi_e}^{\otimes N}: \ket{\phi_e}\in\calH_e\ \textnormal{and}\ \SR(\ket{\phi_e})\leq r\right\}\\
        &\qquad\qquad\qquad\qquad\qquad = \bigotimes_{e\in E}\im(\Pi^{(e)}_{\leq r})\\
        &\qquad\qquad\qquad\qquad\qquad = \im\left(\bigotimes_{e\in E}\Pi_{\leq r}^{(e)}\right),
    \end{align*}
    which concludes the proof.
\end{proof}
By \Cref{lem:pc_optimal_test_lb_operator} along with the definition of strong PC-optimality from \Cref{sec:property_testing_defs} we have that, for any test $\Pacc$ with one-sided error and pure state $\chi\in\rmP(\calH)$ which is $\veps$-far from $\mathsf{TTNS}(G,r)$, it holds that $\Tr(\Pacc \chi^{\otimes N})\geq \Tr\left((\bigotimes_{e\in E}\Pi^{(e)}_{\leq r}) \chi^{\otimes N}\right)$. Taking $\chi = \phi_G$ as in \Cref{sec:tree_like} and noting that $\phi_G = \bigotimes_{e\in E}\phi_e$ where $\ket{\phi_e}\in\calH_e=\mathbb{C}^d\otimes \mathbb{C}^d$ is as in the right-hand side of \cref{eq:hard_phi} for each $e\in E$, we have
\begin{align}\label{eq:accept_pr_is_prod}
   \Tr(\Pacc \phi_G^{\otimes N})&\geq \prod_{e\in E}\Tr\left(\Pi_{\leq r}^{(e)}\phi_e^{\otimes N}\right) = \Tr\left(\Pi_{\leq r} \rho^{\otimes N}\right)^m
\end{align}
where $m=n-1$ is the number of edges, $\rho\in\rmD(\mathbb{C}^d)$ is the reduced density matrix of any $\ket{\phi_e}\in\mathbb{C}^d\otimes \mathbb{C}^d$ on the first subsystem, i.e.,
$\rho = \diag(1-4\veps^2/m,\frac{4\veps^2}{m(d-1)},\dots,\frac{4\veps^2}{m(d-1)})$, and $\Pi_{\leq r}\in\rmL((\mathbb{C}^d)^{\otimes N})$ is the optimal rank test defined in \cref{eq:rank_test_op}. We can now state the theorem which leads to the lower bound in \Cref{thm:main_thm}.
\begin{theorem}\label{thm:main_lb_thm}
    Let $G$ be a tree graph on $n\geq 2$ vertices, $r\geq \max\{50, 1+\alpha(n)\}$ be an integer where $\alpha(n):=\log(4(n-1))$, and $\veps\in (0,1/\sqrt{6}]$. For any $N$-copy test $\Pacc\in\rmL(\calH^{\otimes N})$ for the property $\mathsf{TTNS}(G,r)\subset \rmP(\calH)$ with one-sided error, it holds that
    $
        \Tr(\Pacc \phi_G^{\otimes N})\geq \frac{1}{2}
    $
    so long as
    \begin{align}\label{eq:m_constraint}
        N \leq \frac{(n-1) r^2}{400 \alpha(n) \veps^2}.
    \end{align}
\end{theorem}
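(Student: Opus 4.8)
The plan is to leverage the reduction already in place at \cref{eq:accept_pr_is_prod}, which (via \Cref{lem:pc_optimal_test_lb_operator}) bounds the acceptance probability of any one-sided test on $\phi_G$ from below by $\Tr(\Pi_{\leq r}\rho^{\otimes N})^m$, where $m=n-1$ and $\rho=\diag\!\big(1-\tfrac{4\veps^2}{m},\tfrac{4\veps^2}{m(d-1)},\dots,\tfrac{4\veps^2}{m(d-1)}\big)\in\rmD(\mathbb{C}^d)$ with $d-1$ copies of the small eigenvalue. Since $(1-x)^m\geq 1-mx$ for $x\in[0,1]$, it suffices to show $\Tr(\Pi_{\leq r}\rho^{\otimes N})\geq 1-\tfrac{1}{4m}$, and by \Cref{lem:lds_characterization} this is exactly the claim that $\Pr_{\bm{x}\sim\alpha^{\otimes N}}[\LDS(\bm{x})\geq r+1]\leq\tfrac{1}{4m}$, where $\alpha$ is the spectrum of $\rho$ in nonincreasing order (so the large eigenvalue is the letter $1$). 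So the whole proof comes down to this last tail estimate on the longest decreasing subsequence of a biased string.

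The key structural observation is that any strictly decreasing subsequence of length $r+1$ over the alphabet $[d]$ must have its first $r$ entries equal to letters from $\{2,\dots,d\}$, since the $k$-th entry of a strictly decreasing length-$(r+1)$ subsequence is at least $r+2-k$. Hence $\LDS(\bm{x})\geq r+1$ forces the subword of $\bm{x}$ formed by the non-$1$ letters to contain a strictly decreasing subsequence of length $r$. Conditioned on that subword having length $K$, it is uniformly distributed on $[d-1]^K$ (relabel $\{2,\dots,d\}$ as $[d-1]$; the positions of the non-$1$ letters are independent of their values), and $K\sim\mathrm{Bin}(N,4\veps^2/m)$ because each coordinate of $\bm{x}$ is non-$1$ with probability $\tfrac{4\veps^2}{m}$. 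Thus $\Pr_{\bm{x}\sim\alpha^{\otimes N}}[\LDS(\bm{x})\geq r+1]\leq\mathbb{E}_K\big[\Pr_{\bm{y}\sim[d-1]^K}[\LDS(\bm{y})\geq r]\big]$, which I would split at the threshold $K_0:=r^2/(4e^2)$.

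For $K\leq K_0$: since $d\geq 2r-1$ we have $r\leq d-1$ and $1+r/(d-1)\leq 2$, so \Cref{lem:lds_conc} (with local dimension $d-1$) gives $\Pr_{\bm{y}\sim[d-1]^K}[\LDS(\bm{y})\geq r]\leq\big(2e^2K/r^2\big)^r\leq 2^{-r}\leq 2^{-(1+\alpha(n))}=\tfrac{1}{8m}$, using $r\geq 1+\alpha(n)=1+\log(4m)$. For $K>K_0$: the copy bound \cref{eq:m_constraint} gives $\mathbb{E}[K]=4N\veps^2/m\leq r^2/(100\,\alpha(n))$, so $K_0/\mathbb{E}[K]\geq 25\,\alpha(n)/e^2\geq 2e$ since $\alpha(n)\geq 2$; then the standard multiplicative Chernoff bound $\Pr[K\geq R]\leq e^{-\mathbb{E}[K]}(e\,\mathbb{E}[K]/R)^R$ yields $\Pr[K>K_0]\leq 2^{-K_0}=2^{-r^2/(4e^2)}\leq 2^{-r}\leq\tfrac{1}{8m}$, where $2^{-r^2/(4e^2)}\leq 2^{-r}$ uses $r\geq 50>4e^2$. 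Adding the two regimes gives $\Pr_{\bm{x}\sim\alpha^{\otimes N}}[\LDS(\bm{x})\geq r+1]\leq\tfrac{1}{8m}+\tfrac{1}{8m}=\tfrac{1}{4m}$, and back-substituting through the chain above yields $\Tr(\Pacc\,\phi_G^{\otimes N})\geq(1-\tfrac{1}{4m})^m\geq\tfrac{3}{4}\geq\tfrac12$.

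I expect the main obstacle to be bookkeeping rather than a new idea: one must choose constants so that the three hypotheses $r\geq 50$, $r\geq 1+\alpha(n)$, and $d\geq 2r-1$ simultaneously push both the combinatorial tail of \Cref{lem:lds_conc} and the Chernoff tail for $K$ below $1/(8m)$, with $K_0\asymp r^2$ sitting comfortably above $\mathbb{E}[K]$ (which is precisely where the factor $\alpha(n)$, and hence the $\log n$ loss in \cref{eq:m_constraint}, enters) yet still small enough that the uniform LDS bound bites. The genuinely new ingredient beyond \cref{eq:accept_pr_is_prod} and \Cref{lem:lds_characterization} is the reduction to the uniform-string bound of \Cref{lem:lds_conc} via the ``first $r$ letters are $\geq 2$'' remark, which avoids any direct manipulation of Schur polynomials or symmetric-group dimensions.
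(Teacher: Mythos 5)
Your proposal is correct and takes essentially the same route as the paper's proof: both reduce via \cref{eq:accept_pr_is_prod} and \Cref{lem:lds_characterization} to a tail bound on the longest decreasing subsequence of a biased string, strip the dominant letter, condition on the Binomial count of the remaining letters (uniform over $[d-1]$), and combine \Cref{lem:lds_conc} with a Chernoff bound, differing only in bookkeeping (your fixed threshold $K_0\asymp r^2$ and per-edge target $1-\tfrac{1}{4m}$ versus the paper's truncation at $\mu(1+t)$ for a tuned $t$ and per-factor target $2^{-1/(2m)}$). One small caveat: your numeric steps $2^{-(1+\alpha(n))}=\tfrac{1}{8m}$ and $\alpha(n)\geq 2$ read $\log$ as base $2$, whereas the paper's own proof of this theorem treats $\alpha(n)=\log(4(n-1))$ as a natural logarithm, so under that convention these two constants require a minor readjustment (e.g., replacing the $2^{-r}$ bounds by $e^{-r}$ via a slightly smaller $K_0$), which does not affect the structure or validity of your argument.
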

The lower bound in \Cref{thm:main_thm} follows from this theorem together with the fact established above through \cref{eq:ttns_eps_far_ineq} that $\phi_G$ is $\veps$-far from $\mathsf{TTNS}(G,r)$. Note that the restriction $r\geq 50$ is irrelevant for the statement of the copy complexity lower bound in terms of its limiting behaviour since, for any sequence of tuples of the relevant parameters either i) $r$ is upper bounded by $50$ or ii) $r$ tends to a value greater than $50$. In the first case $n$ is also upper bounded by some constant due to the constraint $r\geq 1+\alpha(n)$, and the trivial\footnote{It takes $\Omega(1/\veps^2)$ copies just to discriminate two states which are $\veps$-far from each other in trace distance.} lower bound $\Omega(1/\veps^2)$ still applies. In the second case, the hypothesis of \Cref{thm:main_lb_thm} is satisfied for all elements in the sequence whose index is greater than some positive integer.

We remind the reader that our construction of the pure states $\phi_G$ required us to identify $\calH_v$ with $(\mathbb{C}^d)^{\otimes D_v}$, where $d \geq 2r-1$ and $D_v$ is the degree of the vertex $v \in V$. So the statement of~\Cref{thm:main_lb_thm} implicitly requires $\dim(\calH_v)\geq (2r-1)^{D_v}$. We remark that similar implicit constraints are required for the lower bounds proven in~\cite{soleimanifar2022testingmps} and \cite{chen2024local} for MPS.
\begin{proof}[Proof of \Cref{thm:main_lb_thm}]
    Throughout, we let $m=n-1$ be the number of edges in $G$ and $\alpha=\alpha(n)$, suppressing the dependence on $n$. The proof is based on a hard instance of rank testing where the spectrum of the state is concentrated on one letter, similar to the proof of Lemma 6.2 in \cite{o2015quantum}. By \cref{eq:accept_pr_is_prod} it suffices to establish the bound $\Tr(\Pi_{\leq r}\rho^{\otimes N}) \geq 2^{-1/m}$ for $N$ satisfying the bound in \cref{eq:m_constraint}. By \Cref{lem:lds_characterization} we have
    \begin{align}\label{eq:lds_eq_in_lb_proof}
        \Tr(\Pi_{\leq r} \rho^{\otimes N}) &= \Pr_{\bm{x}\sim p^{\otimes N}}\left[\LDS(\bm{x})\leq r\right]
    \end{align}
    where $\bm{x}\in [d]^N$ is a random string over the alphabet $[d]$ and $p:=(1-4\veps^2/m,\frac{4\veps^2}{m(d-1)},\dots,\frac{4\veps^2}{m(d-1)})$ is the distribution over $[d]$ equal to the sorted spectrum of $\rho$. Consider a modified random variable $\widetilde{\bm{x}}$ obtained by removing all occurrences of the letter ``1" from $\bm{x}\sim p^{\otimes N}$. Since this process results in a random string which has an LDS that is either equal to $\LDS(\bm{x})$ or $\LDS(\bm{x})-1$, the right-hand side of \cref{eq:lds_eq_in_lb_proof} is at least
    \begin{align}
        \Pr_{\bm{x}\sim p^{\otimes N}}\left[\LDS(\widetilde{\bm{x}})\leq r-1\right] &\geq \sum_{L=0}^{\lfloor \mu(1+t)\rfloor} \Pr_{\bm{y}\sim \textnormal{Unif}([d-1]^L)}\left[\LDS(\bm{y})\leq r-1\right]\cdot \Pr_{\bm{x}\sim p^{\otimes N}}\left[|\widetilde{\bm{x}}|=L\right]\nonumber\\
        &\geq \left(\min_{L\in\mathbb{Z}_+: L\leq \mu(1+t)} \Pr_{\bm{y}\sim\textnormal{Unif}([d-1]^L)}\left[\LDS(\bm{y})\leq r-1\right]\right)\Pr_{\bm{x}\sim p^{\otimes N}}\left[|\widetilde{\bm{x}}|\leq \mu(1+t)\right] \label{eq:prod_of_2_probs}
    \end{align}
    where $\mu, t > 0$ are two real-valued parameters to be specified shortly and in the first line we have made use of the fact that, conditioned on the event $|\widetilde{\bm{x}}|=L$ the random variable $\widetilde{\bm{x}}$ is uniform on $\{2,3,\dots,d\}^L$. We will now specify parameters $\mu$ and $t$ which allow us to bound the right-hand side of \cref{eq:prod_of_2_probs} from below. Set $\mu=\expct_{\bm{x}\sim p^{\otimes N}} |\widetilde{\bm{x}}| = 4\veps^2N/m$ where the second equality is due to the fact that the random variable $|\widetilde{\bm{x}}|$ follows a Binomial distribution for $N$ trials with success probability $4\veps^2/m$. By a Chernoff bound (see \Cref{lem:chernoff_bound}) it holds that
    \begin{align*}
        \Pr_{\bm{x}\sim p^{\otimes N}}\left[|\widetilde{\bm{x}}|> \mu(1+t)\right]\leq \ee^{\frac{-\mu t^2}{2 + t}}
    \end{align*}
    for any $t\geq 0$. With some foresight, set
    $
        t = \frac{1}{2\mu}\left(\alpha + \sqrt{\alpha^2 + 8\alpha\mu}\right).
    $
    Then $\mu t^2 = (2+t)\alpha$ so that, by Bernoulli's inequality,
    \begin{align*}
       \left(1-\ee^{\frac{-\mu t^2}{2 + t}}\right)^{2m}&\geq 1-2m\ee^{\frac{-\mu t^2}{2 + t}} = \frac{1}{2},
    \end{align*}
    which implies that
    \begin{align}\label{eq:eq_23}
         \Pr_{\bm{x}\sim p^{\otimes N}}\left[|\widetilde{\bm{x}}|\leq \mu(1+t)\right]\geq 2^{-1/(2m)}.
    \end{align}
    Next, we prove a lower bound of $2^{-1/(2m)}$ for the term in parentheses in \cref{eq:prod_of_2_probs}. Since the probability in this expression is monotonically non-increasing in $L$, it suffices to establish a lower bound on this quantity when $L:=\lfloor \mu(1+t)\rfloor$. By \Cref{lem:lds_conc}, it holds that
    \begin{align*}
        \Pr_{\bm{y}\sim \textnormal{Unif}([d-1]^L)}\left[\LDS(\bm{y})\leq r-1\right]&\geq 1- \left(\frac{(1+\frac{r-1}{d-1})\ee^2 L}{(r-1)^2}\right)^{r-1}\geq 1-\left(\frac{3\ee^2 L}{2(r-1)^2}\right)^{r-1}
    \end{align*}
    where the second inequality follows from $d-1\geq 2(r-1)$. Assume for now that $3\ee^2L/(2(r-1)^2)\leq 1$. We will validate this helpful assumption momentarily. Then by Bernoulli's inequality once again, we have
    \begin{align}\label{eq:eq_109}
        \left(1-\left(\frac{3\ee^2 L}{2(r-1)^2}\right)^{r-1}\right)^{2m}&\geq 1-2m\left(\frac{3\ee^2 L}{2(r-1)^2}\right)^{r-1}
    \end{align}
    so in order to establish a lower bound of $2^{-1/2m}$ on the term in parentheses in \cref{eq:prod_of_2_probs} it suffices to show that that the right-hand side of \cref{eq:eq_109} is at least $1/2$, or equivalently
    \begin{align}\label{eq:rank_ratio_suffices}
        \left(\frac{2(r-1)^2}{3\ee^2 L}\right)^{r-1}\geq 4m.
    \end{align}
    The assumption that $r\geq 1 + \alpha$ is equivalent to $\log(4m)\leq r-1$. Hence, one sees by taking the logarithm of both sides that \cref{eq:rank_ratio_suffices} is implied by the inequality
    \begin{align}\label{eq:eq_66}
       \frac{2(r-1)^2}{3\ee^2L}\geq \ee,
    \end{align}
    which, if true, also validates the aforementioned helpful assumption. Using the definitions of $L$, $\mu$, and $t$ above we find \cref{eq:eq_66} is implied by
    \begin{align}\label{eq:eq_29}
        \frac{4\veps^2N}{m} + \frac{\alpha}{2} + \frac{1}{2}\sqrt{\alpha^2+\frac{32\alpha\veps^2 N}{m}} \leq  \frac{2}{3\ee^3}(r-1)^2.
    \end{align}
   Now since $\log(4)\leq \alpha \leq r-1$ and we have assumed $N \leq mr^2/(400 \alpha\veps^2)$, the left-hand side of \cref{eq:eq_29} is at most
    \begin{align*}
        \frac{r^2}{100 \log(4)} + \frac{r-1}{2} + \frac{1}{2}\sqrt{(r-1)^2 + \frac{2}{25}r^2} &\leq  \frac{r^2}{100} + \left(\frac{1}{2}+\frac{3\sqrt{3}}{10}\right)r -\frac{1}{2}
    \end{align*}
    which is indeed at most $2(r-1)^2/(3\ee^3)$ for every $r\geq 50$.
    This proves \cref{eq:eq_29} and therefore the term in parentheses in \cref{eq:prod_of_2_probs} is at least $2^{-1/(2m)}$. Combining with \cref{eq:eq_23} we have
    \begin{align*}
        \Pr_{\bm{x}\sim p^{\otimes N}}\left[\LDS(\widetilde{\bm{x}})\leq r-1\right]\geq 2^{-1/m}
    \end{align*}
    which proves the claim.
\end{proof}

\section{Testing products of bipartite states with constant Schmidt-rank}
\label{sec:improved_ub}
Our lower bound in \Cref{sec:lower_bounds}, as well as the $\Omega(\sqrt{n})$ lower bounds from Refs.~\cite{soleimanifar2022testingmps} and \cite{chen2024local} rely on analyzing states in the property $\mathsf{Prod}_2(n,r)$, which forms a subset of states with the TTNS property. Recall that $\mathsf{Prod}_2(n,r)$ is defined to be the set of all states which are $n/2$-wise products of bipartite states on the subsystems $(1,2), (3,4),\dots, (n-1,n)$ each with Schmidt-rank at most $r$. In this section, we prove \Cref{thm:improved_ub_qutrits}, which shows that, unlike the case where $r\geq 2+\log n$, a number of copies on the order of $\sqrt{n}$ suffices to test this class of states with one-sided error when $r=2$. As mentioned in \Cref{sec:results}, this improves upon the copy complexity obtained using a ``test-by-learning" approach. Here, a number of copies on the order of $3^n$ would be required to learn a full description of the state, assuming the subsystems are qutrits. Even provided the guarantee that the input state is product with respect to the $n/2$ subsystems, learning a description of it to within small trace distance error $\veps$ would seem to require linear in $n$ copies. This is because learning each bipartite state to within $\veps$ in trace distance uses $O(1/\veps^2)$ copies, but results in a worst-case error of $\sqrt{1-(1-\veps^2)^{n/2}}$ in trace distance for the full state, which is close to one unless $\veps=O(1/\sqrt{n})$.

\subsection{Decreasing subsequences on three letters}
The proof of \Cref{thm:improved_ub_qutrits} relies on the following lemma.
\begin{lemma}\label{lem:3_letter_lds_lem}
    Let $\calA=\{a,b,c\}$ be an alphabet on three letters, $N\geq 3$ be a positive integer, $t\in [0,1/3]$, and $\bm{w}\in \calA^N$ be a random $N$-letter word composed of i.i.d.\ copies of the random letter whose weights are given by
    \begin{align}\label{eq:lds_constraint}
        p(a)=1-x-y,\quad p(b)=x,\quad p(c)=y,\quad t \leq y\leq x\leq \frac{1-y}{2}
    \end{align}
    for some choice of $x,y \in \mathbb{R}$ satisfying the constraints. There exist universal constants $c_1\in (0,1/2]$ and $c_2>0$ such that if $Nt\leq c_1$ then
    \begin{align}\label{eq:lds_prob}
        \Pr\left[cba\ \textnormal{is a subsequence in}\ \bm{w}\right]\geq \frac{N^2t^2}{4} - c_2 t.
    \end{align}
\end{lemma}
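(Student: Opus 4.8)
The plan is to estimate the probability that the pattern $cba$ appears as a subsequence in $\bm{w}$ by a careful inclusion-exclusion / second-moment style argument, exploiting the fact that $Nt$ is small so that $c$ and $b$ each appear only a constant expected number of times. The cleanest route I would take is to condition on the positions and multiset of occurrences of the two rare letters $b$ and $c$. Let $\bm{B}$ be the number of $b$'s and $\bm{C}$ the number of $c$'s in $\bm{w}$. The event that $cba$ is a subsequence contains the event that there exist a $c$ before a $b$ before some later position that is an $a$. Since $a$ has probability $1-x-y \geq 1/3 > 0$ (using $x \le (1-y)/2$ and $y \le x$, so $x+y \le 3(1-y)/2 \cdot \ldots$ — more simply $x+y \le \tfrac{1-y}{2} + y = \tfrac{1+y}{2} \le \tfrac{2}{3}$ when $y \le 1/3$), there is always constant probability mass of $a$'s to the right, so the real content is getting a $c$ strictly before a $b$.

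\medskip

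\textbf{Key steps, in order.} First I would reduce to a two-letter problem: condition on which of the $N$ positions carry the letter $a$ versus ``not $a$'' (probability $x+y$ of the latter), and then among the non-$a$ positions each is independently $b$ or $c$ with the appropriate conditional probabilities $x/(x+y)$ and $y/(x+y)$. Second, I would observe that $\Pr[cba \text{ is a subsequence}] \ge \Pr[\exists \text{ a } c \text{ strictly before a } b, \text{ with at least one } a \text{ after that } b]$. Since $t \le y \le x$, roughly we expect $\ge Nt$ occurrences of $c$ and $\ge Nt$ of $b$; a short calculation with the probability that a $c$-position precedes a $b$-position (this is a ``random two-coloring of a random subset'' computation) shows $\Pr[\text{some } c \text{ before some } b] \gtrsim \tfrac14 N^2 t^2$ for $Nt$ small, because with two rare events of expected size $\mu_c, \mu_b \gtrsim Nt$ each, the chance of getting at least one of each in the correct order is $\approx \mu_c \mu_b / 2$ up to higher-order corrections. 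Third, I would subtract the small correction coming from (i) the rare event that the single $b$ we found has no $a$ after it — this costs at most a geometric tail, hence $O(t)$ after one more factor — and (ii) the second-order terms $O((Nt)^3)$ and $O(Nt \cdot x)$ in the lower bound for ``$c$ before $b$'', which are all absorbable into a term of the form $-c_2 t$ once $Nt \le c_1$. The constants $c_1, c_2$ are chosen at the end to make every error term fit.

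\medskip

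\textbf{The main obstacle} will be the bookkeeping in the second step: getting a clean lower bound of the form $\tfrac14 N^2 t^2 - O(t)$ on $\Pr[\text{some } c \text{ appears before some } b]$ that is uniform over the allowed range $t \le y \le x \le (1-y)/2$. The difficulty is that when $x$ is not small (it can be as large as nearly $1/2$), the number of $b$'s is large, but then almost surely \emph{every} $c$ has a $b$ after it, which only helps; the genuinely delicate regime is $x, y$ both $\approx t$ and $N t$ a small constant, where one must avoid losing a constant factor. I would handle this by a direct computation: $\Pr[\text{no } c \text{ before any } b] = \Pr[\bm{C}=0] + \sum_{j \ge 1}\Pr[\bm{C}=j]\Pr[\text{all } c\text{'s after all } b\text{'s} \mid \bm{C}=j]$, expand $\Pr[\bm{C}=0] = (1-y)^N \le 1 - Ny + \binom{N}{2}y^2 \le 1 - Nt + \tfrac12 N^2 t^2 + O(\ldots)$, and bound the conditional ordering probabilities, so that $1 - \Pr[\text{no } c \text{ before } b] \ge$ the claimed quantity. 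Combining this with the constant lower bound on ``an $a$ exists after a fixed position'' (except near the very end of the word, which contributes another $O(t)$ loss) yields \cref{eq:lds_prob}. The remaining details — choosing $c_1 \le 1/2$ small enough that all the $(Nt)^k$, $k\ge 3$, terms are dominated, and $c_2$ large enough to absorb the $O(t)$ boundary and tail losses — are routine.
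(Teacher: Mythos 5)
Your overall strategy (estimate the probability of the pattern $cba$ directly by conditioning on the rare letters and doing inclusion--exclusion, rather than any representation-theoretic computation) is a legitimately different route from the paper, which first reduces to the extreme case $x=y=t$ via a majorization/coupling argument (\Cref{lem:lds_helper}, based on \cite{ODonnell2016}) and then bounds $\Pr[\LDS(\bm w)\le 2]$ exactly through Schur polynomials and a count of semistandard tableaux. The price of skipping the majorization step is that you must prove your lower bound \emph{uniformly} over the whole range $t\le y\le x\le \frac{1-y}{2}$, and this is where your proposal has a genuine gap. First, your error accounting is wrong as stated: terms of size $O((Nt)^3)$ are \emph{not} absorbable into $-c_2t$. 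Taking $t=c_1/N$ with $N$ large, $(Nt)^3=c_1^3$ is a constant comparable to the main term $N^2t^2/4=c_1^2/4$, while $c_2t\to 0$; such corrections can only be absorbed multiplicatively into the leading coefficient (this is exactly what the paper does, spending the slack between $N(N-1)t^2/2$ and $N^2t^2/4$). Consequently your argument must produce a leading constant strictly larger than $1/4$ with quantified room to spare; a bound whose leading constant is exactly $1/4$ (e.g.\ the natural ``$c$ in the first half, $b$ in the second half'' split gives $(1-(1-t)^{N/2})^2\approx N^2t^2/4$ times $(1-O(Nt))$) fails for this reason.

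Second, the quantitative core --- ``the chance of getting at least one of each in the correct order is $\approx \mu_c\mu_b/2$ up to higher-order corrections'' --- is not a valid lower bound uniformly in $(x,y)$: when $Nx\gtrsim 1$ the probability of seeing a $c$ before a $b$ behaves like $\Theta(\min\{1,Ny\})$, which is far below $N^2xy/2$, so the heuristic breaks down exactly outside the symmetric regime. The lemma still holds there because the requirement $N^2t^2/4-c_2t\le \tfrac{c_1}{4}Nt$ is weak, but your write-up never supplies the case split and estimates (small $Nx$ versus large $Nx$, and the trailing-$a$ correction, which is $O(x\cdot\min\{1,Ny\})$ rather than $O(t)$ in general) needed to make this uniform; these are precisely the computations you defer as ``routine'' and ``bookkeeping.'' So the proposal is a plausible elementary alternative, but as written it neither reduces to the worst case (as the paper does) nor carries out the uniform estimate that replacing that reduction requires, and its stated plan for absorbing the corrections is incorrect.
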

\begin{proof}
    We prove the statement when $N$ is even for notational convenience; the case when $N$ is odd follows easily. We identify the letters $1$, $2$, and $3$ with the letters $a$, $b$, and $c$, respectively. First, let us restrict ourselves to the case $x=y=t$ without loss of generality, for this choice minimizes the probability by \Cref{lem:lds_helper} below. We have
    \begin{align}\label{eq:eq_50}
        \Pr\left[321\ \textnormal{is a subsequence in}\ \bm{w}\right] = 1-\Pr\left[\LDS(\bm{w})\leq 2\right]
    \end{align}
    and, when $x=y=t$,
    \begin{align}\label{eq:eq_51}
        \Pr\left[\LDS(\bm{w})\leq 2\right] &= \sum_{j=0}^{N/2}d_j s_j(t)
    \end{align}
    where $d_j$ is the dimension of the irrep of $\SG_N$ corresponding to the partition $(N-j,j)$, and $s_j(t)$ is short for the Schur polynomial $s_{(N-j,j)}(1-2t,t,t)$. Recalling the combinatorial definition of these polynomials (\Cref{def:schur_polynomial}), each term is itself a sum of polynomials indexed by semistandard Young tableaus (SSYTs) which we can write as follows. If we are considering an SSYT for which $i$ boxes are filled with the letter ``1" then the relevant polynomial is $(1-2t)^i t^{N-i}$. For a fixed shape $(N-j,j)$ and assuming $i\leq j$, the number of such SSYTs can be counted, and is equal to $(i+1)(N-2j+1)$. Similarly, if $i>j$ then this there are $(j+1)(N-i-j+1)$ such SSYTs. Hence,
    \begin{align*}
        s_j(t) &= \sum_{T} (1-2t)^{\alpha_1(T)} t^{N-\alpha_1(T)}\\
        &= \sum_{i=0}^j (i+1)(N-2j+1)(1-2t)^i t^{N-i} + \sum_{i=j+1}^{N-j}(j+1)(N-i-j+1)(1-2t)^i t^{N-i}\\
        &\leq (j+1)(1-2t)^{N-j}t^j + \sum_{i=0}^{N-j-1} (N-i+1)^2 t^{N-i}\\
        &= (j+1)(1-2t)^{N-j}t^j+\sum_{k=j+1}^{N} (k+1)^2 t^k.
    \end{align*}
    where in the first line the sum ranges over all semistandard Young tableaus of shape $(N-j,j)$ and $\alpha_1(T)$ denotes the number of occurrences of the letter ``1" in $T$. Substituting into \cref{eq:eq_51} we have the bound
    \begin{align}\label{eq:eq_56}
        \Pr\left[\LDS(\bm{w})\leq 2\right]\leq \sum_{j=0}^{N/2}\binom{N}{j} (j+1)(1-2t)^{N-j}t^j + \sum_{j=0}^{N/2}\binom{N}{j}\sum_{k=j+1}^{N} (k+1)^2 t^k
    \end{align}
    where we have also made use of the bound $d_j\leq \binom{N}{j}$, which is a straightforward consequence of \cref{eq:dim_of_sn_irrep}. The second of these sums is at most
    \begin{align}
        \sum_{j=0}^{N/2} N^j \sum_{k=j+1}^{N}(k+1)^2t^k &\leq t\sum_{k=1}^N (k+1)^2 t^{k-1}\sum_{j=0}^{k-1}N^j\nonumber\\
        &\leq 2t\sum_{k=1}^{N} (k+1)^2 (Nt)^{k-1}\nonumber\\
        &\leq 2t\cdot \frac{(Nt)^2-3Nt+4}{(1-Nt)^3}\nonumber\\
        &\leq 44\cdot t\label{eq:bound_linear_in_t}
    \end{align}
 where in the first line we extended the sum over $j$ to go from zero to $N-1$, in the second line we evaluated the sum over $j$ and bounded it by $2N^{k-1}$, in the third line we changed the limits of the sum to go from one to infinity and used \Cref{fact:sum_1}, and the final line follows from the assumption that $Nt\leq \frac{1}{2}$. Now, the first sum in the right-hand side of \cref{eq:eq_56} is equal to
    \begin{align*}
        \underbrace{(1-2t)^N\vphantom{\sum_{j=2}^{N/2}\binom{N}{j}}}_{\encircle{A}} + \underbrace{2Nt(1-2t)^{N-1}\vphantom{\sum_{j=2}^{N/2}\binom{N}{j}}}_{\encircle{B}} + \underbrace{\frac{3N(N-1)}{2}(1-2t)^{N-2}t^2\vphantom{\sum_{j=2}^{N/2}\binom{N}{j}}}_{\encircle{C}} + \underbrace{\sum_{j=3}^{N/2}\binom{N}{j} (j+1)(1-2t)^{N-j}t^j}_{\encircle{D}}
    \end{align*}
    where
    \begin{align}
        \encircle{A} &\leq 1-2Nt +2N(N-1)t^2 + C_0 \cdot (Nt)^3,\label{eq:eq_a}\\
        \encircle{B} &\leq 2Nt-4N(N-1)t^2 + C_0^\prime \cdot (Nt)^3,\ \textnormal{and}\label{eq:eq_b}\\
        \encircle{C} &\leq \frac{3N(N-1)t^2}{2} + C_0^{\prime\prime} \cdot (Nt)^3\label{eq:eq_c}
    \end{align}
    for some universal constants $C_0,C_0^\prime,C_0^{\prime\prime} > 0$, using \Cref{fact:sum_3}. We also have
    \begin{align}
        \encircle{D}&\leq \sum_{j=3}^{\infty} (j+1)(Nt)^j
        = \frac{(Nt)^3(4-3Nt)}{(1-Nt)^2}
        \leq 10\cdot N^3t^3\label{eq:eq_d}
    \end{align}
    where the equality uses \Cref{fact:sum_2} and the last line follows from the assumption that $Nt\leq 1/2$. Substituting the bounds in \cref{eq:eq_a,eq:eq_b,eq:eq_c} as well as \cref{eq:eq_d,eq:bound_linear_in_t} into \cref{eq:eq_56}, we arrive at the bound
    \begin{align*}
        \Pr\left[\LDS(\bm{w})\leq 2\right]&\leq 1-\frac{N(N-1)t^2}{2} + (\underbrace{C_0+C_0^\prime+C_0^{\prime\prime} + 10}_{:=\kappa})(Nt)^3 + 44\cdot t\\
        &\leq 1 - \frac{N^2t^2}{4} + c_2t
    \end{align*}
    where $c_2:=44$ and the inequality follows upon making an appropriate choice for the universal constant $c_1\in (0,1/2]$ such that $-(Nt)^2/3 +\kappa (Nt)^3\leq -(Nt)^2/4$ for all $0 \leq Nt\leq c_1$, noting that $N(N-1)/2\geq N^2/3$ for every positive integer $N\geq 3$.
\end{proof}
In the proof of \Cref{lem:3_letter_lds_lem} we used the following intuitive result which is a consequence of a combinatorial fact about LDS proven in~\cite{ODonnell2016}. Here, for any two $d$-dimensional real-valued vectors $x$, $y$ we write $x\succ y$ to mean $x$ \textit{majorizes} $y$, i.e., $\sum_{i=1}^kx_i\geq \sum_{i=1}^k y_i$ for all $k\in [d]$.
\begin{lemma}\label{lem:lds_helper}
    Let $N$ be a positive integer and let $p,q$ be two distributions over the letters $\{a,b,c\}$ such that $p\succ q$. It holds that
    \begin{align*}
        \Pr_{\bm{w}\sim p^{\otimes N}}[cba\ \textnormal{is a subsequence in}\ \bm{w}]\leq \Pr_{\bm{w}\sim q^{\otimes N}}[cba\ \textnormal{is a subsequence in}\ \bm{w}].
    \end{align*}
\end{lemma}
\begin{proof}
    For any distribution $\beta\in\mathbb{R}^{\{a,b,c\}}$ let $f(\beta):=1-\Pr_{\bm{w}\sim \beta^{\otimes N}}[cba\ \textnormal{is a subsequence in}\ \bm{w}]$. Also, let $\bm{\lambda}_\beta\vdash N$ denote the random partition of $N$ such that
    \begin{align*}
        \Pr[\bm{\lambda}_\beta = \lambda] &= \dim(\calP_{\lambda})s_{\lambda}\big(\beta(a),\beta(b),\beta(c)\big) \quad \textnormal{for every}\ \lambda\vdash N. 
    \end{align*}
    By Theorem~1.11 in~\cite{ODonnell2016} there is a coupling $(\bm{\lambda}_p, \bm{\lambda}_q)$ such that $\bm{\lambda}_p\succ \bm{\lambda}_q$ with certainty. 
    Furthermore, for any $\lambda_1,\lambda_2\vdash N$, $\lambda_1\succ \lambda_2$ implies $\ell(\lambda_1)\leq \ell(\lambda_2)$. Following the discussion in \Cref{sec:wss} we have
    \begin{align*}
        f(p) &= \sum_{\lambda: \ell(\lambda)\leq 2}\dim(\calP_{\lambda})s_\lambda\big(p(a),p(b),p(c)\big)\\
        &= \Pr[\ell(\bm{\lambda}_p)\leq 2]\\
        &\geq \Pr[\ell(\bm{\lambda}_q)\leq 2]\\
        &= f(q).
    \end{align*}
    The inequality $f(p)\geq f(q)$ established above proves the claim.
\end{proof}
We now return to the setting in \Cref{lem:3_letter_lds_lem}. Since the distribution $(1-2t,t,t)$ with $0\leq t\leq 1/3$ majorizes all distributions of the form $(1-x-y,x,y)$ where $0\leq t\leq y\leq x\leq 1-x-y$, the probability in the left-hand side of \cref{eq:lds_prob} is minimized by the choice $x=y=t$.

\subsection{Proof of the upper bound}
Our proof of the upper bound in~\Cref{thm:improved_ub_qutrits} assumes the input is a product state which has trace distance precisely $\veps/2$ from the property of interest. This assumption about the trace distance is without loss of generality, by the following lemma.
Here, we let $\mathsf{Prod}_2:=\mathsf{Prod}_2(n,r=2)$ for brevity and set $\dTr(\psi,\mathsf{Prod}_2):=\min_{\varphi\in\mathsf{Prod}_2}\dTr(\psi,\varphi)$ for a pure state $\psi$. We also assume a fixed positive integer $N$ has been given, and let $P_{\leq 2}\in \rmL((\mathbb{C}^3\otimes \mathbb{C}^3)^{\otimes N})$ denote the PC-optimal $N$-copy Schmidt-rank test of the form in \Cref{thm:PC_optimal_sr_test}, when $r=2$.
\begin{lemma}\label{lem:worst_case_eps}
    Let $n$ be a positive even integer, $\veps\in (0,1]$, and $\calS_{\veps}\subset \rmP((\mathbb{C}^3)^{\otimes n})$ be the set of all states which are $\veps$-far from $\mathsf{Prod}_2$ and of the form $\bigotimes_{i=1}^{n/2} \psi_{2i-1,2i}$ where $\psi_{2i-1,2i}\in\rmP((\mathbb{C}^3)^{\otimes 2})$ is a bipartite state on subsystems $2i-1$ and $2i$, for each $i\in [n/2]$. There exists an optimal solution $\psi$ to the optimization problem
    \begin{align*}
        \max_{\psi\in\calS_{\veps}}\left\{\prod_{i=1}^{n/2}\Tr\left(P_{\leq 2}\psi_{2i-1,2i}^{\otimes N}\right):  \psi=\bigotimes_{i=1}^{n/2}\psi_{2i-1,2i}\right\}
    \end{align*}
    satisfying $\dTr(\psi,\mathsf{Prod}_2)=\veps$.
\end{lemma}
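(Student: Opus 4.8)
The plan is to reduce the problem, factor by factor, to a one‑parameter family of spectra and then run a perturbation argument. Two facts drive everything. First, by \Cref{thm:PC_optimal_sr_test} the acceptance probability of the $i$‑th test is $\Tr(P_{\leq 2}\psi_{2i-1,2i}^{\otimes N})=\Tr(\Pi_{\leq 2}\rho_i^{\otimes N})$, where $\rho_i\in\rmD(\mathbb{C}^3)$ is the reduced density matrix of $\psi_{2i-1,2i}$; by \Cref{lem:lds_characterization} this quantity depends only on the spectrum $\alpha_i=(\alpha_{i,1},\alpha_{i,2},\alpha_{i,3})$ of $\rho_i$, and by the coupling argument behind \Cref{lem:lds_helper} (Theorem~1.11 of \cite{ODonnell2016}) it is non‑decreasing when $\alpha_i$ is replaced by any spectrum that it majorizes. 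Second, by \Cref{prop:eckart_young_mirsky} applied to each factor, $\dTr(\psi_{2i-1,2i},\mathsf{SR}(2))^2=\alpha_{i,3}=:\beta_i$, and since $\mathsf{Prod}_2$ consists of products of $\SR\leq 2$ bipartite states on the \emph{same} subsystems, $\dTr(\psi,\mathsf{Prod}_2)^2=1-\prod_i(1-\beta_i)$; in particular the distance depends only on the tuple $(\beta_1,\dots,\beta_{n/2})$.

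First I would \emph{canonicalize}: starting from an arbitrary optimal $\psi=\bigotimes_i\psi_{2i-1,2i}$, replace each factor by the state $\sqrt{1-2\beta_i}\ket{11}+\sqrt{\beta_i}\ket{22}+\sqrt{\beta_i}\ket{33}$, which is legal because $\beta_i\leq 1/3$. Among all qutrit spectra with smallest entry $\beta_i$, the spectrum $(1-2\beta_i,\beta_i,\beta_i)$ is the majorization‑maximum (it has the largest top entry), so this replacement does not decrease any factor of the objective and leaves every $\beta_i$ — hence $\dTr(\psi,\mathsf{Prod}_2)$ — unchanged; the canonicalized state therefore still lies in $\calS_\veps$ and is still optimal. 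After this step the objective equals $\prod_i h(\beta_i)$ with $h(\beta):=\Tr(\Pi_{\leq 2}\diag(1-2\beta,\beta,\beta)^{\otimes N})=\sum_{\lambda:\ell(\lambda)\leq 2}\dim(\calP_\lambda)\,s_\lambda(1-2\beta,\beta,\beta)$ by \Cref{lem:wss_probs}. I would record four properties of $h$: it is a polynomial in $\beta$; it is non‑increasing on $[0,1/3]$ (majorization again); it is strictly positive on $[0,1/3]$ (e.g. $h(\beta)\geq(1-2\beta)^N>0$ for $\beta<1/2$ and $h(1/3)=(1/3)^N>0$); and it is not constant, since $h(0)=1$ while $h(1/3)<1$ for $N\geq 3$.

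Now suppose for contradiction that the canonicalized optimum has $\dTr(\psi,\mathsf{Prod}_2)=\eta>\veps$. Then $\prod_i(1-\beta_i)<1$, so $\beta_k>0$ for some $k$; decrease $\beta_k$ to $\beta_k-\delta$ (keeping the $k$‑th factor in canonical form). For all sufficiently small $\delta>0$ the new distance $\sqrt{1-(1-\beta_k+\delta)\prod_{i\neq k}(1-\beta_i)}$ is still $>\veps$ by continuity, so the perturbed state lies in $\calS_\veps$, and its objective has changed by the factor $h(\beta_k-\delta)/h(\beta_k)$. The crux is that this factor exceeds $1$ for small $\delta>0$: $h$ non‑increasing gives $h(\beta_k-\delta)\geq h(\beta_k)$, and were this an equality for all small $\delta$ then $h$ would be constant on an interval, hence — being a polynomial — globally constant, contradicting $h(0)\neq h(1/3)$. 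Since $\prod_{i\neq k}h(\beta_i)>0$, the objective strictly increases, contradicting optimality; so $\eta=\veps$. Because canonicalization preserves membership in $\calS_\veps$ and the objective value, this conclusion transfers to \emph{every} optimal solution.

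The main obstacle is that the only monotonicity available for the rank‑test acceptance probability is weak: the coupling bound of \cite{ODonnell2016} underlying \Cref{lem:lds_helper} shows $\Tr(\Pi_{\leq 2}\cdot)$ is non‑decreasing under majorization but says nothing about strictness, so a direct local perturbation of a spectrum only yields that the objective does not decrease. Passing to the one‑parameter polynomial $h(\beta)$ repairs this, since a non‑constant non‑increasing polynomial cannot be locally constant, which upgrades the weak monotonicity to the strict decrease needed at the perturbed point. A secondary point requiring care is that the statement concerns every optimal solution, which is precisely why the canonicalization step — bringing an arbitrary optimum into the extremal spectral shape without changing its distance or objective — must come first.
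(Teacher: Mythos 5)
Your proof is correct, and its first half coincides with the paper's: both arguments pass to the reduced-state spectra via \Cref{thm:PC_optimal_sr_test}, use \Cref{prop:eckart_young_mirsky} and multiplicativity of the overlap to express the distance through $1-\prod_i(1-\beta_i)$, and canonicalize each factor to the majorization-maximal spectrum $(1-2\beta_i,\beta_i,\beta_i)$ using the coupling monotonicity behind \Cref{lem:lds_helper}. Where you genuinely diverge is the finishing move. The paper, assuming an optimum at distance strictly greater than $\veps$, runs an explicit iterative redistribution of the quantities $f_i=1-\alpha_3^{(i)}$ to manufacture a comparison state at distance exactly $\veps$ whose per-factor acceptance probabilities are at least as large, and declares a contradiction; as written this only delivers a non-strict comparison, since \Cref{lem:lds_helper} gives weak monotonicity, so the strict improvement needed to contradict optimality is left implicit. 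Your route instead perturbs a single $\beta_k$ downward and upgrades weak to strict monotonicity by noting that $h(\beta)=\sum_{\lambda:\,\ell(\lambda)\leq 2}\dim(\calP_\lambda)\,s_\lambda(1-2\beta,\beta,\beta)$ is a non-increasing, non-constant polynomial on $[0,1/3]$ and hence cannot be locally constant; combined with positivity of the remaining factors and continuity of the distance this yields the strict contradiction directly, and the canonicalization step correctly transfers the conclusion to an arbitrary optimal solution. So your argument buys a cleaner treatment of exactly the strictness point the paper elides, at the cost of invoking polynomiality of $h$. Two minor points: the claim $h(1/3)=(1/3)^N$ should be the inequality $h(1/3)\geq(1/3)^N$ (only positivity is used, so nothing breaks), and your non-constancy of $h$ requires $N\geq 3$ --- this restriction is inherent rather than a gap in your write-up, since for $N\leq 2$ the test $P_{\leq 2}$ accepts every state, every feasible point is optimal, and the lemma's conclusion fails; in the paper's application $N$ is large, so the implicit hypothesis $N\geq 3$ is harmless.
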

\begin{proof}
    We will show that for any solution, there exists another solution which has trace distance from $\mathsf{Prod}_2$ precisely equal to $\veps$, and for which the objective function (acceptance probability) is at least as high. Let $\ket{\psi}\in(\mathbb{C}^3)^{\otimes n}$ be a unit vector such that $\psi=\outerprod{\psi}{\psi}$ and assume $\dTr(\psi,\mathsf{Prod}_2)>\veps$. By \Cref{prop:eckart_young_mirsky} and the multiplicativity of fidelity, this is true if and only if $\prod_{i=1}^{n/2}(1-\alpha_3^{(i)})<1-\veps^2$, where $\alpha^{(i)}:=(\alpha_1^{(i)},\alpha_2^{(i)},\alpha_3^{(i)})$ are the \textit{squared} Schmidt coefficients of the $i^\text{th}$ state in the product $\ket{\psi}=\bigotimes_{i=1}^{n/2}\ket{\psi_{2i-1,2i}}$, sorted in non-increasing order, for each $i\in [n/2]$. First, note that the objective function in the maximization is equal to $\prod_{i=1}^{n/2}p_i$ where
    \begin{align}\label{eq:p_i}
        p_i := \Pr_{\bm{w}\sim (\alpha^{(i)})^{\otimes N}}\left[\LDS(\bm{w})\leq 2\right].
    \end{align}
    Hence, we can assume without loss of generality that the $i^\text{th}$ triple of squared Schmidt coefficients is equal to $(1-2\alpha_3^{(i)},\alpha_3^{(i)},\alpha_3^{(i)})$, for this majorizes every triple of the form $(\alpha_1^{(i)},\alpha_2^{(i)},\alpha_3^{(i)})$ and by \Cref{lem:lds_helper}, does not decrease the probability of acceptance. Let $f_i\coloneq 1-\alpha_3^{(i)}$ so that $f_1,\dots,f_{n/2}\in [2/3,1]$. Now, suppose there exists a sequence of reals $f^\prime_1,\dots,f^\prime_{n/2}\in [2/3,1]$ such that $\prod_{i=1}^{n/2}f_i^\prime=1-\veps^2$ and $f_i^\prime\geq f_i$ for all $i\in [n/2]$. Then we may set $\mu_3^{(i)}\coloneq 1-f_i^\prime$ and $\ket{\psi^\prime} \in (\mathbb{C}^3)^{\otimes n}$ to be any unit vector which is a product of $n/2$ bipartite states having the squared Schmidt coefficients $(1-2\mu_3^{(i)},\mu_3^{(i)},\mu_3^{(i)})$, for each $i\in [n/2]$. Such a unit vector results in a state $\psi^\prime = \outerprod{\psi^\prime}{\psi^\prime}$ for which $\dTr(\psi^\prime,\mathsf{Prod}_2)=\veps$, by construction. Moreover, since $\mu^{(i)}_3\leq \alpha_3^{(i)}$, the $i^\text{th}$ probability $p_i$ is nondecreasing upon replacing $\alpha^{(i)}$ with $\mu^{(i)}$ in \cref{eq:p_i}, once again by \Cref{lem:lds_helper}.

    The sequence $f^\prime_1,\dots,f^\prime_{n/2}\in [2/3,1]$ can be constructed explicitly as follows. If $\prod_{i=2}^{n/2}f_i\geq 1-\veps^2$ then set $f_1^\prime=(1-\veps^2)\prod_{i=2}^{n/2}f_i^{-1}$ and $f_j^\prime = f_j$ for each $j\geq 2$. Then $f_1^\prime \leq 1$ and
    \begin{align*}
        \frac{f_1^\prime}{f_1} &= (1-\veps^2)\prod_{i=1}^{n/2}f_i^{-1}> 1
    \end{align*}
    where the inequality follows from the assumption that $\prod_{i=1}^{n/2}f_i < 1-\veps^2$. This implies $f^\prime_1\geq f_1\geq 2/3$, as required. Otherwise, if $\prod_{i=2}^{n/2}f_i < 1-\veps^2$ then set $f_1^\prime=1$ and repeat the procedure with the sequence $f_2^\prime,\dots,f_{n/2}^\prime$. Either this procedure terminates before the $(n/2)^\text{th}$ iteration, or we can take $f_j^\prime=f_j$ for $j\in \{1,2,\dots, n/2-1\}$ and $f_{n/2}^\prime=1-\veps^2$.
\end{proof}
We can now prove the main result of this section.
\begin{theorem}\label{thm:improved_ub_qutrits}
    For $\veps \in (0,\delta_0]$ where $0<\delta_0\leq 1/\sqrt{2}$ is a universal constant there exists an algorithm for testing whether $\psi\in \mathsf{Prod}_2(n,r=2)\subseteq \rmP\left((\mathbb{C}^3)^{\otimes n}\right)$ with one-sided error using $O(\sqrt{n}/\veps^4)$ copies of $\psi$. Moreover, for $\veps\in (0,1/\sqrt{2}]$ any algorithm with one- or two-sided error for this task requires $\Omega(\sqrt{n}/\veps^2)$ copies.
\end{theorem}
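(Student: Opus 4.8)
The plan is to prove the two directions separately.

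\textbf{Upper bound.} The test I would use is the simultaneous application of the PC-optimal Schmidt-rank-$2$ tests $P_{\leq 2}^{(i)}$ of~\Cref{thm:PC_optimal_sr_test} to the $n/2$ qutrit pairs $(2i-1,2i)$: on $((\mathbb{C}^3)^{\otimes n})^{\otimes N}$ take $\Pi := \bigotimes_{i=1}^{n/2}P_{\leq 2}^{(i)}$. By the tensor-product argument of~\Cref{lem:pc_optimal_test_lb_operator} together with~\Cref{lem:pc_opt_pure_proj_lemma}, $\Pi$ is (strongly) PC-optimal for $\mathsf{Prod}_2$, hence has perfect completeness; and by~\Cref{thm:PC_optimal_sr_test} each $P_{\leq 2}^{(i)}$ is realized by weak Schur sampling on the $N$ copies of a single $\mathbb{C}^3$. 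For soundness I would split a $\veps$-far input $\psi$ according to its distance from the set $\mathsf{Prod}(n/2)$ of states that are products across the pairs. If $\psi$ is, say, $\tfrac{\veps}{2}$-far from $\mathsf{Prod}(n/2)$, then since $\im\Pi$ lies in the span of $N$-fold powers of such product states we have $\Pi\preceq\Pi_{\mathrm{prod}}$, with $\Pi_{\mathrm{prod}}$ the $N$-copy PC-optimal product test; strong PC-optimality of $\Pi_{\mathrm{prod}}$ bounds its soundness by that of the repeated two-copy Harrow--Montanaro product test, which decays exponentially, so $O(1/\veps^2)$ copies force acceptance below $\tfrac13$.

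The remaining case, $\psi$ (nearly) a product $\bigotimes_i\psi_i$ of bipartite qutrit states, is exactly the setting of~\Cref{lem:worst_case_eps}, which lets me pass to the worst-case product state, whose $i$-th pair has squared Schmidt coefficients $(1-2\mu_i,\mu_i,\mu_i)$ with $\prod_i(1-\mu_i)=1-\veps^2$; hence $\sum_i\mu_i=\Theta(\veps^2)$ and $\sum_i\mu_i^2=\Omega(\veps^4/n)$ (minimized when all $\mu_i$ equal $\Theta(\veps^2/n)$). For a product state the acceptance probability factorizes, and by~\Cref{lem:lds_characterization} it equals $\prod_i\Pr_{\bm w\sim(1-2\mu_i,\mu_i,\mu_i)^{\otimes N}}[\LDS(\bm w)\le 2]$; \Cref{lem:3_letter_lds_lem} bounds each factor by $1-(N^2\mu_i^2/4-c_2\mu_i)$ whenever $N\mu_i\le c_1$. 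Taking logarithms and using the two sums gives $\Tr(\Pi\psi^{\otimes N})\le\exp\!\big(-\Omega(N^2\veps^4/n)+O(\veps^2)\big)\le\tfrac13$ once $N=\Theta(\sqrt n/\veps^2)$ and $\veps\le\delta_0$.

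The step I expect to be the crux is rigorously reducing an arbitrary $\veps$-far input that is merely \emph{close} to $\mathsf{Prod}(n/2)$ to an exactly product input: one cannot substitute the nearby product state $\sigma$ inside $\psi\mapsto\Tr(\Pi\psi^{\otimes N})$, since $\td(\psi^{\otimes N},\sigma^{\otimes N})$ is not small for $N=\Theta(\sqrt n/\veps^2)$, and a pair-by-pair perturbation fails because the reduced spectra of $\psi$ differ from those of $\sigma$ by $\Theta(\veps)$ per pair, far more than the per-pair ``signal'' $\Theta(\veps^2/n)$. I expect the extra $\veps^{-2}$ factor in the stated bound $O(\sqrt n/\veps^4)$ to be the price of whatever argument (a global rather than pairwise perturbation estimate, or certifying near-product structure on a sublinear batch of copies) is needed to bridge this gap.

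\textbf{Lower bound.} I would adapt the $\Omega(\sqrt n)$ arguments of~\cite{soleimanifar2022testingmps,chen2024local}, which already operate inside $\mathsf{Prod}_2(n,r)$. Build a distribution over NO instances that are $n/2$-fold products of bipartite qutrit states each with third squared Schmidt coefficient $\Theta(\veps^2/n)$ --- so the product is $\veps$-far from $\mathsf{Prod}_2$, along the lines of the computation around~\cref{eq:hard_phi} with $m=n/2$ --- together with a matching distribution over YES instances (products of Schmidt-rank-$2$ states), chosen so the two per-pair ensembles coincide except for the rare rank-$3$ outcome. By~\Cref{thm:purification} applied pair-by-pair (using invariance of $\mathsf{Prod}_2$ under per-pair local unitaries), the optimal measurement is weak Schur sampling on one $\mathbb{C}^3$ of each pair, so the distinguishing task becomes a classical one over $n/2$ essentially independent coordinates; a $\chi^2$-tensorization over the coordinates produces the $\sqrt n$, while the per-pair estimate $\Pr[\LDS\ge 3]=\Theta((N\mu)^2)$ for small $N\mu$ (an easy companion to~\Cref{lem:3_letter_lds_lem}) shows the total variation stays below $\tfrac13$ as long as $N=o(\sqrt n/\veps^2)$. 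The delicate point, mirroring the upper bound, is matching the YES and NO per-pair distributions beyond their leading moments so that essentially only the rank-$3$ outcome separates them.
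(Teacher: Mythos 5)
Your architecture matches the paper's (product test to certify product structure across the $n/2$ pairs, then per-pair Schmidt-rank tests analyzed via \Cref{lem:worst_case_eps}, \Cref{lem:lds_characterization} and \Cref{lem:3_letter_lds_lem}; and for the lower bound, a hard instance inside $\mathsf{Prod}_2$ reduced to per-pair Weak Schur Sampling), but the upper-bound argument has a genuine gap at exactly the step you flag and then decline to resolve. The paper closes it by calibrating the product test's accuracy to the \emph{copy count of the second phase}: it runs the product test at accuracy $\veps_0=\delta N_1^{-1/2}$ with $\delta=\Theta(\veps)$, costing $N_0=O(1/\veps_0^2)=O(N_1/\veps^2)=O(\sqrt{n}/\veps^4)$ copies. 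Passing this phase guarantees $|\langle\psi|\widetilde\psi\rangle|^{2N_1}>(1-\veps_0^2)^{N_1}>1-\delta^2$, i.e.\ $\dTr(\psi^{\otimes N_1},\widetilde\psi^{\otimes N_1})<\delta$, so the substitution of the exactly-product state $\widetilde\psi$ \emph{is} legitimate at the full $N_1$-copy level, shifting the acceptance probability by at most $\delta$. This is your "certify near-product structure" option, and it is precisely the source of the $1/\veps^4$; without committing to it (or an alternative), the soundness analysis in your Case (ii) is not established for the actual input $\psi$, only for a hypothetical product state.

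A second, fixable gap: you apply \Cref{lem:3_letter_lds_lem} to every pair, but its hypothesis $N\mu_i\leq c_1$ can fail --- after passing to the worst case via \Cref{lem:worst_case_eps} one can still have a single pair with $\mu_i=\Theta(\veps^2)$, so $N\mu_i=\Theta(\sqrt{n})\gg c_1$ when $N=\Theta(\sqrt{n}/\veps^2)$, and the lemma's bound $1-N^2\mu_i^2/4+c_2\mu_i$ is not proven (and is vacuous) there. The paper handles this with a case split: pairs with $\veps_i^2>0.05\,c_1\veps^2/\sqrt{n}$ are rejected by the ordinary soundness guarantee of \Cref{thm:PC_optimal_sr_test} using $N_2=O(\sqrt{n}/\veps^2)$ copies, and only the remaining regime invokes \Cref{lem:3_letter_lds_lem} on a second batch of $N_3=\lfloor 20\sqrt{n}/\veps^2\rfloor$ copies. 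Your lower-bound sketch is essentially the paper's (the paper tensorizes in trace distance rather than $\chi^2$, bounding the per-pair distinguishability of the Haar-averaged YES/NO ensembles by $N(N-1)\theta^2$ with $\theta=8\veps^2/n$, and multiplying by $n/2$), and the moment-matching concern you raise is handled there simply by giving both ensembles the same top Schmidt coefficient $\sqrt{1-\theta}$ so the averaged $N$-copy states agree except on an event of probability $O(N^2\theta^2)$.
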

\begin{proof}
We abbreviate the property in the theorem by $\mathsf{Prod}_2$ throughout. The lower bound for testing this property with two-sided error is proven in the same way as the lower bound for MPS testing in Section~5 in~\cite{soleimanifar2022testingmps}, and we record this proof with the corresponding changes in~\Cref{sec:sqrt_lb}. In the remainder of this section, we focus solely on the upper bound.

The test which proves the upper bound has two phases. In the first phase, we perform the product test on $N_0$ copies of the state, and in the second phase we use $N_1$ copies of the state to perform two iterations of the PC-optimal Schmidt-rank test on each of the $n/2$ subsystems. The test accepts if both of these phases accept, and rejects otherwise. Since the tests in both phases have perfect completeness, i.e., always accept states in $\mathsf{Prod}_2$, it suffices to bound the probability of acceptance on an $\veps$-far state.

We begin with an analysis of the first phase. Let $\veps_0\in (0,\veps/2]$ be a parameter to be specified shortly and note that for any $\psi\in\rmP((\mathbb{C}^3)^{\otimes n})$ which is $\veps$-far from $\mathsf{Prod}_2$, at least one of the following must hold:
\begin{enumerate}
\item[i)] $\psi$ is $\veps_0$-far from product with respect to the $n/2$ subsystems, or
\item[ii)] the closest product state to $\psi$ with respect to the $n/2$ subsystems, denoted by $\widetilde{\psi}$, is $\veps/2$-far from $\mathsf{Prod}_2$.
\end{enumerate}
Otherwise, for some $\varphi\in\mathsf{Prod}_2$ we would have
\begin{align*}
    \rmd_{\mathrm{Tr}}(\psi,\varphi) &\leq \rmd_{\mathrm{Tr}}(\psi,\widetilde{\psi}) + \rmd_{\mathrm{Tr}}(\widetilde{\psi},\varphi)< \veps
\end{align*}
which contradicts the assumption that $\psi$ is $\veps$-far from $\mathsf{Prod}_2$. Let $\delta \in (0,\veps/2]$ and set $\veps_0=\delta N_1^{-1/2}$. If Item i) is true then taking $N_0=O(1/\veps_0^2)=O(N_1/\veps^2)$ copies of the state, the product test\footnote{Note that, here, by ``product test" we are referring to the test which performs multiple iterations of the 2-copy product test of \cite{harrow2013testing} to amplify the probability of rejection.} will reject with probability at least $99/100$.

We now assume that Item i) is false and analyze the performance of the second phase of the test. In this case, we have $\rmd_{\mathrm{Tr}}(\psi,\widetilde{\psi})< \veps_0$, which in turn is true if and only if $|\langle \psi |\widetilde{\psi}\rangle|^{2N}> (1-\veps_0^2)^{N}$. Then $|\langle \psi |\widetilde{\psi}\rangle|^{2N}> 1-\delta^2$, or equivalently, $\rmd_{\mathrm{Tr}}(\psi^{\otimes N},\widetilde{\psi}^{\otimes N})< \delta$. Hence, we may analyze the performance of the second phase on $N_1$ copies of $\widetilde{\psi}$ instead of the actual input state $\psi$. This is because the maximum difference in the acceptance probability from doing so will be $\delta$ and this can be taken to be arbitrarily close to zero.

Since we are assuming Item i) above is false we have $\tau:=\min_{\varphi\in \mathsf{Prod}_2}\dTr(\varphi, \widetilde{\psi})\geq \veps/2$. Additionally, since our goal is to upper bound the acceptance probability of performing Schmidt-rank tests on each of the $n/2$ subsystems, by \Cref{lem:worst_case_eps} we can assume without loss of generality that $\tau=\veps/2$. Writing $\widetilde{\psi}=\outerprod{\widetilde{\psi}}{\widetilde{\psi}}$ with $\ket{\widetilde{\psi}}=\bigotimes_{i=1}^{n/2}\ket{\widetilde{\psi}_{2i-1,2i}}$ for some $\ket{\widetilde{\psi}_{2i-1,2i}}\in(\mathbb{C}^3)^{\otimes 2}$, we may define the quantities
\begin{align*}
\veps_i:=\sqrt{1-\max_{\varphi\in \mathsf{SR}(2)}|\langle \widetilde{\psi}_{2i-1,2i}| \varphi \rangle|^2}
\end{align*}
for each $i\in [n/2]$. The squared overlap between $\widetilde{\psi}$ and the closest state in $\mathsf{Prod}_2(n,r=2)$ is equal to $\prod_{i=1}^{n/2}(1-\veps_i^2) = 1-\veps^2/4$, and we have
\begin{align}\label{eq:eps_sum_ineq}
    2\veps^2/7\geq \sum_{i=1}^{n/2}\veps_i^2\geq \veps^2/4.
\end{align}
where the second inequality is due to the Weierstrass product inequality, and the first follows from
\begin{align*}
    \frac{\veps^2/4}{1-\veps^2/4}
    &\geq -\log(1-\veps^2/4)
    = -\sum_{i=1}^{n/2}\log(1-\veps_i^2)
    \geq \sum_{i=1}^{n/2}\veps_i^2
\end{align*}
and the assumption that $0< \veps\leq 1/\sqrt{2}$.
We next analyze the acceptance probabilities $p_i$ of the Schmidt-rank tests performed on (some number of copies of) the states $\widetilde{\psi}_{2i-1,2i}$, for each $i\in [n/2]$. There are two cases of interest. Here, $N_2$ and $N_3$ are fixed functions of $n$ and $\veps$ to be specified shortly.

\paragraph{Case i)} If $\veps_i^2> 0.05\cdot c_1\veps^2/\sqrt{n}$ for some $i\in [n/2]$ then the $i^\text{th}$ Schmidt-rank test rejects with probability at least $99/100$ using $N_{2}=O(\sqrt{n}/\veps^2)$ copies, by \Cref{thm:PC_optimal_sr_test}.

\paragraph{Case ii)} If $\max_{i\in [n/2]}\veps_i^2\leq 0.05 \cdot c_1\veps^2/\sqrt{n}$ then performing the $n/2$ Schmidt-rank tests on $N_{3}=\lfloor 20\sqrt{n}/\veps^2 \rfloor$ copies of $\widetilde{\psi}$ results in constant rejection probability, as we now show. We may characterize the acceptance probability of the $i^\text{th}$ Schmidt-rank test using \Cref{lem:3_letter_lds_lem} as follows. From the discussion in \Cref{sec:sr_rank_testing_equivalence}, the smallest Schmidt coefficient $\lambda_3$ of $\ket{\widetilde{\psi}_{2i-1,2i}}$ satisfies $\lambda_3\geq \veps_i$. Furthermore, the probability of acceptance of the Schmidt-rank test applied to this $\veps_i$-far state is precisely the probability of acceptance of the rank test when the spectrum is $(\lambda_1^2,\lambda_2^2,\lambda_3^2)$, for some $\lambda_1,\lambda_2\geq 0$ such that $\lambda_1\geq \lambda_2\geq \lambda_3\geq 0$. (See \Cref{sec:sr_rank_testing_equivalence} for further details on both these points.) By \Cref{lem:lds_characterization}, the acceptance probability is
\begin{align*}
    p_i &= \Pr[\LDS(\bm{w})\leq 2] = 1-\Pr[\LDS(\bm{w})=3] \leq 1 - \frac{N_3^2\veps_i^4}{4} + c_2 \veps_i^2
\end{align*}
where the random word $\bm{w}\in \{1,2,3\}^{N_3}$ has $N_3$ random letters with weights given by the spectrum $(\lambda_1^2,\lambda_2^2,\lambda_3^2)$, and the inequality follows from \Cref{lem:3_letter_lds_lem} whose hypothesis is satisfied since $N_3\veps_i^2\leq c_1$. This means so long as $\veps$ is at most, say, $\sqrt{10/c_2}$, the overall probability of acceptance $\prod_{i=1}^{n/2}p_i$ is at most
\begin{align}
    \prod_{i=1}^{n/2}\left(1-\frac{N_3^2\veps_i^4}{4}+c_2\veps_i^2\right)
    &\leq \exp\left(-\frac{N_3^2}{4}\sum_{i=1}^{n/2}\veps_i^4 + \frac{2c_2\veps^2}{7}\right)& (\textnormal{$1^{\text{st}}$ ineq. in \cref{eq:eps_sum_ineq}})\nonumber\\
    &\leq \exp\left(\frac{-N_3^2\left(\sum_{i=1}^{n/2}\veps_i^2\right)^2}{2n}+\frac{20}{7}\right)& (\textnormal{see below})\label{eq:see_below}\\
    &\leq \exp\left(\frac{-N_3^2\veps^4}{32n}+\frac{20}{7}\right)& (\textnormal{$2^{\text{nd}}$ ineq. in \cref{eq:eps_sum_ineq}})\nonumber\\
    &\leq \ee^{-15/56}& (N_3\geq 10\sqrt{n}/\veps^2)\nonumber
\end{align}
where in \cref{eq:see_below} we used the inequality between norms $\norm{v}_1\leq \sqrt{D}\norm{v}_2$ for a $D$-dimensional vector $v$ as well as the assumption that $\veps\leq \sqrt{10/c_2}$.

From the above analysis, we may conclude that if we first perform the $n/2$ Schmidt-rank tests using $N_2$ copies of $\widetilde{\psi}$ and then repeat the same procedure using $N_3$ copies of $\widetilde{\psi}$, an accept outcome will be obtained in both trials with probability at most $\ee^{-15/56}$. Also, this second phase uses $N_1=N_2+N_3$ copies. The overall probability of acceptance of the two phases of the test is thus at most $\ee^{-15/56}+\delta$, which is at most, say, $4/5$ by taking $\delta=0.01$. The total number of copies of $\psi$ used is equal to $N_0+N_1=O(\sqrt{n}/\veps^4)+O(\sqrt{n}/\veps^2)=O(\sqrt{n}/\veps^4)$. Hence, we see that the statement in the theorem holds with $\delta_0=\min\{\sqrt{10/c_2}, \ 1/\sqrt{2}\}$, for $c_2 > 0$ the universal constant appearing in \Cref{lem:3_letter_lds_lem}.
\end{proof}
In this proof, the need to perform the Schmidt-rank tests using two ``batches" is likely an artifact of the analysis, rather than a genuine requirement for this test.

\section{Few-copy tests with perfect completeness}\label{sec:few_copy}
In this section, we give upper bounds and matching (or nearly so) lower bounds for testing rank, Schmidt-rank, or trees with perfect completeness when the tester is restricted to performing measurements on few copies at a time, possibly adaptively. Along the way, we address an open question raised in~\cite{harrow2013testing,Montanaro2016,soleimanifar2022testingmps} by giving a closed-form solution to the error probability of the rank (or Schmidt-rank) test when performed on $r+1$ copies of the unknown state, where $r$ is the rank parameter.

\subsection{Adaptivity offers no advantage for irreducible properties of pure states}\label{sec:adaptivity}
In this section we prove that if $\calP \subseteq \rmP(\calH)$ is an irreducible variety, then adaptivity offers no advantage for multi-round tests with one-sided error. We first require some background on algebraic varieties.

For our purposes, a \textit{(projective) variety} is a subset $\calP \subseteq \rmP(\calH)$ for which
\begin{align*}
\calP=\{\psi : f_1(\ket{\psi})=\dots = f_p (\ket{\psi}) =0\}
\end{align*}
for some collection of homogeneous polynomials $f_1,\dots, f_p\in \mathbb{C}[\calH]$. We say that $f_1,\dots, f_p$ \textit{cut out $\calP$}. We say that a variety $\calP$ is \textit{irreducible} if it cannot be written as the union of two proper subvarieties, i.e. $\calP \neq \calQ \cup \mathcal{R}$ for any varieties $\calQ, \mathcal{R} \subsetneq \calP$. It is a standard fact that if $\calP$ is an irreducible variety, and $\calQ \subsetneq \calP$ is a proper subvariety, then $\calP \setminus \calQ$ is Euclidean dense in $\calP$~\cite[Theorem 2.33]{mumford1995algebraic}. If $\calP$ is a variety, then
\begin{align*}
\nu_{\ell}(\calP):= \{ \psi^{\otimes \ell} : \psi \in \calP\} \subseteq \rmP(\calH^{\otimes \ell})
\end{align*}
is a variety, known as the $\ell$-th \textit{Veronese embedding} of $\calP$. Concretely, $\nu_{\ell}(\calP)$ is cut out by the union of three sets of polynomials: the linear forms $\tilde{f}_{i,\ket{x}}:=f_i \circ (\mathds{1}_{\calH} \otimes \bra{x}) : \calH^{\otimes \ell} \rightarrow \mathbb{C}$
as $\ket{x}$ ranges over a basis of $\calH^{\otimes \ell-1}$ and $i$ ranges over $[p]$ (the evaluation of the linear form at a point $\ket{y} \in \calH^{\otimes \ell}$ is given by $\tilde{f}_{i,\ket{x}}(\ket{y})=f_i( (\mathds{1}_{\calH} \otimes \bra{x})\ket{y})$), the $2 \times 2$ minors cutting out the set of pure product states in $\rmP(\calH^{\otimes \ell})$, and the linear forms that define $\vee^{\ell}(\calH)\subseteq \calH^{\otimes \ell}$. If $\calP$ is an irreducible variety, then $\nu_{\ell}(\calP)$ is also an irreducible variety. One way to see this is to note that if $\mathcal{R} \subsetneq \nu_{\ell}(\calP)$ is a proper subvariety, then $\calQ:=\{\psi : \psi^{\otimes \ell} \in \mathcal{R}\}$ is a proper subvariety of $\calP$.

\theoremstyle{plain}
\newtheorem*{adaptive}{\Cref{thm:adaptive}}
\begin{adaptive}
	Let $\calP \subseteq \rmP(\calH)$ be an irreducible variety and $P_{\textnormal{accept}}$ be a PC-optimal $\ell$-copy test for $\calP$. Then among all adaptive $k$-round, $\ell$-copy tests for $\calP$, the test $(P_{\textnormal{accept}})^{\otimes k}$ is PC-optimal.
\end{adaptive}

\begin{remark}
If $\calP$ is not an irreducible variety, then adaptive measurements may help. For example, consider $\calP=\{\outerprod{1}{1},\outerprod{+}{+}\} \subseteq \rmP(\mathbb{C}^2)$, where we define $\ket{+}=\frac{1}{\sqrt{2}}(\ket{1}+\ket{2})$ and $\ket{-}=\frac{1}{\sqrt{2}}(\ket{1}-\ket{2})$. Then the PC-optimal 1-copy test is to simply output ``accept." However, consider the adaptive $3$-round, $1$-copy test which repeatedly performs $\{\outerprod{1}{1},\outerprod{2}{2}\}$ until outcome ``2'' is obtained, and then performs $\{\outerprod{+}{+},\outerprod{-}{-}\}$ in the remaining rounds, outputting ``reject'' if and only if outcome ``$-$" is observed at least once. This test has perfect completeness, and if $\psi$ is $\frac{1}{\sqrt{2}}$-far from $\calP$, then outcome ``reject" will occur with probability at least $\frac{1}{4}$.
\end{remark}

\begin{proof}[Proof of \Cref{thm:adaptive}.]
    Throughout, we let $\Pacc=\Proj\Span\{\ket{\psi}^{\otimes \ell}:\psi\in\calP\}$. Let $P_{x_1}\in \rmL(\calH^{\otimes \ell})$ denote the possible measurement operators (indexed by outcomes $x_1$) in the first round and for each $2\leq j\leq k-1$ let $P^{x_1,\dots,x_{j-1}}_{x_j}\in\rmL(\calH^{\otimes \ell})$ be the $x_j^{\text{th}}$ measurement operator of the $j^{\text{th}}$ round conditioned on outcomes $x_1,\dots,x_{j-1}$ being obtained in the previous rounds. Fix a set of outcomes $x_1,\dots, x_{k-1}$. Without loss of generality, the final measurement conditioned on obtaining these outcomes is a two-outcome POVM which we denote by $\{P_0^{x_1,\dots,x_{k-1}}, \mathds{1}-P_0^{x_1,\dots,x_{k-1}}\}$. Define
    \begin{align*}
        \nu_{\ell}(\calP):=\{\psi^{\otimes \ell} : \psi \in \calP \} \quad \textnormal{and}\quad K := \bigg\{ \phi : \ket{\phi} \in  \ker(P_{x_1})\cup \bigcup_{j=1}^{k-2}\ker(P^{x_1,\dots,x_{j}}_{x_{j+1}})\bigg\}.
    \end{align*}
    Suppose that $\nu_{\ell}(\calP)\subseteq K$. Then clearly an optimal test outputs ``reject" upon obtaining the outcomes $x_1,\dots, x_{k-1}$. If instead $\nu_{\ell}(\calP)$ is not contained in $K$ then we must have
    \begin{align}\label{eq:max_eig}
        \Tr(P^{x_1,\dots,x_{k-1}}_0 \phi) = 1\quad \textnormal{for all}\quad \phi \in \nu_{\ell}(\calP)\backslash K.
    \end{align}
    Indeed, assume for contradiction this were false, i.e., there exists a state $\phi=\psi^{\otimes \ell} \in \nu_{\ell}(\calP)\backslash K$ for which $\Tr(P^{x_1,\dots,x_{k-1}}_0 \phi) < 1$. Then for the input state $\psi$, the probability of obtaining the outcomes $x_1,\dots,x_{k-1}$ in the test is non-zero, and therefore the conditional probability that the test accepts given these outcomes is well-defined. This probability is in turn equal to $\Tr(P^{x_1,\dots,x_{k-1}}_0 \psi^{\otimes \ell}) < 1$. But then this contradicts the assumption of perfect completeness for the test. Now, since $\nu_{\ell}(\calP)\backslash K$ is dense in $\nu_{\ell}(\calP)$ (since $K\cap \nu_\ell(P)$ is a proper subvariety of $\nu_\ell(P)$), it follows that $\Tr(P^{x_1,\dots,x_{k-1}}_0 \phi) = 1$ for all $\phi \in \nu_{\ell}(\calP)$, so for every element $\phi=\outerprod{\phi}{\phi}$ of $\nu_{\ell}(\calP)$, the unit vector $\ket{\phi}$ is contained in the $+1$ eigenspace of $P^{x_1,\dots,x_{k-1}}_0$. It follows that $\Pacc\preceq P^{x_1,\dots,x_{k-1}}_0$. Thus, the measurement
	$P_{\textnormal{accept}}$
	performs at least as well as $P^{x_1,\dots,x_{k-1}}_0$ in the $k^\text{th}$ round.


    We have shown that an optimal measurement in the $k^{\text{th}}$ round is to simply output ``reject" if $\nu_{\ell}(\calP) \subseteq K$ holds, and to perform the measurement $P_{\textnormal{accept}}$ otherwise. Since there are only two possibilities, we can coarse-grain the preceding $(k-1)$-round, $\ell$ copy measurement into a two-outcome measurement $\{Q_{\textnormal{accept}},Q_{\textnormal{reject}}\}$ such that if the input state has the property then the outcome ``accept" always occurs (i.e. it has perfect completeness). Then on input state $\psi$, the probability to output ``accept" is given by
	\begin{align*}
		\Tr(Q_{\textnormal{accept}}\psi^{\otimes \ell(k-1)}) \Tr(P_{\textnormal{accept}}
		\psi^{\otimes \ell}).
	\end{align*}
	Thus, it is optimal to choose a $(k-1)$-round, $\ell$-copy measurement $Q_{\textnormal{accept}}$ which accepts every state in $\calP$ and has minimal probability of accepting a state not in $\calP$. By induction on $k$, an optimal such $(k-1)$-round, $\ell$-copy measurement is given by $Q_{\textnormal{accept}}=(P_{\textnormal{accept}})^{\otimes k-1}$ (the base case $k=1$ is trivial). Thus, an optimal $k$-round measurement is given by $(P_{\textnormal{accept}})^{\otimes k}.$ This completes the proof.
\end{proof}

\subsection{Closed-form expression for \texorpdfstring{$(r+1)$}{(r+1)}-copy tests}\label{sec:closed_form}
Let $\calH$ be a finite-dimensional Hilbert space of dimension $d$. In this section we consider testing the property 
\begin{align*}
    \mathsf{Rank}(r)=\{\rho \in
    \rmD(\calH) : \rank(\rho)\leq r\}
\end{align*}
using a measurement performed on just $r+1$ copies. Such a test will be ``unsuccessful" in general; nevertheless, we aim to characterize its error probability. Note that, by \Cref{thm:PC_optimal_sr_test} and the preceding discussion, this property is equivalent to testing the pure state property $\mathsf{SR}(r)\subseteq \rmP(\calH\otimes \calH^\prime)$ of having Schmidt-rank at most $r$.

Recall that the $N$-copy rank test defined in \cref{eq:rank_test_op} is $P_{\textnormal{accept}}=\sum_{\lambda:\ \ell(\lambda)\leq r} P_{\lambda}$ and is strongly PC-optimal~\cite[{Prop.~6.1}]{o2015quantum} for the property $\mathsf{Rank}(r)$. When $N=r+1$ we can equivalently write this projection operator as $P_{\textnormal{accept}}=\mathds{1}-P_{(1^{r+1})}$ where $P_{(1^{r+1})}$ is the projection onto the
antisymmetric subspace $\wedge^{r+1}(\calH)\subseteq
\calH^{\otimes {r+1}}$. The following theorem provides a closed-form expression for the error probability of the test in this case as a function of the distance parameter $\veps$.
\begin{theorem}\label{thm:error}
    For a real number $0<\veps\leq 1-\frac{r}{d}$, let
    \begin{align*}
    \beta(\veps):=\max_{\rho\ \veps\textnormal{-far from}\ \mathsf{Rank}(r)} \Tr(P_{\textnormal{accept}} \rho^{\otimes r+1})
    \end{align*}
    be the soundness parameter of the $(r+1)$-copy rank test, where ``$\veps$-far" is with respect to the trace distance. Then
	\begin{align}\label{eq:error}
    		\beta(\veps)=1-\min\left\{g_{{\floor{z}}}\left(\veps,
    		r,\frac{1-\veps}{r}\right),
    	g_{\ceil{z}}\left(\veps,r,\frac{\veps}{\ceil{z}}\right),
    	g_{d-r}\left(\veps,r,\frac{\veps}{d-r}\right)\right\},
	\end{align}
	where $z:={\frac{r\veps}{1-\veps}}$ and
	\begin{align}
		g_k(\veps,r,q)&:=\binom{r+k-1}{r+1}
		q^{r+1} +
		\binom{r+k-1}{r}(1-(r+k-1)q)q^r\nonumber \\
					&\quad\quad\quad\quad\quad\quad\quad
					+\binom{r+k-1}{r-1}(1-\veps-(r-1)q)(\veps-qk)q^{r-1}.\label{eq:gk_defn}
    \end{align}
    \end{theorem}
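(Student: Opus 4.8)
The plan is to reduce $\beta(\veps)$ to a concrete extremal problem over spectra and then solve that problem by a chain of structural reductions. Using $P_{\textnormal{accept}}=\mathds{1}-P_{(1^{r+1})}$, where $P_{(1^{r+1})}$ projects onto the antisymmetric subspace $\wedge^{r+1}(\calH)$, we have $\Tr(P_{\textnormal{accept}}\rho^{\otimes r+1})=1-\Tr(P_{(1^{r+1})}\rho^{\otimes r+1})$. Since $\dim\calP_{(1^{r+1})}=1$ (the sign representation of $\SG_{r+1}$) and $s_{(1^{r+1})}(\alpha_1,\dots,\alpha_d)=e_{r+1}(\alpha_1,\dots,\alpha_d)$ is the $(r+1)$-st elementary symmetric polynomial, \Cref{lem:wss_probs} gives $\Tr(P_{(1^{r+1})}\rho^{\otimes r+1})=e_{r+1}(\alpha)$ for $\alpha$ the spectrum of $\rho$; in particular the acceptance probability depends only on the spectrum. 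Combining with the characterization of $\veps$-farness from $\mathsf{Rank}(r)$ recalled in \Cref{sec:sr_rank_testing_equivalence} (namely $\alpha_{r+1}+\dots+\alpha_d\geq\veps$ for the sorted spectrum), I get that $1-\beta(\veps)$ equals the minimum of $e_{r+1}(\alpha)$ over probability vectors $\alpha_1\geq\dots\geq\alpha_d\geq 0$ with $\sum_{i>r}\alpha_i\geq\veps$. The hypothesis $\veps\leq 1-\tfrac rd$ is exactly the condition that this feasible region be nonempty and (as will be used later) that $\ceil{z}\leq d-r$, where $z=r\veps/(1-\veps)$.

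The core is a structural lemma: an optimal $\alpha$ has the form $\alpha=(a,q,\dots,q,b)$ with $r+k-1$ copies of a value $q$, largest entry $a=1-\veps-(r-1)q$, and smallest (possibly zero) entry $b=\veps-kq$, for some integer $k\in\{0,\dots,d-r\}$ and real $q$ with $0\leq b\leq q\leq a$. I would prove this in three steps, each using that $e_m$ is Schur-concave for $m\geq 2$ together with the multilinearity of $e_{r+1}$, writing $\mathrm{head}=(\alpha_1,\dots,\alpha_r)$, $\mathrm{tail}=(\alpha_{r+1},\dots,\alpha_d)$, and $e_{r+1}(\alpha)=\sum_{j\geq 1}e_{r+1-j}(\mathrm{head})\,e_j(\mathrm{tail})$. \emph{(i) The tail constraint is tight}: decreasing the tail mass $T=\sum_{i>r}\alpha_i$ toward $\veps$ while transferring the freed mass to $\alpha_1$ only lowers $e_{r+1}$, since in the resulting parametrization $a+b$ is constant in $T$ while $ab$ is strictly increasing in $T$ on the admissible range. \emph{(ii) The tail is saturated}: among tails with fixed sum $T$ and fixed entrywise cap $\alpha_r$, the majorization-maximal one — namely $\alpha_r$ repeated $\lfloor T/\alpha_r\rfloor$ times followed by a single remainder — minimizes $e_j(\mathrm{tail})$ for every $j\geq 2$ while $e_1(\mathrm{tail})=T$ is fixed, hence minimizes $e_{r+1}(\alpha)$. \emph{(iii) The head is spiky}: symmetrically, for a fixed tail the head minimizing $\sum_j e_{r+1-j}(\mathrm{head})\,e_j(\mathrm{tail})$ is the majorization-maximal head subject to its entries being bounded below by the tail's maximum, i.e.\ one large entry and $r-1$ copies of that maximum; this forces the head's small value to equal the tail's cap $q$, producing exactly the claimed shape.

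Substituting this shape into $e_{r+1}$ — by counting the ways to choose $r+1$ indices, splitting on whether $a$ and $b$ are selected — yields precisely $g_k(\veps,r,q)$ in \cref{eq:gk_defn}. It then remains to minimize $g_k(\veps,r,q)$ over the feasible $(k,q)$, which tiles into intervals $q\in[\tfrac{\veps}{k+1},\min\{\tfrac{\veps}{k},\tfrac{1-\veps}{r}\}]$ indexed by $k$, with $g_k$ and $g_{k+1}$ agreeing at the shared endpoint $q=\veps/(k+1)$. The plan is to show that on each such interval the minimum of $g_k$ is attained at an endpoint (via monotonicity/concavity of $g_k$ in $q$, checked by computing $\partial_q g_k$), that the relevant endpoints are the degenerate configurations $q=\veps/k$ (where $\alpha$ collapses to $(a,q^{r+k-1})$, killing the third term of $g_k$) and $q=\tfrac{1-\veps}{r}$ (where $a$ collapses to $q$), and finally that the value at $q=\veps/k$ is monotone in $k$ — so its minimum over feasible $k$ is at $k=\ceil{z}$, resp.\ at $k=d-r$ in the regime governed by the dimension — while the one remaining endpoint contributes $g_{\floor{z}}(\veps,r,\tfrac{1-\veps}{r})$. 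Together with a short case analysis on where $z$ sits relative to the integers, these give the three-way minimum in \cref{eq:error}. I expect the main obstacles to be step (iii) of the structural lemma and the final monotonicity-in-$k$ bookkeeping: both require applying Schur-concavity across several elementary symmetric polynomials simultaneously and handling binomial identities, and one must check that the "spiky head, saturated tail" configurations remain feasible throughout — which is precisely where $\veps\leq 1-\tfrac rd$ is used.
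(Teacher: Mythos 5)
Your proposal follows essentially the same route as the paper's proof: the reduction $1-\beta(\veps)=\min e_{r+1}(\alpha)$ over sorted spectra with tail mass at least $\veps$, the restriction (via Schur-concavity and majorization) to the one-parameter family $(1-\veps-(r-1)q,\,q,\dots,q,\,\veps-kq,\,0,\dots,0)$ --- this is exactly \Cref{claim:majorization_claim}, which the paper obtains by exhibiting one explicit majorizing vector rather than by your three perturbation steps --- the counting that turns this family into $g_k(\veps,r,q)$, and the endpoint analysis over the $q$-intervals and over $k$. Your structural steps (i)--(iii) are sound in substance (though, as written, the justification of (i) already presupposes the two-value shape, so it should either follow (ii)--(iii) or be replaced by the direct observation that moving tail mass onto $\alpha_1$ increases the vector in the majorization order), and your account of the feasible ranges and of where $\veps\leq 1-\tfrac{r}{d}$ enters matches the paper.

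The one step that would fail as stated is the claim that $k\mapsto g_k(\veps,r,\veps/k)$ is monotone in $k$. It is not: its derivative has the sign of $h_{\veps,r}(k)=\veps r^2+\bigl[k\sum_{i=0}^{r-1}\tfrac{1}{k+i}-r\bigr]\bigl(k((1-\veps)r+1)-\veps r^2\bigr)$, and for $r=2$, $\veps=0.32$, $d$ large this is positive at $k=1,2$ and negative at $k=3$, so on $[\ceil{z},d-r]=[1,d-2]$ the function first increases and then decreases. The correct (and sufficient) statement, which is \Cref{claim:final}, is that $h_{\veps,r}$ is monotone in $k$ with limit $\veps r^3>0$ as $k\to 0^+$, hence changes sign at most once and only from positive to negative; so $k\mapsto g_k(\veps,r,\veps/k)$ is unimodal with an interior maximum, and its minimum over the feasible range of $k$ is attained at one of the endpoints $\ceil{z}$ or $d-r$, which is exactly what produces the second and third terms of \cref{eq:error}. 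Relatedly, for the inner optimization over $q$ the operative fact is neither monotonicity nor concavity of $g_k$, but the sign computation the paper performs: $q^{-(r-2)}\partial_q g_k$ is quadratic in $q$, nonnegative at $q=\veps/(k+1)$ and nonpositive at $q=\veps/k$, so with at most two zeros the minimum on each interval is forced to an endpoint; since you planned to compute $\partial_q g_k$ anyway, this is a refinement of your plan rather than a gap.
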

When $r=1$ this gives the following. We set $\omega=1-\veps$.
\begin{corollary}\label{cor:error}
For $r=1$ and $0 < \omega \leq \frac{1}{d}$, it holds that
\begin{align}\label{eq:betarank1}
\beta(1-\omega)=\begin{cases} 1-\omega+\omega^2 ,& \omega \geq 1/2\\
1+\frac{1}{2}\omega \lfloor \frac{1}{\omega}\rfloor (-2+\omega+ \omega \lfloor \frac{1}{\omega}\rfloor), & \omega \leq 1/2.
\end{cases}
\end{align}
\end{corollary}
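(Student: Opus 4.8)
The plan is to specialize \Cref{thm:error} to $r=1$ and simplify the $\min$ in \eqref{eq:error}. Put $\omega=1-\veps$, so that $z=\frac{r\veps}{1-\veps}=\frac{\veps}{\omega}=\frac1\omega-1$, hence $\lfloor z\rfloor=\lfloor 1/\omega\rfloor-1$ and $\lceil z\rceil=\lceil 1/\omega\rceil-1$. First I would compute $g_k(\veps,1,q)$ from \eqref{eq:gk_defn}: the three binomial coefficients collapse to $\binom{k}{2},\ k,\ 1$, the factor $(r-1)q$ disappears, and after collecting powers of $q$ one gets the single quadratic
\[
g_k(\veps,1,q)=-\tfrac{k(k+1)}{2}\,q^{2}+k\veps\, q+\omega\veps .
\]

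Next I would evaluate the three arguments of the $\min$. Substituting $q=\veps/k$ yields the compact form $g_k(\veps,1,\veps/k)=\tfrac{(k-1)\veps^{2}}{2k}+\omega\veps$, which is increasing in $k$. The hypothesis $\veps\le 1-\tfrac{r}{d}$ of \Cref{thm:error} is equivalent to $z\le d-r$, which for $r=1$ reads $z\le d-1$, hence $\lceil z\rceil\le d-1$; therefore the third candidate $g_{d-r}(\veps,1,\tfrac{\veps}{d-r})$ is at least the second candidate $g_{\lceil z\rceil}(\veps,1,\tfrac{\veps}{\lceil z\rceil})$ and may be discarded. It then remains to show
\[
g_{\lfloor z\rfloor}\!\bigl(\veps,1,\omega\bigr)\ \le\ g_{\lceil z\rceil}\!\Bigl(\veps,1,\tfrac{\veps}{\lceil z\rceil}\Bigr),
\]
which I expect to be the one step requiring genuine work. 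If $z\in\mathbb{Z}$ the two sides are literally equal, since then $\lfloor z\rfloor=\lceil z\rceil=z$ and $\veps/z=\omega$. If $z\notin\mathbb{Z}$, write $\ell=\lfloor z\rfloor$, so $\lceil z\rceil=\ell+1$; inserting the explicit quadratics and clearing denominators, the inequality reduces --- after treating $\ell=0$ by hand (both sides equal $\omega\veps$) and otherwise cancelling a factor $\ell>0$ --- to $0\le(\veps-(\ell+1)\omega)^{2}$, which is immediate. Hence the minimizer in \eqref{eq:error} is always the first candidate, and $\beta(1-\omega)=1-g_{\lfloor z\rfloor}(\veps,1,\omega)$.

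Finally I would simplify. Setting $m:=\lfloor 1/\omega\rfloor$, so $\lfloor z\rfloor=m-1$, and substituting $\veps=1-\omega$ into $g_{m-1}(\veps,1,\omega)=-\tfrac{(m-1)m}{2}\omega^{2}+(m-1)\veps\omega+\omega\veps$, one line of algebra gives
\[
\beta(1-\omega)=1-m\omega+\tfrac{m(m+1)}{2}\omega^{2}=1+\tfrac12\omega m\,(-2+\omega+\omega m),
\]
which is the claimed formula when $\omega\le 1/2$. When $\omega\ge 1/2$ one has $m=\lfloor 1/\omega\rfloor\in\{1,2\}$ --- with $m=1$ for $\omega\in(1/2,1)$ and $m=2$ only at $\omega=1/2$ --- and in either subcase the right-hand side collapses to $1-\omega+\omega^{2}$, the two formulas agreeing at $\omega=1/2$; this gives the stated case split. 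The only real obstacle is the floor-versus-ceiling inequality (that rounding $z$ down, not up, gives the smaller soundness bound), and once the polynomials are written out it is just the nonnegativity of a square; the rest is routine bookkeeping with floor functions and the endpoint cases $z\in\mathbb{Z}$, $\lfloor z\rfloor=0$, $\omega=1/2$. As a sanity check, at the extreme $\omega=1/d$ the formula returns $\tfrac{d+1}{2d}$, matching the direct evaluation $\Tr\!\bigl((\mathds 1-P_{(1^{2})})(\mathds 1/d)^{\otimes 2}\bigr)=1-\tfrac{d(d-1)/2}{d^{2}}$ of the $2$-copy rank-test acceptance probability on the maximally mixed state.
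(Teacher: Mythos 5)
Your proposal is correct and follows essentially the same route as the paper: \Cref{cor:error} is obtained there by directly specializing \Cref{thm:error} to $r=1$, and your computation fills in exactly the omitted bookkeeping (the quadratic form of $g_k(\veps,1,q)$, discarding the $k=d-r$ term via monotonicity of $g_k(\veps,1,\veps/k)$ in $k$, and the floor-versus-ceiling comparison reducing to $0\le(\veps-(\ell+1)\omega)^2$), consistent with the paper's remarks around \cref{eq:g1} and \cref{eq:small_z}. Note only that the regime you (correctly) work in is the theorem's hypothesis $\veps\le 1-\frac{1}{d}$, i.e.\ $\omega\ge \frac{1}{d}$, which is the intended reading of the corollary's stated range.
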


\begin{remark}
By similar reasoning as in the proof of~\cite[Proposition 9]{soleimanifar2022testingmps}, the expression in~\cref{eq:betarank1} lower bounds the soundness parameter of the $n$-partite product test introduced in~\cite{harrow2013testing}. Our result shows that this bound is tight when $n=2$, and~\cite[Proposition 9]{soleimanifar2022testingmps} shows that this this bound is tight when $\omega \geq 1/2$. We ask if it is also tight when $n > 2$ and $\omega < 1/2$.
\end{remark}

\begin{figure}
\begin{center}
   \includegraphics[width=0.7\textwidth]{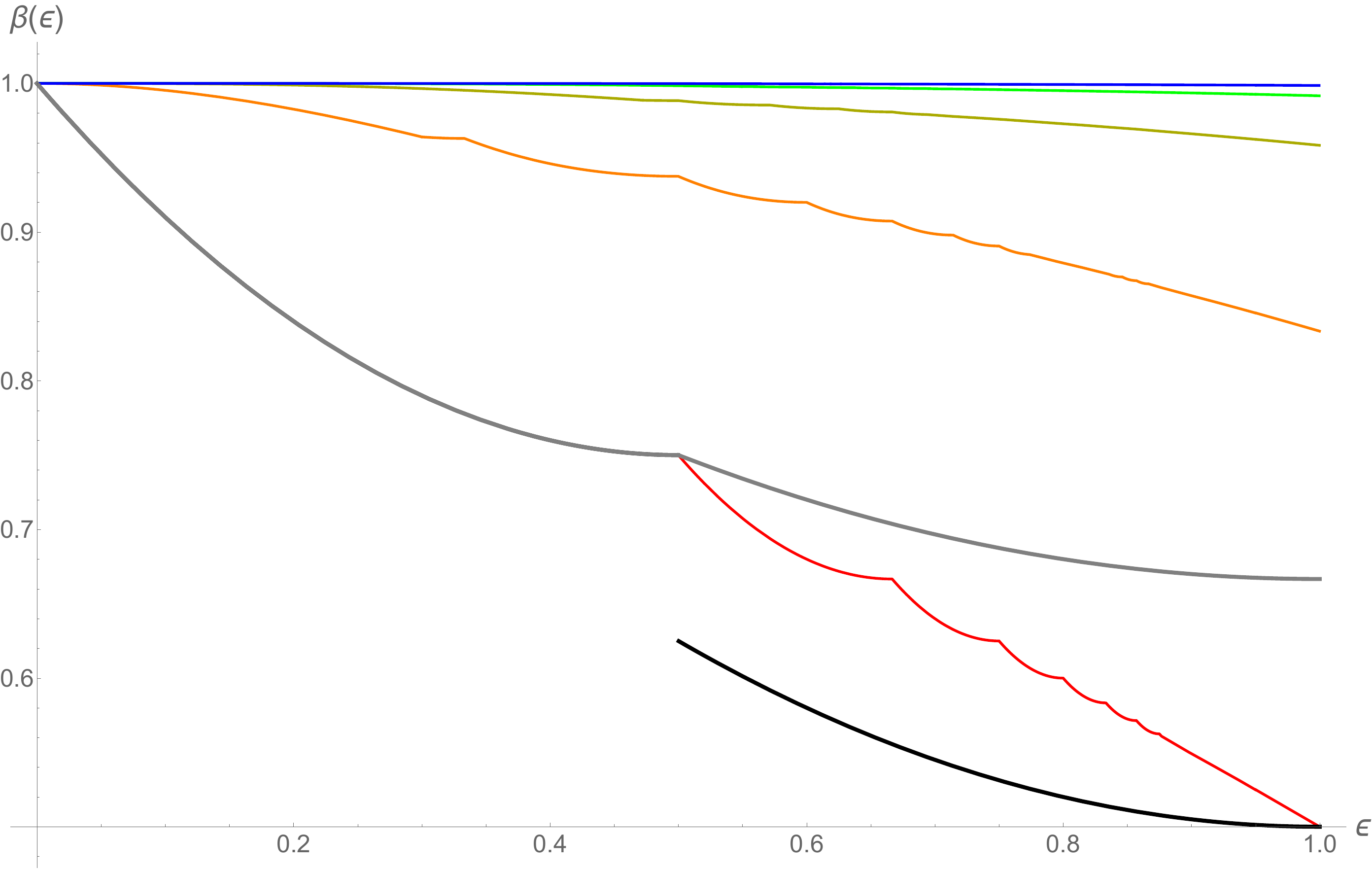}
   \end{center}
   \caption{The error probability $\beta(\veps)$ for $d$ large and $r=1$ (red), $r=2$ (orange), $r=3$ (yellow), $r=4$ (green), and $r=5$ (blue), along with the previous best known upper bound for $r=1$ (grey)~\cite[Theorem 8]{soleimanifar2022testingmps}, and the previous best known lower bound for $r=1$ (black)~\cite{harrow2013testing}. The red and grey plots are equal for $\veps\leq 1/2$.}\label{fig:plot}
\end{figure}
We prove this theorem in~\Cref{sec:closed_form_proof}. The constraint $0 < \veps \leq 1-\frac{r}{d}$ is necessary, as there does not exist any state $\rho$ that is more than $(1-\frac{r}{d})$-far from $\mathsf{Rank}(r)$ in trace distance. All three terms in the minimum are necessary; for example, when $r=2$ and $d$ is large, the third term is the minimum for $\veps$ between $0$ and $\sim 0.3$, the second term is the minimum for $\veps$ between $\sim 0.3$ and $\sim 0.33$, and the first term is the minimum for $\veps$ between $\sim 0.33$ and $1$. We plot $\beta(\veps)$ in~\Cref{fig:plot}. If $z={\frac{r\veps}{1-\veps}}$ is an integer, then the first two terms in the minimum in~\cref{eq:error} are equal:
 \begin{align}\label{eq:g1}
 g_{{\floor{z}}}\left(\veps,
		r,\frac{1-\veps}{r}\right)=
	g_{\ceil{z}}\left(\veps,r,\frac{\veps}{\ceil{z}}\right)=\binom{\frac{r}{1-\veps}}{r+1}
	\left(\frac{1-\veps}{r}\right)^{r+1}.
\end{align}
	As $d\rightarrow \infty$ the third term becomes
	\begin{align}\label{eq:g2}
	\lim_{d\rightarrow \infty} g_{d-r}\left(\veps,r,\frac{\veps}{d-r}\right)= \frac{\veps^r (r+1-\veps r)}{(r+1)!}.
	\end{align}
As $\veps \rightarrow 1$, both of these quantities approach $\frac{1}{(r+1)!}$. This gives:
\ba\label{eq:affirmative}
	\lim_{\veps \rightarrow
	1} \lim_{d\rightarrow \infty}
	\beta(\veps)=1-\frac{1}{(r+1)!}.
\ea
This value is approached when the test is applied to $(r+1)$ copies of the  
$d$-dimensional maximally mixed state. If $d \gg 0$ and $\veps< \frac{1}{r+2}$, then $z<1$ and we obtain
\begin{align}\label{eq:small_z}
\beta(\veps)&=1-\min\left\{\veps \left(\frac{1-\veps}{r}\right)^r, (1-r \veps) \veps^r, \frac{\veps^r (r+1-\veps r)}{(r+1)!}\right\}\\
&=1-\begin{cases} (1-\veps)\veps,& r=1\\
\frac{\veps^r (r+1-\veps r)}{(r+1)!},& r\geq 2.
\end{cases}
\end{align}

\subsection{Copy complexity of few-copy tests}\label{sec:few_copy_compelxity}
In this section, we use the closed-form expression derived above to conclude tight bounds on the copy complexity of testing trees with few-copy, possibly adaptive measurements. We first show the following.
\begin{theorem}\label{thm:l_copy}
    For $0<\veps < \frac{1}{r+2}$, the copy complexity of testing $\mathsf{SR}(r)$ with one-sided error using adaptive $(r+1)$-copy measurements is $\Theta((r+1)!/\veps^{2r})$.
\end{theorem}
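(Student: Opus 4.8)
The plan is to combine the reduction from Schmidt-rank testing to mixed-state rank testing, the no-adaptivity result \Cref{thm:adaptive}, and the closed-form error probability of \Cref{thm:error}.

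First I would record the structure of the optimal test. By \Cref{thm:PC_optimal_sr_test} and \Cref{sec:sr_rank_testing_equivalence}, a bipartite pure state $\ket{\psi}$ is $\veps$-far from $\mathsf{SR}(r)$ exactly when $\rho:=\Tr_B(\outerprod{\psi}{\psi})$ is $\veps^2$-far from $\mathsf{Rank}(r)$, and the PC-optimal $(r+1)$-copy Schmidt-rank test $P_{\textnormal{accept}}$ satisfies $\Tr(P_{\textnormal{accept}}\,\psi^{\otimes r+1})=\Tr(\Pi_{\leq r}\,\rho^{\otimes r+1})$. Since $\mathsf{SR}(r)$ is an irreducible variety (it is $\mathsf{TTNS}(G,r)$ for $G$ a single edge), \Cref{thm:adaptive} applies: the PC-optimal adaptive $k$-round, $(r+1)$-copy test is the non-adaptive $(P_{\textnormal{accept}})^{\otimes k}$, so among all perfectly complete $k$-round, $(r+1)$-copy tests the least achievable soundness on $\veps$-far inputs is $\big(\sup_{\psi\ \veps\text{-far}}\Tr(P_{\textnormal{accept}}\,\psi^{\otimes r+1})\big)^k=\beta_d(\veps^2)^k$, where $d=\min\{d_A,d_B\}$ and $\beta_d(\cdot)$ is the $(r+1)$-copy rank-test soundness of \Cref{thm:error} (with its $d$-dependence made explicit). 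The task thus becomes to find the smallest $k$ with $\beta_d(\veps^2)^k\leq\tfrac23$, over the relevant dimensions.

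Next I would pin down $1-\beta_d(\veps^2)$. Because $\veps<\tfrac1{r+2}$ forces $\veps^2<\tfrac1{r+1}$, the parameter $z=\tfrac{r\veps^2}{1-\veps^2}$ lies strictly in $(0,1)$, so $\floor z=0$ and $\ceil z=1$, and \Cref{thm:error} collapses to $1-\beta_d(\veps^2)=\min\big\{\veps^2\big(\tfrac{1-\veps^2}{r}\big)^r,\ (1-r\veps^2)\veps^{2r},\ g_{d-r}\big(\veps^2,r,\tfrac{\veps^2}{d-r}\big)\big\}$. Using $\veps^2<\tfrac1{r+2}$ one checks the first two terms each exceed $\tfrac{\veps^{2r}}{r!}$; and from $\binom{d-1}{r}\geq\tfrac{(d-r)^r}{r!}$ together with $\tfrac{(d-1)\veps^2}{d-r}\leq r\veps^2<\tfrac12$ one bounds the third term below by $\tfrac{\veps^{2r}}{2\,r!}$ for every $d\geq r+1$. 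Hence $1-\beta_d(\veps^2)\geq\tfrac{\veps^{2r}}{2\,r!}$ uniformly in $d$. Conversely, since the third term tends to $\tfrac{\veps^{2r}(r+1-r\veps^2)}{(r+1)!}<\tfrac{\veps^{2r}}{r!}$ as $d\to\infty$ (cf.\ \cref{eq:g2}), for $d$ sufficiently large one has $1-\beta_d(\veps^2)\leq\tfrac{2\veps^{2r}}{r!}$. (For $r=1$, \Cref{cor:error} gives $1-\beta_d(\veps^2)=(1-\veps^2)\veps^2\in[\tfrac{\veps^2}2,\veps^2]$, consistent with the above.) In short, $1-\beta_d(\veps^2)=\Theta(\veps^{2r}/r!)$ with universal implied constants, the lower estimate holding uniformly in $d$.

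Finally I would assemble the two directions. For the upper bound: take $k=\lceil 2\,r!\ln 3/\veps^{2r}\rceil$ and run $k$ independent copies of the $(r+1)$-copy Schmidt-rank test, accepting iff all accept; this has perfect completeness, and for every dimension its soundness is $\beta_d(\veps^2)^k\leq(1-\tfrac{\veps^{2r}}{2r!})^k\leq e^{-\ln 3}=\tfrac13$, so it is successful, using $(r+1)k=O((r+1)!/\veps^{2r})$ copies. For the lower bound: fix $d$ large enough that $1-\beta_d(\veps^2)\leq\tfrac{2\veps^{2r}}{r!}$; by \Cref{thm:adaptive} every perfectly complete adaptive $k$-round, $(r+1)$-copy test has soundness at least $\beta_d(\veps^2)^k\geq(1-\tfrac{2\veps^{2r}}{r!})^k\geq 1-\tfrac{2k\veps^{2r}}{r!}$ by Bernoulli's inequality, which exceeds $\tfrac23$ whenever $k<\tfrac{r!}{6\veps^{2r}}$; hence any successful test needs $k=\Omega(r!/\veps^{2r})$ rounds, i.e.\ $(r+1)k=\Omega((r+1)!/\veps^{2r})$ copies. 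The step I expect to be most delicate is the middle one: extracting $1-\beta_d(\veps^2)=\Theta(\veps^{2r}/r!)$ from the three-term formula with universal constants and, for the upper bound, uniformly over $d$—in particular one must verify that the $d$-independent terms $\veps^2(\tfrac{1-\veps^2}{r})^r$ and $(1-r\veps^2)\veps^{2r}$ never dip below $\Theta(\veps^{2r}/r!)$ in the regime $\veps<\tfrac1{r+2}$. Everything else is the reduction to the single-letter rank test and the invocation of \Cref{thm:adaptive}.
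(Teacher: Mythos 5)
Your proposal is correct and follows essentially the same route as the paper: invoke \Cref{thm:adaptive} (since $\mathsf{SR}(r)$ is an irreducible variety) to reduce to repetitions of the PC-optimal $(r+1)$-copy test, use the Schmidt-rank/rank equivalence to evaluate its per-round soundness via \Cref{thm:error} in the small-$z$ regime, and then count repetitions in both directions. The only cosmetic difference is that you bound all three terms of the closed form directly to get uniformity in $d$, whereas the paper dispenses with this by noting $\beta(\veps)$ is monotonically increasing in $d$ and taking $d\gg 1$ without loss of generality.
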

We remark that the equivalence between rank testing and Schmidt rank testing observed in part one of~\Cref{thm:PC_optimal_sr_test} can be used to obtain the same copy complexity $\Theta((r+1)!/\veps^{r})$ for testing $\mathsf{Rank}(r)$. Also, since, for fixed $\veps$ and $r$, we have $\beta(\veps)$ is monotonically increasing in $d$, we can take $d\gg 1$ in the proofs below without loss of generality.
\begin{proof}
Since $\mathsf{SR}(r)$ is an irreducible variety (see e.g.~\cite[Proposition 12.2]{harris2013algebraic}), \Cref{thm:adaptive} applies and for any $\ell$ and $k$ the PC-optimal $k$-round, $\ell$-copy test for $\mathsf{SR}(r)$ is given by simply repeating the $\ell$-copy Schmidt-rank test. When $r=1$ the rank test has soundness $\beta(\veps)=1-(1-\veps)\veps$ and, by the discussion on trace distance from \Cref{sec:sr_rank_testing_equivalence}, the soundness of the $(r+1)$-copy Schmidt-rank test at distance $\veps$ is $1-(1-\veps^2)\veps^2$. Thus, $N=\Theta(1/\veps^2)$ copies are necessary and sufficient to ensure the soundness parameter is at most $1/3$. When $r\geq 2$ and $d\gg 1$, we have $\beta(\veps)= 1-\frac{\veps^r (r+1-\veps r)}{(r+1)!}$ and, by a similar reasoning, $N=\Theta((r+1)!/\veps^{2r})$ copies are necessary and sufficient to ensure the soundness parameter is at most $1/3$ for the Schmidt-rank test.
\end{proof}
The above result also implies bounds on the copy complexity of testing TTNS using $(r+1)$-copy measurements.
\theoremstyle{plain}
\newtheorem*{fewcopy}{\Cref{thm:trees_few_copies}}
\begin{fewcopy}
Let $G=(V,E)$ be a tree on $n$ vertices, $r \geq 2$ be a positive integer, and $\veps \in (0, \frac{1}{r+2})$. There exists an algorithm which tests whether $\psi \in \mathsf{TTNS}(G,r)$ with one-sided error using ${O((n-1)^r (r+1)!/\veps^{2r})}$ copies and measurements performed on just $r+1$ copies at a time. Furthermore, any algorithm with one-sided error which measures $r+1$ copies at a time, possibly adaptively, requires ${\Omega((n-1)^{r-1} (r+1)! / \veps^{2r})}$ copies.
\end{fewcopy}

\begin{proof}
The proof of the upper bound is similar to that of~\Cref{thm:ttns_testing}. By \Cref{cor:ttns_far}, for any $\veps$-far state there is an edge $e\in E$ with respect to which the state is $\veps/\sqrt{n-1}$-far from being in the property $\mathsf{SR}(r)$. The upper bound then follows from the upper bound on Schmidt-rank testing in \Cref{thm:l_copy}, in an identical manner to the proof of~\Cref{thm:ttns_testing}.

For the lower bound, set $m=n-1$ and note that, for any state $\ket{\phi} \in \mathbb{C}^d \otimes \mathbb{C}^d$ which is $({2\veps/\sqrt{m}})$-far from $\mathsf{SR}(r)$, the corresponding state $\ket{\phi_G}$, as defined in \Cref{sec:tree_like} is $\veps$-far from $\mathsf{TTNS}(G,r)$, by \Cref{lem:overlap_multiplicative_lemma}. Let $Q$ be the PC-optimal $(r+1)$-copy test for $\mathsf{SR}(r)$ presented in \Cref{thm:PC_optimal_sr_test}. Following~\Cref{thm:error}, we may then pick a state $\ket{\phi} \in \mathbb{C}^d \otimes \mathbb{C}^d$ which is $(2\veps / \sqrt{m})$-far from $\mathsf{SR}(r)$ and satisfies the property that $Q$ erroneously accepts $\ket{\phi}^{\otimes r+1}$ with probability at least $1-\frac{(4\veps^2/m)^r}{r!}.$ This is possible because the error probability is at least $1-g_{d-r}(4\veps^2/m, r, \frac{4\veps^2/m}{d-r})$, which approaches the strictly larger quantity
\begin{align*}
    1-\frac{(4\veps^2/m)^r (r+1-4\veps^2 r/m)}{(r+1)!}
\end{align*}
as $d \rightarrow \infty$. Let $P$ be the projection onto $\text{span}\{\ket{\psi}^{\otimes r+1} : \ket{\psi} \in \mathsf{TTNS}(G,r)\}$, which is the PC-optimal $(r+1)$-copy measurement for $\mathsf{TTNS}(G,r)$. Then $P \succeq Q^{\otimes m}$, by \Cref{lem:pc_optimal_test_lb_operator}.
Since $\mathsf{TTNS}(G,r)$ is an irreducible variety (see e.g.~\cite{bernardi2023dimension}),~\Cref{thm:adaptive} applies, and the optimal adaptive, $(r+1)$-copy measurement for $\mathsf{TTNS}(G,r)$ is $P^{\otimes \frac{N}{r+1}}$. To conclude the proof, we show that if the acceptance probability of $P^{\otimes \frac{N}{r+1}}$ on $\ket{\phi}^{\otimes Nm}$ is upper bounded by $1/3$, then $N=\Omega(m^{r-1} (r+1)! / \veps^{2r})$. First, it necessarily holds that
\begin{align*}
(1-(4\veps^2/m)^r/r!)^{\frac{N m}{r+1}}\leq 1/3,
\end{align*}
as the left-hand side lower bounds the acceptance probability of $Q^{\otimes \frac{mN}{r+1}}$ on $\ket{\psi}$. But this inequality can only hold if $N=\Omega(m^{r-1} (r+1)! / \veps^{2r})$.
\end{proof}

\subsection{Proof of Theorem~\ref{thm:error}}\label{sec:closed_form_proof}
In this section we prove~\Cref{thm:error}. The proof requires the following two facts. 
\begin{claim}\label{claim:majorization_claim}
    Every probability vector $\Vec{p} \in \mathbb{R}_+^{d}$ for which $p_1\geq \dots \geq p_d \geq 0$ and $p_{r+1}+\dots+ p_d \geq \veps$ is majorized by a probability vector of the form
    \begin{align*}
        (1-\veps- (r-1)q, q,\dots, q, \veps-q k, 0,\dots, 0)
    \end{align*}
    for some $\frac{\veps}{d-r}\leq q \leq \frac{1-\veps}{r}$, where $k:=\floor{\veps/q}$ and the real number $q$ is repeated $k+r-1$ times.  
\end{claim}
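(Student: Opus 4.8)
The plan is to fix an admissible $p=(p_1,\dots,p_d)$, write down an explicit value of $q$, and then verify the defining partial-sum inequalities of majorization. The one structural input I would use is that the partial-sum sequence $P_j:=\sum_{l=1}^{j}p_l$ (with $P_0:=0$) is \emph{concave} — this is exactly the hypothesis $p_1\ge\dots\ge p_d$, since it says the increments $p_j=P_j-P_{j-1}$ are non-increasing — together with the two boundary facts $P_r\le 1-\veps$ (which is the tail hypothesis $p_{r+1}+\dots+p_d\ge\veps$) and $P_d=1$.

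Given this, I would set
\[
q:=\max_{r<i\le d}\frac{P_i-(1-\veps)}{i-r},
\]
which is at least $\frac{P_d-(1-\veps)}{d-r}=\frac{\veps}{d-r}$, so $q$ already meets the lower bound in the claim; in particular $\veps/q\le d-r$, so $k:=\lfloor\veps/q\rfloor$ satisfies $k\le d-r$. I then propose the candidate
\[
v:=\bigl(1-\veps-(r-1)q,\ \underbrace{q,\dots,q}_{r+k-1},\ \veps-qk,\ 0,\dots,0\bigr),
\]
whose partial sums are $V_j=1-\veps+(j-r)q$ for $1\le j\le r+k$ and $V_j=1$ for $j\ge r+k+1$, and whose entries sum to $1$ by a one-line computation.

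The crux is a single inequality: for all $0\le j\le r-1$ and all $r<i\le d$,
\[
(i-r)P_j+(r-j)P_i\le(i-j)(1-\veps).
\]
I would prove this by applying concavity of $P$ at the point $r$, which lies strictly between $j$ and $i$: concavity gives $P_r\ge\frac{i-r}{i-j}P_j+\frac{r-j}{i-j}P_i$, and multiplying through by $i-j$ and invoking $P_r\le 1-\veps$ yields the display. Rearranged, it reads $q\le\frac{1-\veps-P_j}{r-j}$ for every $0\le j\le r-1$. The case $j=0$ (with $P_0=0$) gives the upper bound $q\le\frac{1-\veps}{r}$ demanded by the claim, which also makes $v$ non-increasing at its top entry, i.e. $1-\veps-(r-1)q\ge q$; the rest of $v$ is non-increasing because $\veps-qk\le q$ and $\veps-qk\ge 0$, both immediate from $k=\lfloor\veps/q\rfloor$. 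The cases $1\le j\le r-1$, together with $P_r\le 1-\veps=V_r$, give $P_j\le V_j$ for all $j\le r$; for $r<j\le r+k$ the inequality $P_j\le V_j$ is immediate from the definition of $q$ as a maximum (note $r+k\le d$); and for $j\ge r+k+1$ one has $V_j=1=P_d\ge P_j$. Hence $p$ is majorized by $v$.

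I expect no genuine obstacle, only bookkeeping: one must confirm $v$ has at most $d$ entries (if $\veps-qk>0$ then $k=d-r$ is impossible, since $k=d-r$ would force $\veps=q(d-r)$ and hence $\veps-qk=0$, so in that case $r+k+1\le d$, while if $\veps-qk=0$ only $r+k\le d$ entries are nonzero), and one must correctly read off the partial sums $V_j$ and verify the sorting at the junction between the block of $q$'s and the entry $\veps-qk$. Everything substantive is contained in the concavity inequality above.
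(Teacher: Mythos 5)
Your proof is correct, and it reaches the conclusion by a genuinely different choice of the parameter $q$ than the paper does. The paper simply takes $q=p_r$: it first majorizes $\vec p$ by the vector built from $p_r$ and the \emph{actual} tail mass $\tilde\veps:=p_{r+1}+\dots+p_d$ (this step is immediate from the entrywise facts $p_l\geq p_r$ for $l\leq r$ and $p_l\leq p_r$ for $l>r$), and then relaxes $\tilde\veps$ to $\veps$ in a second majorization; the range $\frac{\veps}{d-r}\leq p_r\leq\frac{1-\veps}{r}$ also falls out entrywise. You instead take $q=\max_{r<i\leq d}\frac{P_i-(1-\veps)}{i-r}$, i.e.\ the smallest admissible slope to the right of position $r$, and verify the partial-sum inequalities directly: the region $j>r$ is then true by construction, while the region $j<r$ requires your chord inequality, which is where the hypothesis $p_1\geq\dots\geq p_d$ enters as discrete concavity of the prefix sums $P_j$ with the anchor $P_r\leq 1-\veps$. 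Both arguments are sound and both correctly place $q$ in $[\frac{\veps}{d-r},\frac{1-\veps}{r}]$; the paper's choice buys a near-trivial verification with no auxiliary lemma, whereas yours trades a slightly more involved (but single-step) verification for an explicit extremal choice of $q$, and your handling of the boundary case $\veps-qk=0$, $k=d-r$ is a detail the paper leaves implicit (it arises there too, since $p_r\geq\veps/(d-r)$ gives the same $k\leq d-r$).
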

\begin{proof}
        Let $\tilde{\veps}:=p_{r+1}+\dots+ p_d \geq \veps$. Note that we necessarily have $\frac{\veps}{d-r}\leq p_r \leq \frac{1-\veps}{r}$. The probability vector $\Vec{p}$ is majorized by the probability vector
        \begin{align*}
    		(1-\tilde{\veps}-(r-1)p_r,p_r,\dots,
    		p_r,\tilde{\veps}-p_{r} \floor{\tilde{\veps}/p_r},0,\dots,0),
        \end{align*}
        where $p_r$ is repeated $r+\floor{\tilde{\veps}/p_r}-1$ times. In turn, this probability vector is majorized by
        \begin{align*}
        (1-\veps-(r-1)p_r,p_r,\dots, p_r,\veps-p_r \floor{\veps/p_r},0,\dots,0),
        \end{align*}
        which is of the desired form.
\end{proof}

In the following claim we consider a positive real number $\veps\in (0,1)$, a positive integer $r$ and a real number $k$. We also let
\begin{align*}
    g_{\veps,r}(k):=g_k(\veps,r,\frac{\veps}{k})
    			&=\binom{r+k-1}{r}\left(\frac{\veps}{k}\right)^r \left( 1- \frac{r \veps}{r+1}- \frac{r^2 \veps}{(r+1)k}\right),
\end{align*}
where we define $\binom{r+k-1}{r}:=(r+k-1)(r+k-2)\cdots (k)/r!$ for any $k \in \mathbb{R}$, and put $z=r\veps/(1-\veps)$.
\begin{claim}\label{claim:final}
    For any real $a$ and $b$ for which $ z \leq  a < b$, it holds that
\begin{align*}
    \min_{a\leq k \leq b} g_{\veps,r}(k)=\min\{g_{\veps,r}(a),g_{\veps,r}(b)\}.
\end{align*}
\end{claim}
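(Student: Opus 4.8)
The plan is to prove that $g_{\veps,r}$ is quasiconcave (unimodal) on $(0,\infty)$; once that is in place the claim is immediate, since for a quasiconcave function the minimum over any subinterval $[a,b]$ is attained at an endpoint. We may assume $\veps>0$, as otherwise $g_{\veps,r}\equiv 0$ and the claim is trivial.

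First I would put $g_{\veps,r}$ into a transparent closed form. Using $\binom{r+k-1}{r}=\tfrac{1}{r!}\prod_{j=0}^{r-1}(k+j)$ and simplifying $1-\tfrac{\veps r(k+r)}{k(r+1)}=\tfrac{(r+1-\veps r)(k-k_0)}{k(r+1)}$ with $k_0:=\tfrac{\veps r^2}{r+1-\veps r}$ (note $r+1-\veps r\ge 1$ since $\veps\le 1$, so $C':=\tfrac{\veps^r(r+1-\veps r)}{r!(r+1)}>0$ and $k_0>0$), one obtains
\[
g_{\veps,r}(k)=C'\,\frac{(k-k_0)\prod_{j=1}^{r-1}(k+j)}{k^{r}} .
\]
In particular $g_{\veps,r}$ is smooth on $(0,\infty)$, is negative on $(0,k_0)$, vanishes at $k_0$, is positive on $(k_0,\infty)$, and $g_{\veps,r}(k)\to C'$ as $k\to\infty$. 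Computing the logarithmic derivative gives, for $k\neq k_0$,
\[
\frac{g_{\veps,r}'(k)}{g_{\veps,r}(k)}=\Psi(k):=\frac{1}{k-k_0}+\sum_{j=1}^{r-1}\frac{1}{k+j}-\frac{r}{k},
\]
and clearing denominators by the positive quantity $k\,|k-k_0|$ shows that the sign of $\Psi(k)$ is controlled by $W(k):=k_0-(k-k_0)\sum_{j=1}^{r-1}\tfrac{j}{k+j}$; precisely $\Psi(k)\,k(k-k_0)=W(k)$.

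The one genuinely nontrivial point—and the step I expect to be the main obstacle before one spots the right change of variables—is to show $W$ has at most one zero on $(k_0,\infty)$ and is positive just to the right of $k_0$. This is handled by the substitution $s=k-k_0$: writing $\tfrac{js}{s+k_0+j}=j-\tfrac{j(k_0+j)}{s+k_0+j}$ yields
\[
W(k_0+s)=\Big(k_0-\tfrac{r(r-1)}{2}\Big)+\sum_{j=1}^{r-1}\frac{j(k_0+j)}{s+k_0+j},
\]
whose right-hand side is manifestly strictly decreasing in $s\ge 0$ (each summand has a positive constant numerator and an increasing positive denominator) and equals $k_0>0$ at $s=0$. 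Hence $W$, and therefore $\Psi$, changes sign at most once on $(k_0,\infty)$, from $+$ to $-$.

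Finally I would assemble the monotonicity picture. On $(0,k_0)$ we have $k(k-k_0)<0$ while $W(k)>k_0>0$, so $\Psi<0$; since also $g_{\veps,r}<0$ there, $g_{\veps,r}'=g_{\veps,r}\Psi>0$ on $(0,k_0)$. At $k_0$, $g_{\veps,r}'(k_0)=C'\prod_{j=1}^{r-1}(k_0+j)/k_0^{r}>0$. On $(k_0,\infty)$, $g_{\veps,r}>0$ and $\Psi$ is positive then eventually nonpositive, so $g_{\veps,r}'$ is positive then eventually nonpositive. Thus there is $k^*\in(k_0,\infty]$ with $g_{\veps,r}'>0$ on $(0,k^*)$ and $g_{\veps,r}'\le 0$ on $(k^*,\infty)$, i.e.\ $g_{\veps,r}$ is strictly increasing on $(0,k^*]$ and nonincreasing on $[k^*,\infty)$—in particular quasiconcave on $(0,\infty)$. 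For $0<a<b$ this gives, for every $k\in[a,b]$: if $k\le k^*$ then $g_{\veps,r}(k)\ge g_{\veps,r}(a)$, and if $k\ge k^*$ then $g_{\veps,r}(k)\ge g_{\veps,r}(b)$; in either case $g_{\veps,r}(k)\ge\min\{g_{\veps,r}(a),g_{\veps,r}(b)\}$, and since $a,b\in[a,b]$ equality is attained, which is exactly the claim.
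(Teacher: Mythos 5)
Your proof is correct and follows essentially the same route as the paper: you reduce the sign of $g_{\veps,r}'$ to a monotone auxiliary function and deduce that $g_{\veps,r}$ is unimodal (increasing, then nonincreasing), so the minimum over $[a,b]$ is attained at an endpoint — indeed your $W$ coincides with the paper's $h_{\veps,r}$ up to the positive factor $r+1-\veps r$. A small bonus of your computation is that it exhibits this auxiliary function as non\emph{increasing} in $k$ (constant when $r=1$), which is the direction the argument actually needs; the paper's assertion that $h_{\veps,r}$ is ``non-decreasing'' appears to be a slip, though a harmless one, since monotonicity in either direction combined with positivity as $k\to 0^+$ yields the same endpoint conclusion.
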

\begin{proof}
Let $x=1/k$ and note that
\begin{align*}
g_{\veps,r}(k)=g_{\veps,r}(1/x)=\frac{\veps^r}{r!} \left( 1- \frac{r \veps}{r+1}- \frac{r^2 \veps x}{r+1}\right)\prod_{i=1}^{r-1} (1+j x).
\end{align*}
The term $\left( 1- \frac{r \veps}{r+1}- \frac{r^2 \veps x}{(r+1)}\right)$ and each of the terms $(1+jx)$ are positive in the range of interest $x \in (0,1/z]$. The function $h(x) = \log g_{\veps, r}(1/x)$ is therefore well-defined on this interval, and it is straightforward to show that it is concave: up to an additive constant it is equal to
\begin{align}
    \log(A-Bx)+\sum_{j=1}^{r-1}\log(1+jx)
\end{align}
where $A = 1-\frac{r\veps}{r+1}$, $B=\frac{r^2\veps}{r+1}$, and each summand has a negative second derivative on $(0,1/z]$.
In particular, $h$ is concave on $[1/b,1/a]$, so the minimum on this interval is attained at an endpoint. Moreover, since the exponential function is strictly increasing, this implies $g_{\veps,r}(1/x)$ attains its minimum at either $1/a$ or $1/b$. Equivalently, $g_{\veps,r}(k)$ attains its minimum at $a$ or $b$, which completes the proof.
\end{proof}

We may now prove the theorem.
\begin{proof}[Proof of \Cref{thm:error}]
It is known that if $\rho \in \rmD(\calH)$ has eigenvalues $p_1\geq p_2\geq\dots\geq p_d\geq 0$ then the projection of $\rho^{\otimes N}$ onto the Schur module indexed by $\lambda \vdash N$ has trace equal to $\dim(\calP_{\lambda}) s_{\lambda}(p_1,\dots, p_d),$ where $s_{\lambda}$ is the Schur polynomial corresponding to the partition $\lambda$, and $\dim(\calP_{\lambda})$ is the dimension of the irreducible representation of $\mathfrak{S}_N$ corresponding to $\lambda$~\cite{Montanaro2016}. (See also~\cite[{Sec.~3}]{haah2017sampleoptimal} or \cite[{Prop.~2.24}]{o2015quantum}.) Thus, for the $(r+1)$-copy rank test, we have
\begin{align}\label{eq:perr1}
    1-\beta(\veps)&=\min\{ \Tr(Q \rho^{\otimes r+1}): \rho\in \rmD(\calH): \rho\ \veps\text{-far from }\mathsf{Rank}(r)\}\\
    &=\min\left\{s_{(1^N)}(p_1,\dots,p_d): p_1\geq p_2\geq\dots\geq p_d\geq 0,\ \sum_{j=1}^d p_j=1,\  \sum_{j\geq r+1}p_j\geq \veps\right\}\\
    &=\min\left\{\sum_{\Vec{x}\in A_{r+1,d}}\prod_{j=1}^{r+1}p_{x_j}: p_1\geq p_2\geq\dots\geq p_d\geq 0,\ \sum_{j=1}^d p_j=1, \ \sum_{j\geq r+1}p_j\geq \veps\right\},\label{eq:schur_poly_explicit_opt}
\end{align}
where we define
\begin{align*}
    A_{r+1,d}:= \{(x_1,\dots,x_{r+1})\in [d]^{r+1}: x_1 < x_2<\dots<x_{r+1}\}.
\end{align*}
Using the Schur-Ostrowski criterion (see e.g.~\cite{peajcariaac1992convex}), it is straightforward to see that $s_{(1^{r+1})}$ is Schur-concave, i.e., that $s_{(1^{r+1})}(\Vec{p})\leq s_{(1^{r+1})}(\Vec{q})$ if $\Vec{q}\prec \Vec{p}$ ($\Vec{p}$ majorizes $\Vec{q}$) for any probability vectors $\Vec{p},\Vec{q}$ in the feasible set of the optimization in \cref{eq:schur_poly_explicit_opt}. Hence, the minimum value of the function is attained for inputs $\Vec{p}$ which are maximal with respect to the partial order $\prec$ on probability distributions, and satisfy the constraints of the optimization. For example, if $r=1$ and $\veps\leq 1/2$, it holds that $\Vec{p}=(1-\veps,\veps,0,\dots,0)$ majorizes every other distribution in the feasible set, and therefore $\beta(\veps)=1-\veps(1-\veps)$. The other cases are more involved.

\Cref{claim:majorization_claim} then implies the following expression for the error probability:
\begin{align*}
    1-\beta(\veps) 
                  &= \min \left\{g_k (\veps,r,q): \left\lfloor\frac{r\veps}{1-\veps}\right\rfloor\leq k\leq d-r,\ \frac{\veps}{k+1}<q\leq \min\left\{\frac{\veps}{k},\frac{1-\veps}{r}\right\}\right\}.
\end{align*}
The first line follows from restricting the optimization in \cref{eq:schur_poly_explicit_opt} to probability vectors of the form in \Cref{claim:majorization_claim}, while the second line uses the definition of $g_k(\cdot)$ in \cref{eq:gk_defn} as well as the observation that, for each integer $k$, the range of valid real numbers $q$ for which $k=\floor{\veps/q}$ is precisely $\frac{\veps}{k+1} < q \leq \min\{\frac{\veps}{k},\frac{1-\veps}{r}\}$.
Next, observe that
\begin{align*}
    \frac{1}{q^{r-2}}\frac{\partial g_k}{\partial q}&=\binom{r+k-1}{r+1}(r+1)q^2+\binom{r+k-1}{r}\left(r-(r+1)(r+k-1)q\right)q\nonumber\\
	&\quad \quad+\binom{r+k-1}{r-1}\Bigg(-(r-1)(\veps-q k)q\nonumber\\
	&\quad\quad\quad\quad\quad\quad\quad -(1-\veps-(r-1)q)qk+(r-1)(1-\veps-(r-1)q)(\veps-qk)\Bigg).
\end{align*}
So for $\veps$ and $r$ fixed, $\frac{\partial g_k}{\partial q}(\veps,r,q)=0$ has at most $2$ non-zero solutions $q_1, q_2$. Note that
\begin{align*}
    \left(\frac{k}{\veps}\right)^{r-2}\frac{\partial g_k}{\partial q}(\veps,r,\veps/k)=\frac{\veps}{k^2} \Bigg( &\binom{r+k-1}{r+1} \veps (r+1)
	+\binom{r+k-1}{r} (kr-\veps (r+1)(r+k-1))\nonumber\\
	&\quad\quad+\binom{r+k-1}{r-1}k(\veps(r+k-1)-k)\Bigg),
\end{align*}
which is easily seen to be non-positive whenever $r$ and $k$ are positive integers. Similarly,
\begin{align*}
    \left(\frac{k+1}{\veps}\right)^{r}\frac{\partial g_k}{\partial q}(\veps,r,\veps/(k+1))=\frac{1}{\veps}\Bigg(&\binom{r+k-1}{r+1}\veps(r+1)\nonumber\\
	&\quad+\binom{r+k-1}{r} (-\veps (r+1)(r+k-1)+r(k+1))\nonumber\\
	  &\quad+\binom{r+k-1}{r-1}\big(\veps(k^2+k-r^2+1)-(k+1)(k-r+1)\big)\Bigg).
\end{align*}
This expression is non-negative whenever $\frac{\veps}{k+1}<\frac{1-\veps}{r}$, which holds in the regime we are interested in. Thus, as $q$ ranges from $\frac{\veps}{k+1}$ to $\frac{\veps}{k}$, the function $g_k(\veps,r,q)$ starts with a non-negative slope, ends with non-positive slope, and has at most two points where the derivative is zero. It follows that the minimum value of $g_k(\veps,r,q)$ as $q$ ranges from $\frac{\veps}{k+1}$ to $\min\{\frac{\veps}{k},\frac{1-\veps}{r}\}$ occurs at one of the two endpoints. This proves that
\begin{align*}
   1- \beta(\veps)=\min\Bigg\{g_{{\floor{z}}}&\left(\veps, r,\frac{1-\veps}{r}\right), g_{\ceil{z}}\left(\veps,r,\frac{\veps}{\ceil{z}}\right),\\&g_{\ceil{z}+1}\left(\veps,r,\frac{\veps}{\ceil{z}+1}\right),
     \dots, g_{d-r}\left(\veps,r,\frac{\veps}{d-r}\right)\Bigg\}.
\end{align*}
From here, the equality
\begin{align*}
    1-\beta(\veps)=\min\Bigg\{&g_{{\floor{z}}}\left(\veps, r,\frac{1-\veps}{r}\right), g_{\ceil{z}}\left(\veps,r,\frac{\veps}{\ceil{z}}\right), g_{d-r}\left(\veps,r,\frac{\veps}{d-r}\right)\Bigg\}.
\end{align*}
follows from \Cref{claim:final}.
\end{proof}

\section*{Acknowledgments}
We thank Gero Friesecke for pointing us to Theorem~11.58 in~\cite{Hackbusch2019}, and Tim Seynnaeve and Claudia De Lazzari for valuable discussions. AL thanks Byron Chin, Norah Tan, Aram Harrow, and John Wright for helpful comments on an earlier draft of this work. AL is supported by the U.S. Department of Energy, Office of Science, National Quantum Information Science Research Centers Quantum Systems Accelerator. Additional support is acknowledged from the NSF for AL (grant PHY-1818914). BL acknowledges that this material is based upon work supported by the National Science Foundation under Award No. DMS-2202782.

\appendix
\section{Deferred proofs}
\subsection{Optimal measurements}\label{sec:optimal_measurement_proofs}
In this section we prove \Cref{thm:purification,thm:PC_optimal_sr_test}. Throughout this section, we make use of notation established in \Cref{sec:schur_weyl}. We first recall a basic fact from representation theory, which is essentially Schur's lemma. Consider a compact group $G$ with a finite-dimensional unitary representation $(V,r)$ and
let $\hat{G}$ be a set of labels for a complete set of inequivalent
irreducible representations of $\hat{G}$, denoted $\{(V^{(\lambda)},r^{(\lambda)}): \lambda\in\hat{G}\}$. As a group
representation, $V$ has a canonical (orthogonal) decomposition of the form
$V\cong \bigoplus_{\lambda\in \hat{G}}\mathbb{C}^{k_\lambda}\otimes V^{(\lambda)}$. Denote this isomorphism by $T$, so that $Tr(g)T^{-1}=\bigoplus_{\lambda\in \hat{G}} \mathds{1}_{k_{\lambda}}\otimes r^{(\lambda)}(g)$ for every $g\in G$.
\theoremstyle{plain}
\begin{fact}\label{fact:commuting_implies_blocks}
   Let $M\in \rmL(V)$ be a linear operator such that $r(g)$ commutes with $M$ for every
   $g\in G$. Then
   $T M T^{-1}=\bigoplus_{\lambda\in \hat{G}} M^{(\lambda)}\otimes \mathds{1}_{V^{(\lambda)}}$
   where for each $\lambda\in \hat{G}$ we have $M^{(\lambda)}\in  \rmL(\mathbb{C}^{m_{\lambda}})$.
\end{fact}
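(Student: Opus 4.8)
The plan is to pass to the operator $\tilde M := T M T^{-1} \in \rmL(V')$ where $V' := \bigoplus_{\lambda \in \hat G}\mathbb{C}^{k_\lambda}\otimes V^{(\lambda)}$, and to observe that the hypothesis is equivalent to $\tilde M$ commuting with $\tilde r(g):=\bigoplus_{\lambda}\mathds 1_{k_\lambda}\otimes r^{(\lambda)}(g)$ for every $g\in G$; in other words $\tilde M$ lies in the commutant $\End_G(V')$ of the representation $\tilde r$. It then suffices to compute this commutant and check that, transported back by $T$, it consists exactly of the operators of the claimed block form. I would carry this out using the two standard consequences of Schur's lemma for a compact group $G$: for irreps $(W_1,\sigma_1),(W_2,\sigma_2)$, the space $\Hom_G(W_1,W_2)$ of intertwiners is $\{0\}$ if $W_1\not\cong W_2$, and is one-dimensional, spanned by any fixed isomorphism, if $W_1\cong W_2$.

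First I would fix, for each $\lambda\in\hat G$, an orthonormal basis $e_1,\dots,e_{k_\lambda}$ of $\mathbb{C}^{k_\lambda}$, so that $\mathbb{C}^{k_\lambda}\otimes V^{(\lambda)}=\bigoplus_{i=1}^{k_\lambda}V^{(\lambda)}_i$ with $V^{(\lambda)}_i:=e_i\otimes V^{(\lambda)}$ each a copy of the irrep $V^{(\lambda)}$. Writing $\tilde M$ as a block matrix with respect to the decomposition $V'=\bigoplus_\lambda\bigoplus_i V^{(\lambda)}_i$, the commutation relation forces every block $V^{(\mu)}_i\to V^{(\lambda)}_j$ to be an intertwiner of $r^{(\mu)}$ with $r^{(\lambda)}$. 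By the first case of Schur's lemma, all blocks with $\lambda\neq\mu$ vanish, so $\tilde M$ preserves each isotypic component and decomposes as $\tilde M=\bigoplus_\lambda \tilde M_\lambda$ with $\tilde M_\lambda\in\rmL(\mathbb{C}^{k_\lambda}\otimes V^{(\lambda)})$. By the second case of Schur's lemma, the block of $\tilde M_\lambda$ sending $V^{(\lambda)}_i=e_i\otimes V^{(\lambda)}$ to $V^{(\lambda)}_j=e_j\otimes V^{(\lambda)}$ equals a scalar $M^{(\lambda)}_{ji}\in\mathbb{C}$ times the canonical identification $e_i\otimes v\mapsto e_j\otimes v$. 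Collecting these scalars into a matrix $M^{(\lambda)}:=(M^{(\lambda)}_{ji})\in\rmL(\mathbb{C}^{k_\lambda})$, one reads off directly that $\tilde M_\lambda=M^{(\lambda)}\otimes\mathds 1_{V^{(\lambda)}}$, hence $T M T^{-1}=\bigoplus_{\lambda\in\hat G}M^{(\lambda)}\otimes\mathds 1_{V^{(\lambda)}}$, which is the claim (with the $m_\lambda$ of the statement being $k_\lambda$).

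I do not expect a serious obstacle here; the only point requiring care is purely organizational. One should record once and for all the isomorphisms $V^{(\lambda)}\xrightarrow{\ \sim\ }V^{(\lambda)}_i$ used to speak of the ``same'' copy of an irrep, so that the scalars $M^{(\lambda)}_{ji}$ are unambiguously defined and genuinely assemble into an operator on the multiplicity space $\mathbb{C}^{k_\lambda}$ rather than a mere table of numbers; this is exactly the content of the natural isomorphism $\End_G\big((V^{(\lambda)})^{\oplus k_\lambda}\big)\cong M_{k_\lambda}\big(\End_G(V^{(\lambda)})\big)\cong M_{k_\lambda}(\mathbb{C})$. Unitarity of $r$ plays no essential role beyond guaranteeing complete reducibility (automatic for compact $G$) and allowing $T$ to be taken unitary so that the isotypic decomposition is orthogonal; the argument itself uses only the abstract direct-sum decomposition into isotypic components.
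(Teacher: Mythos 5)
Your proposal is correct and follows essentially the same route as the paper's proof: conjugate by $T$, observe the transported operator commutes with $\bigoplus_\lambda \mathds{1}_{k_\lambda}\otimes r^{(\lambda)}(g)$, reduce to matrix-entry blocks between copies of irreps, and apply Schur's lemma to kill the cross-isotypic blocks and force the within-isotypic blocks to be scalars, which assemble into $M^{(\lambda)}\otimes \mathds{1}_{V^{(\lambda)}}$. Your remark that the multiplicity space should be $\mathbb{C}^{k_\lambda}$ (the statement's $m_\lambda$) matches the paper's own notation in the surrounding text.
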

\begin{proof}
	Let
	$\widetilde{M}:=TMT^{-1}\in\rmL(\bigoplus_{\lambda\in\hat{G}}\mathbb{C}^{k_\lambda}\otimes V^{(\lambda)})$.
	From the hypothesis of the fact, we have that $\widetilde{M}$ commutes with
	$\bigoplus_{\lambda\in \hat{G}} \mathds{1}_{k_{\lambda}}\otimes r^{(\lambda)}(g)$ for every
	$g\in G$. If we let $\widetilde{M}^{\lambda \lambda^\prime}: V^{(\lambda^\prime)}\to V^{(\lambda)}$
	denote the action of $\widetilde{M}$ on $V^{(\lambda^\prime)}$, projected  onto $V^{(\lambda)}$, then
	we have the matrix equality
	\begin{align*}
		\widetilde{M}^{\lambda\lambda^\prime} (\mathds{1}_{k_{\lambda^\prime}}\otimes r^{(\lambda^\prime)}(g))
		&= (\mathds{1}_{k_\lambda}\otimes r^{(\lambda)}(g))\widetilde{M}^{\lambda\lambda^\prime}
	\end{align*}
	for every $g\in G$ and $\lambda,\lambda^\prime\in \hat{G}$. This in turn implies that
	for every $(i,j)\in [k_\lambda]\times [k_{\lambda^\prime}]$ and $g\in G$ we have
	\begin{align*}
		\widetilde{M}^{\lambda\lambda^\prime}_{ij}r^{(\lambda^\prime)}(g)
		&= r^{(\lambda)}(g)\widetilde{M}^{\lambda\lambda^\prime}_{ij}
	\end{align*}
	where we have defined
	$\widetilde{M}^{\lambda\lambda^\prime}_{ij}:=(\bra{i}\otimes \mathds{1}_{V^{(\lambda)}})\widetilde{M}^{\lambda\lambda^\prime}(\ket{j}\otimes \mathds{1}_{V^{(\lambda^\prime)}})$.
	By Schur's Lemma (see, e.g., Lemma~4.1.4 in \cite{goodman2009symmetry}), for every $(i,j)\in[k_{\lambda}]\times [k_{\lambda^\prime}]$ and $\lambda\in\hat{G}$ there exists a complex number $\alpha_{ij}^{(\lambda)}\in\mathbb{C}$ such that for any $\lambda^\prime$ it holds that ${\widetilde{M}^{\lambda\lambda^\prime}_{ij}=\delta_{\lambda\lambda^\prime}\alpha_{ij}^{(\lambda)}\mathds{1}_{V^{(\lambda)}}}$. It follows that $\widetilde{M}$ is of the desired form with $(M^{(\lambda)})_{ij}=\alpha^{(\lambda)}_{ij}$.
\end{proof}
In words, $M$ block-diagonalizes in an orthonormal basis partially labelled by the irreps, so that it acts within each isotypic space and non-trivially only on the multiplicity space in the tensor product. This fact allows one to derive optimal measurements (defined in \Cref{sec:property_testing_defs}) for certain symmetric properties. Below, we focus on the case of pure states, though the logic could be straightforwardly applied in the case of mixed state testing as well.
\begin{proposition}\label{prop:optimal_meas}
    Let $(\calH,r)$ be a unitary representation of a compact group $G$ and $\calP\subset \rmP(\calH)$ be a pure state property such that $r(g)\calP r(g)^\dag = \calP$ for all $g\in G$. Fix a positive integer $N$ and suppose
    \begin{align}\label{eq:symmetric_decomp}
        \quad \vee^N(\calH)&\cong \bigoplus_{\lambda} \calH_{\lambda}
    \end{align}
    is a multiplicity-free decomposition into $G$-irreps for the tensor power representation of $G$ on the symmetric subspace. An optimal measurement for $\calP$ is given by the projective measurement $\{\Pi_{\lambda}\}_{\lambda}$, where $\Pi_{\lambda}$ is the projector onto the $\lambda^{\text{th}}$ subspace in \cref{eq:symmetric_decomp}, for each $\lambda$.
\end{proposition}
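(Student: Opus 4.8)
The plan is to run the standard symmetrization argument in the style of \cite[Lemma~20]{Montanaro2016}: starting from an arbitrary test, average it over the $G$-action, use multiplicity-freeness to conclude that the averaged operator lies in the span of the $\Pi_\lambda$, and check along the way that averaging never hurts the completeness or soundness.

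Concretely, I would fix an arbitrary $N$-copy test $Q$ with CS parameters $(a,b)$. Since $\rho^{\otimes N}\in\vee^N(\calH)$ for every $\rho\in\rmP(\calH)$, replacing $Q$ by $\PSym Q\PSym$ changes neither $a$ nor $b$, so we may assume $Q$ is supported on the symmetric subspace. Next define the averaged operator
\[
\bar Q := \int_G r(g)^{\otimes N}\,Q\,(r(g)^\dagger)^{\otimes N}\,\dd g,
\]
with $\dd g$ the Haar probability measure; this is again an operator with $0\preceq\bar Q\preceq\mathds{1}$, it is still supported on $\vee^N(\calH)$ because $\PSym$ commutes with each $r(g)^{\otimes N}$, and by left-invariance of Haar measure it commutes with $r(h)^{\otimes N}$ for every $h\in G$. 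I would then verify the CS parameters $(a',b')$ of $\bar Q$ satisfy $a'\geq a$ and $b'\leq b$: for $\rho\in\calP$ we have $\Tr(\bar Q\,\rho^{\otimes N})=\int_G\Tr\big(Q\,(r(g)\rho r(g)^\dagger)^{\otimes N}\big)\,\dd g$, and each integrand is at least $a$ because $r(g)\rho r(g)^\dagger\in\calP$ by hypothesis; and if $\rho$ is $\veps$-far from $\calP$ then so is $r(g)\rho r(g)^\dagger$ for every $g$ (trace distance is unitarily invariant and $\calP$ is $G$-invariant), so the same computation gives $\Tr(\bar Q\,\rho^{\otimes N})\leq b$.

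Finally I would invoke multiplicity-freeness. Restricted to $\vee^N(\calH)\cong\bigoplus_\lambda\calH_\lambda$, the operator $\bar Q$ commutes with the tensor-power representation of $G$, whose multiplicity spaces are by assumption all one-dimensional. By \Cref{fact:commuting_implies_blocks} (Schur's lemma), $\bar Q$ therefore acts as a scalar $\alpha_\lambda$ on each $\calH_\lambda$, i.e.\ $\bar Q=\sum_\lambda\alpha_\lambda\Pi_\lambda$ on $\vee^N(\calH)$, and $0\preceq\bar Q\preceq\mathds{1}$ forces $\alpha_\lambda\in[0,1]$. Hence on every state $\rho^{\otimes N}$ the test $\Pacc:=\sum_\lambda\alpha_\lambda\Pi_\lambda$ — obtained by performing $\{\Pi_\lambda\}_\lambda$ and accepting with probability $\alpha_\lambda$ on outcome $\lambda$ — behaves identically to $\bar Q$, so it has CS parameters $(a',b')$ with $a'\geq a$, $b'\leq b$. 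This is precisely the definition of $\{\Pi_\lambda\}_\lambda$ being an optimal measurement for $\calP$.

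I do not expect a serious obstacle here; the routine points are that the operator-valued integral over the compact group is well defined and that $0\preceq\bar Q\preceq\mathds{1}$. The only genuinely load-bearing ingredients are the $G$-invariance of $\calP$ (so completeness cannot drop), the unitary invariance of trace distance (so soundness cannot worsen under averaging), and the multiplicity-freeness hypothesis (so the averaged operator is forced into the span of the $\Pi_\lambda$); the one place to be careful is to perform the averaging with the full tensor-power representation on $\calH^{\otimes N}$ and only then restrict to $\vee^N(\calH)$, so that $\bar Q$ genuinely commutes with the relevant representation before \Cref{fact:commuting_implies_blocks} is applied.
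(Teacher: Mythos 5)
Your proposal is correct and follows essentially the same route as the paper's proof: project onto the symmetric subspace, average the test over the Haar measure (using $G$-invariance of $\calP$ and unitary invariance of trace distance to show completeness and soundness do not degrade), and then apply \Cref{fact:commuting_implies_blocks} together with multiplicity-freeness to conclude the averaged operator is $\sum_\lambda \alpha_\lambda \Pi_\lambda$ with $\alpha_\lambda \in [0,1]$. No gaps; the only cosmetic difference is that you conjugate by $r(g)$ where the paper conjugates by $r(g^{-1})$, which is immaterial by invariance of the Haar measure.
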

\begin{proof}
    Let $\veps\in (0,1]$ and $\Qacc\in \rmL(\calH^{\otimes N})$ be any $N$-copy test for $\calP$ which has CS parameters $(a,b)$ given distance $\veps$. We will demonstrate the existence of a test $\Pacc=\sum_{\lambda}\alpha_\lambda\Pi_\lambda$ for some $\alpha_\lambda\in [0,1]$ which has CS parameters $a^\prime \geq a$ and $b^\prime \leq b$. Firstly, letting $\PSym\colon \calH^{\otimes N}\to\vee^N(\calH)$ denote the projector onto the symmetric subspace, we clearly have $\Tr(\Qacc \psi^{\otimes N})=\Tr(\PSym \Qacc \PSym \psi^{\otimes N})$ for any pure state $\psi\in \rmP(\calH)$. Hence, using the assumption that $r(g)\calP r(g)^\dag = \calP$ for every $g\in G$ we have
    \begin{align*}
    a
    &\leq \Tr(\PSym\Qacc\PSym (r(g)\psi r(g)^\dag)^{\otimes N}).
    \end{align*}
    for any $\psi\in\calP$ and $g\in G$. Let $\Qacc^\prime:=\PSym\Qacc \PSym$ and $\mu(\cdot)$ denote the Haar measure on the group $G$. The above implies
    \begin{align*}
        a &\leq \inf_{\psi\in \calP} \int_{g\in G}\Tr(r(g^{-1})^{\otimes N} \Qacc^\prime r(g)^{\otimes N}\psi^{\otimes N})\rmd \mu (g)\\
        &= \inf_{\psi\in\calP}\Tr(\Pacc \psi^{\otimes N})
    \end{align*}
    where $\Pacc := \int_{g\in G} (r(g^{-1}))^{\otimes N}\Qacc^\prime r(g)^{\otimes N}\rmd \mu(g)\in \rmL(\vee^N(\calH))$. By the invariance of the Haar measure under left- or right- multiplication by a group element, it follows $\Pacc$ commutes with the action $r(g)^{\otimes N}$ for any $g\in G$ and thus, by \Cref{fact:commuting_implies_blocks}, it holds that $\Pacc$ is of the desired form. (In this case, the operators acting on the multiplicity spaces are just one-dimensional, i.e., scalars.) Next, note that if $\psi\in\rmP(\calH)$ is $\veps$-far from $\calP$ then $r(g)\psi r(g)^\dag$ is $\veps$-far from $\calP$ for any $g\in G$, by the unitary invariance of the Schatten 1-norm. This implies that
    \begin{align*}
        b\geq \Tr(\Qacc^\prime (r(g)\psi r(g)^\dag)^{\otimes N})
    \end{align*}
    for any $g\in G$ and $\psi$ which is $\veps$-far from $\calP$. Then a similar reasoning to the above establishes that $b\geq \sup_{\psi\ \veps\textnormal{-far from}\ \calP}\Tr(\Pacc\psi^{\otimes N})$.
\end{proof}
The final ingredient in the proof of \Cref{thm:purification} is the following.
\begin{lemma}\label{lem:double_sw_projectors}
    Let $\calH_A$ and $\calH_B$ be finite-dimensional Hilbert spaces and $\Pi_{\textnormal{Sym}}\in\rmL(\calH_A^{\otimes N}\otimes \calH_B^{\otimes N})$ be the projector onto the symmetric subspace $\vee^N(\calH_A\otimes \calH_B)$. It holds that
    \ba\label{eq:sym_subspace_proj_result}
        (P^{(A)}_{\lambda_1}\otimes P^{(B)}_{\lambda_2})\Pi_{\textnormal{Sym}} &= \delta_{\lambda_1\lambda_2}\Pi_{\lambda_1}\quad\textnormal{for every}\ \lambda_1,\lambda_2\vdash N, \ell(\lambda_1)\leq d_A,\ell(\lambda_2)\leq d_B
    \ea
    where, for each $j\in \{A,B\}$, we let $P^{(j)}_{\lambda}$ denote the projector onto the $\lambda^\text{th}$ $(\SG_N\times \rmU(\calH_j))$-irrep in $\calH_j^{\otimes N}$ and $\Pi_\lambda\in\rmL(\calH_A^{\otimes N}\otimes\calH_B^{\otimes N})$ is a projector onto the $(\rmU(\calH_A)\times \rmU(\calH_B))$-irrep isomorphic to $\calQ_{\lambda}(\calH_A)\otimes\calQ_{\lambda}(\calH_B)$ within $\vee^N(\calH_A^{\otimes N}\otimes\calH_B^{\otimes N})$.
\end{lemma}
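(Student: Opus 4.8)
The plan is to decompose both $\calH_A^{\otimes N}$ and $\calH_B^{\otimes N}$ by Schur--Weyl duality, take the tensor product, and then isolate the piece fixed by the diagonal permutation action, which is exactly the symmetric subspace $\vee^N(\calH_A\otimes \calH_B)$. First I would record that $\Pi_{\textnormal{Sym}} = \tfrac{1}{N!}\sum_{\pi\in\SG_N} W^{(A)}_\pi \otimes W^{(B)}_\pi$, where $W^{(A)}_\pi$ and $W^{(B)}_\pi$ are the permutation operators on $\calH_A^{\otimes N}$ and $\calH_B^{\otimes N}$ and the identification is the canonical one $(\calH_A\otimes\calH_B)^{\otimes N}\cong \calH_A^{\otimes N}\otimes\calH_B^{\otimes N}$. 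Since $P^{(j)}_\lambda$ commutes with $W^{(j)}_\pi$ for every $\pi\in\SG_N$ and $j\in\{A,B\}$, the operator $P^{(A)}_{\lambda_1}\otimes P^{(B)}_{\lambda_2}$ commutes with $\Pi_{\textnormal{Sym}}$, so $(P^{(A)}_{\lambda_1}\otimes P^{(B)}_{\lambda_2})\Pi_{\textnormal{Sym}}$ is itself an orthogonal projector, namely onto $\im(P^{(A)}_{\lambda_1}\otimes P^{(B)}_{\lambda_2})\cap \vee^N(\calH_A\otimes\calH_B)$. Everything else is the identification of this intersection.

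Next I would apply Schur--Weyl duality on each factor to identify $\im(P^{(A)}_{\lambda_1}\otimes P^{(B)}_{\lambda_2})$ with $\calP_{\lambda_1}\otimes \calQ_{\lambda_1}(\calH_A)\otimes\calP_{\lambda_2}\otimes\calQ_{\lambda_2}(\calH_B)$ as a representation of $\SG_N\times\SG_N\times\rmU(\calH_A)\times\rmU(\calH_B)$, the two copies of $\SG_N$ acting on the $A$- and $B$-tensor factors respectively. Under this identification the diagonal permutation $\pi\mapsto W^{(A)}_\pi\otimes W^{(B)}_\pi$ acts only on $\calP_{\lambda_1}\otimes\calP_{\lambda_2}$ (trivially on the $\calQ$ factors), so $\Pi_{\textnormal{Sym}}$ restricted to this image is the projector onto $(\calP_{\lambda_1}\otimes\calP_{\lambda_2})^{\SG_N}\otimes\calQ_{\lambda_1}(\calH_A)\otimes\calQ_{\lambda_2}(\calH_B)$, where $(\calP_{\lambda_1}\otimes\calP_{\lambda_2})^{\SG_N}$ is the subspace fixed by the diagonal action. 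Then I would compute $\dim(\calP_{\lambda_1}\otimes\calP_{\lambda_2})^{\SG_N}$: it equals the multiplicity of the trivial $\SG_N$-representation in $\calP_{\lambda_1}\otimes\calP_{\lambda_2}$, and since the irreps of $\SG_N$ are self-dual (their characters are real-valued), this is $\langle \chi_{\lambda_1}\chi_{\lambda_2},1\rangle = \langle \chi_{\lambda_1},\chi_{\lambda_2}\rangle=\delta_{\lambda_1\lambda_2}$ by orthogonality of characters (equivalently, Schur's lemma applied to $\Hom_{\SG_N}(\calP_{\lambda_1},\calP_{\lambda_2})$).

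Combining these: if $\lambda_1\neq\lambda_2$ the intersection is $\{0\}$, so $(P^{(A)}_{\lambda_1}\otimes P^{(B)}_{\lambda_2})\Pi_{\textnormal{Sym}}=0$; and if $\lambda_1=\lambda_2=\lambda$ the intersection is a single copy of $\calQ_\lambda(\calH_A)\otimes\calQ_\lambda(\calH_B)$ inside $\vee^N(\calH_A\otimes\calH_B)$ (tensored with the one-dimensional space $(\calP_\lambda\otimes\calP_\lambda)^{\SG_N}$, spanned by the image of the $\SG_N$-equivariant identity on $\calP_\lambda$), which is precisely $\im(\Pi_\lambda)$, giving $(P^{(A)}_\lambda\otimes P^{(B)}_\lambda)\Pi_{\textnormal{Sym}}=\Pi_\lambda$. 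I would also remark that this same computation simultaneously yields the isomorphism in~\cref{eq:schur_weyl_bipartite_symmetric} and shows the decomposition is multiplicity-free (distinct $\lambda$ give non-isomorphic $(\rmU(\calH_A)\times\rmU(\calH_B))$-irreps), so that $\Pi_\lambda$ is well-defined. I expect the only real point requiring care to be the second step: checking that the diagonal permutation action used to define $\vee^N(\calH_A\otimes\calH_B)$ matches $\pi\mapsto W^{(A)}_\pi\otimes W^{(B)}_\pi$ under the canonical identification, and applying the self-duality of symmetric-group representations correctly; the remainder is routine bookkeeping with Schur--Weyl duality and Schur's lemma.
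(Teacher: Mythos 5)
Your proposal is correct and follows essentially the same route as the paper's proof: decompose $\calH_A^{\otimes N}\otimes\calH_B^{\otimes N}$ via Schur--Weyl duality on each factor, observe that the two projectors are block-diagonal (equivalently, commute) so their product projects onto the intersection of images, and identify that intersection by computing the multiplicity of the trivial diagonal-$\SG_N$ representation in $\calP_{\lambda_1}\otimes\calP_{\lambda_2}$, which is $\delta_{\lambda_1\lambda_2}$. Your use of character orthogonality plus self-duality is just a reformulation of the paper's Schur's-lemma computation of $\dim\rmL(\calP_{\lambda_1},\calP_{\lambda_2})^{\SG_N}$ (i.e.\ of the Kronecker coefficient $k_{(N)}(\lambda_1,\lambda_2)$), and your closing remark that the same computation yields \cref{eq:schur_weyl_bipartite_symmetric} and multiplicity-freeness matches the paper as well.
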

\begin{proof}
    The space $\calH_A^{\otimes N}\otimes \calH_B^{\otimes N}$ is a representation of $\SG_N\times\rmU(\calH_A)\times\rmU(\calH_B)$ via the action $\ket{v}\mapsto (W_{\pi}\otimes W_{\pi})(U_A^{\otimes N}\otimes U_B^{\otimes N})\ket{v}$ for any $U_A\in \rmU(\calH_A)$, $U_B\in\rmU(\calH_B)$, and $\pi\in\SG_N$. By Schur-Weyl duality,
    \begin{align}
        \calH_A^{\otimes N}\otimes \calH_B^{\otimes N}&\cong\bigoplus_{\lambda_1^\prime,\lambda_2^\prime} \calP_{\lambda_1^\prime}\otimes \calP_{\lambda_2^\prime}\otimes \calQ_{\lambda_1^\prime}(\calH_A)\otimes \calQ_{\lambda_2^\prime}(\calH_B)\nonumber\\
        &\cong \bigoplus_{\mu}\bigoplus_{\lambda_1^\prime,\lambda_2^\prime}\mathbb{C}^{k_{\mu}(\lambda_1^\prime,\lambda_2^\prime)} \otimes \calP_\mu \otimes \calQ_{\lambda_1^\prime}(\calH_A)\otimes \calQ_{\lambda_2^\prime}(\calH_B)\label{eq:this}
    \end{align}
    as representations of $\SG_N\times \rmU(\calH_A)\times\rmU(\calH_B)$, where in the second line we decomposed $\calP_{\lambda_1^\prime}\otimes \calP_{\lambda_2^\prime}$ into irreps of $\mathfrak{S}_N$, and $k_{\mu}(\lambda_1^\prime,\lambda_2^\prime)$ are the Kronecker coefficients. Under this isomorphism, the projection onto the symmetric subspace is the projection onto terms in this direct sum where $\mu=(N)$, while the projection $P^{(A)}_{\lambda_1}\otimes P^{(B)}_{\lambda_2}$ is onto the terms in the direct sum with indices $\lambda_1^\prime=\lambda_1$, and $\lambda_2^\prime=\lambda_2$. Since these two projections are both block diagonal with respect to the orthogonal direct sum in~\cref{eq:this}, their product is the projection onto the intersection of their images, given by the subspace of $\calH_A^{\otimes N}\otimes \calH_B^{\otimes N}$ isomorphic to
    \begin{align}\label{eq:surep}
        \mathbb{C}^{k_{(N)}(\lambda_1,\lambda_2)}\otimes \calP_{(N)}\otimes \calQ_{\lambda_1}(\calH_A)\otimes \calQ_{\lambda_2}(\calH_B).
    \end{align}
We first note that $k_{(N)}(\lambda_1,\lambda_2)=\delta_{\lambda_1,\lambda_2}$. Let $\vee^N(\calP_{\lambda_1}\otimes\calP_{\lambda_2})$ denote the trivial representation of $\SG_N$ within $\calP_{\lambda_1}\otimes \calP_{\lambda_2}$, and let $\rmL(\calP_{\lambda_1},\calP_{\lambda_2})^{\SG_N}$ denote the set of linear maps in $\rmL(\calP_{\lambda_1},\calP_{\lambda_2})$ which commute with the action of $\SG_N$. By Schur's Lemma, we have $\dim(\rmL(\calP_{\lambda_1},\calP_{\lambda_2})^{\SG_N})=\delta_{\lambda_1,\lambda_2}$. Since $\vee^N(\calP_{\lambda_1}\otimes \calP_{\lambda_2})\cong \rmL(\calP_{\lambda_1},\calP_{\lambda_2})^{\SG_N}$ as representations of $\SG_N$, this shows that $k_{(N)}(\lambda_1,\lambda_2)=\delta_{\lambda_1,\lambda_2}$. Since $\calP_{(N)}$ is one-dimensional, the subspace in~\cref{eq:surep} is isomorphic to $\delta_{\lambda_1,\lambda_2} \calQ_{\lambda_1}(\calH_A)\otimes \calQ_{\lambda_2}(\calH_B)$ as a representation of $\rmU(\calH_A)\times \rmU(\calH_B)$. This completes the proof.
\end{proof}
The above proof also shows
\begin{align*}
    \vee^N(\calH_A\otimes \calH_B)&\cong \bigoplus_\lambda \calQ_{\lambda}(\calH_A)\otimes \calQ_{\lambda}(\calH_B)\tag{\ref{eq:schur_weyl_bipartite_symmetric}}
\end{align*}
as representations of $\rmU(\calH_A)\times \rmU(\calH_B)$, as mentioned in \Cref{sec:schur_weyl}. We can now prove \Cref{thm:purification}, which we restate below for convenience.
\newtheorem*{Tpur}{\Cref{thm:purification}}
\begin{Tpur}
    Let $\calH_A$, $\calH_B$ be finite-dimensional Hilbert spaces and $\calP \subset \rmP(\calH_A\otimes \calH_B)$ be a property of pure states which is invariant under local unitary transformations, as in \cref{eq:statement_of_symmetry}. Weak Schur Sampling performed locally, either on $\calH_A^{\otimes N}$ or $\calH_B^{\otimes N}$, followed by classical post-processing, is an optimal measurement for $\calP$.
\end{Tpur}
\begin{proof}
    By \Cref{prop:optimal_meas} and \cref{eq:schur_weyl_bipartite_symmetric}, without loss of generality a given test is of the form $\Pacc = \sum_{\lambda}\alpha_\lambda \Pi_{\lambda}$ for some $\alpha_{\lambda}\in [0,1]$, where $\Pi_{\lambda}$ is the projection onto the $\lambda^{\text{th}}$ term in the direct sum in \cref{eq:schur_weyl_bipartite_symmetric}. Then setting $R^{(j)}_{\textnormal{accept}}:=\sum_\lambda \alpha_\lambda P^{(j)}_\lambda$ for either $j=A$ or $j=B$, where $P^{(j)}_\lambda$ is as in \Cref{lem:double_sw_projectors}, we have
    \begin{align*}
        (R^{(A)}_{\textnormal{accept}}\otimes \mathds{1}_B^{\otimes N})\Pi_{\textnormal{Sym}} &= (\sum_{\lambda\vdash N}\alpha_{\lambda}P^{(A)}_{\lambda}\otimes \mathds{1}_B^{\otimes N})\Pi_{\textnormal{Sym}}&\\
        &= \sum_{\lambda,\lambda^\prime} \alpha_{\lambda}(P^{(A)}_{\lambda}\otimes P^{(B)}_{\lambda^\prime})\Pi_{\textnormal{Sym}}&\\
        &=\sum_{\lambda}\alpha_{\lambda}\Pi_{\lambda}& (\textnormal{by \Cref{lem:double_sw_projectors}})\\
        &= P_{\textnormal{accept}}&
    \end{align*}
    and similarly for the measurement operator $\mathds{1}_A^{\otimes N}\otimes R^{(B)}_{\textnormal{accept}}$.
\end{proof}
Finally, we prove \Cref{thm:PC_optimal_sr_test}, restated below for convenience.
\newtheorem*{pcsr}{\Cref{thm:PC_optimal_sr_test}}
\begin{pcsr}[PC-optimal Schmidt-rank test]
    Let $\calH_A,\calH_B$ be finite-dimensional Hilbert spaces such that $\dim(\calH_j)=d_j$ for each $j\in \{A,B\}$, $N$ and $r\leq d_A\leq d_B$ be positive integers, and $\PSym\in\rmL(\calH_A^{\otimes N}\otimes \calH_B^{\otimes N})$ denote the projector onto the symmetric subspace $\vee^N(\calH_A\otimes \calH_B)$.
    \begin{enumerate}
        \item It holds that
            \begin{align*}
                (\mathds{1}_{A}^{\otimes N}\otimes \Pi_{\leq r}^{(B)})\PSym &= (\Pi_{\leq r}^{(A)}\otimes \mathds{1}_{B}^{\otimes N})\PSym = \Proj \Span \{\ket{\phi}^{\otimes N}: \SR(\ket{\phi})\leq r\}
            \end{align*}
        is a strongly PC-optimal test for $\mathsf{SR}(r)\subseteq\rmP(\calH_A\otimes\calH_B)$.
        \item The copy complexity of testing $\mathsf{SR}(r)$ with one-sided error is $\Theta(r^2/\veps^2)$.
    \end{enumerate}
\end{pcsr}
\begin{proof}
We begin with the first item. We claim either one of $\Pi_{\leq r}\otimes \mathds{1}_B^{\otimes N}$ or $\mathds{1}_A^{\otimes N}\otimes \Pi_{\leq r}$ is a PC-optimal test for $\mathsf{SR}(r)$, which implies
\begin{align*}
    \PSym(\Pi_{\leq r}\otimes \mathds{1}_B^{\otimes N})\PSym &= \Proj \Span \{\ket{\phi}^{\otimes N}: \SR(\ket{\phi})\leq r\}
\end{align*}
by \Cref{lem:pc_opt_pure_proj_lemma}, and similarly for $\PSym(\mathds{1}_A^{\otimes N}\otimes \Pi_{\leq r})\PSym$. To see this, suppose for contradiction that there is a test $E$ which has perfect completeness and accepts with probability strictly less than the acceptance probability for the test $\Pi_{\leq r}\otimes \mathds{1}_B^{\otimes N}$ on some state $\psi\notin \mathsf{SR}(r)$. By \Cref{thm:purification} we can take $E=E_A\otimes\mathds{1}_B^{\otimes N}$ for some $E_A\in\rmL(\calH_A^{\otimes N})$. Then, if $\rho_A[\psi]$ denotes the reduced state of $\psi$ on $A$ we have
\begin{align*}
    \Tr(E_A\rho_A[\psi]^{\otimes N}) &= \Tr\left((E_A\otimes \mathds{1}_B^{\otimes N}) \psi^{\otimes N}\right) < \Tr\left((\Pi_{\leq r}\otimes \mathds{1}_B^{\otimes N}) \psi^{\otimes N}\right)= \Tr(\Pi_{\leq r}\rho_A[\psi]^{\otimes N})
\end{align*}
But then since $E_A$ is a test for $\mathsf{Rank}(r)$ with perfect completeness, this contradicts the strong PC-optimality of $\Pi_{\leq r}$ for the property $\mathsf{Rank}(r)$, which is Proposition~6.1 in \cite{o2015quantum}. It remains to show that $\PSym$ commutes with both $\Pi_{\leq r}\otimes \mathds{1}_B^{\otimes N}$ and $\mathds{1}_A^{\otimes N}\otimes \Pi_{\leq r}$. Using \Cref{lem:double_sw_projectors} we have
\begin{align*}
        (\Pi_{\leq r}\otimes \mathds{1}_{B}^{\otimes N})\PSym &= \sum_{\lambda_1: \ell(\lambda_1)\leq r}\ \sum_{\lambda_2: \ell(\lambda_2)\leq d_B} (P_{\lambda_1}\otimes P_{\lambda_2})\PSym\\
        &= \sum_{\lambda:\ell(\lambda)\leq r}\Pi_{\lambda}
\end{align*}
where the first line uses the definition of the rank test operator in \cref{eq:rank_test_op} and in the second line the projections $\Pi_{\lambda}$, defined explicitly in \Cref{lem:double_sw_projectors} as projectors onto irreps, have orthogonal images. Since a product of orthogonal projection operators commutes if and only if their product is also an orthogonal projection operator, we have shown the desired commutation relation for the operator $\Pi_{\leq r}\otimes \mathds{1}_{d_B}^{\otimes N}$. A similar argument applies to the operator $\mathds{1}_{d_A}^{\otimes N}\otimes \Pi_{\leq r}$.

For the second item note that, for testing $\mathsf{Rank}(r)$ at distance $\veps$, the rank test $\Pi_{\leq r}$ of O'Donnell and Wright succeeds using $O(r^2/\veps)$ copies, and also requires $\Omega(r^2/\veps)$ copies to be successful~\cite[Lemma~6.2]{o2015quantum}. Then, following the discussion on trace distances in \Cref{sec:sr_rank_testing_equivalence}, if $\psi$ is $\veps$-far from $\mathsf{SR}(r)$ the reduced state $\rho_A[\psi]$ on $A$ is $\veps^2$-far from $\mathsf{Rank}(r)$ and
\begin{align*}
    \Tr\left((\Pi_{\leq r}\otimes \mathds{1}_B^{\otimes N})\PSym\psi^{\otimes N}\right) &= \Tr(\Pi_{\leq r}\rho_A[\psi]^{\otimes N})
\end{align*}
is at most $1/3$ by taking $N=O(r^2/\veps^2)$, by the upper bound for rank testing. On the other hand, assume for contradiction that $(\Pi_{\leq r}\otimes \mathds{1}_B^{\otimes N})\PSym$ succeeds in testing $\mathsf{SR}(r)$ at distance $\veps$ using $o(r^2/\veps^2)$ copies. Then, if $\rho$ is $\veps^2$-far from $\mathsf{Rank}(r)$ the canonical purification $\phi^{\rho}$ is $\veps$-far from $\mathsf{SR}(r)$ and we have
\begin{align*}
    \Tr(\Pi_{\leq r}\rho)&= \Tr\left((\Pi_{\leq r}\otimes \mathds{1}_B^{\otimes N})\PSym (\phi^{\rho})^{\otimes N}\right)
\end{align*}
is at most $1/3$ by taking $N=o(r^2/\veps^2)$. But this contradicts the lower bound for rank testing.
\end{proof}

\subsection{Low-rank approximation for bipartite pure states}\label{sec:eckart_young_mirsky}
In this section we provide an elementary proof of \Cref{prop:eckart_young_mirsky}. We will use the following version of the Eckart-Young-Mirsky Theorem~\cite{Eckart1936,mirsky1960symmetric} for low-rank approximation with respect to unitarily invariant norms.
\begin{theorem}[{\cite[{Theorem~1.1}]{Li2020elementary}}]\label{thm:low_rank_approx}
    Let $\calH_1$ and $\calH_2$ be finite-dimensional Euclidean vector spaces and let $\norm{\cdot}$ be a unitarily invariant norm on $\rmL(\calH_2,\calH_1)$. Suppose $A\in \rmL(\calH_2,\calH_1)$ has a singular value decomposition $A=\sum_{j=1}^{\min\{d_1,d_2\}} \lambda_j \outerprod{u_j}{v_j}$. For any positive integer $r\leq \min\{d_1,d_2\}$, the matrix $A_r:=\sum_{j=1}^r \lambda_j\outerprod{u_j}{v_j}$ satisfies
    \begin{align*}
        \norm{A-A_r}\leq \norm{A-B}
    \end{align*}
    for any $B\in\rmL(\calH_2,\calH_1)$ with rank at most $r$.
\end{theorem}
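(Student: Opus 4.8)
The plan is to reduce the optimization over all $B$ of rank at most $r$ to a pointwise comparison of singular value sequences, and then exploit the monotonicity of unitarily invariant norms as functions of the singular values. I would organize the argument around two classical structural facts and combine them at the end.

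The first ingredient is von Neumann's characterization of unitarily invariant norms: there is a symmetric gauge function $\Phi$, i.e.\ a norm on $\mathbb{R}^{\min\{d_1,d_2\}}$ invariant under coordinate permutations and sign changes, such that $\norm{X}=\Phi(\sigma_1(X),\sigma_2(X),\dots)$ for every $X\in\rmL(\calH_2,\calH_1)$, where $\sigma_1(X)\geq\sigma_2(X)\geq\cdots\geq 0$ denote the singular values of $X$ (padded with zeros). The only property of $\Phi$ I actually need is \emph{monotonicity}: if $0\leq a_k\leq b_k$ for all $k$, then $\Phi(a)\leq\Phi(b)$. I would derive this directly from the two defining symmetries: for $0\leq a_k\leq b_k$ the point $(\dots,a_k,\dots)$ is a convex combination of $(\dots,b_k,\dots)$ and $(\dots,-b_k,\dots)$ (with all other coordinates held fixed), so convexity of the norm together with sign invariance gives $\Phi(\dots,a_k,\dots)\leq\Phi(\dots,b_k,\dots)$; iterating over coordinates yields the claim. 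Consequently $\sigma_k(X)\leq\sigma_k(Y)$ for all $k$ implies $\norm{X}\leq\norm{Y}$.

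The second ingredient is a pointwise lower bound on the singular values of $A-B$. Since $A-A_r=\sum_{j>r}\lambda_j\outerprod{u_j}{v_j}$ is itself a singular value decomposition, its singular values are exactly $\lambda_{r+1}\geq\lambda_{r+2}\geq\cdots$, so $\sigma_k(A-A_r)=\lambda_{k+r}$ for every $k$ (with the convention $\lambda_m:=0$ for $m>\min\{d_1,d_2\}$). To bound $\sigma_k(A-B)$ from below I would apply Weyl's subadditivity inequality $\sigma_{i+j-1}(X+Y)\leq\sigma_i(X)+\sigma_j(Y)$ with $X=A-B$, $Y=B$, $i=k$, and $j=r+1$. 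Because $\rank(B)\leq r$ forces $\sigma_{r+1}(B)=0$, this yields $\lambda_{k+r}=\sigma_{k+r}(A)\leq\sigma_k(A-B)+\sigma_{r+1}(B)=\sigma_k(A-B)$, i.e.\ $\sigma_k(A-B)\geq\sigma_k(A-A_r)$ for all $k$.

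Combining the two ingredients completes the proof: by the monotonicity derived above, $\norm{A-B}=\Phi(\sigma(A-B))\geq\Phi(\sigma(A-A_r))=\norm{A-A_r}$. The main obstacle is that the two structural inputs---von Neumann's representation of unitarily invariant norms and Weyl's singular value inequality---carry the real content, so the work lies in either citing them cleanly or supplying self-contained elementary proofs (the Weyl inequality follows from the Courant--Fischer min--max description of singular values, while the gauge-function representation is the one genuinely nontrivial classical theorem), consistent with the elementary treatment advertised in this section.
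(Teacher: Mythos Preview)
The paper does not supply its own proof of this theorem; it is stated as a cited result from \cite{Li2020elementary} and used as a black box in the proof of \Cref{prop:eckart_young_mirsky}. Your argument is correct and is essentially Mirsky's classical proof: the combination of Weyl's inequality $\sigma_{k+r}(A)\leq\sigma_k(A-B)+\sigma_{r+1}(B)$ with $\sigma_{r+1}(B)=0$ gives the pointwise singular-value bound, and monotonicity of the symmetric gauge function (which you derive cleanly from sign-invariance plus convexity) converts this into a norm inequality. There is no gap, and nothing in the paper to compare it against.
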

Taking $A:=\sum_{j=1}^{d_1}\lambda_j\outerprod{u_j}{v_j}$ with $\lambda_j$, $\ket{u_j}$, and $\ket{v_j}$ as defined in \cref{eq:psi_schmidt_decomp}, the maximization in the statement of the proposition is equivalently expressed as
\begin{align}\label{eq:new_optimization}
    \max\{|\Tr(A^\dag B)|^2: B\in \rmL(\calH_2,\calH_1),\ \rank(B)\leq r,\ \norm{B}_{\mathrm{F}}=1\}
\end{align}
with the optimal solution of the original maximization being related to the matrix which attains the maximum in \cref{eq:new_optimization} through a vectorization of this matrix. It thereby remains to show that $\widetilde{B}:=(\sum_{k=1}^r\lambda_j^2)^{-1/2}\sum_{j=1}^r\lambda_j\outerprod{u_j}{v_j}$ is an optimal solution. To this end, consider the optimization
\begin{align}\label{eq:other_opt}
    \min\{\norm{A-X}_{\mathrm{F}}:X\in\rmL(\calH_2,\calH_1), \ \rank(X)\leq r\}.
\end{align}
By \Cref{thm:low_rank_approx} we have that the maximum is attained at $X=\sum_{j=1}^r\lambda_j\outerprod{u_j}{v_j}$. Now, suppose by way of contradiction that there exists an operator $B^\prime$ satisfying $\norm{B^\prime}_{\mathrm{F}}=1$, $\rank(B^\prime)\leq r$, and $\Tr(A^\dag B^\prime) = r\ee^{\ii\theta}$ for some $\theta\in [0,2\pi)$ and $r\in\mathbb{R}_+$ such that $r>\gamma:= \sqrt{\sum_{j=1}^r\lambda_j^2}$. Then we must have
\begin{align*}
    \norm{A-\gamma \ee^{-\ii\theta}B^\prime}_{\mathrm{F}}^2 &= \norm{A}_{\mathrm{F}}^2 + \gamma^2 - 2\gamma r \\
    &< \norm{A}_{\mathrm{F}}^2 - \gamma^2\\
    &= \sum_{j=r+1}^{d_1}\lambda_j^2.
\end{align*}
But this contradicts $X$ being an optimal solution to the optimization in \cref{eq:other_opt}. Thus, we have that $|\Tr(A^\dag B^\prime)|\leq \gamma$ for any choice of $B^\prime$ satisfying the constraints in \cref{eq:new_optimization}. This concludes the proof, since $|\Tr(A^\dag \widetilde{B})|^2=\gamma^2$.

\subsection{Faithful TTNS approximations}\label{sec:ttns_approx_proof}
In this section we include a proof of \Cref{thm:ttns_approximation}. The proof uses the following lemma.
\begin{lemma}[{Lemma~4.123 in \cite{Hackbusch2019}}]\label{lem:projector_distance}
Let $\ket{\psi}\in \calH$ be an arbitrary vector in the finite-dimensional Hilbert space $\calH$ and $P_1,\dots,P_N$ be a sequence of orthogonal projection operators acting on $\calH$. It holds that
\begin{align*}
   {\left\lVert(\mathds{1}-\prod_{i=1}^N\nolimits
   P_i)\ket{\psi}\right\rVert}_2^2\leq
   \sum_{i=1}^N{\left\lVert(\mathds{1}-P_i)\ket{\psi}\right\rVert}_2^2.
\end{align*}
\end{lemma}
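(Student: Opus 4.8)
The plan is to prove the inequality by induction on $N$, driven by a short two-operator identity. First I would record the following: for any orthogonal projection $P$, \emph{any} linear operator $Q$ on $\calH$, and any $\ket{\psi}\in\calH$,
\[
  \norm{(\mathds{1}-PQ)\ket{\psi}}_2^2 = \norm{(\mathds{1}-P)\ket{\psi}}_2^2 + \norm{P(\mathds{1}-Q)\ket{\psi}}_2^2 .
\]
This follows by writing $(\mathds{1}-PQ)\ket{\psi} = (\mathds{1}-P)\ket{\psi} + P(\mathds{1}-Q)\ket{\psi}$ and observing that the cross term $\langle\psi|(\mathds{1}-P)P(\mathds{1}-Q)|\psi\rangle$ vanishes because $(\mathds{1}-P)P=0$. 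Crucially, this cancellation uses only that $P$ is Hermitian and idempotent, with no hypothesis whatsoever on $Q$. If in addition $Q$ is a contraction (operator norm at most $1$), then $\norm{P(\mathds{1}-Q)\ket{\psi}}_2\le\norm{(\mathds{1}-Q)\ket{\psi}}_2$, and hence
\[
  \norm{(\mathds{1}-PQ)\ket{\psi}}_2^2 \le \norm{(\mathds{1}-P)\ket{\psi}}_2^2 + \norm{(\mathds{1}-Q)\ket{\psi}}_2^2 .
\]

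Then I would induct on $N$. The base case $N=1$ is an equality. For the inductive step, set $Q := P_2 P_3 \cdots P_N$; being a product of orthogonal projections, $Q$ is a contraction, so the previous display applies with $P = P_1$ and gives
\[
  \norm{\bigl(\mathds{1}-\textstyle\prod_{i=1}^N P_i\bigr)\ket{\psi}}_2^2 \le \norm{(\mathds{1}-P_1)\ket{\psi}}_2^2 + \norm{\bigl(\mathds{1}-\textstyle\prod_{i=2}^N P_i\bigr)\ket{\psi}}_2^2 .
\]
Applying the induction hypothesis to the sequence $P_2,\dots,P_N$ bounds the last term by $\sum_{i=2}^N \norm{(\mathds{1}-P_i)\ket{\psi}}_2^2$, and adding the two contributions completes the induction.

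The only point needing care is that the partial product $P_2 \cdots P_N$ is generally \emph{not} a projection, so the two-operator identity has to be phrased for an arbitrary second operator $Q$; fortunately the cross-term cancellation never invokes any property of $Q$, and contractivity of $P_2\cdots P_N$ (submultiplicativity of the operator norm) is precisely what lets one pass from the identity to the inequality. I do not anticipate any genuine obstacle: the argument is short and entirely elementary.
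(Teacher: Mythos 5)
Your proof is correct. Note first that the paper itself does not prove this statement: it is imported verbatim as Lemma~4.123 of \cite{Hackbusch2019}, so there is no in-paper argument to compare against; your write-up supplies a self-contained elementary proof of exactly the kind the cited reference gives (a Pythagorean splitting plus induction on the number of projectors). The key identity is right: writing $(\mathds{1}-PQ)\ket{\psi}=(\mathds{1}-P)\ket{\psi}+P(\mathds{1}-Q)\ket{\psi}$, the cross term vanishes because $(\mathds{1}-P)P=0$ and $P$ is Hermitian, and the induction with $Q=P_2\cdots P_N$ closes cleanly since the base case is an equality. One small diagnostic slip, which does not affect validity: the passage from the identity to the inequality does not actually use contractivity of $Q$ at all. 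The estimate $\left\lVert P(\mathds{1}-Q)\ket{\psi}\right\rVert_2\leq\left\lVert(\mathds{1}-Q)\ket{\psi}\right\rVert_2$ holds for an arbitrary operator $Q$ simply because the orthogonal projection $P$ has operator norm at most $1$; so your two-operator inequality is true with no hypothesis on $Q$ whatsoever, and the remark that submultiplicativity of the norm of $P_2\cdots P_N$ is ``precisely what lets one pass'' to the inequality attributes the needed contraction to the wrong factor. You may simply delete that hypothesis; the rest of the argument is unchanged and complete.
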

We restate the theorem for convenience. Here, $V$ and $E$ refer to the vertices and edges, respectively, of a tree graph $G$. 
\newtheorem*{faithful}{\Cref{thm:ttns_approximation}}
\begin{faithful}
   Let $\ket{\psi}\in \bigotimes_{v\in V} \calH_v$ be an arbitrary pure
   state on $n=|V|$ sites. For each $e\in E$, let $L^{(e)}, R^{(e)}\subseteq [n]$ be
   the bipartition of the vertices induced by removing the edge
   $e$. Suppose further that $\ket{\psi}$ has the Schmidt decomposition
   \begin{align*}
       \ket{\psi} = \sum_j {\lambda_j^{(e)}}\
       \ket{a^{(e)}_j}_{L^{(e)}}\otimes \ket{b^{(e)}_j}_{R^{(e)}}
   \end{align*}
   with Schmidt coefficients $\lambda_1^{(e)}\geq
   \lambda_2^{(e)}\geq \dots$ for every $e\in E$, and that
   \begin{align}
       0\leq \sum_{e\in E}\sum_{j\geq r+1}(\lambda_j^{(e)})^2\leq 1.
   \end{align}
   Then there exists a state
   $\ket{\phi}\in\mathsf{TTNS}(G,r)$ for which
   \begin{align*}
       |\langle \phi | \psi \rangle|\geq 1-\sum_{e\in E}\sum_{j\geq r+1}(\lambda_j^{(e)})^2.
   \end{align*}
\end{faithful}
\begin{proof}
We claim there exists a (potentially unnormalized) vector $\ket{\phi}\in\bigotimes_{v\in V}\calH_v$ such that the following hold:
   \begin{enumerate}
       \item $\mathrm{SR}_e(\ket{\phi})\leq r\quad \forall e\in E$
       \item ${\left\lVert\ket{\phi}-\ket{\psi}\right\rVert}_2^2\leq\sum_{e\in E}\sum_{j\geq r+1} (\lambda_j^{(e)})^2$.
   \end{enumerate}
   The theorem then follows since for any $0\leq \delta\leq 1$ and normalized vector $\ket{\psi}$,
   ${\left\lVert\ket{\phi}-\ket{\psi}\right\rVert}_2^2\leq \delta$
   if and only if
   $2\ \mathrm{Re}\{\langle \phi | \psi\rangle\}\geq 1 +{\lVert\ket{\phi}\rVert}_2^2 - \delta$
   which implies
   \begin{align*}
       \frac{|\langle \phi |\psi\rangle|}{\lVert\ket{\phi} \rVert_2}\geq \frac{1 + {\lVert\ket{\phi}\rVert}_2^2 - \delta}{2\ \lVert\ket{\phi} \rVert_2} \geq 1-\delta.
   \end{align*}
   It remains to prove the two properties above. But these are implied by~\cite[Theorem 11.58]{Hackbusch2019}. We supply that argument in the language of this paper here for the interested reader, making use of the definitions established in \Cref{sec:ttns}. Suppose the tree $G$ has depth $D$ and root node $v_{\text{root}}\in V$. For each $v\in V$ let $G(v)\subseteq G$ be the subgraph corresponding to the vertex $v$ and all descendants of $v$. Note that $G(v_{\text{root}})=G$, and for any $v\in V$ we have that $G(v)$ is a tree. Also, let $h(v)\in \{0,\dots, D-1\}$ be the height of the vertex $v\in V$, and for each $j\in \{0,\dots, D-1\}$ define
   \begin{align*}
       L_j :=\{v\in V: h(v)=j\}
   \end{align*}
   i.e., the subset of vertices at height $j$. (For example, $L_0$ is the set of all leaves.) Consider a fixed vertex $v\in L_j$ and edge $(u,v)\in E$ such that $u\in G(v)$. Then the removal of the edge $e=(u,v)$ bipartitions the vertices into disjoint subsets $G(u)$ and $V\backslash G(u)$. Define the orthogonal projection operator
   \begin{align*}
       P^{(v)}_u:= \sum_{j=1}^r {|a^{(u,v)}_j\rangle\langle a^{(u,v)}_j|}_{G(u)}\otimes \mathds{1}_{[n]\backslash G(u)}
   \end{align*}
   where $\ket{a_j^{(u,v)}}$ are the Schmidt vectors of $\ket{\psi}$ for the subsystem $G(u)$ in the Schmidt decomposition for the edge $(u,v)$. For any pair of edges $(u_1,v_1),(u_2,v_2)\in E$ such that $v_1,v_2\in L_j$ and $u_1\in G(v_1)$, $u_2\in G(v_2)$ it holds that $[P^{(v_1)}_{u_1},P^{(v_2)}_{u_2}]=0$, since these operators act on disjoint subsystems, unless $(u_1,v_1)=(u_2,v_2)$. Hence,
   \begin{align*}
       \Gamma_j:=\prod_{\substack{(u,v)\in E\\ u\in G(v),\ v\in L_j}}P^{(v)}_u
   \end{align*}
   is an orthogonal projection operator and the order in the product does not affect the definition. See \Cref{fig:proj_illustration} for an illustration of this point. We then define the (unnormalized) tree tensor network approximation
   \begin{align}\label{eq:ttns_approx_def}
       \ket{\phi}:= \Gamma_0\Gamma_1\dots\Gamma_{D-1} \ket{\psi}
   \end{align}
where the order in the product now does matter, and which we claim satisfies the two properties mentioned above.

\begin{figure}
    \centering
    \includegraphics[width=0.5\linewidth]{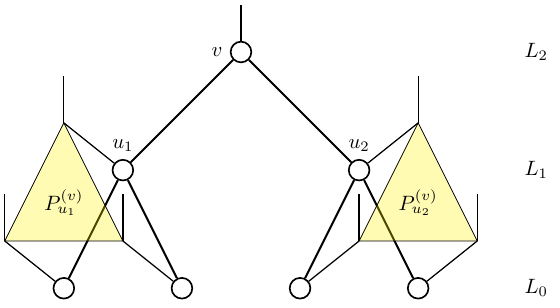}
    \caption{\label{fig:proj_illustration}Illustration of the orthogonal projection operator $\Gamma_2 = P^{(v)}_{u_1}P^{(v)}_{u_2}$ corresponding to the set of vertices at height $2$, which consists of only the root node $v$. The two projections in the product commute since they act non-trivially on disjoint subsystems.}
\end{figure}

For the first property we aim to show, that $\mathrm{SR}_e(\ket{\phi})\leq r$, consider a fixed edge $(u,v)\in E$ such that $v\in L_j$ for some $j\in \{0,\dots, D-1\}$, and such that removing this edge bipartitions the graph into disjoint subsets of vertices $G(u)$ and $G(u)\backslash V$. With respect to this bipartition, the state $\ket{\psi}$ has a Schmidt decomposition
\begin{align*}
    \ket{\psi}=\sum_{j}\lambda_j^{(u,v)} \ket{a_j^{(u,v)}}_{G(u)}\otimes \ket{b_j^{(u,v)}}_{G(u)\backslash V}
\end{align*}
We have
\begin{align*}
    \ket{\phi} &= \Gamma_{0}\dots\Gamma_{j-1} P^{(v)}_u \ket{\psi^\prime}
\end{align*}
for some (possibly unnormalized) vector $\ket{\psi^\prime}\in\bigotimes_{v\in V}\calH_v$. A moment of thought (perhaps employing the vectorization mapping) reveals that applying a rank-$r$ projection operator to the subsystem $G(u)$ yields a vector whose Schmidt-rank is at most $r$. Thus,
\begin{align*}
    \ket{\phi} = \Gamma_0\dots\Gamma_{j-1} \sum_{j=1}^r\beta_j \ket{r_j}_{G(u)}\otimes\ket{\ell_j}_{G(u)\backslash V}
\end{align*}
for some real coefficients $\beta_j$ and collections of orthogonal vectors $\{\ket{r_j}\}$ and $\{\ket{\ell_j}\}$. Now note that, by construction, for any $i < j$ it holds that $\Gamma_i=P\otimes Q$ for some orthogonal projection operators $P,Q$ acting solely within the subspaces corresponding to $G(u)$, $V\backslash G(u)$, respectively. Hence, $\mathrm{SR}_e(\ket{\phi})\leq r$ for the edge $(u,v)$, and this argument applies to any edge in the tree.

The second desired property follows from the fact that we have
\begin{align*}
    {\left\lVert\ket{\phi}-\ket{\psi}\right\rVert}_2^2&=
    {\left\lVert(\mathds{1}-\prod_{j=0}^{D-1}\nolimits\Gamma_j)\ket{\psi}\right\rVert}_2^2\\
    &\leq \sum_{(u,v)\in E}
    {\left\lVert(\mathds{1}-P^{(v)}_u)\ket{\psi}\right\rVert}_2^2&
    (\textnormal{\Cref{lem:projector_distance}})\\ &=\sum_{e\in
E}\sum_{j\geq r+1}(\lambda_j^{(e)})^2
\end{align*}
which concludes the proof.
\end{proof}

\subsection{\texorpdfstring{$\Omega(\sqrt{n})$}{Square root} lower bound at constant bond dimension}\label{sec:sqrt_lb}
In this section we prove a lower bound of $\Omega(\sqrt{n}/\veps^2)$ for testing $\mathsf{Prod}_2(n,r)$ at constant $r$ and distance $\veps\in (0,1/\sqrt{2}]$, even with two-sided error. The proof is essentially the same as that for the lower bound for testing MPS given in~\cite{soleimanifar2022testingmps}, but we include it for completeness. Throughout we assume $r\geq 2$ and $d-1\geq 2(r-1)$. Assume $\veps\leq 1/\sqrt{2}$, set $\theta=8\veps^2/n$, and for each $j\in [n/2]$ let $\ket{\phi_j^{(0)}}\in\mathbb{C}^d\otimes\mathbb{C}^d$ be a bipartite state with Schmidt coefficients $\sqrt{1-\theta}, \sqrt{\theta/(d-1)},\dots,\sqrt{\theta/(d-1)}$ and $\ket{\phi_j^{(1)}}\in\mathbb{C}^d\otimes\mathbb{C}^d$ be a bipartite state with Schmidt coefficients $\sqrt{1-\theta}, \sqrt{\theta/(r-1)},\dots,\sqrt{\theta/(r-1)},0,\dots,0$. We clearly have $\phi^{(1)}:=\bigotimes_{j\in [n/2]} \phi_j^{(1)}\in\mathsf{Prod}_2(n,r)$. On the other hand, $\phi^{(0)}:=\bigotimes_{j\in [n/2]}\phi_j^{(0)}$ is $\veps$-far from $\mathsf{Prod}_2(n,r)$. This is because the overlap between $\phi^{(0)}$ and the best approximation to this state in $\mathsf{Prod}_2(n,r)$ is, by \Cref{prop:eckart_young_mirsky} and the assumption that $d-1\geq 2(r-1)$, at most $(1-\theta/2)^{n/2}\leq 1-\veps^2$ where the inequality follows by bounding the remainder term in the Taylor Series of $(1-x)^q$ to deduce that $(1-x)^q\leq 1-qx/2$ for all $x\leq 1/q$. Hence, a successful test with two-sided error distinguishes between these two states with constant probability of success, given some number of copies $N$ of the unknown state prepared in either $\phi^{(0)}$ or $\phi^{(1)}$. Furthermore, this argument has no dependence on the choice of Schmidt vectors for the states $\ket{\phi_j^{(0)}}$ and $\ket{\phi_j^{(1)}}$. So the test also distinguishes between
\begin{align*}
    \bigotimes_{j\in [n/2]} (U_j\otimes V_j) \phi_1 (U_j^\dag \otimes V_j^\dag)\quad \textnormal{and}\quad  \bigotimes_{j\in [n/2]} (U_j\otimes V_j^\dag) \gamma_1 (U_j^\dag \otimes V_j^\dag)
\end{align*}
for any choice of $U_1,\dots, U_{n/2}\in \rmU(\mathbb{C}^d)$ and $V_1,\dots,V_{n/2}\in \rmU(\mathbb{C}^d)$, given $N$ copies of the unknown state. By the linearity of trace, the test succeeds in distinguishing between the two cases
\begin{align*}
    \sigma_x&=\expct\bigotimes_{j\in [n/2]}  \big[(\bm{U}_j \otimes \bm{V}_j)\phi_1^{(x)}(\bm{U}^\dag_j \otimes \bm{V}^\dag_j)\big]^{\otimes N}\quad \textnormal{for $x=0$ and $x=1$}
\end{align*}
where $\bm{U}_1,\dots,\bm{U}_{n/2}$ and $\bm{V}_1,\dots,\bm{V}_{n/2}$ are independent, Haar-random unitaries. We can rewrite these two mixed states as
\begin{align*}
    \sigma_x&= \rho_x^{\otimes n/2},\quad \rho_x := \expct_{\bm{U},\bm{V}\sim\textnormal{Haar}} \left((\bm{U} \otimes \bm{V})\phi_1^{(x)}(\bm{U}^\dag \otimes \bm{V}^\dag)\right)^{\otimes N}
\end{align*}
for each $x\in \{0,1\}$. By the well-known telescoping property of trace distance\footnote{The property follows from the inequality $\norm{\tau_1 \otimes \tau_2 - \omega_1\otimes\omega_2}_1\leq \norm{\tau_1\otimes \tau_2 - \omega_1\otimes \tau_2}_1 + \norm{\omega_1\otimes \tau_2-\omega_1\otimes\omega_2}_1$ and the identity $\norm{(\tau-\omega)\otimes \rho}_1=\norm{\tau-\omega}_1$.}, we have
\begin{align}\label{eq:eq_175}
    \dTr(\sigma_0,\sigma_1)\leq \frac{n}{2}\dTr(\rho_0,\rho_1)
\end{align}
where $\dTr(\rho,\sigma):=\frac{1}{2}\norm{\rho-\sigma}_1$.
By Theorem~35 in \cite{soleimanifar2022testingmps}, it holds that
\begin{align}\label{eq:eq_177}
    \dTr(\rho_0,\rho_1) &= \dTr\left(\expct (\bm{U}\tau_0 \bm{U}^\dag)^{\otimes N},\ \expct (\bm{U}\tau_1 \bm{U}^\dag)^{\otimes N}\right)
\end{align}
where $\bm{U}$ is Haar-random and $\tau_x$ is the reduced state of $\phi_1^{(x)}$ on the first subsystem for each $x\in \{0,1\}$. Expanding the tensor products and using the unitary invariance of the Haar measure, we have
\begin{align*}
    \expct (\bm{U}\tau_x\bm{U}^\dag)^{\otimes N} &= (1-\theta)^N \expct (\bm{U}\outerprod{1}{1}\bm{U}^\dag)^{\otimes N}\nonumber\\
    &\qquad + \theta (1-\theta)^{N-1} \sum_{i=0}^{N-1} \expct \bm{U}^{\otimes N}|\underbrace{11\dots 1}_{\times i}21\dots 1\rangle\langle 11\dots 1 21\dots 1 |(\bm{U}^\dag)^{\otimes N}\nonumber\\
    &\qquad\qquad + (1-p)\omega_x
\end{align*}
where $p:=(1-\theta)^N + N(1-\theta)^{N-1}\theta$ and $\omega_x\in\rmD((\mathbb{C}^d)^{\otimes N})$ is some density matrix, for each $x\in\{0,1\}$. Note that the first two terms have no dependence on $x$. From this, one sees the trace distance on the right-hand side of \cref{eq:eq_177} equals
\begin{align*}
    \frac{1}{2}\norm{p\tau + (1-p)\omega_0 - \left(p\tau +(1-p)\omega_1\right)}_1&= (1-p)\dTr(\omega_0,\omega_1)
\end{align*}
where $\tau$ is some density matrix. Since trace distances are at most one, the right-hand side is at most
\begin{align*}
    1-(1-\theta)^N-N(1-\theta)^{N-1}\theta&\leq 1- (1-N\theta)-N\theta(1-(N-1)\theta)\\
    &= N(N-1)\theta^2
\end{align*}
and so the right-hand side of \cref{eq:eq_175} is at most $\frac{n}{2} N(N-1) \left(\frac{8\veps^2}{n}\right)^2=O(N^2\veps^4/n)$, which is $o(1)$ unless $N=\Omega(\sqrt{n}/\veps^2)$.

\section{Miscellaneous facts}\label{sec:chernoff_bound}
\begin{lemma}[Chernoff bound]\label{lem:chernoff_bound}
    Let $\bm{X}_1,\dots,\bm{X}_n$ be independent Bernoulli random variables and define $\bm{X}=\sum_{i=1}^n\bm{X}_i$ and $\mu=\expct \bm{X}$. It holds that
    \begin{equation}
            \Pr[\bm{X}> \mu(1+t)]\leq \ee^{\frac{-\mu t^2}{2+t}}
    \end{equation}
    for any $t\geq 0$.
\end{lemma}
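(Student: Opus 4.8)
The plan is to use the standard exponential-moment (Chernoff) method. We may assume $t>0$, since for $t=0$ the bound reads $\Pr[\bm{X}>\mu]\le 1$, which is trivial. First I would fix $t>0$ and apply Markov's inequality to the nonnegative random variable $e^{s\bm{X}}$ with the specific choice $s=\ln(1+t)>0$. Since $x\mapsto e^{sx}$ is increasing for $s>0$,
\[
\Pr[\bm{X}>\mu(1+t)]=\Pr[e^{s\bm{X}}>e^{s\mu(1+t)}]\le e^{-s\mu(1+t)}\,\expct[e^{s\bm{X}}]=e^{-s\mu(1+t)}\prod_{i=1}^n\expct[e^{s\bm{X}_i}],
\]
where the factorization uses independence of the $\bm{X}_i$.

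Next I would bound each factor. Writing $p_i:=\Pr[\bm{X}_i=1]$, we have $\expct[e^{s\bm{X}_i}]=1+p_i(e^s-1)\le\exp\!\big(p_i(e^s-1)\big)$ by the inequality $1+x\le e^x$. Multiplying over $i$ and using $\sum_i p_i=\mu$ gives $\prod_i\expct[e^{s\bm{X}_i}]\le\exp\!\big(\mu(e^s-1)\big)$. With $s=\ln(1+t)$ we have $e^s-1=t$, so the estimate becomes
\[
\Pr[\bm{X}>\mu(1+t)]\le\exp\!\big(\mu[t-(1+t)\ln(1+t)]\big).
\]

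It then remains to verify the elementary inequality $(1+t)\ln(1+t)-t\ge\frac{t^2}{2+t}$ for all $t\ge 0$, since this gives $t-(1+t)\ln(1+t)\le-\frac{t^2}{2+t}$ and hence the claimed bound. Dividing through by $1+t>0$, this is equivalent to $\ln(1+t)\ge\frac{2t}{2+t}$. I would prove this by setting $h(t):=\ln(1+t)-\frac{2t}{2+t}$, noting $h(0)=0$, and computing
\[
h'(t)=\frac{1}{1+t}-\frac{4}{(2+t)^2}=\frac{(2+t)^2-4(1+t)}{(1+t)(2+t)^2}=\frac{t^2}{(1+t)(2+t)^2}\ge 0,
\]
so $h$ is nondecreasing on $[0,\infty)$ and therefore nonnegative there, completing the argument.

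There is no genuine obstacle here — this is a textbook multiplicative Chernoff bound — but the one step requiring slight care is the final elementary estimate: one must choose the ``right'' exponential parameter, and the substitution $s=\ln(1+t)$ together with the sharp logarithmic inequality $\ln(1+t)\ge 2t/(2+t)$ is precisely what produces the stated denominator $2+t$; a cruder Taylor bound on $\ln(1+t)$ would give a weaker exponent.
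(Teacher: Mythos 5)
Your proof is correct and complete: the exponential-moment argument with $s=\ln(1+t)$, the bound $\expct[e^{s\bm{X}_i}]\le\exp(p_i(e^s-1))$, and the elementary inequality $\ln(1+t)\ge 2t/(2+t)$ (verified via the derivative computation) together give exactly the stated bound, including the degenerate case $t=0$. The paper itself states this lemma as a standard fact in the appendix without proof, so there is nothing to compare against; your derivation is the textbook Chernoff--Bernstein argument that one would cite for it.
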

\begin{fact}\label{fact:sum_1}
    $\sum_{n=1}^\infty (n+1)^2 x^{n-1}=\frac{x^2-3x+4}{(1-x)^3}$ for $x\in\mathbb{R}$ such that $|x|<1$.
\end{fact}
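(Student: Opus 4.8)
The plan is to obtain this purely from the geometric series and two of its derivatives. First I would start from $\sum_{n=0}^{\infty} x^n = \frac{1}{1-x}$, valid for $|x|<1$, and differentiate term by term (legitimate inside the open disc of convergence) to get $\sum_{n=1}^{\infty} n x^{n-1} = \frac{1}{(1-x)^2}$. Multiplying by $x$, differentiating once more, and dividing again by $x$ then yields $\sum_{n=1}^{\infty} n^2 x^{n-1} = \frac{1+x}{(1-x)^3}$.

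Next I would write $(n+1)^2 = n^2 + 2n + 1$ and split the sum of interest into three pieces, each absolutely convergent for $|x|<1$:
\[
\sum_{n=1}^{\infty} (n+1)^2 x^{n-1} = \sum_{n=1}^{\infty} n^2 x^{n-1} + 2\sum_{n=1}^{\infty} n x^{n-1} + \sum_{n=1}^{\infty} x^{n-1} = \frac{1+x}{(1-x)^3} + \frac{2}{(1-x)^2} + \frac{1}{1-x}.
\]
Placing these over the common denominator $(1-x)^3$ and expanding the numerator $(1+x) + 2(1-x) + (1-x)^2 = 4 - 3x + x^2$ gives exactly $\frac{x^2 - 3x + 4}{(1-x)^3}$, as claimed.

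There is essentially no obstacle here: the only point requiring (routine) justification is that term-by-term differentiation and the rearrangement into three series are valid, which holds because a power series may be differentiated within its radius of convergence and all three of the resulting series converge absolutely for $|x|<1$. An alternative route is the single index shift $m = n+1$ together with $\sum_{m\geq 1} m^2 x^{m-1} = \frac{1+x}{(1-x)^3}$, subtracting off the $m=1$ term; the three-way split is slightly cleaner since it avoids tracking that boundary term.
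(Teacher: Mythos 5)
Your proof is correct: the term-by-term differentiation of the geometric series, the split $(n+1)^2=n^2+2n+1$, and the numerator computation $(1+x)+2(1-x)+(1-x)^2=x^2-3x+4$ all check out. The paper states this fact without proof (it appears in the appendix of miscellaneous facts), and your derivation is precisely the routine argument that is implicitly intended, so there is nothing to reconcile.
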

\begin{fact}\label{fact:sum_2}
    $\sum_{n=3}^{\infty}(n+1)x^n = \frac{x^3(4-3x)}{(1-x)^2}$ for $x\in\mathbb{R}$ such that $|x|<1$.
\end{fact}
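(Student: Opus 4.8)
Proof proposal for \Cref{fact:sum_2}.

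The plan is to reduce this identity to the standard generating function $\sum_{n\ge 0}(n+1)x^n=(1-x)^{-2}$ and then peel off the first three terms. First I would recall that for $|x|<1$ the geometric series $\sum_{n=0}^{\infty}x^n=\frac{1}{1-x}$ converges absolutely, and that a power series may be differentiated term by term within its open disk of convergence. Differentiating $\frac{1}{1-x}=\sum_{n\ge 0}x^n$ and reindexing gives
\begin{align*}
  \sum_{n=0}^{\infty}(n+1)x^n = \sum_{n=1}^{\infty} n\,x^{n-1} = \frac{1}{(1-x)^2}, \qquad |x|<1.
\end{align*}

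Next I would split off the $n=0,1,2$ terms, which contribute $1+2x+3x^2$, to obtain
\begin{align*}
  \sum_{n=3}^{\infty}(n+1)x^n = \frac{1}{(1-x)^2} - \bigl(1+2x+3x^2\bigr) = \frac{1-(1+2x+3x^2)(1-x)^2}{(1-x)^2}.
\end{align*}
Expanding $(1+2x+3x^2)(1-x)^2$ one checks that the coefficients of $x$ and $x^2$ cancel, leaving $1-4x^3+3x^4$; hence the numerator equals $1-(1-4x^3+3x^4)=4x^3-3x^4=x^3(4-3x)$, which is exactly the asserted right-hand side $\frac{x^3(4-3x)}{(1-x)^2}$.

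There is no genuine obstacle here: the only step deserving a word of justification is the term-by-term differentiation of a power series, a classical fact on $\{|x|<1\}$. If one prefers to avoid even that, an equally short route is to multiply the claimed identity through by $(1-x)^2$ and verify $\sum_{n\ge 3}(n+1)(1-x)^2x^n=x^3(4-3x)$ by collecting coefficients, which again reduces to the same finite cancellation. The companion identity \Cref{fact:sum_1} (and the analogous closed-form sums used in \Cref{sec:improved_ub}) can be established by the same differentiate-and-truncate recipe.
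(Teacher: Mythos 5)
Your proposal is correct: the expansion $(1+2x+3x^2)(1-x)^2 = 1-4x^3+3x^4$ checks out, so the numerator is indeed $x^3(4-3x)$, and term-by-term differentiation of the geometric series on $|x|<1$ is legitimate. The paper states this fact without proof, and your differentiate-and-truncate computation is exactly the standard elementary justification intended.
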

\begin{fact}\label{fact:sum_3}
    Let $N$ be a positive integer. For any integer $k\geq 1$ it holds that $\sum_{j=k}^N\binom{N}{j}x^j\leq \sqrt{e} N^k x^k$ for $x\in\mathbb{R}$ such that $0\leq x\leq 1/(2N)$.
\end{fact}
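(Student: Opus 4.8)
The plan is to control the binomial sum by a tail of the exponential series. First I would invoke the elementary bound $\binom{N}{j}\le N^j/j!$, valid for all $0\le j\le N$, together with $x\ge 0$, to get
\begin{align*}
\sum_{j=k}^N\binom{N}{j}x^j \;\le\; \sum_{j=k}^{N}\frac{(Nx)^j}{j!}\;\le\;\sum_{j=k}^{\infty}\frac{(Nx)^j}{j!}.
\end{align*}
Reindexing the last sum via $j=k+i$ and using $(k+i)!\ge i!$ for every $i\ge 0$ (which holds since $k\ge 1$), the right-hand side is at most
\begin{align*}
(Nx)^k\sum_{i=0}^{\infty}\frac{(Nx)^i}{(k+i)!}\;\le\;(Nx)^k\sum_{i=0}^{\infty}\frac{(Nx)^i}{i!}\;=\;(Nx)^k\,\ee^{Nx}.
\end{align*}
Finally, the hypothesis $0\le x\le 1/(2N)$ gives $Nx\le 1/2$, hence $\ee^{Nx}\le \ee^{1/2}=\sqrt{\ee}$, and therefore $\sum_{j=k}^N\binom{N}{j}x^j\le \sqrt{\ee}\,(Nx)^k=\sqrt{\ee}\,N^kx^k$. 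The boundary case $x=0$ is immediate, since $k\ge 1$ makes both sides vanish.

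There is essentially no real obstacle in this fact; the only point that warrants a moment's care is the choice of estimate so that the constant $\sqrt{\ee}$ comes out uniformly in $k$. A naive ratio-test argument, bounding each successive term by half the previous one (using $\binom{N}{j+1}x^{j+1}\big/\bigl(\binom{N}{j}x^j\bigr)=\tfrac{N-j}{j+1}x\le Nx\le \tfrac12$), would give $\sum_{j=k}^N\binom{N}{j}x^j\le 2\binom{N}{k}x^k\le (2/k!)\,N^kx^k$; this already suffices for every $k\ge 2$ but fails for $k=1$, where $2/1!=2>\sqrt{\ee}$. Routing the bound through $\ee^{Nx}$ as above repairs the constant for all $k\ge 1$, which is what is needed, since \Cref{fact:sum_3} is applied with several different values of $k$.
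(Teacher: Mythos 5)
Your proof is correct: the chain $\binom{N}{j}\le N^j/j!$, extension to the infinite series, the reindexing with $(k+i)!\ge i!$, and finally $\ee^{Nx}\le\ee^{1/2}=\sqrt{\ee}$ under $Nx\le 1/2$ are all valid, and the $x=0$ boundary case is trivial as you note. The paper states \Cref{fact:sum_3} without proof, so there is nothing to compare against; your argument (essentially the standard exponential-tail bound) fills that gap correctly, and your remark about why the cruder ratio-test bound fails to deliver the constant $\sqrt{\ee}$ at $k=1$ is accurate.
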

\begin{proof}
    Observe
    \begin{align*}
        \sum_{j=k}^N\binom{N}{j}x^j&\leq  N^kx^k\sum_{j=0}^{N-k} \frac{N^{j}x^{j}}{(j+k)!}
        \leq N^kx^k\sum_{j=0}^\infty\frac{N^jx^j}{j!}
        \leq N^kx^k\sum_{j=0}^\infty \frac{(1/2)^{j}}{j!}
        =\sqrt{e}N^kx^k
    \end{align*}
    as required.
\end{proof}

\printbibliography

\end{document}